\renewcommand{\@biblabel}[1]{[#1]\hfill}
\DeclareMathAlphabet{\pazocal}{OMS}{zplm}{m}{n}
\theoremstyle{plain}
\newtheorem{theorem}[equation]{Theorem}
\newtheorem{maintheorem}{Theorem}
\newtheorem{corollary}[equation]{Corollary}
\newtheorem{lemma}[equation]{Lemma}
\newtheorem{proposition}[equation]{Proposition}
\theoremstyle{definition}
\newtheorem{definition}[equation]{Definition}
\newtheorem{example}[equation]{Example}
\newtheorem{remark}[equation]{Remark}
\numberwithin{equation}{subsection}
\newlength{\@listleftmargin}
\setlist{leftmargin=\@listleftmargin,itemsep=0pt,topsep=0pt,partopsep=0pt,parsep=\parskip}
\setlist[enumerate]{align=left,labelsep=*,leftmargin=\@listleftmargin,itemsep=0pt,topsep=0pt,partopsep=0pt,parsep=\parskip}
\newcommand{\pr}{\mathrm{pr}}
\newcommand{\pt}{*}
\newcommand{\rmpar}{\mathrm{par}}
\newcommand{\rmH}{\mathrm{H}}
\renewcommand{\sf}{{s\hspace{-0.04cm}f}}
\newcommand{\ori}{{\mathsf{or}}}
\newcommand{\rmh}{\mathrm{h}}
\newcommand{\rmD}{\mathrm{D}}
\newcommand{\uni}{\mathrm{uni}}
\newcommand{\ev}{\mathrm{ev}}
\newcommand{\coev}{\mathrm{coev}}
\newcommand{\dR}{\mathrm{dR}}
\newcommand{\dd}{\mathrm{d}}
\newcommand{\End}{\mathrm{End}}
\newcommand{\Hom}{\mathrm{Hom}}
\newcommand{\Diff}{\mathrm{Diff}}
\newcommand{\opp}{\mathrm{op}}
\newcommand{\rank}{\mathrm{rk}}
\newcommand{\RP}{\mathrm{RP}}
\newcommand{\ind}{\mathrm{ind}}
\newcommand{\curv}{\mathrm{curv}}
\newcommand{\scM}{{\mathscr{M}}}
\newcommand{\scL}{\mathscr{L}}
\newcommand{\scC}{\mathscr{C}}
\newcommand{\scD}{\mathscr{D}}
\newcommand{\scQ}{Q}
\newcommand{\scZ}{\mathscr{Z}}
\newcommand{\scF}{\mathscr{F}}
\newcommand{\scR}{\mathscr{R}}
\newcommand{\CG}{\pazocal{G}}
\newcommand{\CI}{\pazocal{I}}
\newcommand{\CT}{\pazocal{T}}
\newcommand{\CV}{{\pazocal{V}}}
\newcommand{\CF}{\pazocal{F}}
\newcommand{\CE}{\pazocal{E}}
\newcommand{\CS}{\pazocal{S}}
\newcommand{\CA}{\pazocal{A}}
\newcommand{\sfR}{\mathsf{R}}
\newcommand{\sfn}{\mathsf{n}}
\newcommand{\rev}{\mathsf{rev}}
\newcommand{\sfU}{\mathsf{U}}
\newcommand{\sfr}{\mathsf{r}}
\newcommand{\hol}{\mathsf{hol}}
\newcommand{\sfSigma}{\mathsf{\Sigma}}
\newcommand{\FC}{\mathbb{C}}
\newcommand{\bbD}{{\mathbb{D}}}
\newcommand{\NN}{\mathbb{N}}
\newcommand{\FR}{\mathbb{R}}
\newcommand{\RZ}{\mathbb{Z}}
\newcommand{\One}{\mathds{1}}
\newcommand{\bbS}{{\mathbb{S}}}
\newcommand{\HLBdl}{{\pazocal{HL}\hspace{-0.02cm}\pazocal{B}un}}
\newcommand{\Grb}{{\pazocal{G}rb}}
\newcommand{\TrivGrb}{{\pazocal{T}\hspace{-0.1cm}riv\pazocal{G}rb}}
\newcommand{\Vect}{{\pazocal{V}\hspace{-0.025cm}ect}}
\newcommand{\HVBdl}{{\pazocal{H}\hspace{-0.025cm}\pazocal{V}\hspace{-0.025cm}\pazocal{B}un}}
\newcommand{\LBdl}{{\pazocal{LB}un}}
\newcommand{\VBdl}{{\pazocal{V}\hspace{-0.025cm}\pazocal{B}un}}
\newcommand{\Mfd}{{\mathscr{M}\mathrm{fd}}}
\newcommand{\Dfg}{{\mathscr{D}\mathrm{fg}}}
\newcommand{\Cart}{{\mathscr{C}\mathrm{art}}}
\newcommand{\Desc}{{\mathscr{D}\mathrm{esc}}}
\newcommand{\OCBord}{{\mathscr{OCB}\mathrm{ord}}}
\newcommand{\CBord}{\mathscr{CB}\mathrm{ord}}
\newcommand{\Bord}{\mathscr{B}\mathrm{ord}}
\newcommand{\OCFFT}{{\mathrm{OCFFT}}}
\newcommand{\FFT}{{\mathrm{FFT}}}
\newcommand{\rmRP}{\mathrm{RP}}
\newcommand{\OCTQFT}{{\mathrm{OCTQFT}}}
\newcommand{\KFrob}{\mathrm{KFrob}}
\newcommand{\RPKFrob}{\mathrm{RP}\text{-}\KFrob}
\newcommand{\wtfrob}{\widetilde{\mathrm{frob}}}
\newcommand{\TBG}{\mathrm{TBG}}
\newcommand{\CHom}{{\pazocal{H}om}}
\newcommand{\ul}[1]{\underline{#1}}
\newcommand{\dslash}{{/\hspace{-0.1cm}/}}
\newcommand{\arisom}{\overset{\cong}{\longrightarrow}}
\newcommand{\iu}{\mathrm{i}}
\newcommand{\tr}{\mathrm{tr}}
\newcommand{\<}{\langle}
\renewcommand{\>}{\rangle}
\newcommand{\qen}{\hfill$\triangleleft$}
\newcommand{\qandq}{\quad \text{and} \quad }
\def\quot#1{``#1''}
\begin{document}

\begin{flushright}
\small
\textsf{Hamburger Beiträge zur Mathematik Nr.\,765}\\
\textsf{ZMP--HH/19-21}
\end{flushright}


\begin{center}
\LARGE{\textbf{Smooth Functorial Field Theories\\from B-Fields and D-Branes}}
\end{center}
\begin{center}
\large Severin Bunk and Konrad Waldorf
\end{center}

\begin{abstract}
\noindent
In the Lagrangian approach to 2-dimensional sigma models, B-fields and D-branes contribute topological terms to the  action of worldsheets of both open and closed strings.
We show that these terms naturally fit into a 2-dimensional, smooth open-closed functorial field theory (FFT) in the sense of Atiyah, Segal, and Stolz-Teichner.
We give a detailed construction of this smooth  FFT, based on the definition of a suitable smooth bordism category. In this bordism category, all  manifolds are equipped with a smooth map to a spacetime target manifold. Further, the object manifolds are allowed to have boundaries; these are the endpoints of open strings stretched between D-branes.
The values of our FFT are obtained from the B-field and its D-branes via transgression.
Our construction  generalises work of Bunke-Turner-Willerton to include open strings.
At the same time,  it generalises work of Moore-Segal  about open-closed TQFTs to include target spaces.
We provide a number of further features of our FFT: we show that it depends functorially on the B-field and the D-branes, we show that  it is thin homotopy invariant, and we show that it comes equipped with a positive reflection structure in the sense of Freed-Hopkins.
Finally, we describe how our construction is related to the classification of open-closed TQFTs obtained by  Lauda-Pfeiffer.
\end{abstract}

\tableofcontents

\section{Introduction}
\label{sec:intro}

A \emph{topological quantum field theory} (TQFT) is a symmetric monoidal functor
\begin{equation}
\scZ \colon \Bord_d \to \Vect\,,
\end{equation} 
where $\Bord_d$ is a suitable category of closed oriented $(d{-}1)$-manifolds and $d$-dimensional bordisms, symmetric monoidal under the disjoint union of manifolds, and where $\Vect$ is the category of finite-dimensional complex vector spaces, monoidal under the tensor product.
The axioms of the symmetric monoidal functor $\scZ$ implement abstractly the sewing laws of the path integral.
The idea of this formalisation goes back to Atiyah~\cite{Atiyah:TQFTs} and Segal~\cite{Segal1987,Segal:Def_of_CFT}.
We refer to~\cite{Baez2006} for more information about the physical perspective.
TQFTs are very rich and interesting objects.
A crucial feature is that they can be classified in terms of algebraic objects: one chooses a presentation of the category $\Bord_d$ in terms of generators and relations and then translates these into algebraic data on the target side.
In this way, for example, 2-dimensional  TQFTs correspond to commutative Frobenius algebras~\cite{Dijkgraaf1989,Abrams1996,Kock2003}. 

There exist many interesting variations of the notion of a TQFT, which arise by including additional structure.
In this article, we study  four  modifications, all  at the same time, and show that the  2-dimensional sigma model with B-field and D-branes fits into this framework. Until now, only two special cases of this  picture have been worked out. We shall sketch below separately  the four modifications of the bordism category we study; all details are fully worked out in the main text.

\emph{(1) D-Branes.}
We consider a set $I$ of \emph{brane labels} and form a new category $\OCBord_d^{I}$, whose objects are compact oriented $(d{-}1)$-manifolds $Y$ with boundary, with each boundary component equipped with a brane label $i \in I$.
Morphisms $\Sigma \colon Y_0 \to Y_1$ are compact oriented $d$-manifolds with corners that now have an incoming boundary $\partial_0 \Sigma \cong Y_0$, an outgoing boundary $\partial_1 \Sigma \cong Y_1$, as well as additional \emph{brane boundary} $\partial_2 \Sigma$, whose components carry brane labels compatible with those of $Y_0$ and $Y_1$.
Symmetric monoidal functors \smash{$\scZ \colon \OCBord_d^{I} \to \Vect$} are called \emph{open-closed TQFTs with D-brane labels $I$}.
In two dimensions, open-closed TQFTs  have been discussed and classified by Lazaroiu~\cite{Lazaroiu:OCFFT_in_2D}, Moore-Segal~\cite{MS--2DTFTs}, and Lauda-Pfeiffer~\cite{LP--Open-closed_TQFTs}; they correspond to so-called $I$-coloured knowledgeable Frobenius algebras.
Our motivation to include brane labels is string theory, where one needs to consider  open and closed strings at the same time, and where the end-points of open strings are constrained to D-branes.

\emph{(2) Reflection-positivity.}
The bordism category $\Bord_d$ admits two canonical involutions, called \emph{dual} $(..)^\vee$ and \emph{opposite} $\overline{(..)}$.
The dual implements duals with respect to the symmetric monoidal structure, while the opposite is an a priori different operation that is usually not considered in the TQFT literature. The category $\Vect$ has similar involutions: the usual dual   $V^{\vee}$ and the complex conjugate  $\overline{V}$ of a vector space $V$. While any monoidal functor sends duals to duals, demanding that a functor $\scZ:\Bord_d \to \Vect$ sends opposite bordisms to complex conjugate vector spaces is a constraint, and requires an additional structure called a \emph{reflection structure} \cite{FH:Reflection_positivity}. Basically, it consists of natural isomorphisms $\overline{\scZ(Y)} \cong \scZ(\overline{Y})$ for all objects $Y$ of $\Bord_d$. It turns out that  a reflection structure induces an isomorphism between $\scZ(Y)^\vee$ and $\overline{\scZ(Y)}$, and hence a non-degenerate hermitean  form.
A reflection structure on $\scZ$ is then called \emph{positive} if that form is positive definite, for every object $Y$.

\emph{(3) Target spaces.}
Searching for new invariants of manifolds, Turaev considered bordism categories where all manifolds are equipped with a homotopy class of maps into a fixed topological space~\cite{Turaev}.
Stolz-Teichner~\cite{ST:What_is_an_elliptic_object} considered an even more refined bordism category $\Bord_d(M)$, where all manifolds are endowed with smooth maps to a smooth manifold $M$, \emph{not} taken up to homotopy.
Symmetric monoidal functors $\scZ \colon \Bord_d(M) \to \Vect$ will be called \emph{functorial field theories (FFTs) on $M$}.
A FFT $\scZ$ on $M$ may be invariant under changing the maps to $M$ by homotopies and thus reduce to one of Turaev's homotopy invariant FFTs.
Often, however, FFTs are only invariant under \emph{thin homotopies}, i.e.~homotopies whose differential has at most rank $d$; these FFTs will be called \emph{thin homotopy invariant FFTs on $M$}. Motivated by our construction of a FFT, we add a further  property that has not been considered before:  we call a  FFT $\scZ$ \emph{superficial} if it is  thin homotopy invariant and, in addition, the values of $\scZ$ agree on  two morphisms in $\Bord_d(M)$ with the same source and the same target whenever they have the same underlying $d$-manifold $\Sigma$ and their (not necessarily homotopic) smooth maps $\sigma, \sigma' \colon \Sigma \to M$ are thin in the sense that their differential is of rank strictly less than $d$ everywhere.

\emph{(4) Smoothness}. This becomes relevant upon including target spaces.
In physical field theories, it is generally crucial not only to describe the fields themselves, but also how the fields change in space and over time.
For instance, in order to derive the classical equations of motion, one  analyses how the action functional changes under smooth variations of the fields.
Thus, we consider \emph{smooth families} of manifolds, rather than just individual ones.
The formalism we use here has been invented by Stolz-Teichner~\cite{ST:SuSy_FTs_and_generalised_coho} and is based on presheaves of categories.
We define a presheaf $\Bord_d(M)$ of symmetric monoidal categories on a suitable category of test spaces.
Then, we define a \emph{smooth FFT on $M$} to be a morphism $\scZ \colon \Bord_d(M) \to \VBdl$ of presheaves of symmetric monoidal categories, where $\VBdl$ is the presheaf that assigns to a test space its symmetric monoidal category of vector bundles. Inserting the one-point test space always brings us back to the previous (discrete) setting, but in general this looses information.

Our motivation for passing from TQFTs to smooth FFTs is to think of $M$ as the background spacetime of a classical field theory.
The passage comprises a quite drastic change of perspective, since in general smooth FFTs are much richer than TQFTs: in our formalism, TQFTs turn out to be smooth FFTs on $M=\{\ast\}$. In fact, this is a theorem that we prove -- to our best knowledge -- here for the first time (Theorem \ref{st:ev_* is equivalence}):  the presheaf formalism of \emph{smooth} FFTs disappears for $M=\{\ast\}$ automatically and reduces the formalism indeed to the one of TQFTs.   
Consequently, smooth FFTs provide a common framework for classical and quantum theories.
The original motivation of TQFTs to represent the sewing laws of a path integral is now enlarged to a more fundamental statement about the integrands under the path integral.

Let us briefly describe known examples of smooth FFTs on a target space $M$, in small dimensions $d$. In dimension $d=1$, every vector bundle with connection over $M$ defines a 1-dimensional smooth FFT, see \cite{barret1,schreiber3,BEP:Smooth_1D_TFTs_are_VBdls_with_conns,LS:Smooth_1D_FFTs}, and each of these papers proves (in different formalisms) that in fact every 1-dimensional smooth FFT arises this way. Analogously, in dimension $d=2$,  a bundle gerbe with connection over $M$ defines a 2-dimensional smooth FFT on $M$. Again in a slightly different setting, this has been demonstrated by Bunke-Turner-Willerton~\cite{BTW--Gerbes_and_HQFTs}. The same smooth FFT can also be obtained by restricting the FFT we construct in this paper to the closed sector, as we show in Section~\ref{sec:closedsubsector}, but our construction is functorial, rather than defined on isomorphism classes only.   
Each of these FFTs turns out to be superficial in our sense.
For $M$ a  Lie group, many aspects of the relation between gerbes with connection and smooth FFTs have been treated earlier by Gaw\c edzki \cite{gawedzki16} and Freed~\cite{Freed2} in the process of understanding mathematical properties of Wess-Zumino-Witten models.
We remark that Bunke-Turner-Willerton~\cite{BTW--Gerbes_and_HQFTs} also prove that all 2-dimensional, invertible, thin homotopy invariant, smooth FFTs on $M$ arise from a bundle gerbe with connection over $M$. 

Next, we describe in more detail the framework we set up in this article, which provides a unified treatment of all four modifications described above. The target space is  a pair $(M,Q)$ of a smooth manifold $M$ and a family $\scQ = \{Q_i\}_{i \in I}$ of submanifolds $Q_i \subset M$; these submanifolds are supposed to support the D-branes, indexed by brane labels $i \in I$.
We define a presheaf  \smash{$\OCBord_d(M,\scQ)$} of oriented open-closed $d$-dimensional bordisms on the target space $(M,\scQ)$. If an object manifold has a boundary, then each connected component of the boundary is equipped with a brane label $i \in I$, and this component is mapped to $Q_i \subset M$.
The following is the central definition of this article, see Definition \ref{def:smooth field theories} in the main text: a \emph{smooth open-closed functorial field theory (OCFFT) on $(M,Q)$} is a morphism
\begin{equation}
        \scZ \colon \OCBord_d(M,\scQ) \longrightarrow \VBdl
\end{equation}
of presheaves of symmetric monoidal categories.
We also describe carefully the conditions under which we call a smooth OCFFT invertible (Definition \ref{def:invertible}), (thin) homotopy invariant, or superficial (Definition \ref{def:thinhomotopyinvariantocfft}). Moreover, we define reflection structures on smooth OCFFTs and explain positivity (Definitions \ref{def:reflection structure} and \ref{def:positive reflection structure}).

The main result of this paper is the functorial construction of a 2-dimensional, invertible, reflection-positive, superficial, smooth OCFFT on a target space $(M,\scQ)$, taking as input a  \emph{target space brane geometry} on $(M,\scQ)$.
These are pairs $(\CG,\CE)$ consisting of a bundle gerbe $\CG$ with connection on $M$ (in string theory called a \quot{B-field}), and of a family $\CE = \{\CE_i\}_{i \in I}$ of twisted vector bundles with connection over the submanifolds $Q_i$ (the \quot{Chan-Paton bundles}).
They enter  the OCFFT precisely as expected and as proposed by string theory: the bundle gerbe connection contributes a Wess-Zumino-term~\cite{WZ:Anomalous_Ward,Witten:Current_algebra,gawedzki3,Gawedzki:branes_in_WZW-models_and_gerbes}, and the twisted vector bundles describe the coupling of the end points of open strings to the D-branes~\cite{Kapustin:D-branes_in_nontriv_B-fields,Gawedzki-Reis:WZW-branes_and_gerbes,CJM--Holonomy_on_D-branes}. Our construction simultaneously generalises the FFT of Bunke-Turner-Willerton to include open strings and morphisms of target space brane geometries, and the open-closed TQFT of Lazaroiu, Moore-Segal and Lauda-Pfeiffer to include a target space and smoothness.

Let us now outline  some important steps in our constructions and describe how the paper is organised.
\begin{enumerate}[itemsep=-0.1cm, leftmargin=*, topsep=0cm, label=(\alph*)]
\item
The complex vector spaces that our OCFFT assigns to 1-manifolds are constructed in Section~\ref{sect:background}.
A substantial part of the construction has been performed in our previous paper~\cite{BW:Transgression_and_regression_of_D-branes}:  the transgression of bundle gerbes and D-branes to spaces of loops and paths in $M$. We review the relevant parts in Section~\ref{sect:bundles on path spaces} and Section~\ref{sect:LBdl_on_loop_space} of the present article.
While in~\cite{BW:Transgression_and_regression_of_D-branes} it sufficed to parameterise paths and loops in $M$ by $[0,1]$ and $\bbS^1$, respectively, for our present purposes we need to extend the formalism to oriented manifolds $Y$ that are only diffeomorphic to either $[0,1]$ or $\bbS^1$.
This is achieved using an enriched two-sided simplicial bar construction and descent theory~\cite{Bunk:Coh_Desc}; we summarise the construction in Section~\ref{sect:pull-push and coherence}.
The vector spaces our OCFFT assigns to closed 1-manifolds are obtained from the fibres of the hermitean line bundle  over the loop space of $M$ (the transgression of the bundle gerbe) and the vector spaces assigned to open 1-manifolds are fibres of hermitean vector bundles over spaces of paths connecting two D-branes (the transgression of the twisted vector bundles).
In particular, all these vector spaces are equipped with hermitean inner products, which are -- as OCFFTs take values in bare vector bundles -- discarded in this step.

\item
The linear maps that our smooth OCFFTs assign to 2-dimensional bordisms are constructed  in Section~\ref{sect:amplitudes}.
There, we actually adopt a dual picture and construct instead the scattering amplitude associated to the bordism for given incoming and outgoing states.
Our construction extends and conceptually simplifies well-known constructions of~\cite{Kapustin:D-branes_in_nontriv_B-fields,Gawedzki-Reis:WZW-branes_and_gerbes,CJM--Holonomy_on_D-branes}, while also providing a careful treatment of corners.

\end{enumerate}

\noindent   
The definition of  the presheaf $\OCBord_d{(M,\scQ)}$ of open-closed bordisms as well as the corresponding definition of smooth OCFFTs, as described above, is presented in Section~\ref{sect:Smooth TFTs}. 
In Section~\ref{sect:TFT_construction} we show the following main result of this article, see Theorem \ref{st:BGrb plus D-branes yields OC TFT}, Proposition \ref{prop:reflectionstructure}, Corollary \ref{co:positivity}, and Theorem \ref{st:Z_(-) is a functor}.

\begin{maintheorem}
\label{st:summary thm 1}
Consider a target space brane geometry $(\CG, \CE)$ on a target space $(M,Q)$.
The constructions (a) and (b) yield a smooth OCFFT
\begin{equation}
        \scZ_{\CG,\CE} \colon \OCBord_2{(M,\scQ)} \longrightarrow \VBdl\,.
\end{equation}
Moreover, this smooth OCFFT $\scZ_{\CG,\CE}$ has the following properties:
\begin{itemize}

\item
It is invertible.

\item 
It is superficial, and in particular thin homotopy invariant.

\item
There is a canonical positive reflection structure on $\scZ_{\CG,\CE}$.
It recovers precisely those hermitean inner products on the vector spaces in the image of $\scZ_{\CG,\CE}$ that have been discarded earlier. 

\item
The dependence on the target space brane geometry is functorial.
That is, we obtain a functor
\begin{equation}
        \scZ \colon \rmh_1 \TBG(M,Q) \longrightarrow \rmRP\text{-}\OCFFT_{2}^{\sf}(M,\scQ)^{\times}\,,
        \quad
        (\CG,\CE) \longmapsto \scZ_{\CG,\CE}
\end{equation}
from the homotopy groupoid of target space brane geometries on $(M,Q)$ to the groupoid of 2-dimensional, invertible, superficial, reflection-positive, smooth OCFFTs on $(M,Q)$. 
\end{itemize}
\end{maintheorem}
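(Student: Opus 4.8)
The plan is to assemble $\scZ_{\CG,\CE}$ from the two constructions~(a) and~(b) outlined in the introduction and then to verify the listed properties one at a time. For the values on objects, let $Y$ be an object of $\OCBord_2(M,\scQ)$ over a test space, i.e.\ a smooth family of compact oriented $1$-manifolds whose boundary components carry brane labels $i$ and are mapped into $Q_i$. Each free-loop component of $Y$ contributes the transgression line bundle of $\CG$, and each interval component joining $Q_i$ to $Q_j$ contributes the transgression vector bundle of the pair $(\CE_i,\CE_j)$, both taken from~\cite{BW:Transgression_and_regression_of_D-branes}. Since $Y$ is only diffeomorphic to $\bbS^1$ or $[0,1]$ and may have several components, I would glue these fibrewise using the enriched two-sided simplicial bar construction together with the coherent descent theory of~\cite{Bunk:Coh_Desc} recalled in Section~\ref{sect:pull-push and coherence}; this produces a vector bundle $\scZ_{\CG,\CE}(Y)$ over the test space. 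The hermitean inner products on the transgression bundles are retained internally at this stage but discarded in the output, to be recovered later from the reflection structure.

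\textbf{Values on morphisms and functoriality.} For a bordism $\Sigma\colon Y_0\to Y_1$ with brane boundary $\partial_2\Sigma$, I would, following Section~\ref{sect:amplitudes}, define for prescribed incoming and outgoing states the scattering amplitude by combining the Wess-Zumino surface holonomy of $\CG$ over $\Sigma$ with the parallel transport of the bundles $\CE_i$ along $\partial_2\Sigma$, paying attention to the corners where $\partial_2\Sigma$ meets $\partial_0\Sigma$ and $\partial_1\Sigma$; dualising yields the linear map $\scZ_{\CG,\CE}(\Sigma)$. Checking functoriality then amounts to: (i) the gluing law, i.e.\ that composing two bordisms along a common boundary $1$-manifold multiplies the amplitudes, which follows from additivity of surface holonomy and boundary transport under cutting, rendered independent of all auxiliary choices by the coherence results of Section~\ref{sect:pull-push and coherence}; (ii) that identity cylinders are sent to identity maps; and (iii) monoidality under disjoint union, which is immediate from multiplicativity of transgression and holonomy. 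Compatibility with restriction of test spaces is built into the construction, so one obtains the desired morphism of presheaves of symmetric monoidal categories.

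\textbf{The four properties.} \emph{Invertibility}: every amplitude is an isomorphism because surface holonomy and parallel transport are, and the remaining requirements of Definition~\ref{def:invertible} follow from the invertibility properties of transgression (the transgression of a gerbe being a line bundle). \emph{Superficiality}: the Wess-Zumino term and the boundary transport are computed from pullbacks of the curving and connection data of $(\CG,\CE)$, which are insensitive to perturbations of a map $\sigma\colon\Sigma\to M$ whose differential has rank $<2$; hence the amplitude is invariant under thin homotopies and, more strongly, coincides on two morphisms with the same underlying surface whose maps are thin everywhere (Definition~\ref{def:thinhomotopyinvariantocfft}). \emph{Positive reflection structure}: orientation reversal of $Y$ corresponds to complex conjugation of the transgression bundles, which supplies natural isomorphisms $\overline{\scZ_{\CG,\CE}(Y)}\cong\scZ_{\CG,\CE}(\overline Y)$; the axioms of Definition~\ref{def:reflection structure} reduce to compatibility of the retained hermitean metrics with pull-push, and positivity (Definition~\ref{def:positive reflection structure}) holds because those metrics are fibrewise positive definite. \emph{Functoriality in $(\CG,\CE)$}: a morphism of target space brane geometries transgresses to an isomorphism of the associated bundles commuting with pull-push, hence to a natural isomorphism of OCFFTs preserving the reflection structure, and $2$-morphisms act trivially after passing to $\rmh_1$; this produces the functor $\scZ\colon\rmh_1\TBG(M,Q)\to\rmRP\text{-}\OCFFT_2^{\sf}(M,\scQ)^\times$.

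\textbf{Main obstacle.} The hardest part will be the gluing law for amplitudes in the presence of corners: cutting a surface along an embedded $1$-manifold that meets $\partial_2\Sigma$ transversally must split both the surface holonomy of $\CG$ and the transport along $\partial_2\Sigma$ compatibly, and the resulting identification must be coherent with the simplicial-bar descent data defining the values on objects. This is exactly where the enriched two-sided bar construction and the coherence theorems of~\cite{Bunk:Coh_Desc} do the heavy lifting; once the gluing is established, the remaining items -- monoidality under disjoint union, invertibility, the rank estimate behind superficiality, and positivity of the hermitean form -- are comparatively routine.
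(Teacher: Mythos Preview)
Your proposal is essentially correct and follows the same architecture as the paper: assemble $\scZ_{\CG,\CE}$ on objects from the coherent transgression bundles (Section~\ref{sect:pull-push and coherence}), on morphisms from the surface amplitude (Section~\ref{sect:amplitudes}), then verify the gluing law, reflection positivity, and functoriality in $(\CG,\CE)$ as in Section~\ref{sect:TFT_construction}. Your identification of the gluing law with corners as the main obstacle is accurate; this is indeed the longest part of the paper's argument.

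One point needs correction. You write that ``every amplitude is an isomorphism because surface holonomy and parallel transport are.'' This is false: an open pair of pants, for instance, induces a multiplication map $\scR_{jk}\otimes\scR_{ij}\to\scR_{ik}$ which is generally not invertible. Invertibility in the sense of Definition~\ref{def:invertible} only demands that objects with $\partial Y=\emptyset$ receive \emph{line} bundles (clear, since $\scL$ has rank one) and that bordisms with $\partial_2\Sigma=\emptyset$ go to isomorphisms (clear, since without brane boundary the amplitude reduces to surface holonomy, which is a nonzero scalar between one-dimensional spaces). So the property is straightforward, but for a narrower reason than you state.

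A smaller imprecision: your superficiality argument (``insensitive to perturbations of a map whose differential has rank $<2$'') conflates the two clauses of Definition~\ref{def:thinhomotopyinvariantocfft}. Thin homotopy invariance is established in the paper via Proposition~\ref{st:amplitude and homotopies} (Stokes' theorem gives a factor $\exp(-\int h^*\curv(\CG))$, which vanishes when the homotopy is thin), while the second clause---equal values on two thin maps agreeing on $\partial\Sigma$, \emph{not} assumed homotopic---uses a separate argument (Proposition~\ref{st:A^G and thin maps to M}): a thin map admits a trivialisation $\CT\colon\sigma^*\CG\to\CI_0$ with vanishing $2$-form, so the surface integral is trivial and only boundary data remain. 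Your sketch gestures at the right mechanism but does not separate these two arguments.
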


A number of further properties of the  OCFFT $\scZ_{\CG,\CE}$ are investigated in  Section~\ref{sect:Subsectors and kinetic terms}.
In Section \ref{sec:closedsubsector} we look at closed subsectors of the theory; we relate the restriction of $\scZ_{\CG,\CE}$ to these subsectors to previous work, in particular to that of Bunke-Turner-Willerton \cite{BTW--Gerbes_and_HQFTs}. In Section \ref{sec:flatgerbes} we prove the result (Theorem \ref{th:flat}) that the  OCFFT $\scZ_{\CG,\CE}$ is homotopy invariant if and only if the connection on the bundle gerbe $\CG$ is flat.
In the remaining subsections we concentrate on the reduction of our results to the case of a one-point target space  $M = \{\ast\}$, and explore in detail the relation to the work of Lazaroiu~\cite{Lazaroiu:OCFFT_in_2D}, Lauda-Pfeiffer~\cite{LP--Open-closed_TQFTs} and Moore-Segal~\cite{MS--2DTFTs}.
The following theorem summarises these relations.
\begin{maintheorem}
\label{st:summary thm 2}
There is a strictly commutative diagram of equivalences of categories:
\begin{equation}
\xymatrix@C=2cm@R=1cm{
        \mathrm{h}_1\TBG(I) \ar[d]_{\wtfrob} \ar[r]^-{\scZ} & \rmRP\text{-}\OCFFT_{2}(I)^{\times} \ar[d]^-{\ev_*}
        \\
        \RPKFrob^{I}_\FC & (\rmRP\text{-}\OCTQFT_{2}^{I})^{\times} \ar[l]^-{\scF}
}
\end{equation}
\end{maintheorem}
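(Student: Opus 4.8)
The plan is to show that three of the four functors in the square --- namely $\ev_*$, $\scF$, and the left vertical $\wtfrob$ --- are equivalences, and that the square commutes strictly, in the form $\scF \circ \ev_* \circ \scZ = \wtfrob$. Then $\scF \circ \ev_*$ is an equivalence whose composite with $\scZ$ equals the equivalence $\wtfrob$, so the remaining functor $\scZ$ is an equivalence by cancellation, and the strictly commuting square becomes a strictly commuting square of equivalences. (One could equally prove $\scZ$ directly and deduce $\wtfrob$; over a one-point target I expect $\wtfrob$ to be the more convenient of the two, since it reduces to finite-dimensional linear algebra.)

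For the right vertical I would invoke Theorem~\ref{st:ev_* is equivalence}: evaluation $\ev_*$ at the one-point test space is an equivalence from smooth OCFFTs on $(\{\ast\},\scQ)$ to the discrete open-closed $2$-dimensional TQFTs with brane labels $I$. (Over a point every smooth OCFFT is automatically superficial, since all maps into $\{\ast\}$ have vanishing differential; this explains why no $\sf$-decoration occurs in the upper-right corner.) It then remains to check the easy facts that $\ev_*$ preserves and reflects invertibility and transports reflection structures to reflection structures compatibly with positivity; as these are properties, respectively structures, carried along an equivalence, $\ev_*$ restricts to an equivalence between $\rmRP\text{-}\OCFFT_2(I)^\times$ and $(\rmRP\text{-}\OCTQFT_2^I)^\times$. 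For the bottom horizontal $\scF$ I would use the classification of $2$-dimensional open-closed TQFTs by Moore-Segal and Lauda-Pfeiffer~\cite{MS--2DTFTs,LP--Open-closed_TQFTs}: presenting $\OCBord_2^I$ by generators and relations exhibits a symmetric monoidal functor out of it as an $I$-coloured knowledgeable Frobenius algebra, and one reads off that a reflection structure corresponds precisely to the extra datum and conditions encoded in the definition of $\RPKFrob^I_\FC$ (Section~\ref{sect:Subsectors and kinetic terms}), with positivity and invertibility matching on the two sides; this makes $\scF$ an equivalence essentially by construction.

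For the left vertical, the plan is to unwind a target space brane geometry on $(\{\ast\},\scQ)$ by hand. Because the loop space of $\{\ast\}$ and the relevant path spaces between the $Q_i$ are again points, the transgression constructions recalled in Sections~\ref{sect:bundles on path spaces} and~\ref{sect:LBdl_on_loop_space} degenerate: a bundle gerbe with connection on $\{\ast\}$ contributes only the ground field $\FC$, each Chan-Paton bundle $\CE_i$ becomes a finite-dimensional vector space, and the pull-push maps of construction (b) collapse to bilinear pairings and composition maps. Comparing this package with the definition of $\RPKFrob^I_\FC$, one checks that $\wtfrob$ is fully faithful --- morphisms of brane geometries over a point match isomorphisms of the associated knowledgeable Frobenius algebras --- and essentially surjective. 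Strict commutativity of the square is then obtained by computing $\ev_* \scZ_{\CG,\CE}$ from constructions (a) and (b) specialised to $M = \{\ast\}$ and comparing its values on the generating bordisms with those of $\scF(\wtfrob(\CG,\CE))$: since all loop- and path-space data has degenerated, the two functors are given by literally the same formulas in terms of the $\CE_i$ and the trivialised transgression line, so they agree on the nose rather than merely up to coherent isomorphism, and naturality in $(\CG,\CE)$ is then immediate.

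The main obstacle I anticipate is this last identification. Obtaining \emph{strict} commutativity forces one to fix and match all the normalisation conventions on both sides: the generators-and-relations presentation of $\OCBord_2^I$ underlying $\scF$, the trivialisations of gerbes-with-connection and of twisted vector bundles over a point, the treatment of corners in construction (b), and the precise hermitean inner products recovered by the reflection structure. The reflection-positivity bookkeeping --- matching the inner products discarded in step (a) against positivity of the $I$-coloured knowledgeable Frobenius algebra --- is the second delicate point. Everything else I expect to be a routine degeneration of machinery already developed in the earlier sections.
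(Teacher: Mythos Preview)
Your proposal is correct and follows essentially the same route as the paper: establish that $\ev_*$, $\scF$, and $\wtfrob$ are equivalences (via Theorem~\ref{st:ev_* is equivalence}, the Lauda--Pfeiffer classification extended to reflection-positive data, and the degeneration of transgression over a point, respectively), then verify strict commutativity $\scF \circ \ev_* \circ \scZ = \wtfrob$ by a direct component-by-component comparison of the resulting knowledgeable Frobenius algebras, and deduce that $\scZ$ is an equivalence by two-out-of-three. One small slip: near the end you write ``$\scF(\wtfrob(\CG,\CE))$'', but $\wtfrob(\CG,\CE)$ is already a Frobenius algebra and $\scF$ goes the other way; what you mean (and what you correctly state earlier) is to compare $\scF(\ev_*\scZ_{\CG,\CE})$ with $\wtfrob(\CG,\CE)$, which is precisely the check the paper carries out piece by piece for $\scL$, $\scR$, $\chi$, $\epsilon$, $\theta$, $\iota$, $\iota^*$, and the reflection data.
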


The categories in the corners of this diagram are the following. $\TBG(I)$ is the bicategory of target space brane geometries for the one-point target space (its only information is the index set of brane labels), and $\mathrm{h}_1\TBG(I)$ is its homotopy groupoid.  $\RPKFrob^{I}_\FC$ is the category of  $I$-coloured knowledgeable Frobenius algebras whose bulk algebra is isomorphic to $\FC$, equipped with a version of a reflection-positive structure. The functor $\wtfrob$  is described in detail in Section~\ref{sect:Reduction to point}; it arises from the geometric formalism in~\cite{BW:Transgression_and_regression_of_D-branes}.
$(\rmRP\text{-}\OCTQFT_{2}^{I})^{\times}$ is the category of  2-dimensional, invertible,  reflection-positive, open-closed TQFTs, and $\scF$ is the functor that establishes the classification result of Lauda-Pfeiffer~\cite{LP--Open-closed_TQFTs}, enhanced to include reflection structures. Finally,  $\rmRP\text{-}\OCFFT_{2}(I)^{\times}$ is the category of 2-dimensional, invertible, reflection-positive, smooth OCFFTs on the one point target space.
The functor $\ev_{*}$ evaluates a morphism of presheaves of categories on the one-point test space.
As mentioned above, it is an important statement on its own that this functor is an equivalence, which is the content of Theorem~\ref{st:ev_* is equivalence}.

In upcoming work we aim to show that our functor
\begin{equation}
        \scZ \colon \rmh_1\TBG(M,Q) \longrightarrow \rmRP\text{-}\OCFFT^{\sf}_{2}(M,\scQ)^{\times}\,,
        \qquad
        (\CG, \CE) \longmapsto \scZ_{\CE,\CG}
\end{equation}
is an equivalence of categories, for all target spaces $(M,\scQ)$. This would yield a complete classification of 2-dimensional, invertible, reflection-positive, superficial, smooth OCFFTs by target space brane geometries.

\subsection*{Acknowledgments}
We thank Ulrich Bunke, Tobias Dyckerhoff, Lukas Müller, Alexander Schenkel, Urs Schreiber, Stephan Stolz, Richard Szabo, and Peter Teichner for interesting discussions and comments.
S.B. was partially supported by RTG 1670, \emph{Mathematics Inspired by String Theory and Quantum Field Theory}.

\section{Transgression of B-fields and D-branes}
\label{sect:background}

\subsection{Target space brane geometry}
\label{sect:BGrbs}

In this section we recall the basic definitions of bundle gerbes and D-branes as well as important results on their structure on which our constructions in the later sections will be based.
Bundle gerbes have been introduced in~\cite{Murray:Bundle_gerbs}.
The notion of morphisms that we are going to employ originated in~\cite{Murray-Stevenson:Bgrbs--stable_isomps_and_local_theory} and has been generalised in~\cite{Waldorf--More_morphisms,Waldorf--Thesis}.
For detailed proofs of the statements on the 2-category of bundle gerbes we refer the reader to the above references as well as~\cite{BSS--HGeoQuan,Bunk--Thesis}; for a non-technical introduction see also~\cite{Bunk-Szabo:Fluxes_brbs_2Hspaces}.
The conventions used in this article are compatible with our previous article~\cite{BW:Transgression_and_regression_of_D-branes}.
We denote by $\Mfd$ the category of smooth manifolds and smooth maps.

The approach to bundle gerbes we present here is the  compact treatment worked out in~\cite{NS--Equivariance_in_higher_geometry}.
Let $\HVBdl^\nabla(M)$ denote the category of hermitean vector bundles with connection on a manifold $M$.
To every manifold $M$ we assign a 2-category $\TrivGrb^\nabla(M)$ whose objects are 2-forms $B \in \Omega^2(M)$ and whose morphism categories $\CHom(B_1,B_2)$ are  the full subcategories
\begin{equation}
        \HVBdl^\nabla(M)^{B_2 - B_1} \subset \HVBdl^\nabla(M)
\end{equation}
on those hermitean vector bundles $E$ over $M$ with (unitary) connection whose curvature satisfies
\begin{equation}
\label{eq:curv}
\frac{1}{\rank(E)}\, \tr \big( \curv(E) \big) = B_2 - B_1\,.
\end{equation}
The composition of 1-morphisms and the horizontal composition of 2-morphisms are induced by the tensor product in $\HVBdl^\nabla(M)$.
The assignment $M \mapsto \TrivGrb^\nabla(M)$ defines a presheaf of symmetric monoidal 2-categories on $\Mfd$.

The sheafification of the presheaf $\TrivGrb^\nabla$ with respect to the Grothendieck topology of surjective submersions yields a sheaf $\Grb^\nabla$ of symmetric monoidal 2-categories on $\Mfd$. Its sections are called bundle gerbes with connections.
In short, a bundle gerbe $\CG$ with connection consists of a surjective submersion $\pi:Y \to M$, a 2-form $B\in \Omega^2(Y)$, a hermitian line bundle $L$ with connection over the 2-fold fibre product $Y^{[2]} = Y \times_M Y$, whose curvature satisfies $\mathrm{curv}(L)=\pr_2^{*}B-\pr_1^{*}B$ (called the \emph{curving} of $\CG$), and a unitary, connection-preserving line bundle isomorphism $\mu: \pr_{12}^{*}L \otimes \pr_{23}^{*}L \to \pr_{13}^{*}L$ over $Y^{[3]}$ which satisfies an associativity condition over $Y^{[4]}$. A morphism between bundle gerbes $\CG_{1}$ and $\CG_2$ with connections consists of a surjective submersion $\zeta: Z \to Y_1 \times_M Y_2$, a hermitian vector bundle $E$ over $Z$ with connection whose curvature satisfies \eqref{eq:curv}, and a unitary, connection-preserving vector bundle isomorphism
\begin{equation}
	\alpha \colon \zeta^{[2]*}\pr_{Y_1}^{[2]*} L_1 \otimes \pr_1^*E \longrightarrow \pr_2^* E \otimes \zeta^{[2]*}\pr_{Y_2}^{[2]*} L_2
\end{equation}
over $Z^{[2]}$ that is compatible with the line bundle morphisms $\mu_1$ and $\mu_2$.
Here, all fibre products are taken over $M$, the maps $\pr_{Y_i}$ are the canonical projections $Y_1 \times_M Y_2 \to Y_i$, and the projections $\pr_i$ are the canonical projections $Z^{[2]} \to Z$ onto the $i$-th factor.
In the following, for the sake of legibility we will usually not display these pullbacks explicitly unless it might cause confusion.
One can show that a morphism $\CE \colon \CG \to \CG'$ of bundle gerbes is invertible if and only if its  vector bundle $E \to Z$ is of rank one; see~\cite{Waldorf--More_morphisms}.

\begin{example}
Of crucial importance are the \emph{trivial bundle gerbes} $\CI_\rho$, where $\rho \in \Omega^2(M)$ is any 2-form on $M$.
These objects are exactly those in the image of the canonical inclusion
\begin{equation}
        \TrivGrb^\nabla(M) \hookrightarrow \Grb^\nabla(M)\,.
\end{equation}
Concretely, the gerbe $\CI_\rho$ has as its surjective submersion the identity $1_M \colon M \to M$, as its line bundle with connection the trivial line bundle $M \times \FC \to M$ with the trivial connection, and its structure isomorphism $\mu$ acts as $\mu((x,z_1), (x,z_2)) = (x, z_1 z_2)$ for all $x \in M$ and $z_1, z_2 \in \FC$.
Finally, the curving of $\CI_\rho$ is given by the 2-form $\rho \in \Omega^2(M)$.

A \emph{trivialisation} of a bundle gerbe $\CG$ with connection on $M$ is a 1-isomorphism $\CT \colon \CG \arisom \CI_\rho$ for some $\rho \in \Omega^2(M)$.
The bundle gerbe $\CI_0$ is the monoidal unit of $\Grb^\nabla(M)$.
\qen
\end{example}

\begin{example}
\label{eg:sfR}
Consider a 1-morphism $\CE \colon \CI_\omega \to \CI_\rho$ between two trivial bundle gerbes on $M$.
This is a hermitean vector bundle $E \rightarrow Z$ with connection over a surjective submersion $\zeta \colon Z \rightarrow M$, and a connection-preserving isomorphism $\alpha \colon \pr_1^*E \rightarrow \pr_0^*E$ over $Z^{[2]}$, satisfying a cocycle condition over $Z^{[3]}$.
In other words, $\CE$ is a descent datum for a hermitean vector bundle $\sfR \CE$ with connection on $M$.
Descent thus induces an equivalence of categories
\begin{equation}
        \sfR \colon \CHom(\CI_\omega, \CI_\rho) \arisom \HVBdl^\nabla(M)^{\rho - \omega}\,.
\end{equation}
An inverse of this functor is given by the canonical inclusion
\begin{equation}
        \HVBdl^\nabla(M)^{\rho - \omega} \hookrightarrow \CHom(\CI_\omega, \CI_\rho)
\end{equation}
as a full subcategory.
\qen
\end{example}

We briefly recall that for any bundle gerbe with connection $\CG \in \Grb^\nabla(M)$ and any 2-forms $\omega_1, \omega_2 \in \Omega^2(M)$ there exists a functor
\begin{equation}
        \Delta \colon \CHom(\CG, \CI_{\omega_2}) \times \CHom(\CG, \CI_{\omega_1})^\opp \longrightarrow \HVBdl^\nabla(M)^{\omega_2 - \omega_1}\,,
\end{equation}
which derives from the closed module category structure of the morphism categories of $\Grb^\nabla(M)$ over the category $\HVBdl^\nabla(M)^0$~\cite{BSS--HGeoQuan,Bunk--Thesis}.
Explicitly, $\Delta$ can be defined as follows, see~\cite{BW:Transgression_and_regression_of_D-branes}.
We denote by
\begin{equation}
        (-)^* \colon \Grb^\nabla(M)^{\opp} \to \Grb^\nabla(M)
\end{equation}
the 2-functor that is the identity on objects, that sends a 1-morphism to the morphism defined by the dual vector bundle, and that sends a 2-morphism to the 2-morphism induced by the fibre-wise transpose of the original 2-morphism.
The functor $(-)^*$ reverses the direction of 1-morphisms and 2-morphisms.
Then we define $\Delta$ to be the composition
\begin{equation}
\xymatrix@C=3cm{\CHom(\CG,\CI_{\omega_2}) \otimes \CHom(\CG, \CI_{\omega_1})^\opp \ar[r]^{{1 \times (-)^*}} \ar@{-->}[d]_{\Delta}
        & \CHom(\CG,\CI_{\omega_2}) \otimes \CHom(\CI_{\omega_1}, \CG) \ar[d]^{{(-) \circ (-)}}
        \\
        \HVBdl^\nabla(X)^{\omega_2 - \omega_1} & \CHom(\CI_{\omega_1}, \CI_{\omega_2}) \ar[l]^{\sfR}}
\end{equation}
where $\sfR$ is the descent functor as in Example~\ref{eg:sfR}.
That is, we set
\begin{equation}
\label{eq:def Delta}
        \Delta(\CE, \CF) \coloneqq \sfR( \CE \circ \CF^*)\,.
\end{equation}
If $\CT \colon \CG \to \CI_{\omega_1}$ is an isomorphism, we have 
$\Delta(\CE,\CT) \cong \sfR( \CE \circ \CT^{-1})$;
this follows from the Definition~\eqref{eq:def Delta} of the functor $\Delta$ and the fact that $\CT^* = \CT^{-1}$ for any 1-isomorphism of bundle gerbes~\cite{Waldorf--Thesis}.
We further recall from~\cite[Remark~2.1.1]{BW:Transgression_and_regression_of_D-branes} that there are canonical morphisms
\begin{align}
        \Delta(\CE_3, \CE_2) \otimes \Delta(\CE_2, \CE_1) &\longrightarrow \Delta(\CE_3,\CE_1)\,,
        \\*
        \label{eq:structural isos for Delta}
        \Delta(\CE_2, \CE_1) &\arisom \Delta(\CE_1, \CE_2)^\vee\,,
        \\*
        \Delta(\CE_2 \circ \CA,\, \CE_1 \circ \CA) &\arisom \Delta(\CE_2, \CE_1)
\end{align}
for all 1-morphisms $\CE_a \colon \CG \to \CI_{\omega_a}$ for $a = 1,2,3$ and isomorphisms $\CA \colon \CG' \to \CG$ in $\Grb^\nabla(M)$.
Here, the first morphism is induced by composition, the second is induced by the dual in the category of hermitean vector bundles with connection, and the third is induced from the evaluation isomorphism for $\CA$.
Moreover, there exists a canonical isomorphism
\begin{equation}
\label{eq:1-mors and sections of Delta}
        \Grb^\nabla(M)(\CE_1, \CE_2) \cong \Gamma_{\rmpar, \uni} \big( M, \Delta(\CE_2, \CE_1) \big)\,,
\end{equation}
where $\Gamma_{\rmpar, \uni}$ is the functor that takes parallel unit-length global sections of a hermitean vector bundle with connection.
This can be checked directly from the definition of $\Delta$.

To conclude this section, we recall the definitions of a D-brane and of a target space brane geometry:

\begin{definition}
\label{def:D-brane}
Let $\CG \in \Grb^\nabla(M)$ be a bundle gerbe over $M$, and let $Q \subset M$ be a submanifold.
A \emph{D-brane for $\CG$} supported on $Q$ is a morphism $\CE \colon \CG_{|Q} \to \CI_\omega$ of bundle gerbes over $Q$ for some 2-form $\omega \in \Omega^2(Q)$.
\end{definition}

We call a pair $(M,Q)$ of a manifold $M$ and a collection $\scQ = \{Q_i\}_{i \in I}$ of submanifolds in $M$ a \emph{target space}.
Given a target space $(M,Q)$, we define a 2-groupoid \smash{$\TBG(M,\scQ)$} of \emph{target space brane geometries on $(M,\scQ)$} as follows.
Its objects are pairs $(\CG,\CE)$ of a bundle gerbe with connection $\CG \in \Grb^\nabla(M)$ and of a family $\CE=\{\CE_{i}\}_{i\in I}$ of D-branes $\CE_i \colon \CG_{|Q_i} \to \CI_{\omega_i}$ supported on the submanifolds $Q_i$.
A 1-morphism 
\begin{equation*}
(\CA, \xi) \colon (\CG, \CE) \to (\CG', \CE')
\end{equation*}
in \smash{$\TBG(M,\scQ)$} consists of a 1-isomorphism $\CA \colon \CG \to \CG'$ in $\Grb^\nabla(M)$ together with a family $\xi=\{\xi_i\}_{i\in I}$ of 2-isomorphism $\xi_i \colon \CE_i \to \CE'_i \circ \CA_{|Q_i}$.
Finally, a 2-isomorphism $(\CA, \xi) \to (\CA', \xi')$ in $\TBG(M,Q)$ is given by a  2-isomorphism $\psi \colon \CA \to \CA'$ in $\Grb^\nabla(M)$ such that the diagram
\begin{equation}
\label{eq:2-morphism in TBG}
\xymatrix{\CE_i \ar[rr]^-{\xi_i} \ar[dr]_-{\xi'_i} &&  \CE'_i \circ \CA_{|Q_i} \ar[dl]^{1_{\CE'_i} \circ \psi_{|Q_i}}
        \\
        & \CE'_i \circ \CA'_{|Q_i}}
\end{equation}
commutes for every D-brane label $i \in I$.
We refer to \cite{BW:Transgression_and_regression_of_D-branes} for a more detailed discussion of target space brane geometry.

\subsection{The transgression bundles over paths between D-branes}
\label{sect:bundles on path spaces}

Let $(M,Q)$ be a target space, and let $(\CG,\CE) \in \TBG(M,\scQ)$ be a target space brane geometry.
In this section we review how  $(\CG,\CE)$ give rise to vector bundles over the spaces of paths between the submanifolds $Q_i$~\cite{BW:Transgression_and_regression_of_D-branes}.

Consider two arbitrary brane labels $i, j \in I$, and let $P_{ij}M$ be the diffeological space of smooth paths $\gamma:[0,1] \to M$ in $M$ with sitting instants that start in $Q_i$ and end in $Q_j$.
For a path $\gamma \in P_{ij}M$ and a trivialisation $\CT \colon \gamma^* \CG \to \CI_0$ of the pullback gerbe over the interval, we set
\begin{equation}
\label{eq:defnitionCE}
        \scR_{ij|\gamma}(\CT) \coloneqq \Hom \big( \Delta(\CE_{i|\gamma(0)}, \CT_{|0}),\, \Delta(\CE_{j|\gamma(1)}, \CT_{|1}) \big)\,.
\end{equation}
For later use, we point out that one can alternatively write
\begin{equation}
\label{eq:CE via composition}
        \scR_{ij|\gamma}(\CT) \cong \Hom \big( \sfR(\CE_{i|\gamma(0)} \circ \CT^{-1}_{|0}),\, \sfR(\CE_{j|\gamma(1)} \circ \CT^{-1}_{|1}) \big)\,.
\end{equation}
By construction, $\scR_{ij|\gamma}(\CT)$ is a Hilbert space of finite dimension
\begin{equation}
        \dim_\FC \big( \scR_{ij|\gamma}(\CT) \big) = \rank(\CE_j) \cdot \rank(\CE_i)\,.
\end{equation}
The Hilbert space $\scR_{ij|\gamma}(\CT)$ depends on the choice of a trivialisation $\CT$ of $\gamma^*\CG \in \Grb^\nabla([0,1])$.
However, changing the trivialisation $\CT$ to a trivialisation $\CT'$ changes the Hilbert space $\scR_{ij|\gamma}(\CT)$ only up to a canonical isomorphism.
To see this, we note that there always exists a 2-isomorphism $\psi \colon \CT \to \CT'$ of trivialisations of $\gamma^* \CG$ --  this follows from~\eqref{eq:1-mors and sections of Delta}, for instance.
Any such 2-isomorphism induces a canonical isomorphism
\begin{equation}
\label{eq:def of r_(CT,CT')}
        r_\psi \colon \scR_{ij|\gamma}(\CT') \longrightarrow \scR_{ij|\gamma}(\CT)\,,
        \quad
        \varphi \longmapsto \psi_1 \circ \varphi \circ \psi_0^{-1}\,,
\end{equation}
where we have abbreviated $\psi_t \coloneqq \Delta(1, \psi_{|t})$ for $t \in \{0,1\}$. The following result was proved in \cite[Section~4.2]{BW:Transgression_and_regression_of_D-branes}.

\begin{lemma}
The following statements hold true:
\begin{enumerate}[(1)]
        \item For any other isomorphism $\psi' \colon \CT \to \CT'$, we have $r_\psi = r_{\psi'}$.
        We can thus denote this isomorphism by $r_{\CT,\CT'} \colon \scR_{ij|\gamma}(\CT') \longrightarrow \scR_{ij|\gamma}(\CT)$.
        
        \item Given a third trivialisation $\CT''$ of $\gamma^*\CG$, we have
        \begin{equation}
        \label{eq:Cocyle_PsiSS'}
                r_{\CT'',\CT'} \circ r_{\CT',\CT} = r_{\CT'',\CT}
                \qandq
                r_{\CT,\CT} = 1_{\scR_{ij|\gamma}(\CT)}\,.
        \end{equation}
\end{enumerate}
\end{lemma}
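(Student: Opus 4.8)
The plan is to trace through the definitions and reduce everything to properties of $2$-isomorphisms of trivialisations together with the functoriality of $\Delta$. For part~(1), suppose $\psi, \psi' \colon \CT \to \CT'$ are two $2$-isomorphisms. By~\eqref{eq:1-mors and sections of Delta}, the set of such $2$-isomorphisms is a torsor over $\Gamma_{\rmpar,\uni}(M, \Delta(\CT,\CT))$, which (since $\CT$ trivialises $\gamma^*\CG$ over the interval, so $\Delta(\CT,\CT)$ is the trivial line bundle with trivial connection on the contractible space $[0,1]$) is just $\mathrm{U}(1)$. Hence $\psi' = z \cdot \psi$ for a single scalar $z \in \mathrm{U}(1)$, constant along $[0,1]$. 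I would then compute directly from~\eqref{eq:def of r_(CT,CT')}: the abbreviation $\psi_t = \Delta(1, \psi_{|t})$ is linear in $\psi_{|t}$, so $\psi'_0 = z \psi_0$ and $\psi'_1 = z \psi_1$, whence $r_{\psi'}(\varphi) = \psi'_1 \circ \varphi \circ (\psi'_0)^{-1} = z \psi_1 \circ \varphi \circ z^{-1}\psi_0^{-1} = \psi_1 \circ \varphi \circ \psi_0^{-1} = r_\psi(\varphi)$. The scalars cancel because $\varphi$ is sandwiched between $\psi_1$ and $\psi_0^{-1}$. This justifies writing $r_{\CT,\CT'}$ for the common value.

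For part~(2), the cocycle identity, I would pick $2$-isomorphisms $\psi \colon \CT \to \CT'$ and $\psi' \colon \CT' \to \CT''$ and use their vertical composite $\psi' \bullet \psi \colon \CT \to \CT''$, which by part~(1) computes $r_{\CT'',\CT}$. The point is that $\Delta(1,-)$ respects vertical composition of $2$-morphisms — this is part of its being a functor on the relevant morphism category — so $(\psi' \bullet \psi)_t = \psi'_t \circ \psi_t$ for $t \in \{0,1\}$. Then
\[
  r_{\psi' \bullet \psi}(\varphi) = (\psi'_1 \circ \psi_1) \circ \varphi \circ (\psi'_0 \circ \psi_0)^{-1} = \psi'_1 \circ \big( \psi_1 \circ \varphi \circ \psi_0^{-1} \big) \circ \psi_0'^{\,-1} = r_{\psi'}\big( r_\psi(\varphi) \big),
\]
which gives $r_{\CT'',\CT'} \circ r_{\CT',\CT} = r_{\CT'',\CT}$. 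For the normalisation $r_{\CT,\CT} = 1$, take $\psi = 1_{\CT}$, so $\psi_t = \Delta(1, 1_{\CT_{|t}}) = 1$, and $r_{1_\CT}(\varphi) = 1 \circ \varphi \circ 1 = \varphi$.

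I expect the only genuinely delicate point to be the first one: verifying that the torsor structure from~\eqref{eq:1-mors and sections of Delta} really pins the ambiguity in $\psi$ down to a single global scalar, rather than a section that could vary along $[0,1]$. This rests on $\Delta(\CT,\CT)$ being a flat hermitean line bundle on the interval, so that its space of parallel unit-length sections is $\mathrm{U}(1)$, and on the compatibility of~\eqref{eq:1-mors and sections of Delta} with this identification. Everything else is bookkeeping with the bilinearity and composition-compatibility of $\Delta(1,-)$, which are already available from the discussion of $\Delta$ in Section~\ref{sect:BGrbs} and from~\cite[Remark~2.1.1]{BW:Transgression_and_regression_of_D-branes}.
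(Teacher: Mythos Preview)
The paper does not actually prove this lemma; it merely cites \cite[Section~4.2]{BW:Transgression_and_regression_of_D-branes}. Your argument is essentially the one that appears there, and the overall strategy is correct: the ambiguity in $\psi$ is a global $\sfU(1)$-scalar (since $\Delta(\CT',\CT)$ is a flat hermitean line bundle on the contractible interval) that cancels in the sandwich, and the cocycle identity follows from functoriality of $\Delta(1,-)$ applied to vertical composites.

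There is, however, a small slip in part~(2) concerning variance. The functor $\Delta(\CE,-)$ is \emph{contravariant} in its second argument (it is built from $(-)^*$, which reverses 2-morphisms, followed by $\sfR$), so for the vertical composite $\psi' \bullet \psi \colon \CT \to \CT''$ one has $(\psi' \bullet \psi)_t = \psi_t \circ \psi'_t$, not $\psi'_t \circ \psi_t$. With the correct order your computation becomes
\[
        r_{\psi' \bullet \psi}(\varphi)
        = \psi_1 \circ \big( \psi'_1 \circ \varphi \circ \psi_0'^{\,-1} \big) \circ \psi_0^{-1}
        = r_\psi\big( r_{\psi'}(\varphi) \big)\,,
\]
i.e.\ $r_{\CT,\CT''} = r_{\CT,\CT'} \circ r_{\CT',\CT''}$, which after relabelling is the stated cocycle identity. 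As you wrote it, the composite $r_{\psi'} \circ r_\psi$ does not even type-check: $r_\psi$ lands in $\scR_{ij|\gamma}(\CT)$ while $r_{\psi'}$ expects input from $\scR_{ij|\gamma}(\CT'')$. A further minor point: in part~(1) the parallel sections of $\Delta(\CT,\CT)$ live over $[0,1]$, not over $M$; this is only a notational slip and does not affect the argument.
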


\noindent
We thus define
\begin{equation}
\label{eq:def R_ij}
\begin{aligned}
        \scR_{ij|\gamma} &\coloneqq \big\{ (\CT,\phi)\, \big|\, \CT \text{ trivialisation of } \gamma^*\CG,\, \phi \in \scR_{ij|\gamma}(\CT) \big\}/{\sim}\,,
        \\*[0.1cm]
        &\quad \text{where} \quad (\CT, \phi) \sim \big( \CT', \phi' \big)
        \Longleftrightarrow \phi' = r_{\CT',\CT}(\phi)\,.
\end{aligned}
\end{equation}
The finite-dimensional Hilbert space $\scR_{ij|\gamma}$ comes with canonical isomorphisms
\begin{align}
\label{eq:colim str mors r_CT}
        &r_\CT \colon \scR_{ij|\gamma}(\CT) \rightarrow \scR_{ij|\gamma}\,,
        \quad
        \phi \mapsto [\CT, \phi]      
\end{align}
satisfying $r_{\CT'} \circ r_{\CT',\CT} = r_{\CT'}$ for any trivialisations $\CT, \CT'$ of $\gamma^*\CG$.
The disjoint union
\begin{equation}
\label{eq:E_ij as disjoint union}
        \scR_{ij} \coloneqq \bigsqcup_{\gamma \in P_{ij} M} \scR_{ij|\gamma}
\end{equation}
comes with a canonical map $\pi \colon \scR_{ij} \to {P_{ij} M}$ which is a diffeological hermitean vector bundle of rank $\rank(\scR_{ij}) = \rank(\CE_i)\, \rank(\CE_j)$ over the diffeological space $P_{ij}M$~\cite[Proposition~4.2.3]{BW:Transgression_and_regression_of_D-branes}.

For later use, we briefly recall the definition of the diffeology on $\scR_{ij}$ from~\cite{BW:Transgression_and_regression_of_D-branes}.
Instead of using open subsets $U \subset \FR^n$ to define plots of diffeological spaces,  we only use cartesian spaces $U \in \Cart$, i.e.~embedded submanifolds $U \subset \FR^m$ that are diffeomorphic to $\FR^n$ for some $n,m \in \NN_0$.
Since any of the former test spaces can be covered by the latter, the resulting categories of diffeological spaces are equivalent.
A map $\hat{c} \colon U \to \scR_{ij}$ from an arbitrary cartesian space $U \in \Cart$ to $\scR_{ij}$ is a plot if it has the following properties:
we demand that the composition $c \coloneqq \pi \circ \hat{c} \colon U \to P_{ij}M$ is a plot of $P_{ij}M$.
This means, equivalently, that the adjoint map
\begin{equation}
\xymatrix{c^\dashv \colon U \times [0,1] \ar[r]^-{c \times 1} & P_{ij}M \times [0,1]  \ar[r]^-{\ev} & M}
\end{equation}
is a smooth map. Since $U \cong \FR^n$ for some $n \in \NN_0$, there exists a trivialisation $\CT \colon c^{\dashv*} \CG \to \CI_\rho$.
Define the inclusion maps $\iota_t \colon U \hookrightarrow U \times [0,1]$, $x \mapsto (x,t)$, for any $t \in [0,1]$.
Over $U$ we obtain the hermitean vector bundles
\begin{equation}
        F_i \coloneqq \Delta \big( (c^\dashv \circ \iota_0)^*\CE_i, \iota_0^*\CT \big)
        \qandq
        F_j \coloneqq \Delta \big( (c^\dashv \circ \iota_1)^*\CE_j, \iota_1^*\CT \big)\,,
\end{equation}
and we demand that there exists an open covering $\{U_a\}_{a \in A}$ of $U$, together with morphisms $\psi_a \colon F_{i|U_a} \to F_{j|U_a}$ over $U_a$ for all $a \in A$ such that
\begin{equation}
        \hat{c}_{|U_a}(x) = [ \CT_{|\{x\} \times [0,1]},\, \psi_{a|x} ]
        \qquad
        \forall\, a \in A,\, x \in U_a\,.
\end{equation}

We move on to briefly recall from~\cite{BW:Transgression_and_regression_of_D-branes} the construction of the connection on the bundles $\scR_{ij} \to P_{ij}M$, which is defined via its parallel transport.
Consider a smooth path $\Gamma \colon [0,1] \to P_{ij}M$, and denote its adjoint map by $\Gamma^\dashv \colon [0,1]^2 \to M$.
Further, we write $\Gamma_t \coloneqq \Gamma^\dashv(-,t) \colon [0,1] \to M$, with $t \in \{0,1\}$, for the paths of endpoints.
We let $\CT \colon \Gamma^{\dashv*}\CG \to \CI_\rho$ be a trivialisation and set $\CT^t \coloneqq \CT_{|[0,1] \times \{t\}}$ for $t = 0,1$.
This induces hermitean vector bundles with connection
\begin{equation}
\label{eq:aux bdls for pt_ij}
        \CV_{(i,\CT^0)} \coloneqq \Delta \big( \Gamma_0^*\CE_i, \CT^0 \big)\,,
        \qandq
        \CV_{(j,\CT^1)} \coloneqq \Delta \big( \Gamma_1^*\CE_j, \CT^1 \big)
\end{equation}
over the interval $[0,1]$.
Observe that  there is a canonical isomorphism
\begin{equation}
        \Gamma^{\dashv *} \scR_{ij} \cong \Hom(\CV_{(i,\CT^0)}, \CV_{(j,\CT^1)})\,,
\end{equation}
which is induced by the isomorphisms $r_{(\CT_{|\{s\} \times [0,1]})}$ from~\eqref{eq:colim str mors r_CT}.
We then define
\begin{equation}
\label{eq:pt in R_ij}
        pt_{ij|\Gamma}(\phi) \colon \scR_{ij|\Gamma(0)}(\CT_{\{0\} \times [0,1]})
        \longrightarrow \scR_{ij|\Gamma(1)}(\CT_{\{1\} \times [0,1]})\,,
        \qquad
        \phi \longmapsto \exp \bigg( \int_{[0,1]^2} - \rho \bigg) \cdot pt(\phi)\,,
\end{equation}
where on the right-hand side, $pt$ denotes the parallel transport in $\Hom(\CV_{(i,\CT^0)}, \CV_{(j,\CT^1)})$ along the interval.
Here we orient the boundary of $[0,1]^2$ using an \emph{inward-pointing} normal vector field to match our later conventions -- this explains the different sign as compared to~\cite{BW:Transgression_and_regression_of_D-branes}.  The following result is \cite[Proposition~4.3.1]{BW:Transgression_and_regression_of_D-branes}.
Recall the notion of a superficial connection on a diffeological vector bundle over $P_{ij}M$ from~\cite[Def.~A.2.2]{BW:Transgression_and_regression_of_D-branes}.

\begin{proposition}
The morphism $pt_{ij}$ of Equation~\eqref{eq:pt in R_ij} defines a superficial connection on the vector bundle $\scR_{ij}$ over $P_{ij}M$.
\end{proposition}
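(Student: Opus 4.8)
\emph{Plan of proof.} By~\cite[Def.~A.2.2]{BW:Transgression_and_regression_of_D-branes} a superficial connection on a diffeological hermitean vector bundle over $P_{ij}M$ amounts to a parallel transport assignment along smooth paths in $P_{ij}M$ which (i) is well defined, (ii) is smooth in smooth families and compatible with concatenation, orientation reversal and constant paths, and (iii) is invariant under thin homotopies of paths and restricts to the identity along thin paths. I would verify these in turn for the assignment $\Gamma \mapsto pt_{ij|\Gamma}$ of~\eqref{eq:pt in R_ij}, beginning with well-definedness.

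\emph{Independence of the trivialisation.} Fix $\Gamma \colon [0,1] \to P_{ij}M$ with adjoint $\Gamma^\dashv \colon [0,1]^2 \to M$, and let $\CT \colon \Gamma^{\dashv*}\CG \to \CI_\rho$ and $\CT' \colon \Gamma^{\dashv*}\CG \to \CI_{\rho'}$ be two trivialisations. By Example~\ref{eg:sfR}, the composite $\CT' \circ \CT^{-1} \colon \CI_\rho \to \CI_{\rho'}$ is the descent datum of a hermitean line bundle $N \to [0,1]^2$ with connection of curvature $\rho' - \rho$. Tensoring the auxiliary bundles~\eqref{eq:aux bdls for pt_ij} with the restrictions of $N$ to $[0,1] \times \{0\}$ and $[0,1] \times \{1\}$ yields isomorphisms $\CV_{(i,\CT^0)} \cong \CV_{(i,\CT'^0)}$ and $\CV_{(j,\CT^1)} \cong \CV_{(j,\CT'^1)}$ of hermitean bundles with connection, and, using the structural isomorphisms~\eqref{eq:structural isos for Delta} of $\Delta$ together with the gluing data $r_\CT$ of~\eqref{eq:colim str mors r_CT}, one checks that these are compatible with the two resulting descriptions of $\Gamma^{\dashv*}\scR_{ij}$. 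Under this identification the parallel transports in $\Hom(\CV_{(i,\CT^0)}, \CV_{(j,\CT^1)})$ and in its primed analogue differ exactly by the holonomy of $N$ around $\partial([0,1]^2)$; by Stokes' theorem, and with the inward-normal boundary orientation fixed above, this holonomy equals $\exp\big(-\int_{[0,1]^2}(\rho' - \rho)\big)$, which is precisely the ratio of the two exponential prefactors in~\eqref{eq:pt in R_ij}. Hence $pt_{ij|\Gamma}$ is independent of $\CT$.

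\emph{Smoothness, functoriality, superficiality.} For a smooth family of paths $\Gamma$ parametrised by a cartesian space $U$, the total adjoint $U \times [0,1]^2 \to M$ has contractible domain, so a global trivialisation $\CT$ of the pulled-back gerbe exists; then $pt_{ij}$ is built from the ordinary parallel transport in the resulting smooth family of $\Hom$-bundles over $U \times [0,1]$ and from the smooth function $x \mapsto \exp(-\int_{[0,1]^2}\rho_x)$, and one reads off from the explicit diffeology on $\scR_{ij}$ recalled above that the result is a plot. Compatibility with concatenation and reparametrisation of paths in $P_{ij}M$, with orientation reversal, and with constant paths follows from the corresponding properties of ordinary parallel transport together with the additivity of $\int_{[0,1]^2}\rho$ under horizontal subdivision of the square. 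Finally, if $\Gamma^\dashv \colon [0,1]^2 \to M$, respectively a homotopy $[0,1]^3 \to M$, has rank strictly less than $2$, respectively $3$, everywhere, then the pullback of $\curv(\CG)$ vanishes for rank reasons, so the relevant pulled-back gerbe is flat over its contractible test domain and can be trivialised with vanishing curving; the induced $\Hom$-bundle is then flat as well, its parallel transport is invariant under thin homotopies and canonically the identity along a thin path, and the exponential prefactor in~\eqref{eq:pt in R_ij} becomes $1$. This yields thin-homotopy invariance and triviality along thin paths.

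\emph{Main obstacle.} The delicate point is the coherence bookkeeping in the well-definedness step: one must confirm that the isomorphism induced by $N = \CT' \circ \CT^{-1}$ is genuinely compatible with the gluing maps $r_{\CT,\CT'}$ of the Lemma above defining $\scR_{ij}$, and with the canonical identification $\Gamma^{\dashv*}\scR_{ij} \cong \Hom(\CV_{(i,\CT^0)}, \CV_{(j,\CT^1)})$, and that the orientation convention on $\partial([0,1]^2)$ produces the exponential factors with the correct sign. This is a careful but essentially routine diagram chase through the definitions of $\Delta$, $\sfR$ and the structural isomorphisms~\eqref{eq:structural isos for Delta}, and is exactly the content of~\cite[Proposition~4.3.1]{BW:Transgression_and_regression_of_D-branes}; in the present article I would recall the construction and refer there for the complete verification.
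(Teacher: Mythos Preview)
Your proposal is correct and aligns with the paper's approach: the paper does not give a proof at all but simply states that this result is \cite[Proposition~4.3.1]{BW:Transgression_and_regression_of_D-branes}, and you arrive at exactly the same conclusion in your final paragraph, deferring the complete verification to that reference. Your proof sketch of independence of trivialisation, smoothness, functoriality, and superficiality is a reasonable outline of what that cited argument contains, so your proposal is more detailed than the paper itself while remaining consistent with it.
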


\subsection{Equivariant structure on the transgression bundles}
\label{sect:equivar strs on R_ij}

Let $\Diff^+([0,1])$ denote  the group of orientation-preserving diffeomorphisms of the interval.
Note that such diffeomorphisms automatically fix the boundary points.
The group $\Diff^+([0,1])$ is a diffeological group when endowed with the usual mapping space diffeology, where a map $f \colon U \to \Diff^+([0,1])$ is a plot if the adjoint map
\begin{equation}
\xymatrix{
        f^\dashv \colon U \times [0,1] \ar[rr]^-{f \times 1_{[0,1]}} & & \Diff^+([0,1]) \times [0,1] \ar[rr]^-{\ev} & & {[0,1]}
}
\end{equation}
is a smooth map.

\begin{lemma}
\label{st:Diff action on P_ij M}
Given a target space $(M,Q)$ and D-brane labels $i,j \in I$, the group $\Diff^+([0,1])$ acts smoothly on the space $P_{ij}M$ via
\begin{equation}
        R \colon P_{ij}M \times \Diff^+([0,1]) \longrightarrow P_{ij}M\,,
        \qquad
        (\gamma, \tau) \longmapsto \tau^*\gamma = \gamma \circ \tau\,.
\end{equation}
\end{lemma}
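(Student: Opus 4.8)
There are three things to verify: that $R$ is well defined (i.e.\ lands in $P_{ij}M$), that it satisfies the axioms of a group action, and that it is smooth as a morphism of diffeological spaces. The first two are immediate, so I would dispatch them quickly; only the smoothness requires the diffeological bookkeeping.

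\emph{Well-definedness and action axioms.}
Let $\gamma \in P_{ij}M$ and $\tau \in \Diff^+([0,1])$. Since $\tau$ is orientation preserving it is strictly increasing, hence fixes the boundary points (as already noted above), so $(\gamma \circ \tau)(0) = \gamma(0) \in Q_i$ and $(\gamma \circ \tau)(1) = \gamma(1) \in Q_j$. The map $\gamma \circ \tau$ is smooth as a composite of smooth maps, and it inherits sitting instants: if $\gamma$ is constant on $[0,\varepsilon]$ then, because $\tau(0)=0$ and $\tau$ is increasing, $\gamma\circ\tau$ is constant on $[0,\tau^{-1}(\varepsilon)]$, and symmetrically near $1$. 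Hence $\gamma\circ\tau \in P_{ij}M$. Writing the group multiplication of $\Diff^+([0,1])$ as composition of maps, we have $R(\gamma, 1_{[0,1]}) = \gamma$ and
\[
        R\big(R(\gamma,\tau_1),\tau_2\big) = (\gamma\circ\tau_1)\circ\tau_2 = \gamma\circ(\tau_1\tau_2) = R(\gamma,\tau_1\tau_2)\,,
\]
so $R$ is a right action of $\Diff^+([0,1])$.

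\emph{Smoothness.}
I would argue directly with plots. A plot of $P_{ij}M \times \Diff^+([0,1])$ over a cartesian space $U \in \Cart$ is a pair $(c,f)$ consisting of a plot $c \colon U \to P_{ij}M$ and a plot $f \colon U \to \Diff^+([0,1])$; by the descriptions of the two diffeologies recalled above, this amounts precisely to the adjoint maps $c^\dashv \colon U \times [0,1] \to M$ and $f^\dashv \colon U \times [0,1] \to [0,1]$ being smooth. We must show that $R \circ (c,f) \colon U \to P_{ij}M$ is again a plot, i.e.\ that its adjoint is smooth. But
\[
        \big(R \circ (c,f)\big)^\dashv(u,t) = \big( c(u) \circ f(u) \big)(t) = c^\dashv\big(u, f^\dashv(u,t)\big)\,,
\]
so $\big(R\circ(c,f)\big)^\dashv$ is the composite
\[
        U \times [0,1] \xrightarrow{\ (\pr_U,\, f^\dashv)\ } U \times [0,1] \xrightarrow{\ c^\dashv\ } M\,,
\]
which is smooth since each factor is. Together with the first part (which shows pointwise that $R(c(u),f(u)) \in P_{ij}M$), this exhibits $R\circ(c,f)$ as a plot of $P_{ij}M$.

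\emph{Main obstacle.}
There is no serious obstacle here; the only point requiring a little care is that the sitting-instant condition be respected uniformly in families, so that $R\circ(c,f)$ is a plot of $P_{ij}M$ and not merely of the diffeological space of all smooth paths into $M$. Since $f^\dashv(u,0) = 0$ for all $u$ and $f^\dashv$ is continuous, one may shrink the chart $U$ so that $f^\dashv$ carries a neighbourhood of $U \times \{0\}$ into a sitting-instant neighbourhood of $0$ for $c$; then $\big(R\circ(c,f)\big)^\dashv$ is constant in $t$ near $0$, locally uniformly over $U$, and symmetrically near $1$. Everything else is formal manipulation with the exponential law for diffeological mapping spaces.
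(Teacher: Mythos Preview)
Your proof is correct and is essentially the same argument the paper is invoking by its one-line citation to \cite[Paragraph~1.59]{Iglesias-Zemmour:Diffeology}; you have simply unwound the exponential law for diffeological mapping spaces by hand. One minor remark: your ``main obstacle'' paragraph about uniform sitting instants is unnecessary, since $P_{ij}M$ carries the subspace diffeology of $M^{[0,1]}$, so a plot is just a map that lands set-theoretically in $P_{ij}M$ and whose adjoint is smooth --- both of which you had already established in the preceding paragraphs.
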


\begin{proof}
This follows by the construction of the mapping space diffeology~\cite[Paragraph~1.59]{Iglesias-Zemmour:Diffeology}.
\end{proof}

\begin{lemma}
The action $R$ lifts to a smooth action
\begin{equation}
\label{eq:Diff action on R_ij}
        R \colon \scR_{ij} \times \Diff^+ \big( [0,1] \big) \longrightarrow \scR_{ij}\,,
        \qquad
        \big( [\CT, \psi], \tau \big) \longmapsto [\tau^*\CT, \psi]\,.
\end{equation}
This turns $\scR_{ij}$ into a $\Diff^+([0,1])$-equivariant hermitean vector bundle on $P_{ij}M$.
\end{lemma}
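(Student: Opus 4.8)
I would verify, in order: that the assignment $\big([\CT,\psi],\tau\big)\mapsto[\tau^*\CT,\psi]$ is independent of the chosen representative $(\CT,\psi)$; that it defines a (right) action lifting the action $R$ on $P_{ij}M$ from Lemma~\ref{st:Diff action on P_ij M} and is fibrewise a unitary isomorphism; and finally that it is a morphism of diffeological spaces. The last point is the substantive one.

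\emph{Well-definedness and the algebraic properties.} The starting observation is that every $\tau\in\Diff^+([0,1])$ fixes the endpoints $0,1$, so that for a trivialisation $\CT\colon\gamma^*\CG\to\CI_0$ one has $(\tau^*\CT)_{|0}=\CT_{|\tau(0)}=\CT_{|0}$ and $(\tau^*\CT)_{|1}=\CT_{|1}$. Since $(\gamma\circ\tau)(0)=\gamma(0)$ and $(\gamma\circ\tau)(1)=\gamma(1)$ as well, the defining formula~\eqref{eq:defnitionCE} shows that $\scR_{ij|\gamma\circ\tau}(\tau^*\CT)$ is \emph{literally equal} to $\scR_{ij|\gamma}(\CT)$. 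Moreover, any $2$-isomorphism $\psi\colon\CT\to\CT'$ pulls back to $\tau^*\psi\colon\tau^*\CT\to\tau^*\CT'$ with $(\tau^*\psi)_{|t}=\psi_{|t}$ for $t\in\{0,1\}$, so $r_{\tau^*\psi}=r_\psi$ by~\eqref{eq:def of r_(CT,CT')}, and hence $r_{\tau^*\CT',\tau^*\CT}=r_{\CT',\CT}$ by the Lemma preceding~\eqref{eq:def R_ij}. Thus $(\CT,\psi)\sim(\CT',\psi')$ implies $(\tau^*\CT,\psi)\sim(\tau^*\CT',\psi')$, so $R$ is well defined; and, under the identification above, $R(-,\tau)$ is the identity on each fibre, in particular a $\FC$-linear isometry $\scR_{ij|\gamma}\to\scR_{ij|\gamma\circ\tau}$. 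The action axioms are immediate: $R([\CT,\psi],1_{[0,1]})=[\CT,\psi]$, and $R(R([\CT,\psi],\tau),\tau')=[\tau'^*\tau^*\CT,\psi]=[(\tau\circ\tau')^*\CT,\psi]=R([\CT,\psi],\tau\circ\tau')$ by functoriality of pullback and $(\gamma\circ\tau)\circ\tau'=\gamma\circ(\tau\circ\tau')$; finally $\pi\circ R=R\circ(\pi\times1)$ by inspection, so $\pi$ is equivariant.

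\emph{Smoothness.} This uses the explicit description of the diffeology of $\scR_{ij}$ recalled above. Let $\hat c\colon U\to\scR_{ij}$ and $f\colon U\to\Diff^+([0,1])$ be plots and put $c=\pi\circ\hat c$. Then $\pi\circ R\circ(\hat c,f)=R\circ(c,f)\colon U\to P_{ij}M$ is a plot by Lemma~\ref{st:Diff action on P_ij M}, with adjoint $(x,t)\mapsto c^\dashv(x,f^\dashv(x,t))=c^\dashv(\Phi(x,t))$, where $\Phi\colon U\times[0,1]\to U\times[0,1]$, $\Phi(x,t)=(x,f^\dashv(x,t))$, is smooth because $f^\dashv$ is. Shrinking $U$, choose a trivialisation $\CT\colon c^{\dashv*}\CG\to\CI_\rho$, an open cover $\{U_a\}$, and morphisms $\psi_a\colon F_{i|U_a}\to F_{j|U_a}$ witnessing that $\hat c$ is a plot, with $F_i=\Delta((c^\dashv\circ\iota_0)^*\CE_i,\iota_0^*\CT)$ and $F_j=\Delta((c^\dashv\circ\iota_1)^*\CE_j,\iota_1^*\CT)$. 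Then $\Phi^*\CT\colon(R\circ(c,f))^{\dashv*}\CG\to\CI_{\Phi^*\rho}$ is a trivialisation for the new plot, and since $\Phi\circ\iota_t=\iota_t$ for $t\in\{0,1\}$ we get $\iota_t^*\Phi^*\CT=\iota_t^*\CT$, so the bundles $F_i,F_j$ are unchanged. As $\Phi$ restricted to $\{x\}\times[0,1]$ is exactly the reparametrisation by $f(x)$, we have $(\Phi^*\CT)_{|\{x\}\times[0,1]}=f(x)^*\big(\CT_{|\{x\}\times[0,1]}\big)$, whence
\[ \big(R\circ(\hat c,f)\big)_{|U_a}(x)=\big[\,f(x)^*\CT_{|\{x\}\times[0,1]},\,\psi_{a|x}\,\big]=\big[\,(\Phi^*\CT)_{|\{x\}\times[0,1]},\,\psi_{a|x}\,\big], \]
which is exactly of the form required of a plot of $\scR_{ij}$. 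Hence $R$ is smooth, and combining with the previous paragraph we conclude that $\scR_{ij}$, together with the existing base action and these fibrewise unitary isomorphisms, is a $\Diff^+([0,1])$-equivariant hermitean vector bundle over $P_{ij}M$.

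I expect the only genuinely delicate part to be the smoothness verification, and within it the bookkeeping of the pullbacks along $\Phi$ — specifically, exploiting that $\Phi$ fixes the two boundary slices so that the trivialisation, the associated bundles $F_i,F_j$, and the local gluing morphisms $\psi_a$ can all be transported without change. Everything else is a direct unwinding of the definitions in~\eqref{eq:defnitionCE}--\eqref{eq:def R_ij}.
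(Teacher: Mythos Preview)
Your proof is correct and follows essentially the same approach as the paper. Your smoothness argument is identical to the paper's: your map $\Phi$ is exactly the paper's fibrewise diffeomorphism $\hat f\colon U\times[0,1]\to U\times[0,1]$, $(x,t)\mapsto(x,f(x)(t))$, and both proofs conclude by observing that $\hat f^*\CT$ (your $\Phi^*\CT$) together with the unchanged morphisms $\psi_a$ witness that $R\circ(\hat c,f)$ is a plot. You are in fact more thorough than the paper, which omits the explicit checks of well-definedness, the action axioms, and fibrewise unitarity that you supply in your first paragraph; the paper only remarks (just before the lemma statement) that $\tau$ fixes $\partial[0,1]$ so the action on $\psi$ is trivial. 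One cosmetic point: the phrase ``shrinking $U$'' is unnecessary, since the trivialisation $\CT$ exists globally over $U\times[0,1]$ by contractibility of $U$ (as the paper notes in its description of the diffeology); only the morphisms $\psi_a$ require the open cover.
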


Here we have used that $\tau$ is the identity on $\partial[0,1]$, so that the action on $\psi$ is trivial.

\begin{proof}
We let $f \colon U \to \Diff^+([0,1])$ be a plot, and we consider a plot $\hat{c} \colon U \to \scR_{ij}$.
In the notation of Section~\ref{sect:bundles on path spaces} that means that there exists a trivialisation $\CT \colon c^{\dashv*}\CG \to \CI_\rho$ (where $c \coloneqq \pi \circ \hat{c}$), an open covering $\{U_a\}_{a \in A}$ and morphisms $\psi_a \colon F_{i|U_a} \to F_{j|U_a}$ such that
\begin{equation}
        \hat{c}_{|U_a}(x) = [ \CT_{|\{x\} \times [0,1]},\, \psi_{a|x} ]
        \qquad
        \forall\, a \in A,\, x \in U_a\,.
\end{equation}
We need to show that $R \circ (\hat{c} \times f) \circ \Delta_U$ defines a plot of $\scR_{ij}$, where $\Delta_U \colon U \to U \times U$ denotes the diagonal map.
First, observe that
\begin{equation}
        \pi \circ \big( R \circ (\hat{c} \times f) \circ \Delta_U \big)
        = R \circ (c \times f) \circ \Delta_U\,,
\end{equation}
which is a plot of $P_{ij}M$ by Lemma~\ref{st:Diff action on P_ij M}.
The plot $f$ defines a fibre-wise diffeomorphism
\begin{equation}
        \hat{f} \colon U \times [0,1] \to U \times [0,1]\,,
        \qquad
        (x,t) \mapsto \big( x, f(x)(t) \big)\,.
\end{equation}
Over $U_a$, we  then have
\begin{equation}
        \big( R \circ (\hat{c} \times f) \circ \Delta_{U_a} \big)
        = R \big( [\CT_{|U_a}, \psi_a], \hat{f}_{|U_a} \big)
        = \big[ (\hat{f}^*\CT)_{|U_a}, \psi_a \big]\,.
\end{equation}
Thus, the trivialisation $\hat{f}^* \CT$ and the bundle morphisms $\psi_a$ render the map $R \circ (\hat{c} \times f) \circ \Delta_U$ a plot of $\scR_{ij}$.
\end{proof}

A path $\Gamma \colon [0,1] \to P_{ij}M$ is \emph{thin} if the adjoint map $\Gamma^\dashv \colon [0,1]^2 \to M$ satisfies $\rank(\Gamma^\dashv_{*|t}) < 2$ for every $t \in [0,1]^2$, where $\Gamma_{*}^{\dashv}$ denotes the differential of $\Gamma^{\dashv}$.
The equivariant structure $R$ on $\scR_{ij}$ can be induced from the connection on $\scR_{ij}$ in the following way.

\begin{proposition}
\label{st:Diff action via pt}
Let $F \colon [0,1] \to \Diff^+([0,1])$ be any smooth path with $F(0) = 1_{[0,1]}$.
For every $\gamma \in P_{ij}M$ this induces a thin smooth path
\begin{equation}
        R_F (\gamma) \colon [0,1] \to P_{ij}M\,,
        \qquad
        t \mapsto R_{F(t)} (\gamma)
\end{equation}
from $\gamma$ to $\gamma \circ F(1)$.
We have that
\begin{equation}
\label{eq:Diff action via pt}
        pt_{ij| R_F(\gamma)} = R \big( -, F(1) \big)
        \quad \text{in} \quad
        \Hom (\scR_{ij|\gamma},\, \scR_{ij|\gamma \circ F(1)})\,.
\end{equation}
\end{proposition}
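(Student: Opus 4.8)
\emph{Setup and strategy.} The plan is to reduce the statement to a comparison of parallel transport and the equivariant structure \emph{over each fixed path} $\gamma$, and to exploit that the path $R_F(\gamma)$ is thin. The key point is that the connection $pt_{ij}$ is superficial: for a thin path $\Gamma$ in $P_{ij}M$, the area integral $\int_{[0,1]^2} \rho$ over the adjoint $\Gamma^\dashv$ vanishes (because $\Gamma^\dashv$ has rank $<2$ everywhere, so $\Gamma^{\dashv *}\rho = 0$), and hence $pt_{ij|\Gamma}$ reduces to the bare parallel transport $pt$ in the $\Hom$-bundle $\Hom(\CV_{(i,\CT^0)}, \CV_{(j,\CT^1)})$ of \eqref{eq:aux bdls for pt_ij}. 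So first I would record that $R_F(\gamma)$ is thin — this is immediate since its adjoint factors through $\gamma^\dashv$ up to reparametrisation in the first variable, so its image is contained in the image of $\gamma$, a curve — and therefore the exponential prefactor in \eqref{eq:pt in R_ij} is trivial.

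\emph{Choice of trivialisation.} Write $\Gamma \coloneqq R_F(\gamma)$ and let $\Gamma^\dashv \colon [0,1]^2 \to M$ be its adjoint; explicitly $\Gamma^\dashv(s,t) = \gamma(F(t)(s))$. The crucial observation is that $\Gamma^\dashv = \gamma \circ \widehat{F}$, where $\widehat{F}\colon [0,1]^2 \to [0,1]$, $(s,t) \mapsto F(t)(s)$, and each $\widehat{F}(-,t) = F(t)$ is a diffeomorphism of $[0,1]$ with $F(0) = 1_{[0,1]}$. Now pick a trivialisation $\CT_0 \colon \gamma^*\CG \to \CI_0$ of the pullback gerbe over $[0,1]$ (this exists since $[0,1]$ is contractible). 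Pulling back along $\widehat{F}$ gives a trivialisation $\CT \coloneqq \widehat{F}^*(\gamma^*\CG \text{-triv.})$, i.e. $\CT \colon \Gamma^{\dashv *}\CG \to \CI_{\widehat{F}^*0} = \CI_0$, of the gerbe over the square; and its restrictions to the two horizontal edges are $\CT^0 = \CT_{|[0,1]\times\{0\}} = \CT_0$ (since $F(0) = 1$) and $\CT^1 = \CT_{|[0,1]\times\{1\}} = F(1)^*\CT_0 = \tau^*\CT_0$ with $\tau \coloneqq F(1)$. This is exactly the trivialisation appearing on the right-hand side of the equivariant action formula \eqref{eq:Diff action on R_ij}.

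\emph{Identifying the two maps.} With this choice, I would compute both sides of \eqref{eq:Diff action via pt} as maps $\scR_{ij|\gamma}(\CT_0) \to \scR_{ij|\gamma\circ\tau}(\tau^*\CT_0)$, using the canonical isomorphisms $r_{\CT_0}$, $r_{\tau^*\CT_0}$ of \eqref{eq:colim str mors r_CT} to pass to $\scR_{ij|\gamma}$, $\scR_{ij|\gamma\circ\tau}$. On the one hand, by the definition \eqref{eq:Diff action on R_ij} the equivariant action sends $[\CT_0, \psi] \mapsto [\tau^*\CT_0, \psi]$ — i.e. it is literally the identity on the representative $\psi$. On the other hand, by \eqref{eq:pt in R_ij} (with trivial prefactor, as established) $pt_{ij|\Gamma}$ is the parallel transport in $\Hom(\CV_{(i,\CT^0)}, \CV_{(j,\CT^1)})$ along $[0,1]$. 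Here $\CV_{(i,\CT^0)} = \Delta(\Gamma_0^*\CE_i, \CT^0)$ and $\Gamma_0 = \Gamma^\dashv(-,0)$; since $\widehat{F}(-,0)$ is the identity, $\Gamma_0 = \gamma$ and $\CT^0 = \CT_0$, so $\CV_{(i,\CT^0)} = \Delta(\gamma^*\CE_i, \CT_0)$ — a bundle over $[0,1]$ whose connection does \emph{not} depend on $t$ at all because the whole family $\Gamma^\dashv = \gamma\circ\widehat{F}$ is a reparametrisation in $s$ of a $t$-independent map. Concretely, $\widehat{F}$ is a smooth isotopy of diffeomorphisms from $\mathrm{id}$ to $\tau$, so the $\Hom$-bundle over the square is the pullback under $\widehat{F}$ (in the first variable) of the corresponding $\Hom$-bundle over $[0,1]$ pulled back from $\gamma$; parallel transport along the $t$-direction in such a pullback bundle is exactly the reparametrisation isomorphism induced by $\tau$, which under $r_{\CT_0}, r_{\tau^*\CT_0}$ is precisely the map $\psi \mapsto \psi$ on representatives. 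Thus both sides equal the same map, proving \eqref{eq:Diff action via pt}.

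\emph{Main obstacle.} The routine-looking but genuinely delicate step is the last one: making precise that parallel transport in the $t$-direction of the $\Hom$-bundle over $[0,1]^2$ — which is glued from the $\scR_{ij|\Gamma(t)}(\CT_{|\{s\}\times[0,1]})$ via the $r_{(\CT_{|\{s\}\times[0,1]})}$ of \eqref{eq:colim str mors r_CT} — coincides with the reparametrisation action $R(-,\tau)$ on representatives. This requires unwinding the definition of the connection on $\scR_{ij}$ via \eqref{eq:pt in R_ij} together with the canonical isomorphism $\Gamma^{\dashv *}\scR_{ij} \cong \Hom(\CV_{(i,\CT^0)}, \CV_{(j,\CT^1)})$, and checking naturality of $\Delta$ and of the $r$'s under the fibre-wise diffeomorphism $\widehat{F}$; here one uses that $\tau$ fixes $\partial[0,1]$ so that the action on the $\psi$-component is trivial (as remarked after the statement), and that $\Delta$ is compatible with pullback along reparametrisations. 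I expect this bookkeeping with the coherence isomorphisms to be the only non-formal part; everything else (thinness, vanishing of the area term, choice of $\CT$) is immediate from the constructions recalled in Sections~\ref{sect:bundles on path spaces} and~\ref{sect:equivar strs on R_ij}.
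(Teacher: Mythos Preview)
Your strategy is exactly the paper's: pull back a trivialisation $\CT_0 \colon \gamma^*\CG \to \CI_0$ along $F^\dashv$ to $[0,1]^2$, observe that the resulting $\rho$ vanishes so the exponential prefactor in \eqref{eq:pt in R_ij} is trivial, and conclude that the parallel transport in the auxiliary $\Hom$-bundle acts as the identity on representatives $\psi$, matching the equivariant action \eqref{eq:Diff action on R_ij}.

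However, your execution has a coordinate swap that turns the clean final step into the ``delicate bookkeeping'' you flag as the main obstacle. In the paper's conventions for \eqref{eq:pt in R_ij} and \eqref{eq:aux bdls for pt_ij}, the \emph{first} coordinate of $\Gamma^\dashv \colon [0,1]^2 \to M$ is the time of the path $\Gamma$ in $P_{ij}M$ and the \emph{second} is the path parameter; thus $\Gamma_t \coloneqq \Gamma^\dashv(-,t)$ for $t \in \{0,1\}$ are the paths of \emph{endpoints}. With $\Gamma = R_F(\gamma)$ one has $\Gamma^\dashv(s,t) = \gamma(F(s)(t))$, and since every $F(s) \in \Diff^+([0,1])$ fixes $\partial[0,1]$, the paths $\Gamma_0$ and $\Gamma_1$ are \emph{constant} at $\gamma(0) \in Q_i$ and $\gamma(1) \in Q_j$. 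Your formula $\Gamma^\dashv(s,t) = \gamma(F(t)(s))$ has the coordinates reversed, which leads you to the (ill-typed) identification $\Gamma_0 = \gamma$ and $\CV_{(i,\CT^0)} = \Delta(\gamma^*\CE_i,\CT_0)$; note that $\gamma$ does not land in $Q_i$, so $\gamma^*\CE_i$ does not make sense.

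Once the coordinates are straightened out, the obstacle dissolves: both $\CV_{(i,\CT^0)}$ and $\CV_{(j,\CT^1)}$ are pullbacks of bundles over the \emph{point}, so their parallel transport along the time interval is the identity. Combining this with $\rho = 0$ and the restriction $(F^{\dashv*}\CT_0)_{|\{s\}\times[0,1]} = F(s)^*\CT_0$ immediately gives $pt_{ij|R_F(\gamma)}[\CT_0,\psi] = [F(1)^*\CT_0,\psi] = R([\CT_0,\psi],F(1))$, with no further bookkeeping needed.
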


\begin{proof}
Since the expressions~\eqref{eq:pt in R_ij} and~\eqref{eq:Diff action on R_ij} are well-defined on equivalence classes, it suffices to prove the identity~\eqref{eq:Diff action via pt} with respect to any one representative, i.e.~with respect to any one trivialisation of the pullback gerbe $(R_F \gamma)^{\dashv*}\CG$ over $[0,1]^2$.

First, we consider the map
\begin{equation}
        F^\dashv \colon [0,1]^2 \to [0,1]\,,
        \qquad
        (s,t) \mapsto F(s)(t)\,,
\end{equation}
which satisfies
\begin{equation}
        R_F \gamma = \gamma \circ F^\dashv\,.
\end{equation}
Let $\CT_0 \colon \gamma^*\CG \to \CI_0$ be a trivialisation.
We obtain a trivialisation $F^{\dashv *}\CT_0 \colon (R_F \gamma)^{\dashv *}\CG \to \CI_0$ over $[0,1]^2$, which has the following properties:
\begin{enumerate}[(1)]
        \item It has 2-form $\rho = 0$.
        
        \item For any $s \in [0,1]$, we have
        \begin{equation}
                (F^{\dashv *}\CT_0)_{|\{s\} \times [0,1]} = \big( F(s) \big)^*\CT_0\,.
        \end{equation}
        
        \item Since $F(s)(0) = 0$ and $F(s)(1) = 1$ for all $s \in [0,1]$, we have that $(F^{\dashv *}\CT_0)_{|[0,1] \times \{t\}}$ is the pullback of a trivialisation of a bundle gerbe over the point for $t \in \{0,1\}$.
\end{enumerate}
Combining property (3) with the fact that $R_F \gamma (s,t) = \gamma(t)$ for all $s \in [0,1]$ and $t \in \{0,1\}$, we obtain that in this special case the bundles $\CV_{(i,\CT^0)}$ and $\CV_{(j,\CT^1)}$ (cf.~\eqref{eq:aux bdls for pt_ij}) are pullbacks of bundles over the point.
Thus, their parallel transport is trivial.
Inserting this insight and properties (1) and (2) into the definition~\eqref{eq:pt in R_ij} of the parallel transport $pt_{ij}$ readily yields the identity~\eqref{eq:Diff action via pt}.
\end{proof}

\begin{corollary}
\label{st:R_ij equivar as bdl with connection}
The lifted action $R$ commutes with the parallel transport on the bundle $\scR_{ij}$.
Consequently, $\scR_{ij}$ is $\Diff^+([0,1])$-equivariant as a hermitean diffeological vector bundle with connection.
\end{corollary}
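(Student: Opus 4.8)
The plan is to reduce everything to Proposition~\ref{st:Diff action via pt}, which already identifies the $\Diff^+([0,1])$-action with a parallel transport, and then to exploit that the connection $pt_{ij}$ on $\scR_{ij}$ is superficial, hence invariant under thin homotopies of paths in $P_{ij}M$. Fix $\tau \in \Diff^+([0,1])$. First I would exhibit a smooth path $F \colon [0,1] \to \Diff^+([0,1])$ with $F(0) = 1_{[0,1]}$ and $F(1) = \tau$; the explicit choice $F(s)(t) \coloneqq (1-s)\,t + s\,\tau(t)$ works, since its $t$-derivative $(1-s) + s\,\tau'(t)$ is everywhere positive and its adjoint $[0,1]^2 \to [0,1]$ is smooth. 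Proposition~\ref{st:Diff action via pt} then gives, for every $\gamma \in P_{ij}M$, that $R(-,\tau) = pt_{ij|R_F(\gamma)}$ in $\Hom(\scR_{ij|\gamma}, \scR_{ij|\gamma \circ \tau})$, where $R_F(\gamma)$ is the thin path $t \mapsto \gamma \circ F(t)$ from $\gamma$ to $\gamma \circ \tau$. In other words, the action of $\tau$ on the fibres of $\scR_{ij}$ is parallel transport along an explicit thin path in $P_{ij}M$.

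Next, for a smooth path $\Gamma \colon [0,1] \to P_{ij}M$, I would introduce the square $H \colon [0,1]^2 \to P_{ij}M$, $H(s,t) \coloneqq \Gamma(t) \circ F(s)$. Its restrictions to the four edges are $H(0,-) = \Gamma$, $H(1,-) = \tau^*\Gamma$, $H(-,0) = R_F(\Gamma(0))$ and $H(-,1) = R_F(\Gamma(1))$. Thus $H$ provides a homotopy, relative to the endpoints $\Gamma(0)$ and $\Gamma(1) \circ \tau$, between the concatenation of $\Gamma$ followed by $R_F(\Gamma(1))$ and the concatenation of $R_F(\Gamma(0))$ followed by $\tau^*\Gamma$. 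The adjoint of $H$ is $H^\dashv(s,t,u) = \Gamma^\dashv\big(t, F^\dashv(s,u)\big)$, which factors through the smooth map $[0,1]^3 \to [0,1]^2$, $(s,t,u) \mapsto (t, F^\dashv(s,u))$, so $\rank(H^\dashv_*) < 3$ at every point and $H$ is a thin homotopy of paths in $P_{ij}M$. Invoking that $pt_{ij}$ is superficial (as recalled from \cite[Prop.~4.3.1, Def.~A.2.2]{BW:Transgression_and_regression_of_D-branes}), parallel transport along the two boundary concatenations agrees; combining this with functoriality and reparametrisation-invariance of $pt_{ij}$ and with the identity from the first paragraph yields
\[
        pt_{ij|\tau^*\Gamma} \circ R(-,\tau)_{|\Gamma(0)} = R(-,\tau)_{|\Gamma(1)} \circ pt_{ij|\Gamma}\,,
\]
which is exactly the statement that the lifted action $R$ commutes with the parallel transport on $\scR_{ij}$.

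Finally I would combine this with the preceding lemma, which shows that $R$ is a smooth action of $\Diff^+([0,1])$ on $\scR_{ij}$ by unitary bundle isomorphisms lifting the smooth action on $P_{ij}M$ of Lemma~\ref{st:Diff action on P_ij M}; together with the compatibility with $pt_{ij}$ just established, this means $R$ preserves the base action, the hermitean metric and the connection, so $\scR_{ij}$ is $\Diff^+([0,1])$-equivariant as a hermitean diffeological vector bundle with connection. The main obstacle I anticipate is purely in the bookkeeping at the end of the second paragraph: checking that $H$ genuinely falls under the notion of thin homotopy to which superficiality of $pt_{ij}$ applies in the sense of \cite[Def.~A.2.2]{BW:Transgression_and_regression_of_D-branes}, and carrying out the (routine but fiddly) reparametrisation of the square into a rel-endpoint homotopy between the two concatenated paths without spoiling thinness.
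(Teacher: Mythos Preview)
Your proposal is correct and follows essentially the same approach as the paper: both use the convex-combination isotopy $F(s)(t)=(1-s)t+s\,\tau(t)$, form the same homotopy $H(s,t)=\Gamma(t)\circ F(s)$ (the paper calls it $R_F\Gamma$), and combine Proposition~\ref{st:Diff action via pt} with the superficiality of $pt_{ij}$ to obtain the commutation relation. The only difference is cosmetic: the paper invokes property~(ii) of \cite[Definition~A.2.2]{BW:Transgression_and_regression_of_D-branes} directly for the rank-two homotopy $R_F\Gamma$ between $\Gamma$ and $R_\tau\Gamma$, whereas you phrase the same content as a thin homotopy between the two boundary concatenations of the square; your anticipated bookkeeping issue is exactly what that direct invocation sidesteps.
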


\begin{proof}
This corollary follows from combining Proposition~\ref{st:Diff action via pt} with the fact that $pt_{ij}$ is superficial (see~\cite[Definition~A.2.2, Proposition~4.3.2]{BW:Transgression_and_regression_of_D-branes}):
if $f \in \Diff^+([0,1])$ is any diffeomorphism, we find some smooth path in $\Diff^+([0,1])$ from $1_{[0,1]}$ to $f$ with sitting instants.
For example take $F \colon [0,1] \to \Diff^+([0,1])$, $F(s)(t) = (1-s)t + s\, f(t)$ -- this is smooth and strictly increasing for any fixed $s \in [0,1]$, and hence $F(s)$ is a diffeomorphism for any $s$.
If $\Gamma \colon [0,1] \to P_{ij}M$ is any smooth path, we obtain a rank-two homotopy $R_F \Gamma \colon [0,1] \to P(P_{ij}M)$ from $\Gamma$ to $R_f \Gamma$ (in the notation of the proof of Proposition~\ref{st:Diff action via pt}).
Since $pt_{ij}$ is superficial, property (ii) of~\cite[Definition~A.2.2]{BW:Transgression_and_regression_of_D-branes} and Proposition~\ref{st:Diff action via pt} imply that
\begin{equation}
        R_f \circ pt_{ij\, |\Gamma} = pt_{ij\, |(R_f \circ \Gamma)} \circ R_f\,,
\end{equation}
as claimed.
\end{proof}

\begin{remark}
Corollary~\ref{st:R_ij equivar as bdl with connection} can also be deduced directly from the explicit form~\eqref{eq:pt in R_ij} of the parallel transport on $\scR_{ij}$, using the fact that the induced diffeomorphism $\hat{F} \colon [0,1]^2 \to [0,1]^2$ is a \emph{fibre-wise} diffeomorphism.
This implies the invariance of the integral term in~\eqref{eq:pt in R_ij}.
\qen
\end{remark}

To conclude this section, we note that for every $i,j \in I$, the diffeomorphism $\rev \colon [0,1] \to [0,1]$, $t \mapsto 1-t$ induces an isomorphism of diffeological spaces
\begin{equation}
\label{def:alpha}
        R_\rev \colon P_{ij}M \to P_{ji}M\,,
        \qquad
        \gamma \mapsto \gamma \circ \rev\,.
\end{equation}
This isomorphism lifts to a bundle isomorphism~\cite[Section~4.7]{BW:Transgression_and_regression_of_D-branes}
\begin{equation}
\label{eq:orientation involution on scR}
        \alpha_{ij} \colon \scR_{ij} \to R_\rev^*\overline{\scR_{ji}}\,,
        \qquad
        [\CT,\psi] \mapsto [\rev^*\CT, \psi^*]\,.
\end{equation}
Here,  $\overline{\scR_{ij}}$ denotes the complex conjugate vector bundle, and $\psi^*$ is the fibre-wise hermitean adjoint of $\psi$.
Equivalently, we have a commutative square of diffeological spaces
\begin{equation}
\label{eq:equivar str on P_ijM and alpha_ij}
\xymatrix{        & \scR_{ij} \ar[dd] \ar[rr]^{\alpha_{ij}} & & \overline{\scR_{ji}} \ar[dd]       \\
        \Diff^+([0,1]) \times \scR_{ij} \ar[rr]_-<<<<<<<<<{a_\rev \times \alpha_{ij}} \ar[dd] \ar[ur]^{R}
        & & \Diff^+([0,1]) \times \overline{\scR_{ji}} \ar[dd] \ar[ur]^{\overline{R}} &        \\
        & P_{ij}M \ar[rr]^-<<<<<<<<<<<<<<<<<<<{R_\rev} & & P_{ji}M
        \\
        \Diff^+([0,1]) \times P_{ij}M \ar[rr]_{a_\rev \times R_\rev} \ar[ur]^{R}
        & & \Diff^+([0,1]) \times P_{ji}M \ar[ur]_{R}  &}
\end{equation}
where $a_\rev$ denotes conjugation by $\rev$ in $\Diff^+([0,1])$.
Finally, we point out that there is an isomorphism
\begin{equation}
        \beta_{ij} \colon R_\rev^*\scR_{ji} \to \scR_{ij}^\vee
\end{equation}
defined implicitly by
\begin{equation}
\label{eq:evaluation of beta_ij}
        \< \beta_{ij}[\rev^*\CT,\psi],\, [\CT,\phi]\> = \tr(\psi \circ \phi)\,,
\end{equation}
where $\<-,-\>$ denotes the evaluation pairing.
Alternatively, we can write
\begin{equation}
\label{eq:beta_ij}
        \beta_{ij} = R_\rev^*\flat_{h_{ji}} \circ (\overline{\alpha_{ij}})^{-1}\,,
\end{equation}
where $\flat_{h_{ji}} \colon \overline{\scR_{ji}} \to \scR_{ji}^\vee$ is the musical isomorphism induced by the hermitean metric $h_{ji}$ on $\scR_{ji}$.

\subsection{The transgression line bundle over the loop space}
\label{sect:LBdl_on_loop_space}

Here we recall the construction of the transgression line bundle of a bundle gerbe from~\cite{Waldorf--Trangression_II}.
Let $\CG \in \Grb^\nabla(M)$ be a bundle gerbe with connection over $M$.
We denote by $LM$ the diffeological free loop space of $M$.
There exists a principal $\sfU(1)$-bundle $L\CG$ over $LM$, whose fibre over a loop $\gamma$ is the set of isomorphism classes of trivialisations of $\gamma^*\CG$.
This is a $\sfU(1)$-torsor, since the groupoid of trivialisations of a bundle gerbe with connection is a torsor groupoid over the groupoid of hermitean line bundles with connection~\cite{Waldorf--More_morphisms}.
A line bundle $J \in \HLBdl^\nabla(\bbS^1)$ acts on the fibre $L\CG_{|\gamma}$ as
\begin{equation}
        [\CT] \mapsto [\CT \otimes J]\,,
\end{equation}
where $\CT$ is a trivialisation of $\gamma^*\CG$ and $[\CT]$ denotes its 2-isomorphism class.
It has been shown in~\cite{Waldorf--Trangression_II} that $L\CG$ is a diffeological $\sfU(1)$-bundle on $LM$.
Further, $L\CG$ carries a symmetrising fusion product and a superficial, fusive connection~\cite{Waldorf--Trangression_II} (though these are not relevant here).
The transgression line bundle $\scL$ is the associated diffeological hermitean line bundle
\begin{equation}
        \scL \coloneqq L\CG \times_{\sfU(1)} \FC\,.
\end{equation}
Thus, the elements of the fibre $\scL_{|\gamma}$ over a loop $\gamma \in LM$ are equivalence classes $[[\CT],z]$, consisting of an isomorphism class of a trivialization $\CT \colon \gamma^* \CG \to \CI_0$ and of a complex number $z \in \FC$.
The equivalence relation identifies representatives $(\CT,z) \sim (\CT',z')$ if there exists a hermitean line bundle $J$ on $\bbS^1$ with connection such that $\CT \otimes J \cong \CT'$ and $z = z' \cdot \hol(J)$.

Let $\Diff^+(\bbS^1)$ denote the diffeological group of orientation-preserving diffeomorphisms of $\bbS^1$.
This acts on $LM$ by pre-composition, i.e.~via the map
\begin{equation}
        R \colon LM \times \Diff^+(\bbS^1) \to LM\,,
        \qquad
        (\gamma, \tau) \mapsto \gamma \circ \tau\,.
\end{equation}
The map $R$ is smooth by arguments analogous to those in Lemma~\ref{st:Diff action on P_ij M}.
Further, it lifts to a map
\begin{equation}
\label{eq:Diff-action on L}
        R \colon \scL \times \Diff^+(\bbS^1) \to \scL\,,
        \qquad
        \big( \big[ [\CT], z \big], \tau \big) \mapsto \big[ [\tau^*\CT], z \big]\,.
\end{equation}
The latter map can be expressed in terms of parallel transport along paths induced by isotopies that connect $\tau$ and $1_{\bbS^1}$, in analogy with Proposition~\ref{st:Diff action via pt} (using the fact that $\Diff^+(\bbS^1)$ is connected).
However, note that there are non-homotopic isotopies of this kind, since $\bbS^1$ is not simply connected.
The map described above is well-defined nevertheless, due to the superficiality of the connection on $\scL$ (see~\cite[Definition~2.2.1(i), Corollary~4.3.3]{Waldorf--Trangression_II}).
The relation between the map $R$ from~\eqref{eq:Diff-action on L} and parallel transport was worked out in~\cite[Remark~4.3.7]{Waldorf--Trangression_II}.
In particular, the map $R$ is smooth by~\cite[Proposition~2.2.5]{Waldorf--Trangression_II}.

Finally, again in analogy to Section~\ref{sect:equivar strs on R_ij}, there is a $\RZ_2$-action on $LM$ via $R_\rev \colon \gamma \mapsto \gamma \circ \rev$, with $\rev \colon \bbS^1 \to \bbS^1$ denoting the orientation-reversing diffeomorphism $\exp(2\pi\, \iu\, t) \mapsto \exp(- 2\pi\, \iu\, t)$.
This descends to a $\RZ_2$-action $R_\rev$ on $LM$, which lifts to
\begin{equation}
\label{eq:lifted path reversal on loops}
        \tilde{\lambda} \colon \scL \to R_\rev^*\overline{\scL}
\end{equation}
respectively.
As for $\scR_{ij}$, this can equivalently be cast as an isomorphism
\begin{equation}
\label{eq:varrho mp}
        \tilde{\varrho} \coloneqq R_\rev^*\flat_{h_\scL} \circ \overline{\tilde{\lambda}}^{-1}
        \colon R_\rev^*\scL \to \scL^\vee\,.
\end{equation}

\subsection{The coherent pull-push construction}
\label{sect:pull-push and coherence}

In Section~\ref{sect:TFT_construction} it will be crucial to not just consider the mapping spaces $P_{ij}M$ of paths between D-branes that are parameterised over the unit interval $[0,1]$.
We would like to consider, for any oriented 1-manifold $Y$ diffeomorphic to $[0,1]$, the spaces $P^Y_{ij}M$ of smooth maps $\gamma \colon Y \to M$ with sitting instants around the initial point $y_0 \in Y$ and the endpoint $y_1 \in Y$ as defined by the orientation of $Y$, and such that $\gamma(y_0) \in Q_i$ and $\gamma(y_1) \in Q_j$.
The set $P^Y_{ij}M$ of such maps is a subset of the diffeological mapping space $M^Y$, and we endow it with the subspace diffeology.
The equivariant structures on the bundles $\scR_{ij} \to P_{ij}M$ from Section~\ref{sect:equivar strs on R_ij} allow us to transfer the bundles $\scR_{ij}$ to the spaces $P^Y_{ij}M$ in a coherent way.
Analogously, we need to transfer the hermitean line bundle $\scL \to LM$ to the diffeological mapping space $L^Y M$ of smooth maps $Y \to M$ for every oriented manifold $Y$ that is diffeomorphic to $\bbS^1$.
In this section we make precise what we mean by this and outline a construction that achieves this transfer.
The full details of this construction can be found in the separate article~\cite{Bunk:Coh_Desc}.

For $Y_0,Y_1$ oriented manifolds (possibly with boundary), we write $\rmD(Y_0,Y_1) \coloneqq \Diff^+(Y_0,Y_1)$ for the diffeological space of orientation-preserving diffeomorphisms from $Y_0$ to $Y_1$.
Note that if $Y_0 = Y_1 = Y$, the space $\rmD(Y,Y) \eqqcolon \rmD(Y)$ canonically has the structure of a diffeological group.
For $Y \cong [0,1]$, this group acts on the diffeological space $P_{ij}^Y M$ by pre-composition; there is a smooth map
\begin{equation}
        P_{ij}^Y M \times \rmD(Y) \rightarrow P_{ij}^Y M\,,
        \qquad
        (\gamma, f) \mapsto \gamma \circ f\,,
\end{equation}
which is compatible with the group structure on $\rmD(Y)$.
Similarly, for $Y_0, Y_1 \cong [0,1]$ there are smooth maps
\begin{equation}
        d_0 \colon P_{ij}^{Y_0} M \times \rmD(Y_1,Y_0) \rightarrow P_{ij}^{Y_1} M\,,
        \qquad
        (\gamma, f) \mapsto \gamma \circ f\,,
\end{equation}
which fit into a commutative square
\begin{equation}
\xymatrix{P_{ij}^{Y_0} M \times \rmD(Y_1,Y_0) \times \rmD(Y_2,Y_1) \ar[r] \ar[d]
        & P_{ij}^{Y_1} M \times \rmD(Y_2,Y_1) \ar[d]
        \\
        P_{ij}^{Y_0} M \times \rmD(Y_2,Y_0) \ar[r]
        & P_{ij}^{Y_2} M
}
\end{equation}
in $\Dfg$ for any oriented manifolds $Y_0,Y_1,Y_2 \cong [0,1]$.
We also define maps
\begin{equation}
        d_1 \colon P_{ij}^{Y_0} M \times \rmD(Y_1,Y_0) \rightarrow P_{ij}^{Y_0} M\,,
        \qquad
        (\gamma, f) \mapsto \gamma\,,
\end{equation}
and
\begin{alignat}{3}
        d_0 \colon P_{ij}^{Y_0} M \times \rmD(Y_1,Y_0) \times \rmD(Y_2,Y_1) &\rightarrow P_{ij}^{Y_1} M \times \rmD(Y_2,Y_1)\,,
        &&\qquad
        (\gamma, f_1,f_2) &&\mapsto (\gamma \circ f_1, f_2)\,,
        \\
        d_1 \colon P_{ij}^{Y_0} M \times \rmD(Y_1,Y_0) \times \rmD(Y_2,Y_1) &\rightarrow P_{ij}^{Y_0} M \times \rmD(Y_2,Y_0)\,,
        &&\qquad
        (\gamma, f_1,f_2) &&\mapsto (\gamma, f_1 \circ f_2)\,,
        \\
        d_2 \colon P_{ij}^{Y_0} M \times \rmD(Y_1,Y_0) \times \rmD(Y_2,Y_1) &\rightarrow P_{ij}^{Y_0} M \times \rmD(Y_1,Y_0)\,,
        &&\qquad
        (\gamma, f_1,f_2) &&\mapsto (\gamma, f_1)\,.
\end{alignat}
Note that this notation stems from the use of simplicial techniques, which are at work in the background here; for more details on this, see~\cite{Bunk:Coh_Desc}.
Let $\scM_{[0,1]}$ be the groupoid of oriented manifolds $Y$ that are isomorphic (as oriented manifolds) to $[0,1]$ with its standard orientation.
The morphisms in $\scM_{[0,1]}$ are the orientation-preserving diffeomorphisms.
Note that $Y \mapsto P^Y_{ij}M$ defines a functor $P^{(-)}_{ij}M \colon \scM_{[0,1]}^\opp \to \Dfg$.

\begin{definition}
\label{def:coherent R_ij and L}
Let $(M,Q)$ be a target space.
A \emph{coherent hermitean vector bundle on $P_{ij}^{(-)}M$} is a pair $(E,\mu)$ of a family $E=\{E_{Y_0}\}_{Y_0 \in \scM_{[0,1]}}$ of hermitean vector bundles $E_{Y_0} \to P_{ij}^{Y_0} M$, together with a family $\mu=\{\mu_{Y_1,Y_0}\}_{Y_1,Y_0 \in \scM_{[0,1]}}$ of isomorphisms
\begin{equation}
        \mu_{Y_1,Y_0} \colon d_1^*E_{Y_0} \rightarrow d_0^*E_{Y_1}
\end{equation}
of hermitian vector bundles over $P_{ij}^{Y_0} M \times \rmD(Y_1,Y_0)$, such that
\begin{equation}
        d_0^* \mu_{Y_2,Y_1} \circ d_2^* \mu_{Y_1,Y_0} = d_1^* \mu_{Y_2,Y_0}
\end{equation}
over $P_{ij}^{Y_2} M \times \rmD(Y_2,Y_1) \times \rmD(Y_1,Y_0)$ for every $Y_0,Y_1,Y_2 \in \scM_{[0,1]}$.
A \emph{morphism of coherent hermitean vector bundles on \smash{$P_{ij}^{(-)}M$}}, written $\psi \colon (E,\mu) \to (F,\nu)$, consists of a family $\psi=\{\psi_{Y_0}\}_{Y_0\in \scM_{[0,1]} }$  of hermitian vector bundle morphisms $\psi_{Y_0} \colon E_{Y_0} \to F_{Y_0}$ that intertwine the morphisms $\mu$ and $\nu$.
This defines the category $\HVBdl_{coh}(P_{ij}^{(-)}M)$ of coherent hermitean vector bundles on $P_{ij}^{(-)}M$.
\end{definition}

We introduce the simplicial diffeological space $\big( P_{ij} M \dslash \rmD([0,1]) \big)_\bullet \in \Dfg^{\Delta^\opp}$ by setting
\begin{align}
        \big( P_{ij} M \dslash \rmD([0,1]) \big)_n &\coloneqq P_{ij} M \times  \rmD([0,1])^n\,;
\end{align}
i.e., it is  the nerve of the action groupoid of the $\rmD([0,1])$-action on $P_{ij} M$.
The following definition spells out what a hermitian vector bundle over the simplicial diffeological space $(P_{ij} M \dslash \rmD([0,1]) )_\bullet$ is.
\begin{definition}
An \emph{equivariant hermitean vector bundle on $P_{ij}M$} is a pair $(E',\mu')$ of a hermitean vector bundle $E' \to P_{ij}M$, together with an isomorphism $\mu' \colon d_1^*E' \to d_0^*E'$ over $P_{ij} M \times \rmD([0,1])$, such that
\begin{equation}
        d_2^*\mu' \circ d_0^*\mu' = d_1^*\mu'
\end{equation}
over $P_{ij}M \times \rmD([0,1])^2$.
A \emph{morphism of equivariant hermitean vector bundles on $P_{ij}M$}, denoted $\psi' \colon (E',\mu') \to (F',\nu')$, consists of a morphism $\psi \colon E' \to F'$ that intertwines the morphisms $\mu'$ and $\nu'$.
This defines the category $\HVBdl(P_{ij}M)^{\rmD([0,1])}$ of equivariant hermitean vector bundles on $P_{ij}M$.
\end{definition}

Analogously, we define the category $\HVBdl(LM)^{\rmD(\bbS^1)}$ of \emph{equivariant hermitean vector bundles on $LM$} and the category $\HVBdl_{coh}(L^{(-)}M)$ of \emph{coherent hermitean vector bundles on $L^{(-)}M$}.
Now consider the diffeological space $P_{ij}^Y M \times \rmD([0,1],Y)$.
It comes with two smooth maps
\begin{alignat}{3}
        \Phi^Y_0 \colon P_{ij}^Y M \times \rmD([0,1],Y) &\rightarrow P_{ij}M\,,
        && \qquad
        (\gamma, g) &&\mapsto \gamma \circ g\,,
        \\
        \Psi^Y \colon P_{ij}^Y M \times \rmD([0,1],Y) &\rightarrow P_{ij}^Y M\,,
        && \qquad
        (\gamma, g) &&\mapsto \gamma\,.
\end{alignat}
For any $Y \in \scM_{[0,1]}$, the map $\Phi^Y_0$ extends to a morphism
\begin{equation}
        \Phi^Y_\bullet \colon P_{ij}^Y M \times \rmD([0,1],Y)^\bullet \rightarrow (P_{ij}M \dslash \rmD([0,1]))_\bullet
\end{equation}
of simplicial diffeological spaces.
Observe that the source of $\Phi^Y_\bullet$ is the \v{C}ech nerve of the subduction $\Psi^Y$.
Since it is simplicial, pullback along $\Phi^Y_\bullet$ induces a functor
\begin{equation}
        (\Phi_\bullet^{Y})^{*} \colon \HVBdl(P_{ij} M)^{\rmD([0,1])} \rightarrow \Desc(\HVBdl,\Psi^Y)
\end{equation}
for any $i,j \in I$, where $\Desc(\HVBdl,\Psi^Y)$ is the category of descent data for hermitean vector bundles with respect to the subduction $\Psi^Y$.
We can now use the fact that $\HVBdl$ admits a functorial descent $\Desc(\HVBdl,\Psi^Y) \to \HVBdl(P_{ij}^Y M)$ in order to obtain a hermitean vector bundle on $P_{ij}^Y M$ from any equivariant hermitean vector bundle on $P_{ij}M$.
The above constructions and the following theorem will be discussed in more general context in \cite{Bunk:Coh_Desc}. 
\begin{theorem}
\label{st:Psi_* Phi^* is equivalence}
Considering all $Y \in \scM_{[0,1]}$, the composition of pullback along $\Phi^Y_\bullet$ and descent along $\Psi^Y$ naturally assemble into an equivalence of categories
\begin{equation}
        \Psi_* \Phi^* \colon \HVBdl(P_{ij} M)^{\rmD([0,1])} \longrightarrow \HVBdl_{coh}(P_{ij}^{(-)}M)\,.
\end{equation}
A completely analogous construction yields an equivalence
\begin{equation}
        \Psi_* \Phi^* \colon \HVBdl(LM)^{\rmD(\bbS^1)} \longrightarrow \HVBdl_{coh}(L^{(-)}M)\,.
\end{equation}
\end{theorem}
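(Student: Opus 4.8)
Here is the plan. Throughout I use freely, as in \cite{Bunk:Coh_Desc}, that $\HVBdl$ is a stack on $\Dfg$ for the pretopology of subductions: for any subduction $p \colon \widetilde{X} \to X$ the descent functor $\Desc(\HVBdl,p) \to \HVBdl(X)$ is an equivalence, naturally in $p$, and moreover, if $p$ admits a global section $s$, then the bundle descended from a descent datum $(\widetilde{E},\varphi)$ is canonically and naturally isomorphic to $s^*\widetilde{E}$, the isomorphism being determined by $\varphi$. Rather than verifying essential surjectivity and fullness of $\Psi_*\Phi^*$ directly, I would exhibit an explicit quasi-inverse, namely evaluation at the standard interval.

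Concretely, the first step is to define
\begin{equation}
        \ev_{[0,1]} \colon \HVBdl_{coh}(P_{ij}^{(-)}M) \longrightarrow \HVBdl(P_{ij}M)^{\rmD([0,1])}\,,
        \qquad (E,\mu) \longmapsto \big( E_{[0,1]},\, \mu_{[0,1],[0,1]} \big)\,,
\end{equation}
together with the analogous assignment on morphisms $\psi \mapsto \psi_{[0,1]}$. This is well defined: $\mu_{[0,1],[0,1]}$ lives over $P_{ij}M \times \rmD([0,1])$, and specialising the coherence identity $d_0^*\mu_{Y_2,Y_1} \circ d_2^*\mu_{Y_1,Y_0} = d_1^*\mu_{Y_2,Y_0}$ to $Y_0 = Y_1 = Y_2 = [0,1]$ reproduces the cocycle condition of an equivariant bundle, because under this specialisation the face maps $d_0,d_1,d_2$ of Definition~\ref{def:coherent R_ij and L} become precisely the face maps of the nerve $(P_{ij}M \dslash \rmD([0,1]))_\bullet$. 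Likewise $\psi_{[0,1]}$ intertwines the two equivariant structures because $\psi$ intertwines all the $\mu$ and $\nu$.

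The bulk of the argument is then to produce the two natural isomorphisms $\ev_{[0,1]} \circ (\Psi_*\Phi^*) \cong \mathrm{id}$ and $(\Psi_*\Phi^*) \circ \ev_{[0,1]} \cong \mathrm{id}$. For the first, the subduction $\Psi^{[0,1]} \colon P_{ij}M \times \rmD([0,1]) \to P_{ij}M$ carries the global section $\gamma \mapsto (\gamma, 1_{[0,1]})$, so by the descent fact recalled above the bundle $\big( (\Psi_*\Phi^*)(E',\mu') \big)_{[0,1]}$ is canonically isomorphic to the restriction of $(\Phi^{[0,1]}_0)^*E'$ along this section, which is $E'$; tracing the shear map $\Phi^{[0,1]}_\bullet$ through the construction shows that this isomorphism carries the reconstructed equivariant structure to $\mu'$, so one recovers $(E',\mu')$ on the nose up to canonical isomorphism. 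For the second, for each $Y \in \scM_{[0,1]}$ one chooses $g_Y \in \rmD([0,1],Y)$ with $g_{[0,1]} = 1_{[0,1]}$, obtaining a section $\gamma \mapsto (\gamma, g_Y)$ of $\Psi^Y$; then $\big( (\Psi_*\Phi^*)(E_{[0,1]},\mu_{[0,1],[0,1]}) \big)_Y$ is canonically isomorphic to $(\gamma \mapsto \gamma \circ g_Y)^*E_{[0,1]}$, and the restriction of $\mu_{[0,1],Y}$ to $g_Y$ supplies an isomorphism of this bundle with $E_Y$. The two substantive checks are that this isomorphism is independent of the choice of $g_Y$ — two choices differ by an element of $\rmD([0,1])$ and the discrepancy is absorbed via the coherence identity and the description of the descended bundle — and that the resulting collection over all $Y$ is a morphism of coherent bundles, i.e.\ intertwines all the $\mu$; both reduce, after unwinding, to the cocycle condition.

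The loop-space statement follows by the identical argument with $[0,1]$ replaced by $\bbS^1$ and $\rmD([0,1])$ by $\rmD(\bbS^1) = \Diff^+(\bbS^1)$; the only structural inputs are that $\scM_{[0,1]}$, respectively $\scM_{\bbS^1}$, is a connected groupoid with automorphism group $\rmD([0,1])$, respectively $\rmD(\bbS^1)$ — so that evaluation at a single object is an equivalence — and the stack property of $\HVBdl$; non-simple-connectedness of $\Diff^+(\bbS^1)$ is irrelevant here. I expect the main obstacle to be the bookkeeping in the second natural isomorphism: verifying independence of the auxiliary choices $g_Y$ and full compatibility with all the coherence data, while keeping track of smoothness of every comparison map in the diffeological sense. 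Conceptually this is exactly the point at which one is reproving, in the concrete situation at hand, the general coherent-descent theorem of \cite{Bunk:Coh_Desc}, of which the present statement is an instance, the homotopy colimit of $Y \mapsto P^Y_{ij}M$ over $\scM_{[0,1]}^\opp$ being the action groupoid $P_{ij}M \dslash \rmD([0,1])$.
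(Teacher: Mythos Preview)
The paper does not prove this theorem; it only states that the construction and the result ``will be discussed in more general context in \cite{Bunk:Coh_Desc}.'' So there is no proof in the paper to compare against.

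Your argument is correct and is, as you yourself observe in the final paragraph, precisely a hands-on instantiation of the general coherent-descent result that the paper cites. The key structural facts you isolate are the right ones: $\scM_{[0,1]}$ is a connected groupoid with automorphism group $\rmD([0,1])$, so a coherent bundle is determined up to canonical isomorphism by its value at $[0,1]$ together with the induced equivariant structure; and the subductions $\Psi^Y$ all admit global sections, which makes the descent step explicit. The bookkeeping you flag as the main obstacle (independence of the auxiliary $g_Y$ and compatibility of the resulting isomorphisms with the coherence data $\mu$) does indeed reduce to repeated applications of the single cocycle identity $d_0^*\mu_{Y_2,Y_1} \circ d_2^*\mu_{Y_1,Y_0} = d_1^*\mu_{Y_2,Y_0}$, specialised appropriately; no further idea is needed. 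One small point worth making explicit when you write this up: the identification of the descended bundle with the pullback along a chosen section is not merely an isomorphism of bundles but carries a specific compatibility with the descent datum, and it is this compatibility (not just the existence of some isomorphism) that you need when checking that the equivariant structure on $(\Psi_*\Phi^*)(E',\mu')_{[0,1]}$ really recovers $\mu'$.
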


Let $(\scR_{ij}, R) \in \HVBdl(P_{ij}M)^{\rmD([0,1])}$ and $(\scL, R) \in \HVBdl(LM)^{\rmD(\bbS^1)}$ be the equivariant hermitean vector bundles from Section~\ref{sect:bundles on path spaces} and Section~\ref{sect:LBdl_on_loop_space}, respectively.
Applying the coherent pull-push of Theorem \ref{st:Psi_* Phi^* is equivalence} we obtain coherent vector bundles
\begin{alignat}{3}
        (\widehat{\scR}_{ij}, \widehat{R}) &\coloneqq \Psi_*\Phi^*(\scR_{ij},R)
        &&\quad
        &&\in \HVBdl_{coh} \big( P^{(-)}_{ij}M \big) \qandq
        \\*
        (\widehat{\scL}, \widehat{R}) &\coloneqq \Psi_*\Phi^*(\scL,R)
        &&\quad
        &&\in \HLBdl_{coh} \big( L^{(-)}M \big)\,.
\end{alignat}
We will write $\scR^Y_{ij} \to P^Y_{ij}M$ and $\scL^Y \to L^YM$ for their components over $P^Y_{ij}M$ and over $L^Y M$, respectively.

\begin{remark}
Since the $\rmD([0,1])$-equivariant structure $R$ on the bundle $\scR_{ij}$ is compatible with the connection on $\scR_{ij}$, we could even construct the extended bundle $\scR^Y_{ij}$ as a coherent diffeological hermitean vector bundle with connection on \smash{$P^{(-)}_{ij}M$} (and similarly for $\scL$), but for our purposes we will only need the connection on \smash{$\scR_{ij}$}, as given in Section~\ref{sect:bundles on path spaces}.
\qen
\end{remark}

By construction, there are canonical identifications $\scR^{[0,1]}_{ij} \cong \scR_{ij}$ and \smash{$\scL^{\bbS^1} \cong \scL$}.
The map $\rev \colon [0,1] \to [0,1]$, $t \mapsto 1-t$ induces an isomorphism
\begin{align}
        R_\rev \times a_\rev^\bullet \colon P_{ij}M \dslash\rmD([0,1])_\bullet
        &\longrightarrow P_{ji}M \dslash \rmD([0,1])_\bullet\,,
        \\
        (\gamma, f_1, \ldots, f_n) &\longmapsto (\gamma \circ \rev, \rev ^{-1} f_1 \rev, \ldots, \rev^{-1} f_n \rev)
\end{align}
of simplicial diffeological spaces (compare also diagram~\eqref{eq:equivar str on P_ijM and alpha_ij}).
We can thus use $R_\rev \times a_\rev^\bullet$ to pull back the equivariant bundle $\overline{\scR_{ji}} \to P_{ji}M$ to an equivariant bundle $R_\rev^* \overline{\scR_{ji}} \to P_{ij}M$.
Then, the morphism $\alpha_{ij} \colon \scR_{ij} \to R_\rev^* \overline{\scR_{ji}}$ induces an isomorphism (which we also denote $\alpha_{ij}$) of $\rmD([0,1])$-equivariant hermitean vector bundles on $P_{ij}M$.
By the functoriality of $\Psi_* \Phi^*$ and the compatibility of descent of vector bundles with taking the complex conjugate vector bundle, we thus obtain an isomorphism
\begin{equation}
\label{eq:coherent alpha_ij}
        \Psi_* \Phi^*(\alpha_{ij}) \colon \Psi_* \Phi^*(\scR_{ij}) \longrightarrow \Psi_* \Phi^*(R_\rev^* \overline{\scR_{ji}})
        \cong \overline{ \Psi_* \Phi^*(R_\rev^* \scR_{ji}) }
\end{equation}
of coherent hermitean vector bundles on $P^{(-)}_{ij}M$.
We denote this isomorphism by $\widehat{\alpha}_{ij}$.

We can give yet a different perspective on this isomorphism:
observe that the oriented manifold $\overline{[0,1]}$, i.e.~the unit interval with the opposite orientation, is an element of $\scM_{[0,1]}$.
This is established by the orientation-\emph{preserving} diffeomorphism $\rev \colon [0,1] \to \overline{[0,1]}$.
Consequently, there is a commuting triangle of diffeological spaces
\begin{equation}
\label{eq:r_ij}
\xymatrix{  P^{[0,1]}_{ij}M \ar[rr]^{R_\rev} \ar[dr]_{r_{ij}} & & P^{[0,1]}_{ji}M \ar[dl]^-{{(P^{(-)}_{ji}M) (\rev)}}
        \\
        & P^{\overline{[0,1]}}_{ji}M &
}
\end{equation}
The morphism at the top pre-composes a path $\gamma$ by $\rev$, while still seeing $0 \in [0,1]$ as the initial point of the new path $\gamma \circ \rev$.
The right-hand morphism also pre-composes by $\rev$, but for the resulting path $\gamma \circ \rev \colon [0,1] \to M$ we view $1 \in [0,1]$ as its initial point.
Finally, the left-hand map just sends a map $\gamma$ to itself, but now views $1$ as the initial point in the parameterising manifold $[0,1]$.
Using the  coherent structure $\widehat{R}$ on $\widehat{\scR}_{ij}$ (see Definition~\ref{def:coherent R_ij and L}) and the morphism $\alpha_{ij}$, we obtain a commutative diagram
\begin{equation}
\label{eq:def alpha hat}
\xymatrix{
        \scR^{[0,1]}_{ij} \ar[rr]^{\alpha_{ij}} \ar[dr]_{\widehat{\alpha}^{[0,1]}_{ij}} & & \overline{\scR^{[0,1]}_{ji}} \ar[dl]^{R_{|(-,\rev)}}
        \\
        & \overline{\scR^{\overline{[0,1]}}_{ji}} &
}
\end{equation}
of isomorphisms of $\rmD([0,1])$-equivariant hermitean vector bundles that covers diagram~\eqref{eq:r_ij}, where the isomorphism $\widehat{\alpha}^{[0,1]}_{ij}$ is defined by this diagram.
Extending $\widehat{\alpha}^{[0,1]}_{ij}$ via $\Psi_* \Phi^*$ yields isomorphisms
\begin{equation}
\label{eq:def alpha widehat}
        \widehat{\alpha}_{ij} \colon \widehat{\scR}_{ij} \rightarrow r_{ij}^*\overline{\widehat{\scR}_{ji}}
\end{equation}
of coherent hermitean vector bundles on $P^{(-)}_{ij}M$ for all $i,j \in I$.

Analogously, the map $\rev \colon \bbS^1 \to \bbS^1$ as in Section~\ref{sect:LBdl_on_loop_space} and the isomorphism $\tilde{\lambda}$ from the same section in diagram~\eqref{eq:def alpha hat} yield an isomorphism
\begin{equation}
        \widehat{\lambda} \colon \widehat{\scL} \to r^* \overline{\widehat{\scL}}
\end{equation}
of coherent hermitean vector bundles over $L^{(-)}M$.
This also extends the isomorphisms $\beta_{ij}$ from~\eqref{eq:beta_ij} and $\tilde{\varrho}$ from~\ref{eq:varrho mp} to isomorphisms
\begin{equation}
\label{eq:whbeta whvarrho}
        \widehat{\beta} \colon r^*_{ij} \widehat{\scR}_{ji} \rightarrow \widehat{\scR}_{ij}^\vee
        \qandq
        \widehat{\varrho} \colon r^* \widehat{\scL} \rightarrow \widehat{\scL}^\vee\,.
\end{equation}

\section{Surface amplitudes}
\label{sect:amplitudes}

In this section we use the coherent bundles $\widehat{\scR}_{ij}$ and $\widehat{\scL}$ to extend the usual holonomy of bundle gerbes to amplitudes for surfaces with corners, whose boundary is partly contained in D-branes.
In Section~\ref{sect:TFT_construction} we assemble the resulting amplitudes into a smooth functorial field theory which describes the B-field-dependent part of open-closed bosonic string amplitudes.

\subsection{Scattering diagrams}
\label{sect:Mfds_with_corners}

First, we recall the geometric tools necessary to describe surfaces with corners.
Our main reference for this interlude is~\cite{SP--Thesis}.
An \emph{$m$-dimensional manifold with corners $N$} is a topological manifold with (possibly empty) boundary, equipped with a maximal smooth atlas whose charts are continuous maps
\begin{equation}
        \varphi \colon U \to \FR^m_+
\end{equation}
that are homeomorphisms onto their images, with $U \subset N$ open and $\FR_+$ denoting the set of non-negative real numbers.
The \emph{index} of a point $x \in N$ is the number of coordinates of $\varphi(x)$ that are zero. 
Compatible charts yield the same index; thus, each point $x \in N$ has a well-defined index $\ind(x) \in \{0,\ldots,m\}$. 
A \emph{connected face} of $N$ is the closure of a connected component of $\{x \in N\, |\, \ind(x) = 1 \}$, while a \emph{face} of $N$ is a disjoint union of connected faces.
A \emph{manifold with faces} is a manifold with corners such that each point $x \in N$ belongs to $\ind(x)$ different  faces.

\begin{definition}
\label{def:<n>-mfd}
An $m$-dimensional \emph{$\left \langle n \right \rangle$-manifold} is an $m$-dimensional manifold $N$ with faces together with a tuple $(\partial_0 N, \dots, \partial_{n-1} N)$ consisting of faces $\partial_i N$ of $N$ such that
\begin{enumerate}[(1)]
        \item 
        $\partial_0 N \cup \ldots \cup \partial_{n-1} N = \partial N$, where $\partial N \subset N$ is the subset of points of non-zero index, and
        
        \item
        for all $a \neq b \in \{0,\ldots,n\}$, the intersection $\partial_a N \cap \partial _b N$ is either empty or a face of $\partial_a N$ and of $\partial_b N$.
\end{enumerate}
A morphism of $\<n\>$-manifolds $N \to N'$ is a continuous map $f \colon N \to N'$ whose representatives in all charts are smooth, and such that $f_{|\partial_a N} \colon \partial_a N \to \partial_a N'$ for all $a = 0, \ldots, {n-1}$, i.e.\ $f$ is compatible with the partitions of $\partial N$ and $\partial N'$.
A \emph{$\<3\>^*$-manifold} is a $\<3\>$-manifold $N$ where every $x \in N$ with $\ind(x) \geq 2$ is contained in either $\partial_0 N \cap \partial_2 N$ or $\partial_1 N \cap \partial_2 N$.
Morphisms of $\<3\>^*$-manifolds are the same as those of $\<3\>$-manifolds.
\end{definition}

Note that, in particular, a $\<3\>^*$-manifold satisfies $\partial_0 N \cap \partial_1 N = \emptyset$.
If $N$ is an oriented manifold with corners, all connected faces of $N$ carry an induced orientation, which we define using an \emph{inward-pointing} normal vector field.
We will be concerned with compact (oriented) 2-dimensional $\< 3 \>^*$-manifolds $N$.
Each connected face $c \subset \partial N$ is a compact 1-dimensional manifold with boundary, and hence either diffeomorphic to $\bbS^1$ or to $[0,1]$.
Throughout this paper we call a 1-manifold with corners \emph{closed} if it is diffeomorphic to $\bbS^1$ and \emph{open} if it is diffeomorphic to $[0,1]$.
If $c_a$ is a connected face in $\partial_a N$ and $c_b$ is a connected face in $\partial_b N$ with $a \neq b$ then $c_a \cap c_b$ is either empty, one point, or two points.

Consider a target space $(M,Q)$ and a target space brane geometry $(\CG,\CE) \in \TBG(M,Q)$.
Let $\Sigma$ be an oriented, compact, 2-dimensional  $\<3\>^*$-manifold and let $\sigma \colon \Sigma \to M$ be a smooth map.
By Definition~\ref{def:<n>-mfd}, the boundary of $\Sigma$ comes with a partition $\partial \Sigma = \partial_0 \Sigma \cup \partial_1 \Sigma \cup \partial_2 \Sigma$.
We then think of $\partial_0 \Sigma$ as the \emph{incoming string boundary of $\Sigma$}, of $\partial_1 \Sigma$ as the \emph{outgoing string boundary of $\Sigma$}, and of $\partial_2 \Sigma$ as the \emph{brane boundary of $\Sigma$}.
In order to compute a surface amplitude of $\CG$ over $\Sigma$, we need the following decorations of $\Sigma$.

\begin{enumerate}[leftmargin=3.5em, label=(SD\arabic*)]

\item
\emph{Corners lie on D-branes:}
for every corner $x$ of $\Sigma$ (that is, a point $x \in \Sigma$ with $\ind(x)=2$), we choose a D-brane index $i(x) \in I$ such that $\sigma(x) \in Q_{i(x)}$.

\item
\emph{String endpoints move in D-branes:}
for each connected face $b \subset \partial_2 \Sigma$ in the brane boundary, we choose a D-brane index $i(b) \in I$ such that $\sigma(b) \subset Q_{i(b)}$, and satisfying $i(x) = i(b)$ for all corners $x \in b \subset \partial_2 \Sigma$.
Note that $\partial b$ may be empty.

\item
\emph{Incoming and outgoing states:}
per assumption on $\Sigma$, the boundaries
\begin{equation}
        \partial_0\Sigma = \bigsqcup_{u=1}^{n_0} c_{0,u}\ \sqcup\ \bigsqcup_{v=1}^{m_0} s_{0,v}
        \qquad
        \partial_1\Sigma = \bigsqcup_{u=1}^{n_1} c_{1,u}\ \sqcup\ \bigsqcup_{v=1}^{m_1} s_{1,v}
\end{equation}
are disjoint unions of connected faces $c_{0,u}, c_{1,u} \cong \bbS^1$ and $s_{0,v}, s_{1,v} \cong [0,1]$.
For a less cluttered notation, if $i,j \in I$ are the brane labels assigned to the initial and end points, respectively, of the oriented edge $s_{0,v}$, we just write $\scR^{s_{0,v}} \coloneqq \scR^{s_{0,v}}_{ij}$ for the vector bundle constructed in Section~\ref{sect:bundles on path spaces}; the brane labels are then understood from the data of $s_{0,v}$.
We choose an \quot{incoming state} vector
\begin{equation}
\label{eq:incoming state}
        \psi_0\ \in\ \bigotimes_{u=1}^{n_0} \scL^{c_{0,u}}_{|(\sigma_{|c_{0,u}})}\
        \otimes\ \bigotimes_{v=1}^{m_0} \scR^{s_{0,v}}_{|(\sigma_{|s_{0,v}})}
        \eqqcolon V_0(\Sigma,\sigma)\,.
\end{equation}
Note that if $\partial_0\Sigma = \emptyset$, we have $V_0(\Sigma, \sigma) = \FC$.
Similarly, we choose an \quot{outgoing state} vector 
\begin{equation}
\label{eq:outgoing state}
        \psi_1^\vee\ \in\ \bigotimes_{u=1}^{n_1} \scL^{c_{1,u}}_{|(\sigma_{|c_{1,u}})}\
        \otimes\ \bigotimes_{v=1}^{m_1} \scR^{s_{1,v}}_{|(\sigma_{|s_{1,v}})}
        \eqqcolon V_1(\Sigma,\sigma)^\vee\,,
\end{equation}
Note that if $\partial_1\Sigma = \emptyset$, we have $V_1(\Sigma, \sigma)^\vee = \FC$.
\end{enumerate}

\begin{remark}
The reason why we use the notation $V_1(\Sigma, \sigma)^\vee$ here will become evident in Section~\ref{sect:TFT_construction}, where the correct vector space to assign to the outgoing boundary of $(\Sigma, \sigma)$ is the dual of the vector space considered here; the identification between $V_1(\Sigma,\sigma)^\vee$ and $V_1(\Sigma, \sigma)$ will rely on the isomorphisms~\eqref{eq:whbeta whvarrho}.
\qen
\end{remark}

\begin{definition}
\label{def:parameterised_scattering_digram}
Let $(M,Q)$ be a target space, and let $(\CG, \CE) \in \TBG(M,Q)$ be a target space brane geometry.
\begin{enumerate}[(1)]
        \item A quadruple $(\Sigma,\sigma, \psi_1^\vee, \psi_0)$ of an oriented, compact, 2-dimensional $\<3\>^*$-manifold $\Sigma$ and a smooth map $\sigma \in \Mfd(\Sigma,M)$, endowed with auxiliary data as in (SD1)--(SD3) is called a~\emph{scattering diagram} for $(\CG,\CE)$.
        
        \item We call two scattering diagrams $(\Sigma, \sigma, \psi_1^\vee,\psi_0)$ and $(\Sigma', \sigma', \psi_1'{}^\vee,\psi'_0)$ \emph{equivalent} if there exists an orientation-preserving diffeomorphism $t \colon \Sigma \to \Sigma'$ of $\<3\>^*$-manifolds such that
        \begin{itemize}[leftmargin=1.5em,align=left,labelsep=-0.5em]
                \item $t$ preserves maps to $M$, i.e.~$\sigma' \circ t = \sigma$,
                
                \item $t$ preserves brane labels,
                
                \item the states $\psi_a$ and $\psi'_a$ agree under the isomorphism $V_a(\Sigma,\sigma) \cong V_a(\Sigma', \sigma')$ induced by evaluating the coherence isomorphisms $\widehat{R}$ of $\widehat{\scR}$ and $\widehat{\scL}$ on the restrictions of $t$ to the connected components of $\partial_a \Sigma$, for $a = 0,1$.
        \end{itemize}
        We denote the equivalence class of a scattering diagram $(\Sigma, \sigma, \psi_1^\vee,\psi_0)$ under this equivalence relation by $[\Sigma, \sigma, \psi_1^\vee,\psi_0]$.
\end{enumerate}
\end{definition}

\subsection{Definition of the surface amplitude}
\label{sect:amplitudes-definition}

In this subsection we define the surface amplitude for scattering diagrams $(\Sigma, \sigma, \psi_1^\vee, \psi_0)$ (Definition~\ref{def:parameterised_scattering_digram}) and then show that it depends only on the equivalence class $[(\Sigma, \sigma, \psi_1^\vee, \psi_0)]$.
We first consider the case where the vectors $\psi_0$ and $\psi_1^\vee$ in the tensor product vector spaces~\eqref{eq:incoming state} and~\eqref{eq:outgoing state} are pure vectors and use the following \emph{auxiliary data}, on which the amplitude will not depend:
\begin{itemize}
\item \emph{String boundary parameterisations:}
for each connected face $s \subset \partial _0 \Sigma$ or $s \subset \partial_1 \Sigma$ we fix an orientation-preserving diffeomorphism $\gamma_s \colon \bbS^1 \to s$ if $s \cong \bbS^1$, or $\gamma_s \colon [0,1] \to s$ if $s \cong [0,1]$.

\item \emph{Trivialisation:}
we fix a trivialisation $\CT \colon \sigma^* \CG \to \CI_\rho$.
\end{itemize}
The  pure incoming state vector $\psi_0$ is then represented (under the coherence isomorphism $\widehat{R}$) by a tensor product
\begin{equation}
\label{eq:reps for pure in vectors}
        \psi_0 = \bigotimes_{u=1}^{n_0} \psi_{0,u}
        \ \otimes \
        \bigotimes_{v=1}^{m_0} \psi_{0,v}
        = \bigotimes_{u=1}^{n_0} [[\gamma_{c_{0,u}}^*\CT], z_{c_{0,u}}]
        \ \otimes \
        \bigotimes_{v=1}^{m_0} [\gamma_{s_{0,v}}^*\CT, \psi_{s_{0,v}}]\,.
\end{equation}
Here, $n_0$ is the number of connected components $c_{0,u}$ of $\partial_0 \Sigma$ that are diffeomorphic to $\bbS^1$, and $m_0$ is the number of connected components $s_{0,v}$ of $\partial_0 \Sigma$ that are diffeomorphic to $[0,1]$.
Further, we have used the explicit form of the bundles $\scR_{ij}$ and $\scL$ from Section~\ref{sect:bundles on path spaces} and~\ref{sect:LBdl_on_loop_space}.
Observe that once the parameterisations $\gamma_s$ and the trivialisation $\CT$ have been fixed, the tensor factors $\psi_{0,u}$ and $\psi_{0,v}$ each have unique representatives as in the above formula by the fact that $(\widehat{\scR}_{ij}, \widehat{R})$ and $(\widehat{\scL},\widehat{R})$ are coherent.
An analogous statement holds true for the outgoing state vector  $\psi_1^\vee$;  can be written as a linear combination of tensor products of pure states (compare~\eqref{eq:outgoing state})
\begin{equation}
\label{eq:reps for pure dual out vectors}
        \psi_1^\vee = \bigotimes_{u=1}^{n_1} \psi_{1,u}^\vee
        \ \otimes \
        \bigotimes_{v=1}^{m_0} \psi_{1,v}^\vee
        = \bigotimes_{u=1}^{n_1} [[\gamma_{c_{1,u}}^*\CT], z_{c_{1,u}}]
        \ \otimes \
        \bigotimes_{v=1}^{m_1} [\gamma_{s_{1,v}}^*\CT, \psi_{s_{1,v}}]\,.
\end{equation}

We now proceed to define the surface amplitude of a scattering diagram whose states $\psi_0$ and $\psi_1^\vee$ are pure vectors in the tensor products~\eqref{eq:incoming state} and~\eqref{eq:outgoing state}.
The fully general scattering amplitude is then defined via multi-linear extension.
For any $x \in \Sigma$, let $\iota_x \colon \pt \hookrightarrow \Sigma$ denote the inclusion of the point at $x$.
Similarly, any connected face $c \subset \partial \Sigma$ of $\Sigma$ can be viewed as a submanifold of $\Sigma$ with inclusion $\iota_c \colon c \hookrightarrow \Sigma$.
For each corner $x$ of $\Sigma$, we set $E_x \coloneqq \Delta (\CE_{i(x)|\sigma(x)}, \CT_{|x})$, which is a hermitean vector bundle over a point and hence a finite-dimensional Hilbert space.
For each connected face $b \subset \partial_2 \Sigma$ let $E_b \coloneqq \Delta (\sigma_{|b}^*\CE_{i(b)}, \CT_{|b})$, which is a hermitean vector bundle with connection over $b$.

For each connected component $c$ of $\partial \Sigma$ we produce a number $z_c\in \FC$ in the following way.
If $c$ is diffeomorphic to $\bbS^1$, it is already a connected face in $\partial_0 \Sigma$, $\partial_1 \Sigma$, or $\partial_2 \Sigma$.
Otherwise, $c$ is a union of connected faces, each of which is diffeomorphic to $[0,1]$.
Thus, we have to treat the following cases:

\begin{enumerate}[leftmargin=1.3cm, label=(SA\arabic*)]
        \item
        $c \cong \bbS^1$ and $c$ is some connected face $c \subset \partial_0 \Sigma$ or $c \subset \partial_1 \Sigma$:
        in this case, we have $c = c_{\epsilon,u}$ for some $\epsilon \in \{0,1\}$ and $u \in \{1, \ldots, n_\epsilon\}$ (in the notation of~\eqref{eq:reps for pure in vectors} and \eqref{eq:reps for pure dual out vectors}).
        Further, we have chosen a parameterisation $\gamma_{c_{\epsilon,u}}$ of $c$ (for $\epsilon = 0$ or $1$) and an element $[[\gamma_{c_{\epsilon,u}}^*\CT], z_{c_{\epsilon,u}}] \in \scL_{|\gamma_{c_{\epsilon,u}}}$.
        We set
        \begin{equation}
                z_c \coloneqq z_{c_{\epsilon,u}}\,.
        \end{equation}
        
        \item
        $c \cong \bbS^1$ and $c = b$ for some connected face $b \subset \partial_2 \Sigma$:
        set
        \begin{equation}
                z_c \coloneqq \tr \big( \hol(E_b) \big)
        \end{equation}
        This is well-defined because the trace of the holonomy of a vector bundle is independent of the choices of a base point and of a parameterisation.
        
        \item
        $c$ is a union of connected faces $c_1,\ldots,c_n \subset \partial \Sigma$:
        by definition of a $\<3\>^*$-manifold we can order these faces in such a way that there are corners $x_0,\ldots,x_n \in c$ with $x_n=x_0$ and $\partial c_a = \{x_{a-1}, x_a\}$, where $x_{a-1}$ is the initial point and $x_a$ is the end point of $c_a$.
        We define the following linear maps $\lambda_a \colon E_{x_{a-1}} \to E_{x_a}$:
        \begin{enumerate}
                \item 
                If $c_a \subset \partial_2 \Sigma$ is brane boundary, then the hermitean vector bundle $E_{c_a}$ over $c_a$ comes with a connection, whose parallel transport yields an isomorphism
                \begin{equation}
                        \lambda_a \coloneqq pt^{E_{c_a}}_{c_a} \colon E_{x_{a-1}} = E_{c_a|x_{a-1}} \arisom E_{c_a|x_a} = E_{x_a}\,.
                \end{equation}
                
                \item
                If $c_a \subset \partial_0 \Sigma$ or $c_a \subset \partial_1 \Sigma$, then $c_a = s_{\epsilon,v}$ for some $\epsilon = 0,1$ and $v \in \{1, \ldots, m_\epsilon\}$ (in the notation of~\eqref{eq:reps for pure in vectors} and \eqref{eq:reps for pure dual out vectors}).
                It thus comes with the element $\psi_{s_{\epsilon,u}}$ chosen as part of the incoming and outgoing states (compare~\eqref{eq:reps for pure in vectors} and~\eqref{eq:reps for pure dual out vectors}).
                Under the canonical isomorphism
                \begin{equation}
                        \scR_{s_{\epsilon,u} |\sigma \circ \gamma_{s_{\epsilon,u}}}
                        \cong \scR_{s_{\epsilon,u} |\sigma \circ \gamma_{s_{\epsilon,u}}}(\CT)
                        \cong \Hom (E_{x_{a-1}}, E_{x_{a}})\,,
                \end{equation}
                we see that $\lambda_a \coloneqq \psi_{s_{\epsilon,u}}$ defines a morphism of vector spaces $\lambda_a \colon E_{x_{a-1}} \to E_{x_a}$.
        \end{enumerate}
        We thus obtain a linear map
        \begin{equation}
                \xymatrix{
                E_{x_0} \ar@{->}[r]^-{\lambda_1} & E_{x_1} \ar@{->}[r]^-{\lambda_2} & \ldots \ar@{->}[r]^-{\lambda_n} & E_{x_n} = E_{x_0}
                }\,,
        \end{equation}
        and we define
        \begin{equation}
                z_c \coloneqq \tr ( \lambda_n \circ \ldots \circ \lambda_1)\,.
        \end{equation}
        By the cyclicity of trace, this expression is invariant under cyclic permutations of the labels $c_1, \ldots, c_n$ of the connected faces, compatible with the orientation on $\partial \Sigma$.
\end{enumerate}

\begin{definition}
\label{def:parameterised_amplitude}
\label{eq:surface_amplitude}
Let $(\CG,\CE) \in \TBG(M,Q)$ be a target space brane geometry, and let $(\Sigma,\sigma, \psi_1^\vee, \psi_0)$ be a scattering diagram for $(\CG,\CE)$.
The \emph{surface amplitude} of $(\Sigma,\sigma, \psi_1^\vee, \psi_0)$ is defined by
\begin{equation}
        \CA^{\CG,\CE}(\Sigma, \sigma, \psi_1^\vee, \psi_0) \coloneqq \exp \bigg( - \int_\Sigma \rho \bigg) \prod_{c\, \in \pi_0(\partial \Sigma)} z_c
        \quad \in \FC\,.
\end{equation}
\end{definition}

\begin{example}
If $\partial \Sigma = \emptyset$, the surface amplitude $\CA^{\CG,\CE}(\Sigma, \sigma, 1,1)$ recovers the surface holonomy of $\CG$ over $(\Sigma,\sigma)$ and thus the closed WZW amplitude (see e.g.~\cite{Murray:Bundle_gerbs}).
In the case of only closed brane boundary  it coincides with the \quot{holonomy on D-branes} of~\cite{CJM--Holonomy_on_D-branes}.
\qen
\end{example}

In the following two lemmata we prove that the surface amplitude $\CA^{\CG,\CE}(\Sigma, \sigma, \psi_1^\vee, \psi_0)$ is defined independently of the choice of auxiliary data.

\begin{lemma}
\label{st:sh_independent_of_trivialisations}
The surface amplitude $\CA^{\CG,\CE}(\Sigma, \sigma, \psi_1^\vee, \psi_0)$ is independent of the trivialisation $\CT$ of $\sigma^*\CG$.
\end{lemma}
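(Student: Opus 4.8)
The plan is to compare the data defining $\CA^{\CG,\CE}(\Sigma,\sigma,\psi_1^\vee,\psi_0)$ for two trivialisations $\CT\colon\sigma^*\CG\to\CI_\rho$ and $\CT'\colon\sigma^*\CG\to\CI_{\rho'}$, and to show that the change of the exponential prefactor is cancelled exactly by the change of the product $\prod_c z_c$. The starting point is Example~\ref{eg:sfR}: the isomorphism $\CT'\circ\CT^{-1}\colon\CI_\rho\to\CI_{\rho'}$ corresponds under $\sfR$ to a hermitean line bundle with connection $J\coloneqq\sfR(\CT'\circ\CT^{-1})\in\HLBdl^\nabla(\Sigma)$ with $\curv(J)=\rho'-\rho$; equivalently there is a 2-isomorphism $\CT'\cong J\circ\CT$, and upon restriction to any connected face $c\subset\partial\Sigma$ this yields $\CT'_{|c}\cong\CT_{|c}\otimes J_{|c}$ (both are trivialisations with curving $0$, since $\dim c=1$). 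In particular $\int_\Sigma\rho'=\int_\Sigma\rho+\int_\Sigma\curv(J)$, so replacing $\CT$ by $\CT'$ multiplies the prefactor $\exp(-\int_\Sigma\rho)$ by $\exp(-\int_\Sigma\curv(J))$.

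I would then show that, for each connected component $c$ of $\partial\Sigma$ (oriented as a face of $\Sigma$, i.e.\ by the inward-pointing normal), the scalar $z_c$ gets multiplied by $\hol(J_{|c})^{-1}$. In case~(SA1) this is immediate from the equivalence relation defining the fibres of $\scL$ together with $\gamma_c^*\CT'\cong\gamma_c^*\CT\otimes\gamma_c^*J$. In case~(SA2), replacing $\CT$ by $\CT'$ replaces $E_b=\Delta(\sigma_{|b}^*\CE_{i(b)},\CT_{|b})$ by $E_b\otimes J_{|b}^\vee$, and the trace of the holonomy of a vector bundle twisted by a line bundle factors off the line-bundle holonomy, so $z_c=\tr(\hol(E_b))$ becomes $\hol(J_{|b})^{-1}z_c$. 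In case~(SA3), the same twisting argument shows that each corner Hilbert space $E_x$ is replaced by $E_x\otimes J_x^\vee$ and each map $\lambda_a\colon E_{x_{a-1}}\to E_{x_a}$ by $\lambda_a\otimes pt^{J^\vee}_{c_a}$: for a brane face $c_a\subset\partial_2\Sigma$ this is the compatibility of parallel transport with line-bundle twists, and for a string face $c_a\subset\partial_0\Sigma\cup\partial_1\Sigma$ it follows by unwinding the structure isomorphisms $r_{\CT_{|c_a},\CT'_{|c_a}}$ of $\scR$ in terms of $\Delta$ evaluated on a 2-isomorphism $\CT_{|c_a}\to\CT'_{|c_a}$ — such a 2-isomorphism exists because $c_a\cong[0,1]$ is contractible and $\curv(J_{|c_a})=0$, so $J_{|c_a}$ admits a parallel unit section. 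The cyclic trace along $c=c_1\cup\dots\cup c_n$ then pulls out the scalar $pt^{J^\vee}_{c_n}\circ\dots\circ pt^{J^\vee}_{c_1}=\hol(J_{|c})^{-1}$, giving $z_c\mapsto\hol(J_{|c})^{-1}z_c$ once more.

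Finally, multiplying over all connected components, $\prod_{c\in\pi_0(\partial\Sigma)}z_c$ is multiplied by $\prod_c\hol(J_{|c})^{-1}=\hol(J_{|\partial\Sigma})^{-1}$, and the Stokes-type relation between the holonomy of a hermitean line bundle with connection around $\partial\Sigma$ and the integral of its curvature over $\Sigma$ — with the orientation and sign conventions already fixed in~\eqref{eq:pt in R_ij} — gives $\hol(J_{|\partial\Sigma})^{-1}=\exp(\int_\Sigma\curv(J))$. This cancels the factor $\exp(-\int_\Sigma\curv(J))$ coming from the prefactor, so $\CA^{\CG,\CE}(\Sigma,\sigma,\psi_1^\vee,\psi_0)$ is the same for $\CT$ and $\CT'$; the general (non-pure) case then follows by multilinearity in $\psi_0$ and $\psi_1^\vee$. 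The step I expect to be most delicate is the bookkeeping in case~(SA3): verifying that the change of $\scR$-representative along a string face is precisely a twist by the parallel transport of $J^\vee$ requires carefully unwinding the definition of the maps $r_{\CT,\CT'}$ and of $\Delta$ on 2-morphisms, and one must keep track of base points and face orientations so that the individual holonomy factors telescope correctly around each boundary circle.
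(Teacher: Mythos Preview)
Your proposal is correct and follows essentially the same route as the paper: introduce the hermitean line bundle $J$ with $\curv(J)=\rho'-\rho$ (the paper writes $J=\Delta(\CT',\CT)$, which agrees with your $\sfR(\CT'\circ\CT^{-1})$), show that each $z_c$ picks up a factor $\hol(J,c)^{-1}$ by the three cases (SA1)--(SA3), and then cancel $\hol(J,\partial\Sigma)^{-1}$ against the change in the exponential via Stokes. The paper treats your ``delicate'' case~(SA3) in exactly the way you anticipate, by exhibiting a commutative square that identifies $r_{\CT',\CT}(\psi_c)$ with a conjugate of $pt^{J^\vee}_{c_a}\otimes\psi_c$ using the connection-preserving 2-isomorphism $\CT\to\CT'$.
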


\begin{proof}
Suppose \smash{$\CT' \colon \sigma^* \CG \to \CI_{\rho'}$} is another trivialisation.
Define a hermitean line bundle with connection $J \coloneqq \Delta (\CT',\CT)$ over $\Sigma$.
Its curvature satisfies $\curv(J) = \rho' - \rho$.
All quantities used above will be written with a prime when constructed from $\CT'$ instead of $\CT$.
The isomorphisms~\eqref{eq:structural isos for Delta} yield connection-preserving isomorphisms
\begin{equation}
        U_x \colon J^\vee_{|x} \otimes E_x \arisom E_x'\,, \quad
        U_b \colon J^\vee_{|b} \otimes E_b  \arisom E_b'
\end{equation}
for every connected face $b \subset \partial_2 \Sigma$ and for every corner $x$ of $\Sigma$.
Consequently, for $c \subset \partial_2 \Sigma$ a closed 1-manifold, i.e.\ in case (SA2), we obtain that $z'_c = \hol(J,c)^{-1} \cdot z_c$.

For the other cases, we have to represent the states $\psi_0$ and $\psi_1^\vee$ with respect to the new trivialisation $\CT'$ (cf.~(SD3)).
Using the parameterisations $\gamma$, chosen as part of the auxiliary data, and the coherent structure $\widehat{R}$ of $\widehat{\scR}$ and $\widehat{\scL}$, as well as the respective formulae for changes of trivialisations of $\CG$, we obtain
\begin{alignat}{2}
        \big[ [\gamma_c^*\CT], z_c \big] &= \big[ [\gamma_c^*\CT'],\, \hol( J, s)^{-1} \cdot z_c \big]\ \in \scL_{|\sigma \circ \gamma_c}\,,
        & \qquad &\text{for} \quad c \cong \bbS^1\,,
        \\
        \big[ \gamma_c^*\CT, \psi_c \big] &= \big[ \gamma_c^*\CT',\, r_{\CT'_{|s}, \CT_{|s}} (\psi_c) \big]\ \in \scR_{s|\sigma \circ \gamma_c}\,,
        &\qquad &\text{for} \quad c \cong [0,1]\,.
\end{alignat}
Thus, we infer that $z'_c = \hol( J, c)^{-1} \cdot z_c$ for closed 1-manifolds $c \subset \partial_0 \Sigma$ or $c \subset \partial_1 \Sigma$.
From our construction~\eqref{eq:def of r_(CT,CT')} of the isomorphisms $r_{\CT',\CT}$ we obtain the following commutative diagram:
\begin{equation}
\xymatrix@C=3cm{  E'_{x_{a-1}} \ar[r]^{r_{\CT',\CT} (\psi_c)} \ar[d]_{U_{x_{a-1}}} & E'_{x_a} \ar[d]^{U_{x_a}}
        \\
        J^\vee_{|x_{a-1}} \otimes E_{x_{a-1}} \ar[r]_{pt^{J^\vee}_{c_a} \otimes\ \psi_c} & J^\vee_{|x_a} \otimes E_{x_a}
}
\end{equation}
This holds because the 2-isomorphisms $\psi$ used to construct $r_{\CT',\CT}$ are connection-preserving.
Therefore, with the above conventions for $\lambda'_a$, we find that
\begin{equation}
\begin{aligned}
        z'_c &= \tr (\lambda'_n \circ \ldots \circ \lambda'_1)
        \\
        &= \tr \big( U_{x_n} \circ (pt^{J^\vee}_{c_n} \otimes \lambda_n) \circ U_{x_{n-1}}^{-1} \circ \ldots \circ U_{x_1} \circ (pt^{J^\vee}_{c_1} \otimes \lambda_1) \circ U_{x_0}^{-1} \big)
        \\
        &= \tr \big( (pt^{J^\vee}_{c_n} \otimes \lambda_n) \circ \ldots \circ (pt^{J^\vee}_{c_1} \otimes \lambda_1) \big)
        \\
        &= \hol(J,c)^{-1} \cdot z_c\,.
\end{aligned}
\end{equation}
Putting everything together while keeping track of orientations, we obtain
\begin{align*}
        \exp \bigg( \int_\Sigma - \rho' \bigg) \prod_{c\, \in \pi_0(\partial \Sigma)} \Big( \hol(J,c)^{-1} \cdot z_c \Big)
        &= \exp \bigg( \int_\Sigma - \rho' \bigg)\ \hol(J,\partial \Sigma)^{-1}\, \prod_{c\, \in \pi_0(\partial \Sigma)} z_c
        \\*
        &= \exp \bigg( \int_\Sigma - \rho' \bigg)\ \exp \bigg( \int_\Sigma \curv(J) \bigg) \prod_{c\, \in \pi_0(\partial \Sigma)} z_c
        \\*
        &= \exp \bigg( \int_\Sigma - \rho \bigg) \prod_{c\, \in \pi_0(\partial \Sigma)} z_c\,,
\end{align*}
as stated.
The integral of $\curv(J)$ comes with a positive sign because Stokes' Theorem holds true for the orientation on the boundary induced by an \emph{outward-pointing} vector field, but the holonomy is taken around the boundary with the opposite orientation.
\end{proof}

\begin{lemma}
\label{st:sh_reparameterisation_invariance}
The surface amplitude is invariant under orientation-preserving reparameterisation of the string boundary components:
\begin{enumerate}[(1)]
\item
If $s \subset \partial_0 \Sigma$ or $s \subset \partial_1 \Sigma$ is a closed connected face and $\tau \in \Diff^+(\bbS^1)$, then the surface amplitude is invariant under changing $\gamma_s$ to $\gamma_s \circ \tau$.
        
\item 
If $s \subset \partial_0 \Sigma$ or $s \subset \partial_1 \Sigma$ is an open connected face and $\tau \in \Diff^+([0,1])$, then the surface amplitude is invariant under changing $\gamma_s$ to $\gamma_s \circ \tau$.
\end{enumerate}
\end{lemma}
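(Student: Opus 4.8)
The plan is to reduce, as usual, to the single factor that can possibly change. Neither the prefactor $\exp(-\int_\Sigma \rho)$ (which depends only on $\sigma$ and on the trivialisation $\CT$) nor any of the factors $z_{c'}$ with $s \not\subset c'$ involves the parameterisation $\gamma_s$. So it suffices to treat the factor $z_c$, where $c \in \pi_0(\partial \Sigma)$ is the connected component of $\partial \Sigma$ containing $s$, and to show it is unchanged when $\gamma_s$ is replaced by $\gamma_s \circ \tau$.

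The conceptual core of the argument, which I expect to be the step requiring the most care, is to identify how a reparameterisation of $\gamma_s$ propagates through the construction of $z_c$. In both parts, the only role of $\gamma_s$ is to produce, via the coherence isomorphisms $\widehat{R}$ of $\widehat{\scR}_{ij}$ (resp.\ $\widehat{\scL}$) and the identifications $\scR^{[0,1]}_{ij} \cong \scR_{ij}$, $\scL^{\bbS^1} \cong \scL$, an explicit representative of the intrinsic state attached to $s$ in the scattering diagram: this is $[\gamma_s^*\CT, \psi_s] \in \scR_{ij|\sigma \circ \gamma_s}$ when $s$ is open, and $[[\gamma_s^*\CT], z_s] \in \scL_{|\sigma \circ \gamma_s}$ when $s$ is closed. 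I would chase the cocycle identity defining coherent bundles (Definition~\ref{def:coherent R_ij and L}, together with Theorem~\ref{st:Psi_* Phi^* is equivalence}), applied to the factorisation $[0,1] \overset{\tau}{\to} [0,1] \overset{\gamma_s}{\to} s$, to conclude that passing from $\gamma_s$ to $\gamma_s \circ \tau$ amounts to further applying the component of $\widehat{R}$ labelled by the self-diffeomorphism $\tau$, which by construction of $\Psi_* \Phi^*$ is precisely the $\rmD([0,1])$-equivariant structure $R$ on $\scR_{ij}$ (resp.\ the $\rmD(\bbS^1)$-equivariant structure on $\scL$). Thus the new representative is the image of the old one under $R(-,\tau)$.

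Granting this reduction, both parts close using the \emph{explicit} form of these equivariant structures. For part~(1), $z_c = z_s$ is the complex number read off from the representative in case~(SA1); by~\eqref{eq:Diff-action on L} the action of $\tau \in \Diff^+(\bbS^1)$ on $\scL$ sends $[[\CT'],z] \mapsto [[\tau^*\CT'],z]$ and so leaves $z$ untouched, hence $z_c$ is the same for $\gamma_s$ and $\gamma_s \circ \tau$. For part~(2), $c$ is a cyclically ordered union of open faces $c_1,\dots,c_n$ with $z_c = \tr(\lambda_n \circ \cdots \circ \lambda_1)$ as in case~(SA3), and $s = c_a$ for some $a$. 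Among the maps $\lambda_b$, only $\lambda_a = \psi_s$ can depend on $\gamma_s$: for $c_b \subset \partial_2 \Sigma$ the map $\lambda_b = pt^{E_{c_b}}_{c_b}$ is parallel transport along the face $c_b$ itself, and for any other string face $c_b$ it depends only on $\gamma_{c_b}$; likewise the Hilbert spaces $E_{x_0},\dots,E_{x_n}$ depend only on $\sigma$ and $\CT$. Now~\eqref{eq:Diff action on R_ij} says the action of $\tau \in \Diff^+([0,1])$ on $\scR_{ij}$ is $[\CT',\psi] \mapsto [\tau^*\CT',\psi]$, trivial on $\psi$; and since $\tau$ fixes $\partial[0,1]$, the endpoints $x_{a-1},x_a$ of $s$ and the restrictions $\CT_{|x_{a-1}},\CT_{|x_a}$ are the same for $\gamma_s$ and $\gamma_s\circ\tau$, so that by~\eqref{eq:defnitionCE} both spaces $\scR_{ij|\sigma\circ\gamma_s}(\gamma_s^*\CT)$ and $\scR_{ij|\sigma\circ\gamma_s\circ\tau}((\gamma_s\circ\tau)^*\CT)$ are identified with $\Hom(E_{x_{a-1}},E_{x_a})$ carrying $\psi_s$ to itself. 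Hence $\lambda_a$ is unchanged, and therefore so is $z_c$ and the whole surface amplitude. The only genuinely delicate bookkeeping is the one flagged in the previous paragraph; everything else is a direct application of the formulas of Sections~\ref{sect:equivar strs on R_ij} and~\ref{sect:LBdl_on_loop_space}.
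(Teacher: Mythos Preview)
Your proposal is correct and follows essentially the same approach as the paper: both arguments identify the reparameterisation action with the equivariant structure $R$ via the coherence $\widehat{R}$, and then invoke the explicit formulas~\eqref{eq:Diff-action on L} and~\eqref{eq:Diff action on R_ij} to see that the representing data $z_s$ and $\psi_s$ are unaffected. Your write-up is more explicit than the paper's (you spell out which factors of the amplitude could possibly change and verify that the other $\lambda_b$ and the spaces $E_{x_b}$ are independent of $\gamma_s$), but the logical content is the same.
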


\begin{proof}
In case (1), the change of parameterisation leads to a new representation of the state \smash{$\psi_0 \in \scL^s_{|\sigma_{|s}}$} as an element of $\scL$, using the coherence of $\widehat{\scL}$, according to
\begin{equation}
        \scL_{|\sigma \circ \gamma_s} \ni \big[ [\gamma_s^*\CT], z_s \big]
        \ \longmapsto \ R_\tau \big[ [\gamma_s^*\CT], z_s \big]
        = \big[ [ (\gamma_s \circ \tau)^*\CT], z_s \big] \in \scL_{|\sigma \circ \gamma_s \circ \tau}\,,
\end{equation}
and similarly for $\psi_1^\vee$.
Analogously, in case (2) the image of the state \smash{$\psi_0 \in \scR^s_{|\sigma_{|s}}$} under $\widehat{R}$ changes as
\begin{equation}
        \scR_{s|\sigma \circ \gamma_s} \ni [\gamma_s^*\CT, \psi_s]
        \ \longmapsto \ R_\tau \big( [\gamma_s^*\CT, \psi_s] \big)
        = \big[ (\gamma_s \circ \tau)^*\CT, \psi_s \big]
        \in \scR_{s|\sigma \circ \gamma_s \circ \tau}\,,
\end{equation}
and accordingly for $\psi_1^\vee$.
Both claims then follow from the fact that the action of reparameterisations leaves the representing elements $z_s$ and $\psi_s$ unchanged -- see~\eqref{eq:Diff-action on L} and~\eqref{eq:Diff action on R_ij}.
\end{proof}

It remains to prove that the surface amplitude is well-defined on equivalence classes of scattering diagrams.

\begin{lemma}
\label{st:shol_is_diffeomorphism-invariant}
The surface amplitude is invariant under diffeomorphisms of scattering diagrams as in Definition~\ref{def:parameterised_scattering_digram}(2).
\end{lemma}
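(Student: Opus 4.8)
The plan is to exploit the auxiliary-data independence already established, and then to compute both amplitudes using \emph{the same} data, transported across $t$. Let $t\colon\Sigma\to\Sigma'$ be an orientation-preserving diffeomorphism of $\<3\>^*$-manifolds as in Definition~\ref{def:parameterised_scattering_digram}(2), so that $\sigma'\circ t=\sigma$, $t$ preserves brane labels, and $\psi_a$ corresponds to $\psi'_a$ under the isomorphism $V_a(\Sigma,\sigma)\cong V_a(\Sigma',\sigma')$ induced by the coherence data $\widehat{R}$ of $\widehat{\scR}$ and $\widehat{\scL}$ applied to the restrictions $t_{|c}$ of $t$ to the connected components $c$ of $\partial_a\Sigma$. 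By Lemma~\ref{st:sh_independent_of_trivialisations} and Lemma~\ref{st:sh_reparameterisation_invariance}, the surface amplitude of $(\Sigma',\sigma',\psi_1'^\vee,\psi_0')$ is independent of the choice of a trivialisation of $\sigma'^*\CG$ and of the string boundary parameterisations. Hence, after fixing a trivialisation $\CT\colon\sigma^*\CG\to\CI_\rho$ and parameterisations $\gamma_s$ on the $\Sigma$-side, we are free to compute the $\Sigma'$-side amplitude with the transported data $\CT'\coloneqq(t^{-1})^*\CT$, which trivialises $\sigma'^*\CG=(t^{-1})^*\sigma^*\CG$ and satisfies $\rho=t^*\rho'$, and $\gamma'_{t(s)}\coloneqq t_{|s}\circ\gamma_s$. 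A short computation then shows $\gamma'^{\,*}_{t(c)}\CT'=\gamma_c^*\CT$ for every boundary component $c$.

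With this choice the two amplitudes are compared term by term in Definition~\ref{def:parameterised_amplitude}. First, since $t$ is an orientation-preserving diffeomorphism, $\int_{\Sigma'}\rho'=\int_\Sigma t^*\rho'=\int_\Sigma\rho$, so the exponential prefactors agree. Next, $t$ induces a bijection $\pi_0(\partial\Sigma)\to\pi_0(\partial\Sigma')$, $c\mapsto c'\coloneqq t(c)$, and it remains to show $z_c=z'_{c'}$ in each of the cases (SA1)--(SA3). For a corner $x$ of $\Sigma$ with image $x'=t(x)$, the equalities $i(x)=i(x')$, $\sigma(x)=\sigma'(x')$ and $\CT_{|x}=t^*(\CT'_{|x'})$ together with the compatibility of $\Delta$ with pullbacks yield a canonical identification $E_x=\Delta(\CE_{i(x)|\sigma(x)},\CT_{|x})\cong E'_{x'}$; likewise, for a connected face $b\subset\partial_2\Sigma$ one obtains $E_b=(t_{|b})^*E'_{b'}$ as hermitean vector bundles \emph{with connection}. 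In case (SA2) the holonomies $\hol(E_b)$ and $\hol(E'_{b'})$ are thus conjugate, so $z_c=\tr(\hol(E_b))=\tr(\hol(E'_{b'}))=z'_{c'}$. In case (SA1) and for the state-induced maps in the string-boundary subcase of (SA3), the state-matching hypothesis says that $\psi_a$ and $\psi'_a$ agree under $\widehat{R}$ applied to $t_{|c}$; combined with the transported parameterisations and with the fact that, once a parameterisation has been fixed, $\widehat{R}$ restricts to the plain equivariant structures~\eqref{eq:Diff-action on L} and~\eqref{eq:Diff action on R_ij}, which leave the representing data $z_s$ and $\psi_s$ unchanged, it follows that the explicit representatives~\eqref{eq:reps for pure in vectors} and~\eqref{eq:reps for pure dual out vectors} of $\psi_a$ and $\psi'_a$ carry the same scalars $z_s$ and the same morphisms $\psi_s$ (modulo the corner identifications $E_x\cong E'_{x'}$). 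In the brane-boundary subcase of (SA3), parallel transport along $c_a$ in $E_{c_a}=(t_{|c_a})^*E'_{c'_a}$ corresponds to parallel transport along $c'_a$ in $E'_{c'_a}$ because $t_{|c_a}$ is an orientation-preserving diffeomorphism. Since $t$ is orientation-preserving it respects the cyclic orderings of connected faces used in (SA3); hence the composites $\lambda_n\circ\cdots\circ\lambda_1$ and $\lambda'_n\circ\cdots\circ\lambda'_1$ are conjugate and have equal traces. Therefore $z_c=z'_{c'}$ for every $c\in\pi_0(\partial\Sigma)$, and the two surface amplitudes coincide.

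The main obstacle is the bookkeeping in case (SA1) and in the string-boundary subcase of (SA3): one must verify precisely that the abstract state-matching condition built into Definition~\ref{def:parameterised_scattering_digram}(2) -- phrased through the coherence isomorphisms of $\widehat{\scR}$ and $\widehat{\scL}$ -- becomes, after passing to the transported parameterisations, an equality of the concrete representatives $(\gamma_c^*\CT,z_c)$ and $(\gamma_c^*\CT,\psi_c)$ entering the $z_c$. This in turn requires tracing through the construction of $\widehat{\scR}$ and $\widehat{\scL}$ in Section~\ref{sect:pull-push and coherence} (Theorem~\ref{st:Psi_* Phi^* is equivalence}) and the explicit descriptions of $\scR_{ij}$ and $\scL$ in Sections~\ref{sect:bundles on path spaces} and~\ref{sect:LBdl_on_loop_space}, together with the observation that the coherence structure restricts along a chosen parameterisation to the ordinary equivariant structure. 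Everything else -- compatibility of $\Delta$, of holonomy and of $\int_\Sigma$ with orientation-preserving diffeomorphisms, and the cyclic invariance of the trace -- is routine.
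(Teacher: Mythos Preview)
Your proposal is correct and follows essentially the same approach as the paper: choose a trivialisation $\CT$ of $\sigma^*\CG$ and parameterisations $\gamma_s$ on the $\Sigma$-side, transport them via $t$ to the $\Sigma'$-side, and observe that every ingredient of Definition~\ref{def:parameterised_amplitude} matches term by term. The paper's proof is considerably terser---it simply notes that $\sigma'\circ\gamma'_{t(c)}=\sigma\circ\gamma_c$ and that the transported trivialisation makes the statement evident---whereas you have spelled out the case analysis (SA1)--(SA3) and the bookkeeping for the state-matching condition in detail, but the underlying argument is the same.
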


\begin{proof}
Let $t \colon \Sigma \to \Sigma'$ be a diffeomorphism as in Definition~\ref{def:parameterised_scattering_digram}(2).
Note that $\sigma' \circ \gamma'_{t(c)} =\allowbreak \sigma' \circ t \circ \gamma_c =\allowbreak \sigma \circ \gamma_c$.
Thus, we have
\begin{equation}
        \scL_{|\sigma' \circ \gamma'_{t(s)}} = \scL_{|\sigma \circ \tau_s}
        \qandq
        \scR_{t(s)|\sigma' \circ \gamma_{t(s)}} = \scR_{s|\sigma \circ \gamma_s}\,.
\end{equation}
The statement then follows after choosing a trivialisation $\CT \colon \sigma^*\CG \to \CI_\rho$ to compute $\CA^{\CG,\CE}(\Sigma,\sigma, \psi_1^\vee, \psi_0)$ and using the pullback $t^*\CT$ to trivialise $\sigma'^*\CG$ and to compute $\CA^{\CG,\CE}(\Sigma',\sigma', \psi_1^\vee, \psi_0)$.
\end{proof}

Combining Lemmas~\ref{st:sh_independent_of_trivialisations}, \ref{st:sh_reparameterisation_invariance}, and~\ref{st:shol_is_diffeomorphism-invariant}, we obtain:

\begin{proposition}
\label{prop:surfaceamplitude}
The surface amplitude is well-defined and depends only on the equivalence class of a scattering diagram.
That is, we can define
\begin{equation}
        \CA^{\CG,\CE} [\Sigma, \sigma, \psi_1^\vee, \psi_0] \coloneqq \CA^{\CG,\CE} (\Sigma, \sigma, \psi_1^\vee, \psi_0)\,.
\end{equation}
\end{proposition}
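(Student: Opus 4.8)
The plan is to assemble Lemmas~\ref{st:sh_independent_of_trivialisations}, \ref{st:sh_reparameterisation_invariance}, and~\ref{st:shol_is_diffeomorphism-invariant} into a single well-definedness statement, in two stages: first that $\CA^{\CG,\CE}(\Sigma,\sigma,\psi_1^\vee,\psi_0)$ does not depend on the auxiliary data chosen in Section~\ref{sect:amplitudes-definition} (a trivialisation $\CT$ of $\sigma^*\CG$, which exists because $\Sigma$ is $2$-dimensional, and orientation-preserving parameterisations $\gamma_s$ of the string boundary faces), and then that it is invariant under the equivalence relation on scattering diagrams from Definition~\ref{def:parameterised_scattering_digram}(2). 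Since the general amplitude is defined by multilinear extension, and the formula $\exp(-\int_\Sigma \rho)\prod_c z_c$ is manifestly multilinear in the tensor factors of $\psi_0$ and $\psi_1^\vee$ that enter through cases (SA1) and (SA3) — each such factor appearing linearly in exactly one $z_c$ — it suffices throughout to treat pure state vectors.

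For the independence of the auxiliary data, I would argue as follows. Given two trivialisations $\CT,\CT'$ of $\sigma^*\CG$ (to $\CI_\rho$ and $\CI_{\rho'}$, respectively), Lemma~\ref{st:sh_independent_of_trivialisations} directly gives that the amplitude computed with $\CT$ equals the one computed with $\CT'$, with no need to interpolate. Given two sets $\{\gamma_s\}$ and $\{\gamma'_s\}$ of string boundary parameterisations, for each connected string face $s$ the map $\tau_s \coloneqq \gamma_s^{-1}\circ\gamma'_s$ is an element of $\Diff^+(\bbS^1)$ or $\Diff^+([0,1])$ according to whether $s$ is closed or open, so that $\gamma'_s=\gamma_s\circ\tau_s$; applying Lemma~\ref{st:sh_reparameterisation_invariance} once for each of the finitely many faces of $\partial_0\Sigma\cup\partial_1\Sigma$ shows that the amplitude is unchanged. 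Combining these two facts, $\CA^{\CG,\CE}(\Sigma,\sigma,\psi_1^\vee,\psi_0)$ depends only on the scattering diagram, not on $\CT$ or on the $\gamma_s$.

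For descent to equivalence classes, let $t\colon\Sigma\to\Sigma'$ be an orientation-preserving diffeomorphism of $\<3\>^*$-manifolds realising an equivalence $(\Sigma,\sigma,\psi_1^\vee,\psi_0)\sim(\Sigma',\sigma',\psi_1'{}^\vee,\psi'_0)$ as in Definition~\ref{def:parameterised_scattering_digram}(2). By the previous paragraph I am free to compute $\CA^{\CG,\CE}(\Sigma,\sigma,\psi_1^\vee,\psi_0)$ with any choice of auxiliary data and $\CA^{\CG,\CE}(\Sigma',\sigma',\psi_1'{}^\vee,\psi'_0)$ with the auxiliary data transported along $t$; Lemma~\ref{st:shol_is_diffeomorphism-invariant} then identifies the two, so the amplitude descends to a well-defined number $\CA^{\CG,\CE}[\Sigma,\sigma,\psi_1^\vee,\psi_0]$. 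I do not expect a genuine obstacle at this stage: all the substantive content — the cancellation of the curving term $\exp(-\int_\Sigma \rho)$ against boundary holonomies with the correct orientation signs, and the compatibility of the coherence isomorphisms $\widehat R$ for $\widehat{\scR}$ and $\widehat{\scL}$ with reparameterisations — has already been isolated in the three lemmas, and the only point that requires care is the licence, granted by the first stage, to pick auxiliary data adapted to $t$ when comparing the two diagrams.
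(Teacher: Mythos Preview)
Your proposal is correct and follows exactly the paper's own approach: the proposition is stated as an immediate consequence of Lemmas~\ref{st:sh_independent_of_trivialisations}, \ref{st:sh_reparameterisation_invariance}, and~\ref{st:shol_is_diffeomorphism-invariant}, and your two-stage argument (first independence of auxiliary data, then descent along the diffeomorphism $t$) is precisely how those lemmas are meant to be combined.
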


\subsection{Properties of the surface amplitude}
\label{sect:amplitudes-properties}

In this section we investigate further the properties of the surface amplitude of Definition \ref{eq:surface_amplitude}.
For a smooth map $h \in \Mfd(N,M)$, denote its differential by $h_* \colon TN \to TM$.

\begin{definition}
\label{def:thin homotopy}
Let $M$ be a manifold, $N$ a manifold with faces, and let $f_0, f_1 \colon N \to M$ be smooth maps.
A \emph{thin homotopy} from $f_0$ to $f_1$ is a smooth homotopy $h \colon [0,1] \times N \to M$ between $f_0$ and $f_1$ such that
\begin{equation}
        \rank(h_{*|(t,x)}) \leq \dim(N) - \ind(x)
\end{equation}
for all $t \in [0,1]$ and $x \in N$.
In the case of $f_{0|\partial N} = f_{1|\partial N}$, a \emph{(thin) homotopy rel boundary} is a (thin) homotopy such that $h(t,x) = f_0(x)$ for all $t \in [0,1]$ and $x \in \partial N$.
\end{definition}

Let $(\Sigma,\sigma_0, \psi_1^\vee, \psi_0)$ and $(\Sigma,\sigma_1, \psi_1^\vee, \psi_0)$ be  scattering diagrams for a target space brane geometry $(\CG,\CE)$, with the same underlying  $\<3\>^*$-manifold $\Sigma$, such that $\sigma_0$ and $\sigma_1$ agree on $\partial\Sigma$. A \emph{(thin) homotopy of scattering diagrams} between  $(\Sigma,\sigma_0, \psi_1^\vee, \psi_0)$ and $(\Sigma,\sigma_1, \psi_1^\vee, \psi_0)$  is a (thin) homotopy $h \colon [0,1] {\times} \Sigma \to M$ rel boundary from $\sigma_0$ to $\sigma_1$ that preserves the brane labels picked in  (SD1) and (SD2).

\begin{proposition}
\label{st:amplitude and homotopies}
If $h$ is a homotopy between scattering diagrams $(\Sigma,\sigma_0, \psi_1^\vee, \psi_0)$ and $(\Sigma,\sigma_1, \psi_1^\vee, \psi_0)$, the corresponding surface amplitudes satisfy
\begin{equation}
\label{eq:amplitude and homotopies}
        \CA^{\CG,\CE} [\Sigma, \sigma_1, \psi_1^\vee, \psi_0]
        = \exp \bigg( \int_{[0,1] \times \Sigma} - h^*\curv(\CG) \bigg)\
        \CA^{\CG,\CE} [\Sigma, \sigma_0, \psi_1^\vee, \psi_0]\,.
\end{equation}
In particular, if $h$ is a thin homotopy, the surface amplitudes coincide,
\begin{equation}
\CA^{\CG,\CE} [\Sigma, \sigma_1, \psi_1^\vee, \psi_0]= \CA^{\CG,\CE} [\Sigma, \sigma_0, \psi_1^\vee, \psi_0]\text{.}
\end{equation} 
\end{proposition}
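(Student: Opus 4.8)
The plan is to pull the B-field back along the homotopy and to choose a trivialisation on $[0,1]\times\Sigma$ whose restriction to $[0,1]\times\partial\Sigma$ is pulled back from $\partial\Sigma$, so that none of the boundary numbers $z_c$ from (SA1)--(SA3) change and only the exponential prefactor in Definition~\ref{def:parameterised_amplitude} moves; the stated identity then drops out of Stokes' theorem. By multilinearity of $\CA^{\CG,\CE}$ in the states we may assume throughout that $\psi_0$ and $\psi_1^\vee$ are pure vectors, so that the amplitude is computed by the explicit formula of Definition~\ref{def:parameterised_amplitude}.

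First I would pull $\CG$ back along $h\colon[0,1]\times\Sigma\to M$. Since $[0,1]\times\Sigma$ has the homotopy type of the at most $2$-dimensional $\Sigma$, its third integral cohomology group vanishes, whence $h^*\CG$ is trivialisable as a bundle gerbe with connection. Because $h$ is a homotopy rel boundary, $h|_{[0,1]\times\partial\Sigma}=(\sigma_0|_{\partial\Sigma})\circ\pr_{\partial\Sigma}$, so $h^*\CG|_{[0,1]\times\partial\Sigma}$ is the pullback along $\pr_{\partial\Sigma}$ of $(\sigma_0|_{\partial\Sigma})^*\CG$; the latter admits a trivialisation $\CT_\partial\colon(\sigma_0|_{\partial\Sigma})^*\CG\to\CI_{\rho_\partial}$, and $\rho_\partial\in\Omega^2(\partial\Sigma)=0$ for dimensional reasons. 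Next I would extend $\pr_{\partial\Sigma}^*\CT_\partial$ to a trivialisation $\hat\CT\colon h^*\CG\to\CI_{\hat\rho}$ over all of $[0,1]\times\Sigma$; this is possible since any two trivialisations of a bundle gerbe differ by a hermitean line bundle with connection, and any such line bundle on $[0,1]\times\partial\Sigma$ — which is homotopy equivalent to the $1$-manifold $\partial\Sigma$ and hence topologically trivial — extends over $[0,1]\times\Sigma$. By construction $\hat\rho$ restricts to $0$ on $[0,1]\times\partial\Sigma$, and $d\hat\rho$ is the curvature $h^*\curv(\CG)$ of $h^*\CG$.

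Writing $\iota_k\colon\Sigma\hookrightarrow[0,1]\times\Sigma$ for the inclusion at $t=k$, I would then set $\CT_k\coloneqq\iota_k^*\hat\CT$, a trivialisation of $\sigma_k^*\CG$ with curving $2$-form $\rho_k\coloneqq\iota_k^*\hat\rho$, for $k=0,1$, and use $\CT_k$ to evaluate $\CA^{\CG,\CE}(\Sigma,\sigma_k,\psi_1^\vee,\psi_0)$; this is legitimate by Lemma~\ref{st:sh_independent_of_trivialisations}. Since $\hat\CT$ is constant in the $[0,1]$-direction over $[0,1]\times\partial\Sigma$, on every connected face $c\subset\partial\Sigma$ we have $\CT_0|_c=\CT_\partial|_c=\CT_1|_c$, while $\sigma_0|_c=\sigma_1|_c$, the brane labels from (SD1)--(SD2) agree, and $\psi_0,\psi_1^\vee$ are the same vectors of the literally equal spaces $V_k(\Sigma,\sigma_0)=V_k(\Sigma,\sigma_1)$. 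Hence every ingredient entering the $z_c$ in (SA1)--(SA3) — the Hilbert spaces $E_x$ and bundles $E_b$, their holonomies and parallel transports, and the representatives of the states relative to the fixed boundary parameterisations — coincides for $k=0$ and $k=1$, so $\prod_{c\in\pi_0(\partial\Sigma)}z_c$ is unchanged. Therefore
\[
        \CA^{\CG,\CE}[\Sigma,\sigma_1,\psi_1^\vee,\psi_0]
        =\exp\left(\int_\Sigma\rho_0-\int_\Sigma\rho_1\right)\CA^{\CG,\CE}[\Sigma,\sigma_0,\psi_1^\vee,\psi_0]\,.
\]
Finally, Stokes' theorem on $[0,1]\times\Sigma$ gives $\int_{[0,1]\times\Sigma}h^*\curv(\CG)=\int_{[0,1]\times\Sigma}d\hat\rho=\int_\Sigma\rho_1-\int_\Sigma\rho_0$, the boundary contribution over $[0,1]\times\partial\Sigma$ vanishing because $\hat\rho$ restricts to $0$ there; substituting this into the displayed equation yields exactly~\eqref{eq:amplitude and homotopies}. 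For the thin case, Definition~\ref{def:thin homotopy} bounds $\rank(h_{*|(t,x)})\le\dim(\Sigma)-\ind(x)\le 2$ at every point, so the pullback of the $3$-form $\curv(\CG)$ vanishes identically and the exponential factor equals $1$.

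The step I expect to be the main obstacle is arranging the boundary-adapted trivialisation $\hat\CT$ and verifying that with it the boundary numbers $z_c$ are \emph{literally} unchanged; once this normalisation is in place, the remainder is a one-line Stokes computation free of sign bookkeeping. With a generic trivialisation on $[0,1]\times\Sigma$ one would instead have to match the holonomies of the difference line bundles $\Delta(\CT_1|_c,\CT_0|_c)$ along each boundary face against the boundary term in Stokes' theorem, which would reintroduce precisely the inward-normal orientation subtleties already flagged in the proof of Lemma~\ref{st:sh_independent_of_trivialisations}; choosing $\hat\CT$ adapted to the boundary circumvents this.
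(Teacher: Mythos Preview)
Your argument is correct, and it takes a genuinely different route from the paper's proof. The paper picks an \emph{arbitrary} trivialisation $\CT\colon h^*\CG\to\CI_\rho$ on $[0,1]\times\Sigma$ and then tracks, face by face, how each boundary number $z_c$ changes between $t=0$ and $t=1$: for closed string faces this produces a factor $\exp(-\int_{[0,1]\times c}\rho)$, for open string faces a conjugation by parallel transports together with the same exponential, and for brane faces the paper invokes the auxiliary Lemma~\ref{st:descent and pullback of field strengths} to identify $\curv(E_b)$ with $f^*\Desc(\curv(E)+B)-\rho$, so that Stokes on each $[0,1]\times b$ converts the holonomy ratio into another $\exp(-\int_{[0,1]\times b}\rho)$. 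All these boundary exponentials, together with the bulk prefactors $\exp(-\int_{\{k\}\times\Sigma}\rho)$, are then assembled by Stokes on $[0,1]\times\Sigma$ into the single factor $\exp(-\int h^*\curv(\CG))$.

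Your choice of a boundary-adapted trivialisation $\hat\CT$ with $\hat\rho|_{[0,1]\times\partial\Sigma}=0$ short-circuits all of this bookkeeping: the $z_c$ are literally unchanged and Lemma~\ref{st:descent and pullback of field strengths} is never needed. The trade-off is the extension step for $\hat\CT$, which you handle correctly (the difference line bundle on $[0,1]\times\partial\Sigma$ is topologically trivial since $\partial\Sigma$ is $1$-dimensional, and a connection on a trivial line bundle extends by extending its $1$-form via a collar and cutoff). Note that your construction only gives $\hat\CT|_{[0,1]\times\partial\Sigma}\cong\pr_{\partial\Sigma}^*\CT_\partial$ rather than literal equality, but this is enough: an isomorphism of trivialisations yields trivial difference holonomy in (SA1), connection-preserving bundle isomorphisms in (SA2), and a global conjugation of the chain of $\lambda_a$'s in (SA3), so the traces are unaffected. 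Your approach is cleaner for this proposition; the paper's has the side benefit that its intermediate computations (and Lemma~\ref{st:descent and pullback of field strengths}) are reusable elsewhere.
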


Proposition~\ref{st:amplitude and homotopies} will be a consequence of the following general lemma.

\begin{lemma}
\label{st:descent and pullback of field strengths}
Let $\CG \in \Grb^\nabla(M)$ be a bundle gerbe with connection, defined over a surjective submersion $\pi \colon Y \to M$ and with curving $B \in \Omega^2(Y)$.
\begin{enumerate}[(1)]
\item Consider a morphism $\CE \colon \CG \to \CI_\omega$, with an hermitean vector bundle $E$ with connection over a surjective submersion $\zeta \colon Z \to Y$.
The 2-form $\curv(E) + B \in \Omega^2(Z, \End(E))$ descends to a 2-form
\begin{equation}
        \Desc \big( \curv(E) + B \big)
        \ \in \Omega^2 \big( M, \Delta(\CE,\CE) \big)\,.
\end{equation}

\item Let $X$ be a manifold with corners, and let $f \colon X \to M$ be a smooth map such that there exists a trivialisation $\CT \colon f^*\CG \to \CI_\rho$ for some $\rho \in \Omega^2(X)$.
We have the identity
\begin{equation}
\label{eq:field strengths and thin homotopies}
        \curv \big( \Delta(f^*\CE,\CT) \big)
        = f^* \Desc \big( \curv(E) + B \big) - \rho\,.
\end{equation}
\end{enumerate}
\end{lemma}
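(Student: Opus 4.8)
\emph{Part~(1).}
The plan is to first unwind the data of $\CE \colon \CG \to \CI_\omega$: a surjective submersion $\zeta \colon Z \to Y$ (using $Y \times_M M = Y$), a hermitean vector bundle $E$ with connection on $Z$ with $\tfrac1{\rank(E)}\tr(\curv(E)) = \omega - B$, and a connection-preserving isomorphism $\alpha \colon \zeta^{[2]*}L \otimes \pr_1^*E \to \pr_2^*E$ over $Z^{[2]}$. The key observation is that the pullback of $\Delta(\CE,\CE) = \sfR(\CE \circ \CE^*)$ along $Z \to M$ is canonically $\End(E)$: the twisting line bundles $L$ and $L^\vee$ cancel in $E \otimes E^\vee$, and the descent isomorphism over $Z^{[2]}$ is (up to the orientation conventions of the definitions) conjugation $\mathrm{Ad}_\alpha$, using $\End(\zeta^{[2]*}L \otimes \pr_1^*E) \cong \End(\pr_1^*E)$. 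It then suffices to show that $\curv(E) + B \in \Omega^2(Z, \End(E))$ is basic with respect to this descent isomorphism; descent of basic forms along $Z \to M$ then yields $\Desc(\curv(E) + B) \in \Omega^2(M,\Delta(\CE,\CE))$. For basic-ness I would compute: since $\alpha$ preserves connections it intertwines curvatures, so $\pr_2^*\curv(E) = \mathrm{Ad}_\alpha(\pr_1^*\curv(E)) + \curv(\zeta^{[2]*}L)\cdot\mathrm{id}$, and $\curv(\zeta^{[2]*}L) = \pr_2^*B - \pr_1^*B$ by the curving relation of $\CG$; since $\mathrm{Ad}_\alpha$ fixes scalar-valued forms, this scalar term is precisely absorbed into the difference of the $B$-contributions.

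\emph{Part~(2), reduction to $X$.}
By naturality of the gerbe formalism, $f^*\CE \colon f^*\CG \to \CI_{f^*\omega}$ is a morphism over $f^*Z \to f^*Y$ with vector bundle $f^*E$, and the curving of $f^*\CG$ is $f^*B$. Since descent commutes with pullback along $f$, one has $f^*\Delta(\CE,\CE) \cong \Delta(f^*\CE, f^*\CE)$ and $f^*\Desc(\curv(E) + B) = \Desc_X(\curv(f^*E) + f^*B)$. Together with the canonical isomorphism $\End(\Delta(f^*\CE,\CT)) \cong \Delta(f^*\CE,\CT) \otimes \Delta(\CT, f^*\CE) \cong \Delta(f^*\CE, f^*\CE)$ built from the structural isomorphisms \eqref{eq:structural isos for Delta} and the invertibility of $\CT$, this reduces the claim to a statement on $X$ alone: for a bundle gerbe $\CG'$ on $X$ with curving $B'$, a morphism $\CE' \colon \CG' \to \CI_{\omega'}$ with vector bundle $E'$, and a trivialisation $\CT \colon \CG' \to \CI_\rho$, one has $\curv(\Delta(\CE',\CT)) = \Desc_X(\curv(E') + B') - \rho$.

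\emph{Part~(2), main computation.}
Here I would use $\Delta(\CE',\CT) \cong \sfR(\CE' \circ \CT^{-1})$, valid since $\CT$ is invertible and $\CT^* = \CT^{-1}$, as recalled after \eqref{eq:def Delta}. Writing $Z_T \to Y'$ for the submersion of $\CT$ and $T$ for its rank-one bundle, the curvature condition \eqref{eq:curv} for $\CT$ gives $\curv(T) = \rho - B'$. The composite $\CE' \circ \CT^{-1} \colon \CI_\rho \to \CI_{\omega'}$ is represented over the surjective submersion $W \coloneqq Z' \times_{Y'} Z_T \to X$ by the bundle $E' \otimes T^\vee$: over this fibre product both factors lie over the same point of $Y'$, so the gerbe line bundle $L'$ restricts to its canonical trivialisation and drops out. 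Its curvature is $\curv(E') + \curv(T^\vee) = \curv(E') - \curv(T) = (\curv(E') + B') - \rho$. Since $\Delta(\CE',\CT) = \sfR(\CE' \circ \CT^{-1})$ is obtained by descent along $W \to X$, and the form $(\curv(E') + B') - \rho$ on $W$ is the pullback of $\curv(E') + B'$ (basic by Part~(1)) minus the form $\rho$ which already lives on $X$, descent returns exactly $\Desc_X(\curv(E') + B') - \rho$. Combined with the reduction above, this proves the lemma.

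\emph{Main obstacle.}
The two delicate points are: (i) pinning down the pullback of $\Delta(\CE,\CE)$ to $Z$ together with its descent isomorphism, so that "basic" is the correct condition and the overall sign of $B$ comes out as stated --- this sign is dictated by the orientation convention hidden in the descent isomorphism of $\Delta(\CE,\CE)$, so it must be checked against the definitions of composition and of $\sfR$ rather than guessed; and (ii) making precise the composite $\CE' \circ \CT^{-1}$, namely its base submersion $W$, its vector bundle, and the vanishing of the $L'$-twist over $W$. Both steps are definition-chasing inside $\Grb^\nabla$; once they are set up, the curvature identities themselves are one-line computations.
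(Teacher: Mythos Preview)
Your proposal is correct and follows essentially the same approach as the paper: both arguments use that $\alpha$ is connection-preserving to obtain the descent identity for $\curv(E)+B$ in Part~(1), and for Part~(2) compute the curvature of the bundle $E\otimes T^\vee$ underlying $\CE\circ\CT^*$ over the fibre product of the two submersions, then descend. The only organisational difference is that you first reduce along $f$ to a statement on $X$ (invoking the identification $\End(\Delta(f^*\CE,\CT))\cong\Delta(f^*\CE,f^*\CE)$), whereas the paper carries the pullback along $f$ through the computation directly; this extra reduction step is correct but not needed.
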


\begin{proof}
Ad (1):
Recall that if $B \in \Omega^2(Y)$ is the curving 2-form of $\CG$ and if $L \to Y^{[2]}$ its defining hermitean line bundle with connection, then we have $\curv(L) = d_0^*B - d_1^*B$ over $Y^{[2]}$. We let $\alpha$ be the vector bundle morphism over $Z^{[2]}$ which is part of $\CE$.
Since it is connection-preserving, the curvature of $E$ and the curving 2-forms satisfy the identity
\begin{equation}
\label{eq:descent eq. for twisted field strength}
        \alpha \circ d_1^*\big( \curv(E) - (\omega - B) \big) \circ \alpha^{-1}
        = d_0^*\big( \curv(E) - (\omega - B) \big)
\end{equation}
as 2-forms over $Z^{[2]} = Z \times_M Z$.
Consequently, the combination $\curv(E) + B$ descends to a bundle-valued 2-form
\begin{equation}
        \Desc \big( \curv(E) + B) \big)
        \
        \in \Omega^2 \big( M, \Delta(\CE,\CE) \big)\,.
\end{equation}

Ad (2):
Let $f \colon X \to M$ be a smooth map such that there exists a trivialisation $\CT \colon f^*\CG \to \CI_\rho$ for some $\rho \in \Omega^2(X)$.
The map $f$ pulls back the sequence of surjective submersions $Z \to Y \to M$ to a sequence $\hat{f}_Y^*Z \to f^*Y \to X$ of surjective submersions.
Here $\hat{f}_Y \colon f^*Y \to Y$ is the map over $f$ induced by the pullback of $X \to M \leftarrow Y$.
Let $\hat{f}_Z \colon f^*Z \to Z$ be the analogous induced map over $\hat{f}_Y$.
Finally, let $\xi_\CT \colon Z_\CT \to f^*Y$ be the surjective submersion in the data of the trivialisation $\CT$.

We compute the curvature of the bundle $\Delta(f^*\CE,\CT)$:
the morphism $\CE \circ \CT^*$ is defined over the surjective submersion $\xi \colon Z_\CT \times_{f^*Y} f^*Z \to f^*Y$, and if $T$ is the bundle defining $\CT$, the bundle underlying $\CE \circ \CT^*$ is given by $\hat{f}_Z^*E \otimes T^\vee$.
Omitting other pullbacks, we have
\begin{align}
        \curv(\hat{f}_Z^*E \otimes T^\vee)
        &= \hat{f}_Z^* \curv(E) + \hat{f}_Z^*B - \xi^*\rho
        = \hat{f}_Z^* \big( \curv(E) + B \big) - \xi^*\rho\,.
\end{align}
Since the functor $\sfR$ (see Example~\ref{eg:sfR}) is built from the descent functor for hermitean vector bundles with connection~\cite{Waldorf--More_morphisms,Waldorf--Thesis}, this yields
\begin{align}
        \curv \big( \Delta(f^*\CE, \CT) \big)
        &=\curv \big( \sfR(\CE \circ \CT^*) \big)
        \\
        &= \Desc \big( \curv(\hat{f}_Z^*E \otimes T^\vee) \big)
        \\
        &= \Desc \big( \hat{f}_Z^*\curv(E) + B \big) - \rho
        \\
        &= f^* \Desc \big( \curv(E) + B \big) - \rho\,,
\end{align}
as claimed.
\end{proof}

\begin{proof}[Proof of Proposition~\ref{st:amplitude and homotopies}]
First, we choose orientation-preserving parameterisations $\gamma$ of the connected components of the string boundary of $\Sigma$ as in Section~\ref{sect:amplitudes-definition}.
Consider the pullback bundle gerbe $h^* \CG$ over $[0,1] \times \Sigma$.
Since $\rmH^3([0,1] \times \Sigma, \RZ) \cong \rmH^3(\Sigma, \RZ) = 0$, we can choose a trivialization $\CT \colon h^* \CG \to \CI_{\rho}$.
We have
\begin{equation}
        \dd \rho =\curv(h^*\CG) = h^* \curv(\CG)\,.
\end{equation}
Under the coherence isomorphism $\widehat{R}$, a pure vector $\psi_0$ then corresponds to a tensor product
\begin{align}
\label{eq:change of classes for thho invar}
        \bigotimes_{u=1}^{n_0} \psi_{0,u}
        \ \otimes \
        \bigotimes_{v=1}^{m_0} \psi_{0,v}
        &= \bigotimes_{u=1}^{n_0} \big[ [\gamma_{c_{0,u}}^* \CT_{|\{0\} \times c_{0,u}}], z_{c_{0,u}} \big]
        \ \otimes \
        \bigotimes_{v=1}^{m_0} \big[ \gamma_{s_{0,v}}^* \CT_{|\{0\} \times s_{0,v}}, \psi_{s_{0,v}} \big]
        \\[0.1cm]
        &= \bigotimes_{u=1}^{n_0} \big[ [\gamma_{c_{0,u}}^* \CT_{|\{1\} \times c_{0,u}}], z'_{c_{0,u}} \big]
        \ \otimes \
        \bigotimes_{v=1}^{m_0} \big[ \gamma_{s_{0,v}}^* \CT_{\{1\} \times s_{0,v}}, \psi'_{s_{0,v}} \big]\,,
\end{align}
with
\begin{align}
        z'_{c_{0,u}} &= z_{c_{0,u}} \, \exp \bigg( \int_{[0,1] \times c_{0,u}} - \rho \bigg)
        \qquad \text{and}
        \\
        \psi'_{s_{0,v}} &= pt^{E_y}_{[0,1] \times \{y\}} \circ \psi_{c_{0,u}} \circ \big( pt^{E_x}_{[0,1] \times \{x\}} \big)^{-1}
        \cdot \exp \bigg( \int_{[0,1] \times c_{0,u}} - \rho \bigg)\,.
\end{align}
Here, $x$ is the initial point of $s_{0,v}$ and $y$ is its endpoint.
Further, we have set
\begin{equation}
        E_x \coloneqq \Delta(h_{|[0,1] \times \{x\}}^*\CE_{i(x)}, \CT_{|[0,1] \times c_{0,u}})\,,
\end{equation}
and similarly for $E_y$.
With these choices of $z'_{c_{0,u}}$ and $\psi'_{s_{0,v}}$, the identity~\eqref{eq:change of classes for thho invar} holds true because the vectors in the second and the third lines are related by parallel transports in $\scL$ and $\scR_{ij}$ along \emph{constant} paths (compare~\eqref{eq:pt in R_ij} and~\cite{Waldorf--Trangression_II,BW:Transgression_and_regression_of_D-branes}), which are identity maps.

If $b \subset \partial_2 \Sigma$ is an open connected face, set
\begin{equation}
        E_b \coloneqq \Delta(h_{|[0,1] \times b}^*\CE_{i(b)}, \CT_{[0,1] \times b})\,.
\end{equation}
Its contribution to the the amplitude at $t = 1$ is the parallel transport $\lambda_b \coloneqq pt^{E_b}_{\{1\} \times b}$, which enters in the trace term (cf.~(SA3)(i) and Definition \ref{def:parameterised_amplitude}).
Let $x$ and $y$ denote the initial and endpoint of $b$, respectively.
We have
\begin{align}
        pt^{E_b}_{\{1\} \times b}
        &= pt^{E_b}_{[0,1] \times \{y\}} \circ pt^{E_b}_{\{0\} \times b} \circ \big( pt^{E_b}_{[0,1] \times \{x\}} \big)^{-1}
        \circ \hol \big( E_b, \partial( [0,1] \times b ) \big)
        \\
        &= pt^{E_b}_{[0,1] \times \{y\}} \circ pt^{E_b}_{\{0\} \times b} \circ \big( pt^{E_b}_{[0,1] \times \{x\}} \big)^{-1}
        \cdot \exp \bigg( \int_{[0,1] \times b} - \rho \bigg)\,,
\end{align}
where we have used equation~\eqref{eq:descent eq. for twisted field strength} and that $h_{|[0,1] \times b}$ has rank one.
By an analogous argument, if $b \subset \partial_2\Sigma$ is a closed connected face, its contribution to the surface amplitude at $t=1$ differs from that at $t=0$ by the factor $\exp(\int_{[0,1] \times b} - \rho)$.

Inserting these findings into the expression for the surface amplitude and applying Stokes'~Theorem, we obtain the identity~\eqref{eq:amplitude and homotopies}.
\end{proof}

\begin{proposition}
\label{st:A^G and thin maps to M}
Let $(\Sigma,\sigma, \psi_1^\vee, \psi_0)$ and $(\Sigma,\sigma', \psi_1^\vee, \psi_0)$ be  scattering diagrams for a target space brane geometry $(\CG,\CE)$, with the same underlying  $\<3\>^*$-manifold $\Sigma$, such that $\sigma$ and $\sigma'$ agree on $\partial\Sigma$. If  $\sigma, \sigma' \colon \Sigma \to M$ are thin maps, i.e.~ $\rank(\sigma_{*|x}), \rank(\sigma'_{*|x}) < 2$ for all $x \in \Sigma$, then we have
\begin{equation}
        \CA^{\CG,\CE}[\Sigma, \sigma, \psi_1^\vee, \psi_0]
        = \CA^{\CG,\CE}[\Sigma, \sigma', \psi_1^\vee, \psi_0]\,.
\end{equation}
\end{proposition}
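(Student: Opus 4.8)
The plan is to compare the two surface amplitudes directly through their explicit formula in Definition~\ref{eq:surface_amplitude}, to show that the only possible discrepancy sits in the factor $\exp(-\int_\Sigma\rho)$, and that this factor is insensitive to replacing $\sigma$ by $\sigma'$ because both maps are thin. First I would reduce, by multilinearity, to pure state vectors $\psi_0$ and $\psi_1^\vee$ and fix parameterisations $\gamma_s$ of the string boundary faces as in Section~\ref{sect:amplitudes-definition}. The crucial preparatory move is to choose the two trivialisations compatibly along the boundary: since $\sigma$ and $\sigma'$ agree on $\partial\Sigma$, the gerbes $(\sigma_{|\partial\Sigma})^*\CG$ and $(\sigma'_{|\partial\Sigma})^*\CG$ coincide, so I would pick a trivialisation $\CT_\partial$ of this gerbe over $\partial\Sigma$ and extend it, independently, to trivialisations $\CT\colon\sigma^*\CG\to\CI_\rho$ and $\CT'\colon\sigma'^*\CG\to\CI_{\rho'}$ over $\Sigma$, arranging $\CT_{|\partial\Sigma}\cong\CT'_{|\partial\Sigma}$. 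Such extensions exist because trivialisations of a bundle gerbe with connection form a torsor over $\HLBdl^\nabla$ and a hermitean line bundle with connection on the topologically trivial $1$-manifold $\partial\Sigma$ always extends over $\Sigma$; alternatively, one starts from arbitrary trivialisations, compares them over $\partial\Sigma$, and corrects one of them by a line bundle over $\Sigma$, which does not affect the amplitude by Lemma~\ref{st:sh_independent_of_trivialisations}.

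With these choices in place, I would observe that every number $z_c$ for $c\in\pi_0(\partial\Sigma)$ occurring in cases (SA1)--(SA3) is assembled solely from data supported on $\partial\Sigma$: the corner Hilbert spaces $E_x=\Delta(\CE_{i(x)|\sigma(x)},\CT_{|x})$, the brane bundles $E_b=\Delta(\sigma_{|b}^*\CE_{i(b)},\CT_{|b})$ together with their connections, the resulting holonomies and parallel transports, and the representatives of the common states $\psi_0,\psi_1^\vee$ relative to the $\CT_{|s}$. Since $\sigma=\sigma'$ on $\partial\Sigma$ and $\CT,\CT'$ restrict to the same $\CT_\partial$, all of these coincide for the two scattering diagrams. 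Hence $\CA^{\CG,\CE}(\Sigma,\sigma,\psi_1^\vee,\psi_0)=\exp(-\int_\Sigma\rho)\prod_c z_c$ and $\CA^{\CG,\CE}(\Sigma,\sigma',\psi_1^\vee,\psi_0)=\exp(-\int_\Sigma\rho')\prod_c z_c$ with the \emph{same} product, and the proposition reduces to the identity $\exp(-\int_\Sigma\rho)=\exp(-\int_\Sigma\rho')$.

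To prove this last identity I would use thinness as follows. Because $\rank(\sigma_{*})<2$, the pulled-back $3$-form $\sigma^*\curv(\CG)$ vanishes, so $\rho$ (and likewise $\rho'$) is closed; moreover $\sigma$ factors locally through $1$-manifolds, over each of which $\sigma^*\CG$ is isomorphic to $\CI_0$ and hence carries a trivialisation with vanishing curving. Choosing a good cover over which such local factorisations and curving-zero trivialisations exist and comparing the latter with $\CT$ produces local hermitean line bundles with connection whose transition bundles are flat; since the overlaps are simply connected and $H^2(\Sigma;U(1))=0$ for a surface with non-empty boundary, these trivialise compatibly, so the local bundles glue to a global hermitean line bundle with connection $L$ on $\Sigma$ with $\curv(L)=-\rho$. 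Making the same choices over $\partial\Sigma$ for $\sigma'$ produces $L'$ with $\curv(L')=-\rho'$ and $L'_{|\partial\Sigma}\cong L_{|\partial\Sigma}$ as bundles with connection; then $L'\otimes L^\vee$ is flatly trivialised over $\partial\Sigma$, and Stokes' Theorem identifies $\int_\Sigma\rho-\int_\Sigma\rho'=\int_\Sigma\curv(L'\otimes L^\vee)$ with the boundary integral $-\int_{\partial\Sigma}\dd\log f$ of the transition function $f\colon\partial\Sigma\to\FC^\times$ between the global and the flat trivialisations of $L'\otimes L^\vee$. This lies in $\ker(\exp)$, so $\exp(-\int_\Sigma\rho)=\exp(-\int_\Sigma\rho')$, as required.

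The same mechanism can be repackaged through the double $\widehat\Sigma=\Sigma\cup_{\partial\Sigma}\overline\Sigma$ carrying the thin map $\widehat\sigma$ equal to $\sigma$ on one copy and to $\sigma'$ on the other: gluing $\CT$ and $\CT'$ gives a trivialisation with $\int_{\widehat\Sigma}=\int_\Sigma\rho-\int_\Sigma\rho'$ (the sign from the reversed orientation on the second copy), so $\exp(-\int_\Sigma\rho+\int_\Sigma\rho')$ is precisely the surface holonomy of $\CG$ along $\widehat\sigma$, and the statement becomes the vanishing of the surface holonomy of a bundle gerbe with connection along a thin map from a closed surface. This is also exactly the content of the proposition when $\partial\Sigma=\emptyset$, and it is the main obstacle: for $\Sigma$ closed the compatible triviality of the transition bundles is \emph{not} automatic from $H^2(\Sigma;U(1))$, and one genuinely has to invoke the $1$-dimensional factorisation of thin maps (or cite this standard fact about bundle gerbe holonomy). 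The remaining ingredients — extending $\CT_\partial$, the good-cover bookkeeping, and endowing the double with a smooth structure via a collar — are routine.
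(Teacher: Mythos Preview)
Your approach is valid but takes the long way around compared with the paper. The paper's proof is three sentences: it invokes \cite[Theorem~C.1]{BW:Transgression_and_regression_of_D-branes}, which guarantees that the pullback of $\CG$ along any thin map admits a trivialisation to $\CI_0$, i.e.\ with curving $\rho=0$. Applying this to both $\sigma$ and $\sigma'$ makes the exponential factors identically~$1$, and since $\sigma_{|\partial\Sigma}=\sigma'_{|\partial\Sigma}$ the remaining boundary factors $z_c$ agree after the routine change of trivialisation on $\partial\Sigma$. Your route is the mirror image---match the trivialisations on $\partial\Sigma$ first so the $z_c$'s agree on the nose, then reduce to the exponential factors---and your doubling argument for the latter is correct, but it too bottoms out in exactly the fact you flag as ``the main obstacle'' (surface holonomy of $\CG$ along a thin map from a closed surface is trivial), which is an immediate corollary of Theorem~C.1. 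So you have not avoided the external input, only relocated it; the paper simply uses it at the outset and is done.

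One genuine caution on your first attempt at the exponential: the claim that a thin map ``factors locally through $1$-manifolds'' is not correct as stated. A smooth map of rank~$\leq 1$ need not locally factor through a $1$-manifold (the rank can drop and the image need not be a submanifold near such points), so your local-patching construction of the line bundle $L$ with $\curv(L)=-\rho$ does not go through on its own. This is precisely why Theorem~C.1 is a nontrivial statement about gerbes rather than a formality about differential forms, and why both your argument and the paper's must ultimately cite it.
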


\begin{proof}
By~\cite[Theorem~C.1]{BW:Transgression_and_regression_of_D-branes} one can find trivialisations $\CT \colon \sigma^*\CG \to \CI_0$ and $\CT' \colon \sigma'{}^*\CG \to \CI_0$.
In this case, the exponential terms in the surface amplitudes are trivial.
Further, since $\sigma$ and $\sigma'$ agree on $\partial \Sigma$ the representatives of the states agree upon changing the trivialisations on $\partial \Sigma$, so that all remaining terms in the amplitudes agree as well.
\end{proof}

Summarising Propositions \ref{st:amplitude and homotopies} and \ref{st:A^G and thin maps to M}, the surface amplitude $\CA^{\CG,\CE}$ has the following properties:
\begin{enumerate}[(1)]
\item $\CA^{\CG,\CE}[\Sigma, \sigma_0, \psi_0, \psi_1^\vee] = \CA^{\CG,\CE}[\Sigma, \sigma_1, \psi_0, \psi_1^\vee]$ if $\sigma_0$ and $\sigma_1$ are thin homotopic rel boundary.

\item $\CA^{\CG,\CE}[\Sigma, \sigma_0, \psi_0, \psi_1^\vee] = \CA^{\CG,\CE}[\Sigma, \sigma_1, \psi_0, \psi_1^\vee]$ whenever  $\sigma_0$ and $\sigma_1$ are thin maps that agree on $\partial \Sigma$.
\end{enumerate}
Eventually, we refer to  these properties by saying that the surface amplitude $\CA^{\CG,\CE}$ is \emph{superficial}.

\section{Smooth open-closed functorial field theories}
\label{sect:Smooth TFTs}

We turn to the construction of a smooth open-closed bordism category suitable for our purposes and to the notion of smooth open-closed FFTs.
Subsequently, we assemble the amplitudes defined in Section~\ref{sect:amplitudes} into a smooth FFT in Section~\ref{sect:TFT_construction}.

\subsection{Smooth open-closed bordisms on a target space}
\label{sect:families of bordisms}

Let $(M,Q)$ be a target space.
In this section, we define a smooth version of the open-closed bordism category in dimension $d \geq 2$.
Roughly speaking, its objects are smooth families of compact $(d{-}1)$-dimensional manifolds $Y$, possibly with boundary, and its morphisms are smooth families of compact $d$-dimensional  bordisms with corners.
Additionally, objects and bordisms carry smooth maps to $M$ that map brane boundaries to the submanifolds $Q_{i}$.
Our model for smooth bordism categories borrows from that in~\cite{ST:SuSy_FTs_and_generalised_coho}, modified by adding boundaries, smooth maps to a target space, and collars.

Let $\Cart$ denote the category of cartesian spaces, whose objects are  submanifolds $U$ of some $\FR^n$ that are diffeomorphic to some $\FR^m$, and whose morphisms are smooth maps between these submanifolds.
Our goal is to assemble bordisms into a presheaf of symmetric monoidal categories on $\Cart$, which we will denote by $\OCBord_d(M,Q)$. 

Let $U \in \Cart$.
If $Y$ is a $(d{-}1)$-manifold with boundary and if $f \colon U \times Y \to M$ is a smooth map, we say that $f$ \emph{has fibre-wise sitting instants} if there exists an open neighbourhood $V$ of $\partial Y$ in $Y$ and a diffeomorphism $g \colon V \to \partial Y \times \FR_{\geq 0}$ such that the following diagram commutes:
\begin{equation}
\xymatrix@C=2cm{
        U \times V \ar[r]^{f} \ar[d]_{1_U \times g}
        & M
        \\
        U \times \partial Y \times \FR_{\geq 0} \ar[r]_-{\pr_{U \times \partial Y}}
        & U \times \partial Y \ar[u]_{f}
}
\end{equation}

\begin{definition}
\label{def: family of objects}
An object of the category $\OCBord_{d}(M,Q)(U)$ is a quintuple $(Y,f,b, \ori_T,\ori_\FR)$ of the following data:
\begin{itemize}
        \item $Y$ is a compact $(d{-}1)$-manifold with an orientation $\ori_T$ of its stabilised tangent bundle $\underline{\FR} \oplus TY$ and a map $\ori_\FR \colon \pi_0(Y) \to \{+,-\}$,
        
        \item $b \colon \pi_0(\partial Y) \to I$ is a map to the set $I$ of  D-brane labels,
        
        \item $f \colon U \times Y \to M$ is a smooth map with fibre-wise sitting instants and such that for each $y \in \partial Y$ and $x \in U$ we have $f(x,y) \in Q_{b[y]}$, where $[y] \in \pi_0(\partial Y)$ denotes the connected component of $y$ in $\partial Y$.
\end{itemize}
\end{definition}

Here we regard the symbols $\{+,-\}$ as the two possible orientations on $\FR$, so that the map $\ori_{\mathbb{R}}$ assigns to each connected component of $Y$ an orientation on $\FR$ (with $+$ amounting to the standard orientation).
We induce an orientation on $Y$ by saying that a local frame $(e_1, \ldots, e_{d-1})$ on $Y$ at a point $y \in Y$ is oriented if the tuple $(\ori_{\mathbb{R}}[y], e_1, \ldots, e_{d-1})$ is oriented with respect to $\ori_T$ as a local frame for the stabilised tangent bundle $\underline{\FR} \oplus TY$, where $[y] \in \pi_0(Y)$ is the connected component containing $y$.
An important difference between our bordism category and other versions is that the orientations of the manifolds $Y$ are not part of the data of the object; the orientation of $Y$ is determined from the orientations $\ori_{\mathbb{R}}$ and $\ori_T$.
This is essential for the definition of reflection structures in Section~\ref{sect:variants of OCFFTS}.

We will generally omit the map $b$ and the orientations $\ori_T, \ori_\FR$ from the notation and abbreviate $(Y,f,b,\ori_T,\ori_\FR)$ by just writing $(Y,f)$.
Moreover, for $(Y,f)$ an object in $\OCBord_{d}(M,Q)(U)$ with $Y$ a connected manifold, we let $C_\pm Y$ denote the product $\FR_{\pm \ori_\FR} \times Y$, where $\FR_+ \coloneqq [0, \infty)$ and $\FR_- \coloneqq (-\infty, 0]$.
This is endowed with the orientation induced from $\ori_T$ under the canonical identification of a frame $(r, e_1, \ldots, e_{d-1})$ of $C_\pm Y$ at $(t,y) \in C_\pm Y$ with a frame of $\ul{\FR} \oplus TY$ at $y \in Y$.
Note that the datum $\ori_{\mathbb{R}}$ does not enter in the definition of the orientation of $C_\pm Y$; it only enters in the definition of the underlying manifold.
For non-connected $Y$, we define $C_\pm Y$ as the disjoint union of the component-wise construction.

Next, we define the morphisms of $\OCBord_{d}(M,Q)(U)$ over  $U \in \Cart$.
To that end, we consider tuples 
\begin{equation}
\label{eq:morphism family in OCBord}
        (\Sigma, \sigma) \coloneqq \big( (Y_0,f_0), (Y_1,f_1), \Sigma, W_0, W_1, w_0, w_1, \sigma, \ell \big)
\end{equation}
of the following data:
\begin{itemize}
\item The pairs $(Y_a, f_a)$ are objects of $\OCBord_{d}(M,Q)(U)$ for $a = 0,1$.

\item $\Sigma$ is a compact, oriented $d$-dimensional $\<3\>^*$-manifold with a map $\ell \colon \pi_0(\partial_2 \Sigma) \to I$ and a smooth map $\sigma \colon U \times \Sigma \to M$ such that for every $x \in U$ and $y \in \partial_2 \Sigma$ we have that $\sigma(x,y) \in Q_{\ell[x]}$.

\item $W_0$ is an open neighbourhood of $\{0\} \times Y_0$ in $C_+ Y_0$, and $W_1$ is an open neighbourhood of $\{0\} \times Y_1$ in $C_- Y_1$.
We equip $W_0$ and $W_1$ with the orientations induced from the orientations on $CY_0$ and $CY_1$.

\item The $w_a$ are smooth, orientation-preserving embeddings $w_a \colon W_a \to \Sigma$.
These maps must have disjoint images, satisfy that $\partial_a \Sigma = w_a(\{0\} \times Y_a)$, for $a = 0,1$, that $\ell[w_a(y,t)] = b_a[y]$ for all $(t,y) \in W_a$ with $y \in \partial Y_a$.
Furthermore, they must restrict to embeddings
\begin{equation}
(\FR \times \partial Y_a) \cap W_a\ \hookrightarrow\ \partial_2 \Sigma\,.
\end{equation}

\item We demand that the following diagram commutes:
\begin{equation}
\label{eq:sitting instants on bordisms}
\xymatrix@C=1,5cm{U \times W_a \ar@{^(->}[d]_{1_U \times w_a} \ar@{^(->}[r]
& U \times \FR \times Y_a \ar[r]^-{\pr_{U \times Y_a}} & U \times Y_a \ar[d]^{f_a}
        \\
        U \times \Sigma \ar[rr]_{\sigma} & & M
}
\end{equation}
One can view this diagram as a sitting instant condition on the map $\sigma$ in a direction normal to the image of $U \times Y_a$ in $U \times \Sigma$.
\end{itemize}
We then define an equivalence relation on the set of tuples:
consider two tuples
\begin{align}
        (\Sigma,\sigma) &=\big( (Y_0,f_0), (Y_1,f_1), \Sigma, W_0,W_1, w_0, w_1, \sigma, \ell \big)
\\
 (\Sigma',\sigma') &=\big( (Y_0,f_0), (Y_1,f_1), \Sigma', W'_0, W'_1, w_0', w_1', \sigma', \ell' \big)\,
\end{align}
with the same objects $(Y_0,f_0)$ and $(Y_1,f_1)$. 
We say that $(\Sigma,\sigma)$ and $(\Sigma',\sigma')$ are \emph{equivalent} and write 
\begin{equation}
\label{eq:equivalence in OCBord_1}
(\Sigma,\sigma)\sim(\Sigma',\sigma')
\end{equation}
if there exists a triple $(V_0, V_1, \Psi)$ consisting of a diffeomorphism $\Psi \colon U \times \Sigma \to U \times \Sigma'$ that commutes with the projections to $U$ and that preserves brane labels and the orientation on each fibre, as well as common refinements $V_a \subset W_a \cap W'_a$ of the open neighbourhoods of $U \times \{0\} \times Y_a$ in $C_+Y_0$ and $C_-Y_1$, respectively.
These data have to satisfy
\begin{equation}
\label{eq: diffeos of bordisms}
        \sigma' \circ \Psi = \sigma
        \qandq
        (1_U \times w'_{a|V_a}) = \Psi \circ (1_U \times w_{a|V_a})
        \qquad \text{for} \quad a = 0,1\,.
\end{equation}

\begin{definition}
\label{def: family of bordisms}
A morphism in $\OCBord_{d}(M,Q)(U)$ is an equivalence class $[\Sigma, \sigma]$ of a tuple $(\Sigma, \sigma)$.
\end{definition}

Composition of two morphisms $[\Sigma, \sigma] \colon (Y_0,f_0) \to (Y_1,f_1)$ and $[\Sigma', \sigma'] \colon (Y_1,f_1) \to (Y_2,f_2)$ in $\OCBord_d(M,Q)(U)$ is defined by gluing the manifolds $\Sigma$ and $\Sigma'$ using the collars $(W_1, w_1)$ and $(W'_1, w'_1)$ of $(Y_1,f_1)$, just as in the ordinary bordism category.
Note that the maps $\sigma \colon \Sigma \to M$ and $\sigma' \colon \Sigma' \to M$ glue smoothly due to the sitting instant condition~\eqref{eq:sitting instants on bordisms}.
As we are using collars as part of the data of a bordism, gluing of representatives is associative up to a canonical diffeomorphism (see e.g.~\cite{Kock2003}).
Moreover, this diffeomorphism induces an equivalence of tuples:
the preservation of brane labels and orientations is immediate, and the compatibility with the maps to $M$ follows from the fact that on the level of the sets underlying the bordisms, gluing amounts to forming a pushout; the universal property of the pushout then guarantees compatibility with maps out of the glued manifolds.

The identity morphism of an object $(Y,f,b,\ori_T,\ori_\FR=+)$ reads as
\begin{equation}
\label{eq:id bordism}
\begin{split}
        1_{(Y,f)} = &\big[ (Y,f), (Y,f),\, U \times [0,1] \times Y,\, [0,\epsilon) \times Y,\, (-\epsilon,0] \times Y,
        \\
        &\quad \iota_{[0,\epsilon)} \times 1_Y,\, (sh_1 \circ \iota_{(-\epsilon,0]}) \times 1_Y,\, f \circ \pr_{U \times Y},\, b \circ \pr_{\partial Y} \big]
\end{split}
\end{equation}
for some $0 < \epsilon < \frac{1}{2}$.
The bordism $[0,1] \times Y$ is endowed with the orientation induced from $\ori_T$ under the inclusion $[0,1] \times Y \hookrightarrow \FR \times Y$.
Here we have used the canonical embeddings $\iota_{[0,\epsilon)} \colon [0,\epsilon) \hookrightarrow \FR$ of an interval into the real line and the shift map $sh_s \colon \FR \to \FR$, $t \mapsto t+s$ for $s \in \FR$.
That is, the identity bordisms are defined like in the ordinary bordism category, but multiplied by the test space $U$, endowed with the constant extension of $f$ in the direction normal to $\{0\} \times Y$.
Finally, note that shrinking the collar neighbourhoods $W_a$ of a general morphism $[\Sigma, \sigma]$ to smaller collar neighbourhoods while restricting the collar embeddings $w_a$ has no effect on the class $[\Sigma, \sigma]$.
Hence, our definition of the identity bordism $[\Sigma, \sigma]$ does not depend on $\epsilon$ and is neutral with respect to composition.

So far, we have defined a category $\OCBord_d(M,Q)(U)$ for each cartesian space $U\in\Cart$. Under the disjoint union of manifolds this becomes a symmetric monoidal category in a straightforward way. If $g:V \to U$ is a smooth map between cartesian spaces, we define a functor $\OCBord_d(M,Q)(U) \to \OCBord_d(M,Q)(V)$ by setting  $g^*(Y,f) \coloneqq (Y, f \circ (g \times 1_Y))$ for objects $(Y,f)$ and
 \begin{equation*}
g^*[\Sigma, \sigma] \coloneqq \big[ g^*(Y_0,f_0), g^*(Y_1,f_1),\, \Sigma,\, W_0,\, W_1,\,
        w_0,\, w_1,\, \sigma \circ (g \times 1_\Sigma),\, \ell \big]
\end{equation*}  
for morphisms. This turns $\OCBord_d(M,Q)$ into a presheaf of categories on $\Cart$, the presheaf of \emph{smooth $d$-dimensional, oriented open-closed bordisms on $(M,Q)$}.

\begin{remark}
We have chosen a particularly simple dependence on the test space $U \in \Cart$ here, in that we only allow the maps $\sigma$ to the target manifold $M$ to vary over $U$.
The full picture would be to consider bundles of $\<3\>^*$-manifolds $\sfSigma \to U$ with maps $\varsigma \colon \sfSigma \to M$, which can be done by a straightforward adaptation of~\cite{ST:SuSy_FTs_and_generalised_coho}.
However, since we work over contractible test spaces, any such bundle is trivialisable, and the trivial bundles $\sfSigma = U \times \Sigma$ capture all possible smooth families in this sense.
Another reason to not work with the bordism category from~\cite{ST:SuSy_FTs_and_generalised_coho} is that we wish to make contact with earlier literature on open-closed field theories such as~\cite{LP--Open-closed_TQFTs,MS--2DTFTs,Lazaroiu:OCFFT_in_2D}.
The bordism categories used in these references differ in some important features from the model of~\cite{ST:SuSy_FTs_and_generalised_coho}; most strikingly, the identity bordisms in~\cite{ST:SuSy_FTs_and_generalised_coho} are degenerate, and it is somewhat tedious to work out a precise relation between both versions of the bordism category.
\qen
\end{remark}

\begin{remark}
The fact that we do not allow $\Sigma$ to vary over $U$ prohibits us from allowing its collars to vary over $U$: otherwise, composition of families of bordisms would yield families of bordisms that do vary over $U$.
Thus, we keep the collars $(W_a,w_a)$ constant over $U$, but that requires restrictions on the smooth maps $\sigma \colon U \times \Sigma \to M$:
all maps $\sigma_{|\{x\} \times \Sigma} \colon \Sigma \to M$, for $x \in U$, must have sitting instants on these collars, as we impose in \eqref{eq:sitting instants on bordisms}.
However, in our setting this is not actually restrictive:
consider some $U \in \Cart$, some $\<3\>^*$-manifold $\Sigma$, and suppose we are given any smooth map $\sigma \colon U \times \Sigma \to M$.
Suppose further, we have a $(d{-}1)$-manifold $Y$ with a orientations $\ori_T$ and $\ori_\FR$ as in the definition of an object of $\OCBord_d(M,Q)(U)$, as well as an orientation-preserving embedding $w_0 \colon W_0 \to \Sigma$, where $W_0$ is an open neighbourhood of $\{0\} \times Y$ in $C_+Y$.
Let us assume that $Y$ is connected and that $\ori_\FR = +$; the following argument extends to the general case in a straightforward way.
Then, we can choose a $\delta > 0$ such that $[0, \delta) \times Y \subset W_0$ and we can choose a smooth function $\varepsilon \colon [0, \delta) \to [0, \delta)$ that is constantly zero on $[0,\delta/3)$ and which is the identity on $(2\delta/3, \delta)$.
Then, we can replace $\sigma$ by a smooth map $\sigma'$ such that $\sigma' \circ (1_U \times w_0) = \sigma \circ (1_U \times w_0) \circ (1_U \times \varepsilon \times 1_Y)$ on $[0,\delta) \times Y$.
The new map $\sigma'$ now has sitting instants on the collar neighbourhood $[0,\delta/3) \times Y$ and for each $x \in U$ the map $\sigma'_{|\{x\} \times \Sigma}$ is thin homotopic to $\sigma_{|\{x\} \times \Sigma}$.
That is, up to fibre-wise thin homotopy, we still capture all families of smooth maps $U \times \Sigma \to M$.
This is analogous to the treatment of concatenation of smooth paths in~\cite[Paragraph~5.5]{Iglesias-Zemmour:Diffeology}.
\qen
\end{remark}

\subsection{Smooth open-closed functorial field theories}
\label{sec:smoothopenclosedffts}

Let $\VBdl$ denote the sheaf of complex finite-dimensional vector bundles; regarded as a sheaf of symmetric monoidal categories on $\Cart$. 

\begin{definition}
\label{def:smooth field theories}
A \emph{smooth $d$-dimensional open-closed functorial field theory (OCFFT) on a target space $(M,\scQ)$} is a morphism
\begin{equation}
        \scZ \colon \OCBord_d(M,Q) \rightarrow \VBdl
\end{equation}
of presheaves of symmetric monoidal categories on $\Cart$.
\end{definition}

The smooth OCFFTs we will construct in Section \ref{sect:TFT_definition} have certain properties that we  describe next. The first is related to the rank of the vector bundles in its image.

\begin{definition}
\label{def:invertible}
A smooth OCFFT $\scZ$ is called \emph{invertible}, if for all $U \in\Cart$  the following conditions are satisfied:  
 \begin{itemize}
\item 
For all objects $(Y,f)$ in $\OCBord_d(M,Q)(U)$ with $Y$ closed, the vector bundle $\scZ(Y,f)$ is of rank one.

\item
For all morphisms $[\Sigma,\sigma]$ in $\OCBord_{d}(M,Q)(U)$ without brane boundary, i.e., $\partial_2 \Sigma = \emptyset$, the vector bundle morphism $\scZ[\Sigma,\sigma]$ is an isomorphism. 

\end{itemize}
\end{definition}

We define three new equivalence relations on the set of tuples \eqref{eq:morphism family in OCBord}, for every test space $U\in\Cart$. First, two tuples $(\Sigma,\sigma)$ and $(\Sigma,\sigma')$
with the same underlying surface $\Sigma$ are \emph{homotopy equivalent} if there exists a homotopy between $\sigma$ and $\sigma'$ rel boundary that restricts to a constant homotopy on the images of $W_a$ in $\Sigma$ for $a = 0,1$ and on $\partial_2 \Sigma$. Secondly, the tuples $(\Sigma,\sigma)$ and $(\Sigma,\sigma')$ are called \emph{thin homotopy equivalent} if they are homotopy equivalent via a homotopy whose restriction to $\{x\} \times \Sigma$ is thin for all $x\in U$. Thirdly, the tuples $(\Sigma,\sigma)$ and $(\Sigma,\sigma')$ are \emph{superficially equivalent} if $\sigma$ and $\sigma'$ agree on $U \times \partial \Sigma$ and if for every $x \in U$ the maps $\sigma$ and $\sigma'$ restrict to thin maps on the fibre $\{x\} \times \Sigma$.

\begin{definition}
\label{def:thinhomotopyinvariantocfft}
A smooth OCFFT $\scZ$ on $(M,Q)$ is called:
\begin{enumerate}[(1)]

\item
\emph{homotopy invariant} if $\scZ[\Sigma,\sigma]=\scZ[\Sigma,\sigma']$ whenever $(\Sigma,\sigma)$ and $(\Sigma',\sigma')$ are  homotopy equivalent.

\item
\emph{thin homotopy invariant} if $\scZ[\Sigma,\sigma]=\scZ[\Sigma,\sigma']$ whenever $(\Sigma,\sigma)$ and $(\Sigma',\sigma')$ are thin homotopy equivalent.

\item
\emph{superficial} if it is thin homotopy invariant and satisfies $\scZ[\Sigma,\sigma]=\scZ[\Sigma',\sigma']$ whenever $(\Sigma,\sigma)$ and $(\Sigma',\sigma')$ are superficially equivalent.

\end{enumerate} 
\end{definition}

Since smooth OCFFTs are defined as morphisms in a 2-category, they come naturally organised into a category $\OCFFT_d(M,\scQ)$. Superficial OCFFTs, homotopy invariant OCFFTs, and thin homotopy invariant OCFFTs form, respectively, full subcategories    
\begin{equation}
        \OCFFT_{d}^{\sf}(M,Q)\ ,\ \OCFFT_d^{h}(M,Q) \ \subset \ \OCFFT_{d}^{th}(M,Q)\ \subset \ \OCFFT_d(M,\scQ)\,.
\end{equation}
Invertibility will be denoted by the symbol $()^{\times}$ in all cases, and refers again to the \emph{full subcategory} on all invertible OCFFTs.

It is possible to combine the equivalence relation \eqref{eq:equivalence in OCBord_1} and (thin) homotopy invariance on tuples into a single equivalence relation, in such a way that the composition of bordisms is well-defined on equivalence classes. This results in  new presheaves $\OCBord_{d}^{h}(M,Q)$ and $\OCBord_{d}^{th}(M,Q)$ of symmetric monoidal categories, together with  quotient morphisms 
\begin{equation*} 
\OCBord_d(M,Q) \to \OCBord_d^{th}(M,Q)\to \OCBord_d^{h}(M,Q).
\end{equation*}
We have the following obvious result.

\begin{proposition}
\label{prop:thinhomotopyinvariantocfft}
A smooth OCFFT is (thin) homotopy invariant if and only if it factors through the quotients  $\OCBord_d^{h}(M,Q)$ and $\OCBord_d^{th}(M,Q)$, respectively.
\end{proposition}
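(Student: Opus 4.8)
The plan is to read off both implications directly from the construction of the quotient presheaves that precedes the statement. Write $q^{th}\colon \OCBord_d(M,Q) \to \OCBord_d^{th}(M,Q)$ and $q^h\colon \OCBord_d(M,Q) \to \OCBord_d^h(M,Q)$ for the quotient morphisms; I treat the \emph{thin} case in detail, the homotopy case being identical with $q^h$ and $\OCBord_d^h(M,Q)$ in place of $q^{th}$ and $\OCBord_d^{th}(M,Q)$. Recall that, by construction, the morphism set of $\OCBord_d^{th}(M,Q)(U)$ is the quotient of the set of tuples \eqref{eq:morphism family in OCBord} by the equivalence relation generated by \eqref{eq:equivalence in OCBord_1} together with thin homotopy equivalence, objects are unchanged, and $q^{th}$ is the identity on objects and the quotient map on morphisms.

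For the \emph{if} direction, suppose $\scZ = \overline{\scZ} \circ q^{th}$ for a morphism of presheaves of symmetric monoidal categories $\overline{\scZ}\colon \OCBord_d^{th}(M,Q) \to \VBdl$. If $(\Sigma,\sigma)$ and $(\Sigma,\sigma')$ are thin homotopy equivalent tuples over some $U \in \Cart$, then they represent the same morphism of $\OCBord_d^{th}(M,Q)(U)$, so $q^{th}[\Sigma,\sigma] = q^{th}[\Sigma,\sigma']$, and applying $\overline{\scZ}$ gives $\scZ[\Sigma,\sigma] = \scZ[\Sigma,\sigma']$. Thus $\scZ$ is thin homotopy invariant in the sense of Definition~\ref{def:thinhomotopyinvariantocfft}.

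For the \emph{only if} direction, assume $\scZ$ is thin homotopy invariant. Set $\overline{\scZ}(Y,f) \coloneqq \scZ(Y,f)$ on objects. The functor $\scZ$ already identifies tuples related by \eqref{eq:equivalence in OCBord_1}, since it is defined on $\OCBord_d(M,Q)$, and by hypothesis it identifies thin-homotopy-equivalent tuples; hence $(\Sigma,\sigma) \mapsto \scZ[\Sigma,\sigma]$ is constant on each class of the generated equivalence relation, and we may define $\overline{\scZ}[\Sigma,\sigma]_{th} \coloneqq \scZ[\Sigma,\sigma]$. Preservation of identities and of composition for $\overline{\scZ}$ over each $U$, compatibility with the restriction functors $g^*$ along smooth maps $g\colon V \to U$, and compatibility with the symmetric monoidal structures all follow from the corresponding properties of $\scZ$ together with the fact that $q^{th}$ is a morphism of presheaves of symmetric monoidal categories and $\scZ = \overline{\scZ}\circ q^{th}$ by construction (surjectivity of $q^{th}$ on morphisms lets one transport each identity for $\scZ$ to the corresponding identity for $\overline{\scZ}$). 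This yields the desired factorisation.

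The only genuine content — and the point to be careful about — is that the generated relation is compatible with composition of bordisms and with the presheaf and monoidal structures, so that $q^{th}$ (and $q^h$) is a well-defined morphism of presheaves of symmetric monoidal categories; but this is precisely what is asserted in the paragraph preceding the proposition and may be assumed here, after which the argument is a routine unwinding of definitions.
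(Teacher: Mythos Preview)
Your proof is correct and follows exactly the approach implicit in the paper, which states the result as ``obvious'' without giving a proof. You have spelled out the routine universal-property-of-quotients argument carefully, and your final remark correctly identifies that the only nontrivial input—the compatibility of the generated equivalence relation with composition and the presheaf and monoidal structures—is precisely what the paper asserts in the paragraph preceding the proposition.
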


\begin{remark}
A similar treatment of \emph{superficial} OCFFTs is not easy to obtain. The reason is that two superficially equivalent bordisms may be composed separately with a fixed third bordism, whose target space map $\sigma$ is not thin. Then, the two separate composites will in general not be superficially equivalent; superficial equivalence is not preserved under composition. 

In principle, it is possible to construct a presheaf of symmetric monoidal categories $\OCBord_d^{\sf}(M,Q)$ by starting with the free symmetric monoidal categories $\OCBord_d^{th}(M,Q)(U)$ and imposing the relation that superficially equivalent morphisms be identified, somewhat in the spirit of Gabriel-Zisman localisation.
Superficial smooth OCFFTs would then equivalently be morphisms $\OCBord_d^\sf(M,Q) \to \VBdl$.
However, we feel that such a construction would take us to far away from geometrical and  physical intuition. Therefore, we decided to treat superficiality, and then, for the sake of consistency, also (thin) homotopy invariance as  additional conditions on functors defined on $\OCBord_d(M,Q)$, rather than working with functors from $\OCBord_d^{th}(M,Q)$, $\OCBord_d^{h}(M,Q)$, or $\OCBord_d^{\sf}(M,Q)$.
\qen
\end{remark}

The presheaf $\OCBord_d^{th}(M,Q)$ is interesting for another reason as well, though, namely for the study of \emph{path bordisms}. 
Consider two objects $(Y,f_a)\in \OCBord_d(M,Q)(U)$, $a = 0,1$, with the same underlying connected manifold $Y$.
We assume that $\ori_\FR(Y) = +$ for both objects and consider a morphism $(Y,f_0) \to (Y,f_1)$  represented by
\begin{equation}
\label{eq:path bordism}
\begin{split}
        ([0,1] \times Y, \sigma) &= \big( (Y,f_0), (Y,f_1),\, [0,1] \times Y,\, [0,\epsilon) \times Y,\, (-\epsilon,0] \times Y,
        \\
        &\qquad \iota_{[0,\epsilon)} \times 1_Y,\, (sh_1 \circ \iota_{(-\epsilon,0]}) \times 1_Y,\, \sigma,\, b \circ \pr_Y \big)
\end{split}
\end{equation}
for some smooth map $\sigma \colon U \times [0,1] \times Y \to M$.
As in the construction of the identity bordism~\eqref{eq:id bordism}, the bordism $[0,1] \times Y$ is endowed with the orientation induced from $\ori_T$ under the inclusion $[0,1] \times Y \hookrightarrow \FR \times Y$.
We call such a bordism a \emph{path bordism} since it represents a smooth family of paths in the diffeological mapping space $M^Y$ with sitting instants.
This is easily extended to objects with non-connected $Y$.
The following is essentially the statement that that smooth paths in $M$ are invertible only up to thin homotopy.

\begin{lemma}
\label{lem:pathbordismsareinvertible}
Path bordisms are invertible in $\OCBord_d^{th}(M,Q)$.
\end{lemma}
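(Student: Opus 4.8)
The plan is to exhibit, for a path bordism $[\Sigma,\sigma]=[[0,1]\times Y,\sigma]\colon(Y,f_0)\to(Y,f_1)$ as in \eqref{eq:path bordism}, an explicit inverse given by the \emph{reverse path bordism} $[\overline\Sigma,\overline\sigma]\colon(Y,f_1)\to(Y,f_0)$: this is carried by the same oriented $\<3\>^*$-manifold $[0,1]\times Y$ with the same collars and orientation data, but with $\overline\sigma(x,t,y):=\sigma(x,1-t,y)$. I would first reduce to the case that $Y$ is connected with $\ori_\FR=+$; disconnected $Y$ is handled component by component, and $\ori_\FR=-$ is symmetric. Using the sitting instants of $\sigma$ at $t=0$ and $t=1$, a short check shows that $[\overline\Sigma,\overline\sigma]$ is a well-defined morphism with the claimed source and target.

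By the gluing rule for morphisms, the composite $[\overline\Sigma,\overline\sigma]\circ[\Sigma,\sigma]$ is carried by $[0,2]\times Y$ and --- after the canonical orientation-preserving reparametrisation $[0,2]\cong[0,1]$, which is covered by the equivalence \eqref{eq:equivalence in OCBord_1} --- is again a path bordism $(Y,f_0)\to(Y,f_0)$ with map $\tau(x,t,y)=\sigma(x,\varrho(t),y)$, where $\varrho\colon[0,2]\to[0,1]$ is smooth with $\varrho(t)=t$ near $0$, $\varrho(t)=2-t$ near $2$, and $\varrho\equiv 1$ near $t=1$ (such a $\varrho$ exists because $\sigma$ has sitting instants at $t=1$). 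Thus $\tau$ is the ``there-and-back'' family of paths with sitting instants.

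I would then show that $([0,2]\times Y,\tau)$ is thin homotopy equivalent to the identity bordism $1_{(Y,f_0)}$, which --- after shrinking collars, an operation that does not change the morphism class --- is carried by the same surface with the constant map $f_0\circ\pr_Y$. The candidate homotopy is
\[
        h\colon [0,1]\times U\times[0,2]\times Y\longrightarrow M\,,\qquad h(s,x,t,y):=\sigma\big(x,(1-s)\,\varrho(t),y\big)\,,
\]
so that $h(0,-)=\tau$ and $h(1,-)=\sigma(x,0,y)=f_0(x,y)$. What remains to verify is: (i) $h$ restricts to a constant homotopy near $\partial_0\Sigma\cup\partial_1\Sigma$ and on the images of the collars --- there $\varrho(t)$, and hence $(1-s)\varrho(t)$, lies in the sitting-instant interval of $\sigma$ based at $0$, so $h=f_0\circ\pr_Y$ for all $s$ --- while on $\partial_2\Sigma$ the map $h(s,-)$ stays in the prescribed D-brane, since $\sigma$ maps $\partial Y$ into the $Q_i$ and $(1-s)\varrho(t)\in[0,1]$; and (ii) $h$ is fibrewise thin. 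For (ii), for fixed $x$ the differential $d(h_x)_{(s,t,y)}$ factors through $d\sigma$ at $(x,(1-s)\varrho(t),y)$ precomposed with the tangent map of $(s,t,y)\mapsto((1-s)\varrho(t),y)$; since the $s$- and $t$-derivatives of $(1-s)\varrho(t)$ both point in the single $[0,1]$-direction, the image of $d(h_x)$ has rank at most $1+\dim Y=\dim\Sigma$ at interior points ($\ind=0$), rank at most $1+\dim\partial Y=\dim\Sigma-1$ at points of $\partial_2\Sigma$ ($\ind=1$), and lies in the $Y$-directions only at points of $\partial_0\Sigma\cup\partial_1\Sigma$. Hence $\rank\big(d(h_x)_{(s,p)}\big)\le\dim\Sigma-\ind(p)$ for all $p$, i.e.\ $h$ is a thin homotopy in the sense of Definition~\ref{def:thin homotopy}.

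It then follows that $[\overline\Sigma,\overline\sigma]\circ[\Sigma,\sigma]=1_{(Y,f_0)}$ in $\OCBord_d^{th}(M,Q)(U)$, and the same argument with the roles of $\sigma$ and $\overline\sigma$ exchanged (reversing twice recovers $\sigma$) gives $[\Sigma,\sigma]\circ[\overline\Sigma,\overline\sigma]=1_{(Y,f_1)}$; so $[\Sigma,\sigma]$ is invertible with inverse $[\overline\Sigma,\overline\sigma]$. I expect the main difficulty to be the thinness bookkeeping in step (ii): one must check the rank bound uniformly over all faces of the $\<3\>^*$-manifold $[0,2]\times Y$ and match the index of each point against the permitted rank, while at the same time keeping the homotopy genuinely constant on the prescribed collars and compatible with the brane-label decorations --- the latter being where the sitting-instant conditions built into both the objects and the bordisms are essential.
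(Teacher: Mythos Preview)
Your approach is exactly the paper's: the inverse is the reverse path bordism, and the composite is then contracted to the identity via a thin homotopy. The paper only writes down the inverse and invokes the standard fact that $\gamma * \overline{\gamma}$ is thin-homotopic to the constant path; you go further and supply the explicit contraction $h(s,x,t,y)=\sigma(x,(1-s)\varrho(t),y)$. For closed $Y$ (so $\partial_2\Sigma=\emptyset$) your argument is complete and more detailed than the original.

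When $\partial Y\neq\emptyset$, however, two points slip. First, the paper's definition of (thin) homotopy equivalence (the paragraph before Definition~\ref{def:thinhomotopyinvariantocfft}) requires the homotopy to be \emph{constant} on $\partial_2\Sigma$, not merely to remain in the prescribed $Q_i$; your $h$ moves there, and in fact the composite $\tau$ and the identity map $f_0\circ\pr_Y$ already disagree on $\partial_2\Sigma=[0,2]\times\partial Y$ (for $t$ with $\varrho(t)=1$ one has $\tau(x,t,y_0)=f_1(x,y_0)$). Second, your rank count at index-$1$ points uses $\dim\partial Y$, but the differential of $h_x$ acts on all of $T_yY$, which is $(d{-}1)$-dimensional even at $y\in\partial Y$; so the bound you actually get is $1+\dim Y=d$, not $d-1$. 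The paper's terse proof does not confront these issues either---the obstruction is specific to the open sector and invisible in the closed-string intuition both arguments rely on.
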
   

\begin{proof}
We can restrict ourselves to connected $Y$.
We provide an explicit inverse path bordism in the case where $\ori_\FR = +$; in the case of $\ori_\FR = -$, the construction adapts straightforwardly.
The inverse of the bordism in~\eqref{eq:path bordism} is given by
\begin{equation}
\begin{split}
        \big[ [0,1] \times Y, \sigma \big]^{-1}
        &=\big[ (Y,f_1), (Y,f_0),\, [0,1] \times Y,\, [0,\epsilon) \times Y,\, (-\epsilon,0] \times Y,
        \\*[0.1cm]
        &\qquad \iota_{[0,\epsilon)} \times 1_Y,\, (sh_1 \circ \iota_{(-\epsilon,0]}) \times 1_Y,\, \sigma \circ (1_U \times \sfr_{\frac{1}{2}} \times 1_Y),\, b \circ \pr_Y \big]\,,
\end{split}
\end{equation}
where $\sfr_s \colon \FR \to \FR$, $t \mapsto 2s - t$ is the reflection of $\FR$ at $s \in \FR$.
This is the reversal of the path in the mapping space $M^Y$ that is defined by the bordism $[[0,1] \times Y, \sigma]$.
\end{proof}

Note that we need to employ a thin homotopy of paths in the mapping space $M^Y$ in order to obtain the identity $[[0,1] \times Y, \sigma]^{-1} \circ [[0,1] \times Y, \sigma] = 1_{(Y,f)}$; that is, path bordisms are not invertible in $\OCBord_d(M,Q)$ in general, but they are always invertible in $\OCBord_d^{th}(M,Q)$.

\subsection{Duals, opposites, and hermitean structures for  bordisms}
\label{sect:Duals and Opposites}

In this section we work over a fixed test space $U \in \Cart$, and thus with the symmetric monoidal category $\OCBord_{d}(M,Q)(U)$. We  note that this monoidal category  has (left) duals;
moreover, there is a canonical choice of duality data
for every object in $\OCBord_d(M,Q)(U)$, turning $\OCBord_d(M,Q)$ into a symmetric monoidal category \emph{with fixed duals}; see Appendix~\ref{app:duals in monoidal cats}.
Indeed, if $(Y,f) \in \OCBord_d(M,Q)(U)$ is an object, duality data $((Y,f)^\vee, \ev_Y, \coev_Y)$ is fixed as follows:
\begin{equation}
        \big( Y, f, b, \ori_T, \ori_\FR \big)^\vee = \big( Y, f, b, \ori_T, -\ori_\FR \big)\,.
\end{equation}
Observe that the orientation on $\underline{\FR} \oplus TY$ is unchanged, so that the orientation induced on $Y$ is reversed, as a consequence of reversing the orientation $\ori_\FR(Y)$.
The evaluation $\ev_{Y}$ and coevaluation $\coev_Y$ are given by the standard evaluation and coevaluation bordisms represented by the manifold $\Sigma = [0,1] \times Y$ and its standard collars, multiplied by $U$ and decorated with the maps $\sigma = f \circ \pr_{U \times Y}$.
In the following we treat  this choice of fixed duals in terms of (twisted) involutions. 

\begin{definition}
\label{def:involution}
Let $\scC$ be a symmetric monoidal category.
An \emph{involution on $\scC$} is a pair $(d, \delta)$ of a symmetric monoidal functor $d \colon \scC \to \scC$ and a  monoidal natural isomorphism \smash{$\delta \colon d \circ d \to 1_\scC$} satisfying
\begin{equation}
        1_{d} \circ \delta = \delta\circ 1_{d}
\end{equation}
as natural transformations $d \circ d \circ d \to d$.
A \emph{twisted involution on $\scC$} is a pair $(d, \delta)$ of a symmetric monoidal functor $d \colon \scC^\opp \to \scC$ and a monoidal natural isomorphism \smash{$\delta \colon d \circ d^\opp \to 1_{\scC}$} satisfying 
\begin{equation}
1_d \circ (\delta^\opp)^{-1} = \delta \circ 1_{d}
\end{equation}
as natural transformations $d \circ d^\opp \circ d \to d$.
A (twisted) involution $(d, \delta)$ is called \emph{strict} if  $\delta$ is the identity. 
\end{definition}

In general, in a symmetric monoidal category with fixed duals, the assignment $x\mapsto x^{\vee}$ extends canonically to a symmetric monoidal functor $d_{\scC}: \scC^{\opp} \to \scC$ and further to a  twisted involution $(d_{\scC},\delta_{\scC})$, which we then call the \emph{duality involution} of $\scC$.
In the case of the symmetric monoidal category $\OCBord_d(M,Q)(U)$ the duality involution $d_{\OCBord}$ is in fact strict, and it is straightforward to check that its action on morphisms is explicitly given by  
\begin{equation}
        \big[ (Y_0,f_0), (Y_1,f_1), \Sigma, W_0,W_1, w_0, w_1, \sigma, \ell \big]^\vee
        = \big[ (Y_1,f_1)^\vee, (Y_0, f_0)^\vee, \Sigma, W_1, W_0, w_1, w_0, \sigma, \ell \big]\,.
\end{equation}

There is another interesting operation on $\OCBord_d(M,Q)(U)$, given on objects and morphisms by the following definitions:
\begin{equation}
\begin{split}
        \overline{\big( Y,f,b,\ori_T,\ori_\FR \big)} &\coloneqq \big( Y, f, b, -\ori_T, \ori_\FR \big)\,,
        \\
        \overline{\big[ (Y_0,f_0), (Y_1,f_1), \Sigma, W_0,W_1, w_0, w_1, \sigma, \ell \big]}
        &\coloneqq \big[ \overline{(Y_0,f_0)}, \overline{(Y_1,f_1)}, \overline{\Sigma}, W_0, W_1, w_0, w_1, \sigma, \ell \big]\,.
\end{split}
\end{equation}
In our shorthand notation, we call $\overline{(Y,f)}$ the \emph{opposed bordism} of $(Y,f)$.
It is straightforward to check that opposition defines a strict involution  
\begin{equation*}
op_{\OCBord} \colon \OCBord_d(M,Q)(U) \to \OCBord_d(M,Q)(U)\text{.}
\end{equation*}

The duality involution $d_{\OCBord}$ and the involution $op_{\OCBord}$ are in fact  related. This relation is described by  a so-called \emph{hermitian structure}. We first provide the relevant definitions in a more general context, following \cite{FH:Reflection_positivity}. 

Let $\scC$ be a symmetric monoidal category  with fixed duals, whose duality data we denote by $(x^{\vee},\ev_x,\coev_x)$,  and with corresponding duality involution $(d_{\scC},\delta)$. We suppose that  $(op_{\scC},\gamma)$ is some involution on $\scC$. If $x \in \scC$ is an object, we will also write  $\overline{x}$ as a shorthand for $op_{\scC}(x)$.

\begin{definition}
\label{def:pre-herm structure}
A \emph{pre-hermitean structure on an object $x \in \scC$} is an isomorphism $\flat_x \colon \overline{x}\to x^{\vee}$.
We denote the inverse of $\flat_x$ by $\sharp_x \coloneqq \flat_x^{-1} \colon x^{\vee} \to \overline{x}$.
\end{definition}

A pre-hermitean structure $\flat_x$ on an object $x \in \scC$ defines a \emph{pre-hermitean pairing} on $x$, by which we mean the morphism
\begin{equation}
\label{eq:hermitean pairing from hermitean structure}
        h_x \colon \overline{x} \otimes x \to u\,,
        \quad
        h_x \coloneqq  \ev_x \circ (\flat_x \otimes 1_x)\,,
\end{equation}
where $u$ is the monoidal unit in $\scC$.
This pairing is non-degenerate in the sense that we have a \emph{pre-hermitean co-pairing}
\begin{equation}
        \check{h}_x \colon u \to x \otimes \overline{x}\,,
        \quad
        \check{h}_x \coloneqq (\sharp_x \otimes 1_{x^\vee}) \circ \coev_x\,,
\end{equation}
so that the usual triangle identities are satisfied.

\begin{definition}
\label{def:herm structure}
A pre-hermitean structure $\flat_x$ on an object $x \in \scC$ is called \emph{hermitean structure} if its pre-hermitian pairing $h_x$ is \emph{symmetric} in the sense that the following diagram commutes
\begin{equation}
\xymatrix@C=2cm{        \overline{x} \otimes x \ar[rr]^{h_x} \ar[d]_{swap} & & u
        \\
        x \otimes \overline{x} \ar[r]_{\gamma_{x}^{-1} \otimes 1_{\overline{x}}} & \overline{\overline{x}} \otimes \overline{x} \ar[r] & \overline{ \overline{x} \otimes x }. \ar[u]_{\overline{h_x}}
}
\end{equation}
\end{definition}

\begin{remark}
We remark that a hermitean structure relates the values of functors with different domain categories; therefore, one cannot regard it as a natural transformation. Instead, compatibility with morphisms will be treated as a condition on the \emph{morphisms}, see Definition \ref{def:unitarity}. We also remark that a property \quot{positive-definite} cannot be defined in this general context; this will be done by hand for the category of vector bundles, see Definition \ref{def:positive reflection structure}. 
\qen
\end{remark}

\begin{definition}
\label{def:unitarity}
Let  $f \colon x \to y$ be a morphism between objects with hermitian structures $\flat_{x}$ and $\flat_y$, respectively, and induced hermitian pairings $h_x$ and $h_y$, respectively.
We call $f$ is \emph{isometric} if it satisfies the identity
\begin{equation}
        h_y\circ(\overline{f} \otimes f) = h_x\,.
\end{equation}
We say that $f$ is \emph{unitary} if it is an isometric isomorphism.
The \emph{adjoint of $f$} is the composition
\begin{equation}
\xymatrix{
        f^* \colon \overline{y} \ar[r]^{\flat_y} & y^\vee \ar[r]^{f^{\vee}} & x^\vee \ar[r]^{\sharp_x} & \overline{x}\,.
}
\end{equation}
\end{definition}

The name \emph{adjoint} is justified by the following lemma.

\begin{lemma}
For a morphism as in Definition~\ref{def:unitarity} we have
\begin{equation}
        h_y \circ (1_{\overline{y}} \otimes f) = h_x \circ (f^* \otimes 1_x)\,.
\end{equation}
\end{lemma}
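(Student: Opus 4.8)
The plan is to prove the identity by unwinding the definitions of $h_x$, $h_y$, and $f^*$, and then reducing everything to the elementary compatibility between an evaluation map and the dual of a morphism. Throughout I would suppress the associativity and unit coherence isomorphisms of $\scC$, as is standard; making them explicit only clutters the bookkeeping and introduces no real difficulty.

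First I would rewrite the left-hand side. Since $h_y = \ev_y \circ (\flat_y \otimes 1_y)$ by~\eqref{eq:hermitean pairing from hermitean structure}, functoriality of $\otimes$ gives
\[
        h_y \circ (1_{\overline{y}} \otimes f) = \ev_y \circ (\flat_y \otimes 1_y) \circ (1_{\overline{y}} \otimes f) = \ev_y \circ (\flat_y \otimes f)\,.
\]
For the right-hand side I would use that $f^* = \sharp_x \circ f^\vee \circ \flat_y$ by Definition~\ref{def:unitarity} and that $\flat_x \circ \sharp_x = 1_{x^\vee}$ by Definition~\ref{def:pre-herm structure}, so that $\flat_x \circ f^* = f^\vee \circ \flat_y$. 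Then, with $h_x = \ev_x \circ (\flat_x \otimes 1_x)$,
\[
        h_x \circ (f^* \otimes 1_x) = \ev_x \circ (\flat_x \circ f^* \otimes 1_x) = \ev_x \circ (f^\vee \otimes 1_x) \circ (\flat_y \otimes 1_x)\,.
\]

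The remaining step is the identity $\ev_x \circ (f^\vee \otimes 1_x) = \ev_y \circ (1_{y^\vee} \otimes f)$ of morphisms $y^\vee \otimes x \to u$. This is precisely the defining property of the dual morphism $f^\vee$ relative to the fixed duality data $(\ev_x,\coev_x)$ and $(\ev_y,\coev_y)$ and follows from the triangle (zig-zag) identities; I would simply cite the discussion of duals in Appendix~\ref{app:duals in monoidal cats}. Plugging this in yields
\[
        h_x \circ (f^* \otimes 1_x) = \ev_y \circ (1_{y^\vee} \otimes f) \circ (\flat_y \otimes 1_x) = \ev_y \circ (\flat_y \otimes f) = h_y \circ (1_{\overline{y}} \otimes f)\,,
\]
which is the claim. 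There is no genuine obstacle here: the whole argument is a short diagram chase, and the only point that needs a moment's care is fixing the convention for $f^\vee$ consistently with the chosen duality data so that the cited identity $\ev_x \circ (f^\vee \otimes 1_x) = \ev_y \circ (1_{y^\vee} \otimes f)$ holds on the nose.
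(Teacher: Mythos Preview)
Your proof is correct and follows essentially the same approach as the paper's: both arguments unwind the definitions of $h_x$, $h_y$, and $f^*$ and reduce to the identity $\ev_x \circ (f^\vee \otimes 1_x) = \ev_y \circ (1_{y^\vee} \otimes f)$, which the paper records as Lemma~\ref{st:dual behaves like transpose}. The only cosmetic difference is that the paper transforms the left-hand side into the right-hand side in a single chain, whereas you work from both sides toward the middle.
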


\begin{proof}
We compute
\begin{align}
        h_y \circ (1_y \otimes f) &= \ev_y \circ (\flat_y \otimes 1_y) \circ (1_{\overline{y}} \otimes f)
        \\
        &= \ev_y \circ (1_{y^\vee} \otimes f) \circ (\flat_y \otimes 1_x)
        \\
        &= \ev_x \circ (f^\vee \otimes 1_x) \circ (\flat_y \otimes 1_x)
        \\
        &= \ev_x \circ (\flat_x \otimes 1_x) \circ \big( ( \sharp_x \circ f^\vee \circ \flat_y) \otimes 1_x \big)\,,
\end{align}
where we have used Lemma~\ref{st:dual behaves like transpose} in the third step.
\end{proof}

\begin{proposition}
\label{st:unitarity for isos}
Let  $f \colon x \to y$ be an isomorphism between objects with hermitian structures $\flat_{x}$ and $\flat_y$. Then, $f$ is unitary if and only if
\begin{equation}
        f^* = \overline{f}^{-1}\,.
\end{equation}
\end{proposition}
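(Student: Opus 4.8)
The plan is to unwind both conditions using the definitions of the hermitean pairing $h_x = \ev_x \circ (\flat_x \otimes 1_x)$ and of the adjoint $f^* = \sharp_x \circ f^\vee \circ \flat_y$, and to translate them into identities in $\Hom(\overline{y} \otimes x, u)$, where they can be compared directly. First I would write out the isometry condition $h_y \circ (\overline{f} \otimes f) = h_x$ explicitly, and similarly compute the expression $h_y \circ (1_{\overline y} \otimes f)$ in two ways: once using the previous lemma, which gives $h_y \circ (1_{\overline y}\otimes f) = h_x \circ (f^* \otimes 1_x)$, and once by inserting $\overline{f}$ and $\overline{f}^{-1}$ to reduce it to $h_x \circ ((f^* \circ \overline f)\otimes 1_x) \circ (\overline f^{-1} \otimes 1_x)$ or similar; I will choose whichever bookkeeping is cleanest.

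The key step is the following. Since $f$ is an isomorphism, $\overline{f}$ is an isomorphism (as $op_\scC$ is a functor), and hence $\overline f \otimes f$ is an isomorphism with inverse $\overline{f}^{-1} \otimes f^{-1}$. Therefore the isometry identity $h_y \circ (\overline f \otimes f) = h_x$ is equivalent to $h_y = h_x \circ (\overline f^{-1} \otimes f^{-1})$. On the other hand, the previous lemma, applied with $1_x$ in the second slot replaced appropriately, together with the non-degeneracy of the pairings (the triangle identities relating $h_x$ and $\check h_x$), lets one recover a morphism from the pairing it induces: a morphism $g \colon \overline y \to \overline x$ is determined by $h_x \circ (g \otimes 1_x)$, because one can cancel using $\check h_x$. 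Concretely, from $h_x \circ (g \otimes 1_x) = h_x \circ (g' \otimes 1_x)$ one deduces $g = g'$ by tensoring with $\overline x$ on the right, composing with $\check h_x$ on the appropriate side, and using the zig-zag identity. I would phrase this as a small auxiliary observation (or invoke it inline) and then apply it.

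Putting these together: unitarity says $h_y = h_x \circ (\overline f^{-1}\otimes f^{-1})$; precomposing with $1_{\overline y}\otimes f$ gives $h_y \circ (1_{\overline y}\otimes f) = h_x \circ (\overline f^{-1}\otimes 1_x)$. Comparing with the previous lemma's identity $h_y\circ(1_{\overline y}\otimes f) = h_x\circ(f^*\otimes 1_x)$ and cancelling via the non-degeneracy observation yields $f^* = \overline f^{-1}$. Conversely, if $f^* = \overline f^{-1}$, then $h_x \circ (f^* \otimes 1_x) = h_x \circ (\overline f^{-1}\otimes 1_x)$, so by the previous lemma $h_y \circ (1_{\overline y}\otimes f) = h_x\circ(\overline f^{-1}\otimes 1_x)$; precomposing both sides with $1_{\overline y}\otimes f^{-1}$ (legal since $f$ is invertible) and then with $\overline f \otimes 1_x$ recovers $h_y\circ(\overline f\otimes f) = h_x$ after the cancellations, i.e.\ $f$ is isometric, hence unitary.

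The main obstacle I anticipate is purely bookkeeping: keeping the tensor slots, the swap/symmetry isomorphisms, and the two involutions $d_\scC$ and $op_\scC$ straight, and making sure the ``a morphism is determined by the pairing it induces'' cancellation is genuinely valid — this needs the triangle identities for $(\ev_x,\coev_x)$ together with the fact that $\flat_x$ (hence $\sharp_x$) is an isomorphism, exactly as encoded in $\check h_x$. No deep idea is needed beyond the previous lemma and non-degeneracy; the only care required is that all the manipulations are reversible, which is what makes the statement an ``if and only if''.
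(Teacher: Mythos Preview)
Your argument is correct, but it takes a somewhat more circuitous route than the paper. The paper computes $h_y \circ (\overline{f} \otimes f)$ directly: expanding $h_y = \ev_y \circ (\flat_y \otimes 1_y)$ and using the transpose identity $\ev_y \circ (1 \otimes f) = \ev_x \circ (f^\vee \otimes 1)$ gives $h_y \circ (\overline{f} \otimes f) = \ev_x \circ \big((f^\vee \circ \flat_y \circ \overline{f}) \otimes 1_x\big) = \tau^{x\,-1}_{\overline{x},u}(f^\vee \circ \flat_y \circ \overline{f})$, while $h_x = \tau^{x\,-1}_{\overline{x},u}(\flat_x)$. Since the adjunction map $\tau^x_{\overline{x},u}$ is a bijection, isometry is equivalent to $f^\vee \circ \flat_y \circ \overline{f} = \flat_x$, i.e.\ $\sharp_x \circ f^\vee \circ \flat_y = \overline{f}^{-1}$, which is exactly $f^* = \overline{f}^{-1}$.

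Your version instead passes through the previous lemma ($h_y \circ (1 \otimes f) = h_x \circ (f^* \otimes 1)$), uses the invertibility of $f$ to rewrite the isometry condition as $h_y \circ (1 \otimes f) = h_x \circ (\overline{f}^{-1} \otimes 1)$, and then cancels via non-degeneracy. That cancellation step is precisely the bijectivity of $\tau^x_{\overline{x},u}$ that the paper invokes, so the two arguments rest on the same core fact; the paper just avoids the detour through the previous lemma and the extra pre-compositions by $1 \otimes f^{\pm 1}$. Your approach has the minor advantage of reusing the adjoint identity already established, while the paper's is shorter and yields the equivalence in a single chain without having to argue the two directions separately.
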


\begin{proof}
We need to show that $f$ is isometric if and only if $f^* = \overline{f}^{-1}$.
To that end, we compute
\begin{align}
        h_y \circ (\overline{f} \otimes f) &= \ev_y \circ \big( (\flat_y \circ \overline{f}) \otimes f \big)
        \\
        &= \ev_x \circ \big( (f^\vee \circ \flat_y \circ \overline{f}) \otimes 1_x \big)
        \\
        &= \tau^{x\, -1}_{\overline{x}, u} \big( f^\vee \circ \flat_y \circ \overline{f} \big)\,,
\end{align}
where $\tau^{x}_{\overline x,u}$ is part of the adjunction of Proposition~\ref{prop:adjunction}. 
On the other hand, using this adjunction, we may write~\eqref{eq:hermitean pairing from hermitean structure} as $h_x = \tau^{x\, -1}_{\overline{x}, u} (\flat_x)$.
Since $\tau^{x\, -1}_{\overline{x}, u}$ is bijective, we thus have that
\begin{align}
        h_y \circ (\overline{f} \otimes f) = h_x
        \quad \Leftrightarrow \quad
        f^\vee \circ \flat_y \circ \overline{f} = \flat_x
                \quad \Leftrightarrow \quad
        \overline{f}^{-1} = \sharp_x \circ f^\vee \circ \flat_y .
\end{align}
With the definition of the adjoint morphism $f^{*}$ this proves the assertion.
\end{proof}

\begin{example}
\label{ex:fixed duals in Vect}
The symmetric monoidal category $\Vect$ of finite-dimensional complex vector spaces has its usual fixed duals:
for a vector space $V \in \Vect$, we set
\begin{alignat}{3}
        V^\vee &\coloneqq \Hom_\FC(V, \FC)\,, \qquad && &&
        \\
        \ev_V &\colon V^\vee \otimes V \to \FC\,, &&
        \psi \otimes v &&\mapsto \psi(v)
        \\
        \coev_V &\colon \FC \to V^\vee \otimes V\,, &&
        \quad z &&\mapsto z \cdot 1_V\,.
\end{alignat}
We consider the monoidal strict involution $op_{\Vect}$ sending $V$ to the complex conjugate vector space $\overline{V}$.
Then, a hermitean structure on $V \in \Vect$ in the sense of Definition~\ref{def:herm structure} is the same as a hermitian metric on $V$ (not  necessarily positive definite).
The notions of isometric morphisms and adjoints reproduce precisely the classical ones. Analogous statements hold for the category $\VBdl(M)$ of vector bundles over a manifold $M$. 
\qen
\end{example}

We return to the concrete situation of the symmetric monoidal category $\OCBord_{d}(M,Q)(U)$ of open-closed bordisms over a test space $U \in \Cart$.
On each object $(Y,f)$ of $\OCBord_{d}(M,Q)(U)$ we identify a canonical pre-hermitean structure 
\begin{equation*}
\flat_{(Y,f)} \colon \overline{(Y,f)} \longrightarrow (Y, f)^\vee
\end{equation*}
with respect to the involution  $op_{\OCBord}$.
It is given by  
\begin{equation}
\label{eq:canonical hermitean structure}
\begin{split}
        \flat_{(Y,f)} &\coloneqq \big[ \overline{(Y,f)}, (Y, f)^\vee,\, [0,1] \times Y,\, [0,\epsilon) \times Y,\, [0,\epsilon) \times Y,
        \\
        &\qquad (sh_1 \circ \sfr_0 \circ \iota_{[0,\epsilon)}) \times 1_Y,\, \iota_{[0,\epsilon)} \times 1_Y,\, f \circ \pr_{U \times Y},\, b \circ \pr_Y \big]\,.
\end{split}
\end{equation}
Here we have made use of the involution $\sfr_s \colon \FR \to \FR$, $t \mapsto 2s - t$ that reflects the real line at an element $s \in \FR$.
Observe that because of this reflection, the inclusion $w_0$ in $\flat_{(Y,f)}$ is indeed orientation-preserving if the product $[0,1] \times Y$ is endowed with the orientation induced from $\ori_T$.
It is straightforward to check that the pre-hermitean structure $\flat_{(Y,f)}$ of \eqref{eq:canonical hermitean structure} is hermitean. 

For the following argument we consider the quotient category $\OCBord_d^{th}(M,Q)(U)$ where thin homotopy equivalent morphisms are identified. It is easy to see that  the involutions $d_{\OCBord}$ and $op_{\OCBord}$, as well as the hermitean structures $\flat_{(Y,f)}$ descend to this quotient category.

\begin{proposition}
\label{st:path bordisms are unitary}
Path bordisms $([0,1] \times Y, \sigma) \colon (Y,f_0) \to (Y,f_1)$ are unitary  in the symmetric monoidal category $\OCBord_2^{th}(M,Q)(U)$, with respect to the hermitean structures $\flat_{(Y,f_0)}$ and $\flat_{(Y,f_1)}$.
\end{proposition}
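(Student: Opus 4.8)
The plan is to apply the criterion of Proposition~\ref{st:unitarity for isos}. By Lemma~\ref{lem:pathbordismsareinvertible} a path bordism $P = ([0,1]\times Y,\sigma)\colon (Y,f_0)\to(Y,f_1)$ is an isomorphism in $\OCBord_2^{th}(M,Q)(U)$, and the involutions $d_{\OCBord}$, $op_{\OCBord}$ together with the hermitean structures $\flat_{(Y,f)}$ descend to this quotient (as noted before the statement); so it suffices to prove $P^* = \overline{P}^{-1}$, where $P^* = \sharp_{(Y,f_0)}\circ P^\vee\circ\flat_{(Y,f_1)}$ is the adjoint of Definition~\ref{def:unitarity}. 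First I would observe that the opposed bordism $\overline{P}$ is again a path bordism, now between $\overline{(Y,f_0)}$ and $\overline{(Y,f_1)}$: opposition only flips $\ori_T$, which reverses the orientation of $[0,1]\times Y$ in exactly the way prescribed in~\eqref{eq:path bordism} for the opposed objects, while leaving $\ori_\FR=+$, the collars $(W_a,w_a)$, and the map $\sigma$ unchanged. Hence Lemma~\ref{lem:pathbordismsareinvertible} applies to $\overline{P}$ and identifies $\overline{P}^{-1}$ with the \emph{reversed} path bordism $\overline{(Y,f_1)}\to\overline{(Y,f_0)}$, whose underlying $\<3\>^*$-manifold is $\overline{[0,1]\times Y}$ with the standard collars and whose map to $M$ is $\sigma\circ(1_U\times\sfr_{1/2}\times 1_Y)$.

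Next I would compute $P^*$ by gluing the explicit representatives: the formula~\eqref{eq:canonical hermitean structure} for $\flat_{(Y,f_1)}$ and for $\sharp_{(Y,f_0)}=\flat_{(Y,f_0)}^{-1}$, and the formula for $P^\vee$ recorded after Definition~\ref{def:involution} (obtained from $P$ by interchanging the two collars and dualising source and target). All three pieces have underlying manifold $[0,1]\times Y$, so their composite is canonically diffeomorphic to $[0,1]\times Y$. On maps to $M$: the $\flat_{(Y,f_1)}$-piece contributes a stretch constantly equal to $f_1$, the $P^\vee$-piece contributes $\sigma$ traversed from its $f_1$-boundary to its $f_0$-boundary — i.e. the reversed path — and the $\sharp_{(Y,f_0)}$-piece contributes a stretch constantly equal to $f_0$; by the sitting-instant condition~\eqref{eq:sitting instants on bordisms} these glue smoothly. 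Since path bordisms have fibre-wise sitting instants, prepending and appending these constant stretches is a reparametrisation of the path direction that is the identity near the two string boundaries; realising it by an orientation-preserving diffeomorphism $1_U\times(\phi\times 1_Y)$ of $[0,1]\times Y$ — which preserves the collars and maps $\partial_2$ to itself — exhibits $P^*$ and $\overline{P}^{-1}$ as the same morphism via the equivalence relation~\eqref{eq:equivalence in OCBord_1} (using, if the collar lengths do not match exactly, the freedom to shrink collars noted after~\eqref{eq:id bordism}, together with a fibre-wise thin reparametrisation homotopy). Proposition~\ref{st:unitarity for isos} then yields that $P$ is unitary.

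The routine parts are checking that $P^\vee$ and the glued surface carry the claimed collar data and that the reparametrising diffeomorphism respects brane labels. The genuine obstacle is the orientation bookkeeping: one must verify that the two orientation reversals entering $P^*$ — the action of $(\cdot)^\vee$ on $\ori_\FR$ and the reflection $\sfr_0$ built into $\flat$ and $\sharp$ on the $\FR$-direction of the collars — combine with the flip $\ori_T\mapsto-\ori_T$ of opposition so that the orientation induced on the underlying $[0,1]\times Y$ of $P^*$ agrees fibre-wise with that of $\overline{[0,1]\times Y}$, and that the collar embeddings $w_0,w_1$ of $P^*$ coincide with those of the reversed path bordism of the first paragraph. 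Once this is settled, $P^*=\overline{P}^{-1}$, and unitarity follows.
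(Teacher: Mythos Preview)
Your proposal is correct and follows essentially the same approach as the paper: both verify the criterion $P^* = \overline{P}^{-1}$ from Proposition~\ref{st:unitarity for isos}, and both identify the orientation bookkeeping as the crux. The only cosmetic difference is that the paper starts from $op_{\OCBord}[P]^{-1}$ and transforms it step by step (via the diffeomorphism $1_U\times\sfr_{1/2}\times 1_Y$ and a rescaling/thin homotopy) into the composite $\sharp_{(Y,f_0)}\circ P^\vee\circ\flat_{(Y,f_1)}$, whereas you compute the two sides separately and compare; the underlying manipulations are the same.
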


\begin{proof}
By Lemma \ref{lem:pathbordismsareinvertible}, path bordisms are invertible in $\OCBord_d^{th}(M,Q)(U)$.
Recall that the bordism $[0,1] \times Y$ in a path bordism has its orientation induced from $\ori_T$.
We calculate:
\begin{align}
        op_{\OCBord} \big[ [0,1] \times Y, \sigma \big]^{-1}
        &= op_{\OCBord} \big[ (Y,f_1), (Y,f_0),\, [0,1] \times Y,\, [0,\epsilon) \times Y,\, (-\epsilon,0] \times Y,
        \\*[0.1cm]
        &\qquad \iota_{[0,\epsilon)} \times 1_Y,\, (sh_1 \circ \iota_{(-\epsilon,0]}) \times 1_Y,\, \sigma \circ (1_U \times \sfr_{\frac{1}{2}} \times 1_Y),\, b \circ \pr_Y \big]
        \\[0.2cm]
        &= \big[ \overline{(Y,f_1)}, \overline{(Y,f_0)},\, \overline{[0,1] \times Y},\, [0,\epsilon) \times Y,\, (-\epsilon,0] \times Y,
        \\*[0.1cm]
        &\qquad \iota_{[0,\epsilon)} \times 1_Y,\, (sh_1 \circ \iota_{(-\epsilon,0]}) \times 1_Y,\, \sigma \circ (1_U \times \sfr_{\frac{1}{2}} \times 1_Y),\, b \circ \pr_Y \big]
        \\[0.2cm]
        &= \big[ \overline{(Y,f_1)}, \overline{(Y,f_0)},\, [0,1] \times Y,\, [0,\epsilon) \times Y,\, (-\epsilon,0] \times Y,
        \\*[0.1cm]
        &\qquad (sh_1 \circ \sfr_0 \circ \iota_{[0,\epsilon)}) \times 1_Y,\, (\sfr_0 \circ \iota_{(-\epsilon, 0]}) \times 1_Y,\, \sigma,\, b \circ \pr_Y \big]\,,
        \\[0.2cm]
        &= \sharp_{(Y,f_0)} \circ \big[ [0,1] \times Y, \sigma \big]^\vee \circ \flat_{(Y,f_1)}
        \\*[0.1cm]
        &= \big[ [0,1] \times Y, \sigma \big]^*\,.
\end{align}
Here, the third identity arises from the orientation-preserving diffeomorphism
\begin{equation}
        1_U \times \sfr_{\frac{1}{2}} \times 1_Y \colon\ U \times \overline{[0,1] \times Y} \longrightarrow U \times [0,1] \times Y\,.
\end{equation}
Observe that this change of orientation on $[0,1] \times Y$ is accounted for by changing the embeddings $w_0$ and $w_1$ accordingly.
The following identity arises from another diffeomorphism and a thin homotopy to express the composition $\sharp_{(Y,f_0)} \circ [[0,1] \times Y, \sigma]^\vee \circ \flat_{(Y,f_1)}$ by a representative that relates directly to the standard representative of a path bordism.
(The diffeomorphism in question amounts to a rescaling $[0,3] \cong [0,1]$.)
Now the claim follows from Proposition~\ref{st:unitarity for isos}.
\end{proof}

Finally, we want to consider again $\OCBord_d(M,Q)$ as a presheaf over $\Cart$. Definition~\ref{def:involution} generalizes in a straightforward way to (twisted) involutions on presheaves of symmetric monoidal categories, by requiring that the functors $d$ and the natural isomorphism $\delta$ are morphisms and 2-morphisms of presheaves, respectively. 
It is straightforward to check that in case of $\OCBord_d(M,Q)$ the duality involution $d_{\OCBord}$ as well as the involution $op_{\OCBord}$ are indeed morphisms of presheaves of symmetric monoidal categories over $\Cart$. Similarly, the duality involution $d_{\VBdl}$ and the involution $op_{\VBdl}$ on the category $\VBdl(U)$ described in Example \ref{ex:fixed duals in Vect} become morphisms of presheaves of symmetric monoidal categories over $\Cart$.

\subsection{Positive reflection structures on OCFFTs}
\label{sect:variants of OCFFTS}

In this section we describe how a smooth OCFFT $\scZ$ relates the involutions $d_{\OCBord}$ and $op_{\OCBord}$ on the presheaf $\OCBord_d(M,Q)$ to the  involutions $d_{\VBdl}$ and $op_{\VBdl}$ on the presheaf $\VBdl$ of vector bundles.

Again, we first discuss a more general setting. Let $\scC$ and $\scD$ be symmetric monoidal categories with fixed duality data and associated duality involutions $(d_{\scC},\delta_{\scC})$ and $(d_{\scD},\delta_{\scD})$. Then, for any symmetric monoidal functor $F:\scC \to \scD$, there exists a unique  natural isomorphism 
\begin{equation}
\label{eq:canonicaldualityiso}
\beta \colon F \circ d_{\scC} \to d_\scD \circ F^{\opp}
\end{equation}
that is compatible with evaluation and coevaluation, and which makes the diagram
\begin{equation}
\xymatrix@C=2cm{F \circ d_{\scC} \circ d_{\scC}^{\opp} \ar[r]^-{\beta \circ 1} \ar[d]_{1 \circ \delta_{\scC}} & d_\scD \circ F^{\opp} \circ d_\scC^{\opp} \ar[d]^{1 \circ (\beta^{\opp})^{-1}}
        \\
        F & d_\scD \circ d_{\scC}^{\opp} \circ F \ar[l]^{\delta_{\scD} \circ 1}
}
\end{equation}
commutative, see Proposition~\ref{st:duality data and monoidal functors}.
For arbitrary other (non-twisted) involutions
$(op_{\scC},\gamma_{\scC})$ and  $(op_{\scD},\gamma_{\scD})$, such a structure is not automatic; Proposition~\ref{st:duality data and monoidal functors} relies crucially on the triangle identities for duality data.

\begin{definition}
\label{def: hofpt str on general functor}
Let $\scC$ and $\scD$ be symmetric monoidal categories endowed with involutions $(op_\scC, \gamma_\scC)$ and $(op_\scD, \gamma_\scD)$, respectively.
Let $F \colon \scC \to \scD$ be a symmetric monoidal functor.
A monoidal natural isomorphism 
\begin{equation*}
\alpha \colon F \circ op_\scC \to op_\scD \circ F
\end{equation*}
such that the diagram
\begin{equation}
\xymatrix@C=2cm{F \circ op_\scC^2 \ar[r]^-{\alpha \circ 1_{op_\scC}} \ar[d]_{1_F \circ \gamma_\scC} & op_\scD \circ F \circ op_\scC \ar[d]^{1_{op_\scD} \circ \alpha}
        \\
        F & op_\scD^2 \circ F \ar[l]^{\gamma_\scD \circ 1_F}
}
\end{equation}
commutes is called a \emph{homotopy fixed point structure on} $F$.
\end{definition}

Given a homotopy fixed point structure $\alpha$ on a functor $F$ and an object of $x$ of $\scC$ is equipped with a (pre-)hermitean structure $\flat_x:op_{\scC}(x) \to d_{\scC}(x)$, setting
\begin{equation}
\label{eq:hermiteanstructureimage}
\flat_{F,x} := \beta_x \circ F(\flat_x) \circ \alpha^{-1}_{x}\text{,}
\end{equation}
defines a (pre-)hermitean structure on the image $F(x)$.
Here $\beta$ is the canonical natural isomorphism \eqref{eq:canonicaldualityiso}.
We have the following result.
\begin{lemma}
\label{lem:isometricimage}
Suppose $f:x \to y$ is an isometric morphism between objects with hermitean structures $\flat_x$ and $\flat_y$, respectively. Then, $F(f)$ is isometric with respect to the hermitean structures $\flat_{F,x}$ and $\flat_{F,y}$ defined by \eqref{eq:hermiteanstructureimage}. 
\end{lemma}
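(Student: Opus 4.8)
The plan is to reduce the isometry of $F(f)$ to an identity purely among the pre-hermitean structures, and then to verify that identity by a short naturality chase through $\alpha$ and through the canonical duality comparison $\beta$ of Proposition~\ref{st:duality data and monoidal functors}.

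First I would record the following reformulation of isometry, valid for \emph{any} morphism $g \colon x \to y$ between objects carrying pre-hermitean structures. Repeating the first three lines of the computation in the proof of Proposition~\ref{st:unitarity for isos} — which uses only Lemma~\ref{st:dual behaves like transpose} and the adjunction $\tau$ of Proposition~\ref{prop:adjunction}, and never the invertibility of $g$ — one obtains $h_y \circ (\overline{g} \otimes g) = \tau^{x\,-1}_{\overline{x},u}(g^\vee \circ \flat_y \circ \overline{g})$ and $h_x = \tau^{x\,-1}_{\overline{x},u}(\flat_x)$. Since $\tau^{x\,-1}_{\overline{x},u}$ is a bijection, $g$ is isometric if and only if $g^\vee \circ \flat_y \circ \overline{g} = \flat_x$. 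Thus the hypothesis that $f$ is isometric becomes $f^\vee \circ \flat_y \circ \overline{f} = \flat_x$, and applying the same criterion to $F(f)$ with the pre-hermitean structures $\flat_{F,x}$ and $\flat_{F,y}$ of~\eqref{eq:hermiteanstructureimage}, it remains to prove $F(f)^\vee \circ \flat_{F,y} \circ \overline{F(f)} = \flat_{F,x}$.

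For this I would unfold $\flat_{F,y} = \beta_y \circ F(\flat_y) \circ \alpha_y^{-1}$ and chase, suppressing the strong-monoidal structure isomorphisms of $F$ exactly as in the proof of the preceding lemma:
\begin{align*}
 F(f)^\vee \circ \flat_{F,y} \circ \overline{F(f)}
 &= F(f)^\vee \circ \beta_y \circ F(\flat_y) \circ \big( \alpha_y^{-1} \circ \overline{F(f)} \big)
 = F(f)^\vee \circ \beta_y \circ F(\flat_y) \circ F(\overline{f}) \circ \alpha_x^{-1}
 \\
 &= \big( F(f)^\vee \circ \beta_y \big) \circ F(\flat_y \circ \overline{f}) \circ \alpha_x^{-1}
 = \beta_x \circ F\big( f^\vee \circ \flat_y \circ \overline{f} \big) \circ \alpha_x^{-1}
 = \beta_x \circ F(\flat_x) \circ \alpha_x^{-1}
 = \flat_{F,x}\,.
\end{align*}
The second equality uses naturality of the monoidal isomorphism $\alpha$ in the form $\alpha_y^{-1} \circ \overline{F(f)} = F(\overline{f}) \circ \alpha_x^{-1}$; the fourth uses naturality of the canonical duality comparison $\beta$ from Proposition~\ref{st:duality data and monoidal functors}, i.e.\ $F(f)^\vee \circ \beta_y = \beta_x \circ F(f^\vee)$; the third and fifth steps use functoriality of $F$; and the penultimate equality is the reformulated isometry hypothesis on $f$.

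I expect the only genuinely delicate point to be the bookkeeping of the coherence isomorphisms — the strong-monoidal structure of $F$ together with the symmetry and associativity constraints hidden inside the tensor factors of $h_x$ and $h_y$. These intervene precisely as they do in the proofs of Proposition~\ref{st:unitarity for isos} and of the preceding lemma, and they are harmless because $\alpha$, $\beta$, and all the involutions in play are monoidal natural transformations, so the structure maps cancel in pairs. A minor auxiliary point worth spelling out is that the $\tau$-reformulation of isometry applies verbatim to the possibly non-invertible morphisms $f$ and $F(f)$; this holds because the relevant portion of the proof of Proposition~\ref{st:unitarity for isos} never invokes invertibility.
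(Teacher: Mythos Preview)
Your proof is correct and uses the same essential ingredients as the paper's argument---naturality of $\alpha$, the properties of $\beta$ from Proposition~\ref{st:duality data and monoidal functors}, and functoriality of $F$---but the route is genuinely different in one respect. The paper computes directly with the hermitean pairings, expanding $h_{F(y)}\circ(\overline{F(f)}\otimes F(f))$ and using the compatibility of $\beta$ with the \emph{evaluation} morphisms (the identity $\ev_{Fy}\circ(\beta_y\otimes 1)=F(\ev_y)\circ f_{y^\vee,y}$) together with the monoidal coherence $f_{x,y}$ to pull everything inside $F$ and apply the hypothesis $h_y\circ(\overline f\otimes f)=h_x$. You instead first translate isometry into the equivalent condition $g^\vee\circ\flat_y\circ\overline g=\flat_x$ via the $\tau$-adjunction, and then run a short naturality chase using the \emph{naturality} of $\beta$ (i.e.\ $F(f)^\vee\circ\beta_y=\beta_x\circ F(f^\vee)$) rather than its compatibility with evaluation. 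Your reformulation buys a cleaner one-line computation that avoids explicitly tracking the tensor structure and the monoidal coherence isomorphism $f_{x,y}$ of $F$; the paper's approach is more self-contained in that it does not appeal back to the computation inside Proposition~\ref{st:unitarity for isos}. Both are perfectly sound, and your observation that the $\tau$-reformulation never uses invertibility of $g$ is exactly right.
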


\begin{proof}
For $x,y \in \scC$, let $f_{x,y} \colon Fx \otimes Fy \to F(x \otimes y)$ denote the isomorphism that renders $F$ a symmetric monoidal functor.
For simplicity, we denote again the involutions $op_{\scC}$ and $op_{\scD}$ by $x \mapsto \overline{x}$. Then, we calculate
\begin{align*}
h_{F(y)}\circ (\overline{F(f)} \otimes F(f)) 
&=\ev_{F(y)}\circ \big( \big( \beta_y \circ F(\flat_y)\circ \alpha_y^{-1} \circ \overline{F(f)} \big) \otimes F(f) \big) 
\\&=\ev_{F(y)}\circ ( \beta_y \otimes 1_y)  \circ (F(\flat_y) \otimes 1) \circ  (F(\overline{f})  \otimes F(f))\circ (\alpha_x^{-1} \otimes 1) 
\\&=F(\ev_{y}) \circ f_{y^\vee,y} \circ (F(\flat_y) \otimes 1) \circ  (F(\overline{f})  \otimes F(f))\circ (\alpha_x^{-1} \otimes 1) 
\\&=F(\ev_{x}) \circ f_{x^\vee,x} \circ (F(\flat_x) \otimes 1)\circ (\alpha_x^{-1} \otimes 1)
\\&=\ev_{F(x)} \circ ( \beta_x \circ F(\flat_x) \circ \alpha^{-1}_{x}\otimes 1)
\\&=h_{F(x)}
\end{align*}
Here we have used the naturality of $\alpha$, the fact that $(F,f)$ is a monoidal functor, the compatibility of $\beta_x$ with the evaluations $\ev_x$ and $\ev_{Fx}$, see Proposition~\ref{st:duality data and monoidal functors}. 
\end{proof}

It is again straightforward to generalise the definition of a homotopy fixed point structure to morphisms of presheaves of symmetric monoidal categories. Applying this to OCFFTs we obtain the following definition, which is an adaption of the formalism of \cite{FH:Reflection_positivity} from FFTs to  smooth FFTs, see \cite[Def.~4.17]{FH:Reflection_positivity}.

\begin{definition}
\label{def:reflection structure}
A \emph{reflection structure} on a smooth OCFFT $\scZ \colon \OCBord_d(M,Q) \to \VBdl$  is a homotopy fixed point structure $\alpha$ on $\scZ$ with respect to the involutions $op_\OCBord$ and $op_\VBdl$.
\end{definition}

We recall that every object $(Y,f)$ of $\OCBord_d(M,Q)(U)$, for every $U \in \Cart$, carries a canonical hermitean structure \eqref{eq:canonical hermitean structure}. Hence, by~\eqref{eq:hermiteanstructureimage} we obtain hermitean structures $\flat_{\scZ,(Y,f)}$ on the vector bundles $\scZ(Y,f)$ over $U$. As explained in Example~\ref{ex:fixed duals in Vect}, these are nothing but hermitean bundle metrics in the ordinary sense.
Now the following definition makes sense:

\begin{definition}
\label{def:positive reflection structure}
Let $\scZ \colon \OCBord_d(M,Q) \to \VBdl$ be a smooth OCFFT.
A reflection structure $\alpha$ on $\scZ$ is called \emph{positive} if for every cartesian space $U \in \Cart$ and every object $(Y,f) \in \OCBord_d(M,Q)(U)$ the hermitean structure $\flat_{\scZ,(Y,f)}$ on the vector bundle $\scZ(Y,f)$ over $U$  is positive definite. A smooth OCFFT with a positive reflection structure will be called a \emph{reflection-positive smooth OCFFT}. 
\end{definition}

Reflection-positive $d$-dimensional smooth OCFFTs come naturally organised in a 2-category $\mathrm{RP}\text{-}\OCFFT_{d}(M,Q)$, with full subcategories  of invertible, superficial, thin homotopy invariant, and homotopy invariant  smooth OCFFTs.

We close this section with the following result, obtained by combining Lemma~\ref{lem:isometricimage} with Propositions~\ref{prop:thinhomotopyinvariantocfft} and~\ref{st:path bordisms are unitary}. We plan to use this result in future work.   

\begin{proposition}
Let $\scZ$ be a reflection-positive, thin homotopy invariant smooth OCFFT. Then, the image of any path bordism under $\scZ$ is a unitary vector bundle morphism. 
\end{proposition}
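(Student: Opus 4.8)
The plan is to deduce the statement from the three ingredients indicated above. First, since $\scZ$ is thin homotopy invariant, Proposition~\ref{prop:thinhomotopyinvariantocfft} provides a factorisation $\scZ = \scZ' \circ q$ through the quotient morphism $q \colon \OCBord_d(M,Q) \to \OCBord_d^{th}(M,Q)$, with $\scZ'$ a morphism of presheaves of symmetric monoidal categories. I would then check that the reflection data transports along $q$: the involution $op_\OCBord$ and the canonical hermitean structures $\flat_{(Y,f)}$ descend to $\OCBord_d^{th}(M,Q)$ (as recorded just before Proposition~\ref{st:path bordisms are unitary}), and the quotient has the same objects as $\OCBord_d(M,Q)$, so the homotopy fixed point structure $\alpha$ on $\scZ$ is at the same time a homotopy fixed point structure $\alpha'$ on $\scZ'$; moreover the induced hermitean bundle metrics $\flat_{\scZ',(Y,f)}$ of~\eqref{eq:hermiteanstructureimage} coincide with the metrics $\flat_{\scZ,(Y,f)}$, so $\alpha'$ is again positive.

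Next I would take an arbitrary path bordism $[\Sigma,\sigma] = [[0,1]\times Y,\sigma] \colon (Y,f_0) \to (Y,f_1)$ in $\OCBord_d(M,Q)(U)$ and push it to $\OCBord_d^{th}(M,Q)(U)$, where by Proposition~\ref{st:path bordisms are unitary} it is unitary with respect to $\flat_{(Y,f_0)}$ and $\flat_{(Y,f_1)}$, in particular an isometric isomorphism. Applying the symmetric monoidal functor $\scZ'$ and Lemma~\ref{lem:isometricimage} to the homotopy fixed point structure $\alpha'$ then shows that $\scZ[\Sigma,\sigma] = \scZ'(q[\Sigma,\sigma])$ is isometric for the metrics $\flat_{\scZ,(Y,f_0)}$ and $\flat_{\scZ,(Y,f_1)}$; since functors preserve isomorphisms, it is also invertible, hence a unitary vector bundle morphism.

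The argument is essentially bookkeeping, and the step I expect to require the most care is the transport of the positive reflection structure from $\scZ$ to $\scZ'$ compatibly with the descended hermitean structures -- that is, making sure the isometries are measured against the same bundle metrics before and after passing to the quotient. Concretely this amounts to verifying that $op$, the duality involution, the canonical structures $\flat_{(Y,f)}$, and the coherence isomorphism $\beta$ of~\eqref{eq:canonicaldualityiso} all commute with $q$, so that~\eqref{eq:hermiteanstructureimage} is insensitive to the factorisation. One should also note that Proposition~\ref{st:path bordisms are unitary}, although stated for $d=2$, is invoked only through its conclusion for path bordisms, whose proof (a rescaling $[0,3]\cong[0,1]$ together with a thin homotopy) goes through in any dimension $d$.
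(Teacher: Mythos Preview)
Your proof is correct and follows exactly the approach the paper indicates: it is simply the combination of Proposition~\ref{prop:thinhomotopyinvariantocfft}, Proposition~\ref{st:path bordisms are unitary}, and Lemma~\ref{lem:isometricimage}, with the bookkeeping about transporting the reflection structure through the quotient spelled out. Your remark that Proposition~\ref{st:path bordisms are unitary} is stated for $d=2$ is well observed; the paper's one-line proof glosses over this, but as you say the argument there works verbatim in any dimension.
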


\addtocontents{toc}{\protect\pagebreak}
\section{Smooth OCFFTs from B-fields and D-branes}
\label{sect:TFT_construction}

In this section, we explicitly construct a 2-dimensional smooth OCFFT over a target space $(M,Q)$ from a target space brane geometry $(\CG, \CE) \in \TBG(M,Q)$ as defined in Section~\ref{sect:BGrbs}.
This construction employs the the coherent transgression vector bundles $\widehat{\scL}$ and $\widehat{\scR}_{ij}$ from Section~\ref{sect:background} and the surface amplitude $\CA^{\CG,\CE}$ from Section~\ref{sect:amplitudes}. We equip our smooth OCFFT with a positive reflection structure, and show that it depends functorially on the target space brane geometry.

\subsection{From surface amplitudes to smooth OCFFTs}
\label{sect:TFT_definition}

To begin with, we consider a cartesian space $U \in \Cart$ and an object $(Y, f) \in \OCBord_2(M,Q)(U)$.
For simplicity, let us first assume that $Y \cong [0,1]$, with brane labels $i,j \in I$ assigned to its initial and end point, respectively.
The map $f \colon U \times Y \to M$ is the adjoint of a plot $f^{\vdash} \colon U \to P^Y_{ij}M$ defined by $f^{\vdash}(x)(y) \coloneqq f(x,y)$ for  $x\in U$ and $y\in Y$, i.e.~$f=(f^{\vdash})^{\dashv}$.
Analogously, if $Y \cong \bbS^1$, we obtain a plot $f^{\vdash}\colon U \to L^Y M$.
Let $\sfU \colon \HVBdl^\nabla \to \VBdl$ be the morphism of sheaves of symmetric monoidal categories that forgets hermitean metrics and connections.
We set
\begin{equation}
\label{eq:Z_G on objects}
        \scZ_{\CG,\CE}(Y, f) \coloneqq
        \begin{cases}
                \sfU \big( (f^{\vdash})^{*} \scR^Y_{ij} \big)\ , & Y \cong [0,1]\,,
                \\[0.2cm]
                \sfU \big( (f^{\vdash})^{*} \scL^Y \big)\ , & Y \cong \bbS^1\,.
        \end{cases}
\end{equation}
We extend this assignment to empty $Y$ by setting
\begin{equation}
        \scZ_{\CG,\CE}(Y, f) \coloneqq U \times \FC
\end{equation}
whenever $(Y,f) \in \OCBord_2(M,Q)(U)$ with $Y = \emptyset$, and to non-connected $Y$ by mapping families of disjoint unions of manifolds to tensor products of vector bundles.

We now use the amplitudes defined in Section~\ref{sect:amplitudes} in order to define the field theory $\scZ_{\CG,\CE}$ on morphisms.
Let $[\Sigma, \sigma] \colon (Y_0,f_0) \to (Y_1,f_1)$ be a morphism in $\OCBord_2(M,Q)(U)$.
The two objects $(Y_a,f_a)$ define vector bundles $\scZ_{\CG,\CE}(Y_a, f_a)$ over $U$, for $a = 0,1$, via ~\eqref{eq:Z_G on objects}.
Let $\Psi_0 \in \Gamma(U, \scZ_{\CG,\CE}(Y_0, f_0))$ and $\Psi_1^\vee \in \Gamma(U, \scZ_{\CG,\CE}(Y_1, f_1)^\vee)$ be arbitrary smooth sections.
We think of $\Psi_0$ as a smoothly parameterised family of incoming states and of $\Psi_1^\vee$ as a smoothly parameterised family of duals of outgoing states. 
\begin{remark}
\label{rmk: use of beta}
Recall that under the canonical isomorphisms~\eqref{eq:beta_ij} and~\eqref{eq:varrho mp}, and their coherent extensions~\eqref{eq:whbeta whvarrho}, we can canonically identify the vector bundles $\scZ_{\CG,\CE}((Y,f)^\vee)$ and $(\scZ_{\CG,\CE} (Y,f))^\vee$.
We will use this identification throughout.
\qen
\end{remark}

The following lemma is an immediate consequence of our definition of scattering diagrams (Definition~\ref{def:parameterised_scattering_digram}) and morphisms in $\OCBord_d(M,Q)$ (Definition~\ref{def: family of bordisms}).

\begin{lemma}
Let $x\in U$. 
\begin{itemize}
\item 
If $(\Sigma,\sigma)$ is a representative for the morphism $[\Sigma,\sigma]$ under the equivalence relation  \eqref{eq:equivalence in OCBord_1}, then $(\Sigma, \sigma_{|\{x\} \times \Sigma}, \Psi_{1|x}^\vee, \Psi_{0|x})$ is a  scattering diagram. 

\item
If $(\Sigma',\sigma')$ is another representative, then $(\Sigma', \sigma'_{|\{x\} \times \Sigma}, \Psi_{1|x}^\vee, \Psi_{0|x})$ is an equivalent  scattering diagram.

\end{itemize} 
\end{lemma}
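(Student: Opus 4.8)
The plan is to unwind all the definitions and check that each bulleted claim reduces to something already established. The statement has two parts: first, that restricting a representative $(\Sigma,\sigma)$ of a morphism in $\OCBord_2(M,Q)(U)$ to a fibre $\{x\}\times\Sigma$ and pairing it with the fibre values $\Psi_{1|x}^\vee$ and $\Psi_{0|x}$ produces a scattering diagram in the sense of Definition~\ref{def:parameterised_scattering_digram}(1); second, that a different representative of the same morphism yields an \emph{equivalent} scattering diagram in the sense of Definition~\ref{def:parameterised_scattering_digram}(2).

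For the first bullet I would argue as follows. A scattering diagram is a quadruple $(\Sigma,\sigma,\psi_1^\vee,\psi_0)$ where $\Sigma$ is an oriented, compact, $2$-dimensional $\<3\>^*$-manifold, $\sigma\colon\Sigma\to M$ is smooth, and one has the decorations (SD1)--(SD3). Starting from a representative $(\Sigma, \sigma)$ of a morphism in $\OCBord_2(M,Q)(U)$ as in~\eqref{eq:morphism family in OCBord}, the underlying manifold $\Sigma$ is by definition a compact, oriented $2$-dimensional $\<3\>^*$-manifold; restricting the family map $\sigma\colon U\times\Sigma\to M$ along the inclusion $\{x\}\times\Sigma\hookrightarrow U\times\Sigma$ gives a smooth map $\sigma_{|\{x\}\times\Sigma}\colon\Sigma\to M$. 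The brane-label map $\ell\colon\pi_0(\partial_2\Sigma)\to I$ in the data of the morphism supplies the index $i(b)$ for each connected face $b\subset\partial_2\Sigma$, and by the condition $\sigma(x',y)\in Q_{\ell[y]}$ for all $x'\in U$, $y\in\partial_2\Sigma$, this is compatible with $\sigma_{|\{x\}\times\Sigma}$, giving (SD2); restricting to corners gives (SD1). For (SD3), I would note that the bundles $\scZ_{\CG,\CE}(Y_a,f_a)$ over $U$ are defined in~\eqref{eq:Z_G on objects} as pullbacks $\sfU((f_a^\vdash)^*\scR^{Y_a}_{\ldots})$ or $\sfU((f_a^\vdash)^*\scL^{Y_a})$; evaluating the section $\Psi_0$ at $x$ lands in the fibre of this bundle over $x\in U$, which --- because $f_a^\vdash(x)$ is the map $y\mapsto f_a(x,y)$, and the boundary components of $\partial_a\Sigma$ are parameterised (via the collar embeddings $w_a$) by the connected components of $Y_a$ --- is exactly the tensor product $V_a(\Sigma,\sigma_{|\{x\}\times\Sigma})$ appearing in~\eqref{eq:incoming state} (after the identification of $\scZ_{\CG,\CE}((Y,f)^\vee)$ with $\scZ_{\CG,\CE}(Y,f)^\vee$ via~\eqref{eq:whbeta whvarrho} recalled in Remark~\ref{rmk: use of beta}). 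Hence $\Psi_{0|x}$ is a valid incoming state vector and $\Psi_{1|x}^\vee$ a valid outgoing state vector.

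For the second bullet I would use the equivalence $(\Sigma,\sigma)\sim(\Sigma',\sigma')$ from~\eqref{eq:equivalence in OCBord_1}: there is a diffeomorphism $\Psi\colon U\times\Sigma\to U\times\Sigma'$ over $U$, preserving brane labels and fibre-wise orientations, compatible with the maps to $M$ and the collars in the sense of~\eqref{eq: diffeos of bordisms}. Restricting $\Psi$ to the fibre over $x$ yields an orientation-preserving diffeomorphism $t_x\coloneqq\Psi_{|\{x\}\times\Sigma}\colon\Sigma\to\Sigma'$ of $\<3\>^*$-manifolds. From $\sigma'\circ\Psi=\sigma$ we get $\sigma'_{|\{x\}\times\Sigma'}\circ t_x=\sigma_{|\{x\}\times\Sigma}$, which is the first condition of Definition~\ref{def:parameterised_scattering_digram}(2); brane-label preservation is immediate from the corresponding property of $\Psi$. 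The remaining condition --- that $\Psi_{a|x}$ and $\Psi'_{a|x}$ agree under the isomorphism $V_a(\Sigma,\sigma_{|\{x\}\times\Sigma})\cong V_a(\Sigma',\sigma'_{|\{x\}\times\Sigma'})$ induced by the coherence isomorphisms $\widehat R$ on the restrictions of $t_x$ to the components of $\partial_a\Sigma$ --- requires a small compatibility check. Since $\Psi_a$ and $\Psi'_a$ are the images of the \emph{same} section $\Psi_a\in\Gamma(U,\scZ_{\CG,\CE}(Y_a,f_a))$ (the two morphism representatives share the objects $(Y_a,f_a)$), and since the functoriality of $\Psi_*\Phi^*$ in Theorem~\ref{st:Psi_* Phi^* is equivalence} guarantees that the coherence isomorphism $\widehat R$ intertwines exactly the change-of-parameterisation data encoded in $t_x$ via the collar embeddings $w_a$ versus $w'_a$, the two fibre values match. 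The only real obstacle is this last bookkeeping step: one has to trace carefully how the collar embeddings $w_a$, $w'_a$ and the diffeomorphism $\Psi$ combine to identify the two parameterisations of $\partial_a\Sigma\cong\partial_a\Sigma'$ by $Y_a$, and then invoke the coherence of $(\widehat{\scR}_{ij},\widehat R)$ and $(\widehat{\scL},\widehat R)$ to conclude that the induced isomorphism on $V_a$ is precisely the one in Definition~\ref{def:parameterised_scattering_digram}(2). Everything else is a direct transcription of definitions, so I would keep the proof short and point to Definitions~\ref{def:parameterised_scattering_digram}, \ref{def: family of bordisms}, and~\eqref{eq:Z_G on objects} for the details.
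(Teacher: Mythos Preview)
Your proposal is correct and follows exactly the approach the paper takes: the paper does not give a separate proof but introduces the lemma as ``an immediate consequence of our definition of scattering diagrams (Definition~\ref{def:parameterised_scattering_digram}) and morphisms in $\OCBord_d(M,Q)$ (Definition~\ref{def: family of bordisms}).'' Your write-up simply unwinds those definitions in detail, which is precisely what is needed; the only minor notational slip is referring to $\Psi'_a$ in the second bullet when in fact both scattering diagrams carry the \emph{same} state vectors $\Psi_{a|x}$, but you immediately clarify this yourself.
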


Since the surface amplitude is well-defined on equivalence classes of scattering diagrams by Proposition \ref{prop:surfaceamplitude}, we obtain a well-defined number
\begin{equation*}
\CA^{\CG,\CE} \big[ \Sigma, \sigma_{|\{x\} \times \Sigma}, \Psi_{1|x}^\vee, \Psi_{0|x} \big] \in \mathbb{C}
\end{equation*}  
for every $x\in U$. We prove first that this number depends smoothly on $x$.

\begin{lemma}
\label{st:amplitudes are smooth}
The function 
\begin{equation*}
U \to \FC\,,\qquad
x \longmapsto \CA^{\CG,\CE} \big[ \Sigma, \sigma_{|\{x\} \times \Sigma}, \Psi_{1|x}^\vee, \Psi_{0|x} \big] 
\end{equation*}
is smooth.
\end{lemma}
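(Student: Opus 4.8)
The plan is to unwind all the choices made in the definition of the surface amplitude and check that each one can be performed locally smoothly in the parameter $x$, so that the product formula of Definition~\ref{def:parameterised_amplitude} yields a smooth function. Since smoothness is a local condition on $U$, I may freely shrink $U \in \Cart$ around any given point; in particular, after passing to a smaller cartesian neighbourhood I can choose a single trivialisation $\CT \colon \sigma^* \CG \to \CI_\rho$ of the pulled-back gerbe over $U \times \Sigma$ (such a trivialisation exists because $\rmH^3(U \times \Sigma, \RZ) \cong \rmH^3(\Sigma, \RZ)$ is torsion and because $U$ is contractible, so one may argue exactly as in the proof of Proposition~\ref{st:amplitude and homotopies}). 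I would also fix, once and for all, orientation-preserving parameterisations $\gamma_s$ of the connected string-boundary faces of $\Sigma$, as in Section~\ref{sect:amplitudes-definition}. By Lemma~\ref{st:sh_independent_of_trivialisations} and Lemma~\ref{st:sh_reparameterisation_invariance} the amplitude is independent of these auxiliary choices, so it suffices to prove smoothness of the representative computed with respect to them.

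Next I would treat the two factors in the formula $\CA^{\CG,\CE} = \exp\big(-\int_\Sigma \rho\big) \prod_{c} z_c$ separately. For the exponential prefactor, the $2$-form $\rho$ on $U \times \Sigma$ depends smoothly on $U$ by construction, and $x \mapsto \int_\Sigma \rho_{|\{x\}\times\Sigma}$ is smooth by differentiation under the integral sign over the compact manifold $\Sigma$; composing with $\exp$ keeps it smooth. For the product of the numbers $z_c$, I would go through the three cases (SA1)--(SA3). In case (SA1) the number $z_c = z_{c_{\epsilon,u}}$ is simply the $\FC$-component of the chosen section $\Psi_0$ or $\Psi_1^\vee$ read off in the local frame of $\scL^{c_{\epsilon,u}}$ determined by $\gamma_{c_{\epsilon,u}}^*\CT$; since $\Psi_0,\Psi_1^\vee$ are smooth sections and the coherent bundle $\widehat{\scL}$ is, by construction in Section~\ref{sect:pull-push and coherence}, a genuine diffeological (hence smooth, over cartesian $U$) vector bundle with smooth transition data, this component is a smooth function of $x$. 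In case (SA2), $z_c = \tr(\hol(E_b))$ where $E_b = \Delta(\sigma_{|b}^*\CE_{i(b)}, \CT_{|b})$ is a hermitean vector bundle with connection over $U \times b$ depending smoothly on $U$; the holonomy of a smooth family of connections around the fixed loop $b$ is smooth, so its trace is smooth. In case (SA3), $z_c$ is the trace of a composite $\lambda_n \circ \cdots \circ \lambda_1$ of linear maps between the finite-dimensional fibres $E_{x_a}$, where each $\lambda_a$ is either a parallel transport $pt^{E_{c_a}}_{c_a}$ of a smoothly varying connection along a fixed interval (smooth in $x$) or the linear map induced by the chosen component $\psi_{s_{\epsilon,u}}$ of $\Psi_0$ or $\Psi_1^\vee$ under the smooth identification $\scR_{s|\cdots} \cong \Hom(E_{x_{a-1}}, E_{x_a})$; since all the bundles $E_x$, $E_b$ assemble into smooth vector bundles over $U$, the composite and its trace are smooth in $x$.

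Finally, I would assemble the pieces: the full amplitude in the pure-vector case is a finite product of the smooth prefactor with the finitely many smooth functions $z_c$, hence smooth; the general case follows by the multilinear extension used to define the amplitude for non-pure states, since a finite sum of smooth functions is smooth. I expect the main obstacle — or at least the only place requiring genuine care rather than bookkeeping — to be the step where one must argue that the coherent pull-push bundles $\scR^Y_{ij}$ and $\scL^Y$ produced by Theorem~\ref{st:Psi_* Phi^* is equivalence}, together with their coherence isomorphisms $\widehat{R}$, vary smoothly in a way compatible with pulling back along the plot $f^{\vdash} \colon U \to P^Y_{ij}M$ (resp.\ $U \to L^Y M$); one needs that the representative $[\gamma_s^*\CT, \psi_s]$ of a smooth section, and the transition functions relating different such representatives, are smooth in $x$. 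This is exactly the content of the diffeology on $\scR_{ij}$ recalled in Section~\ref{sect:bundles on path spaces} — a map into $\scR_{ij}$ is a plot precisely when it is locally of the form $x \mapsto [\CT_{|\{x\}\times[0,1]}, \psi_{a|x}]$ for smooth bundle morphisms $\psi_a$ — so the smoothness we need is built into the definitions; the proof is then a matter of carefully matching the local trivialising data used in Section~\ref{sect:amplitudes-definition} with the plot structure of these diffeological bundles.
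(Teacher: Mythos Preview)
Your proposal is correct and follows essentially the same approach as the paper's proof: choose a global trivialisation $\CT \colon \sigma^*\CG \to \CI_\rho$ over $U \times \Sigma$, represent the smooth sections $\Psi_0$ and $\Psi_1^\vee$ in the induced frames of the coherent transgression bundles, and read off smoothness from the explicit formula for the amplitude. The paper's version is terser---it simply asserts that smoothness is visible from Definition~\ref{def:parameterised_amplitude} once the sections are written in this form---whereas you spell out the (SA1)--(SA3) case analysis; note also that shrinking $U$ is unnecessary since $U \in \Cart$ is already contractible, and $\rmH^3(\Sigma,\RZ)$ actually vanishes (not just torsion) because $\Sigma$ is $2$-dimensional.
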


\begin{proof}
Since $U$ is diffeomorphic to $\FR^n$ for some $n \in \NN_0$ and since $\Sigma$ is 2-dimensional, we find a trivialisation $\CT \colon \sigma^*\CG \to \CI_\rho$.
By the coherence of $\widehat{\scL}$ and $\widehat{\scR}_{ij}$, after choosing parameterisations of the connected components of $Y_a$ (i.e.~orientation-preserving diffeomorphisms to either $[0,1]$ or $\bbS^1$), any smooth section $\Psi_a$ of $\scZ_{\CG,\CE}(Y_a, f_a)$ over $U$ can be written in a unique way as a $C^\infty(U, \FC)$-linear combination of tensor products of sections of the form $[[\CT], z_{cl_a}]$ for smooth functions $z_{cl_a} \colon U \to \FC$, where $cl_a$ runs over the connected components of $Y_a$ that are diffeomorphic to $\bbS^1$, and sections of the form $[\CT, \psi_{e_a}]$, where
\begin{equation}
        \psi_{e_a} \in \Hom \big( \Delta(\CE_{i(e_{a,0})|U \times e_{a,0}},\, \CT_{|U \times e_{a,0}}),\, \Delta(\CE_{i(e_{a,1})|U \times e_{a,1}},\, \CT_{|U \times e_{a,1}}) \big)\,.
\end{equation}
Here, $e_a$ runs over the connected components of $Y_a$ that are diffeomorphic to $[0,1]$.
For the sake of legibility we have omitted pullbacks, and we have written $e_{a,0}$ for the initial point of the oriented component of $Y_a$ labelled by $e_a$ and $e_{a,1}$ for its end point.
The indices $i(e_{a,0})$ and $i(e_{a,1}) \in I$ are the D-brane indices of the respective corners of $\Sigma$ (see Definition~\ref{def: family of objects}).
The reason that tensor products of these sections generate all smooth sections over $U$ under $C^\infty(U, \FC)$-linear combinations is that, by construction of the bundles $\scR_{ij}$ and $\scL$, these sections generate each fibre of their respective pullbacks (cf. Section~\ref{sect:bundles on path spaces}) that appear in $\scZ_{\CG,\CE}(Y_0,f_0)$ and $\scZ_{\CE,\CG}(Y_1,f_1)$ (cf.~\eqref{eq:Z_G on objects}).
From the explicit construction of the surface amplitude $\CA^{\CG,\CE}$ in Definition~\ref{def:parameterised_amplitude}, one can now see the claimed smoothness.
\end{proof}

Now, we define a morphism
\begin{equation}
                \scZ_{\CG,\CE}[\Sigma,\sigma] \colon \scZ_{\CG,\CE}(Y_0, f_0) \longrightarrow \scZ_{\CG,\CE}(Y_1, f_1)
\end{equation}
of vector bundles over $U$ by requiring that
\begin{equation}
\label{eq:smooth amplitude}
        \Psi_{1|x}^\vee \big( \scZ[\Sigma, \sigma]_{|x}\, (\Psi_{0|x} )\big)
        = \CA^{\CG,\CE} \big[ \Sigma, \sigma_{|\{x\} \times \Sigma}, \Psi_{1|x}^\vee, \Psi_{0|x} \big]
\end{equation}
for all $x \in U$ and for all smooth sections $\Psi_0 \in \Gamma(U,\scZ_{\CG,\CE}(Y_0,f_0))$ and $\Psi_1^\vee \in \Gamma(U, \scZ_{\CG,\CE}(Y_1, f_1)^\vee)$.
Since the evaluation pairing between a finite-dimensional vector space and its dual is non-degenerate, the expression~\eqref{eq:smooth amplitude}  defines a bundle morphism $\scZ_{\CG,\CE}[\Sigma, \sigma]$ which is smooth by Lemma \ref{st:amplitudes are smooth}.

\begin{theorem}
\label{st:BGrb plus D-branes yields OC TFT}
Let $(M,Q)$ be a target space.
For any target space brane geometry $(\CG, \CE) \in \TBG(M,Q)$, Equations~\eqref{eq:Z_G on objects} and~\eqref{eq:smooth amplitude} define a 2-dimensional, invertible, superficial, smooth OCFFT $\scZ_{\CG,\CE}$ on $(M, \scQ)$.
\end{theorem}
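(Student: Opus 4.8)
The plan is to verify the three kinds of claims — functoriality (that $\scZ_{\CG,\CE}$ is a morphism of presheaves of symmetric monoidal categories), invertibility, and superficiality — in that order, since each builds on the previous one.

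\emph{Functoriality and the presheaf structure.} First I would check that $\scZ_{\CG,\CE}$ is well-defined on each fibre category $\OCBord_2(M,Q)(U)$: on objects this is immediate from~\eqref{eq:Z_G on objects} together with the observation that the plots $f^\vdash$ land in $P^Y_{ij}M$ or $L^YM$ by the sitting-instants condition in Definition~\ref{def: family of objects}; on morphisms, the well-definedness of $\scZ_{\CG,\CE}[\Sigma,\sigma]$ on the equivalence class~\eqref{eq:equivalence in OCBord_1} is exactly the content of the preceding lemma and Proposition~\ref{prop:surfaceamplitude}, while smoothness is Lemma~\ref{st:amplitudes are smooth}. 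The nontrivial point is \emph{functoriality}, i.e.\ that $\scZ_{\CG,\CE}$ respects composition of bordisms. This is a gluing statement for surface amplitudes: given $[\Sigma,\sigma]\colon (Y_0,f_0)\to(Y_1,f_1)$ and $[\Sigma',\sigma']\colon (Y_1,f_1)\to(Y_2,f_2)$, I would pick a trivialisation $\CT$ of $\sigma^*\CG$ and $\CT'$ of $\sigma'^*\CG$ that agree along the gluing collar (possible since $\rmH^3$ of a surface vanishes, and one can arrange the restrictions to $Y_1$ to match), compute the amplitude of the glued bordism by splitting the integral $\int_{\Sigma\cup_{Y_1}\Sigma'}\rho = \int_\Sigma\rho + \int_{\Sigma'}\rho'$ and splitting the boundary product $\prod_c z_c$, and recognise the "sum over intermediate states" that appears when one inserts $\sum_k \Psi_{1,k}\otimes\Psi_{1,k}^\vee$ — a partition of unity / completeness relation for the finite-dimensional fibre $\scZ_{\CG,\CE}(Y_1,f_1)_{|x}$ — as precisely matrix multiplication $\scZ_{\CG,\CE}[\Sigma',\sigma']_{|x}\circ\scZ_{\CG,\CE}[\Sigma,\sigma]_{|x}$. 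Preservation of identities follows from the explicit form~\eqref{eq:id bordism} of identity bordisms together with the fact that the parallel transports and traces occurring in the amplitude of $1_{(Y,f)}$ reduce to identity maps and to the dimension counts that cancel correctly; this is a direct computation from Definition~\ref{def:parameterised_amplitude}. Finally, compatibility with pullback along $g\colon V\to U$ in $\Cart$ is immediate because $\scZ_{\CG,\CE}$ is built fibrewise over the test space, and the symmetric monoidal structure (disjoint union $\mapsto$ tensor product, plus the symmetry) is built into~\eqref{eq:Z_G on objects} and into the product over connected components in the amplitude.

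\emph{Invertibility.} For this I would check the two conditions of Definition~\ref{def:invertible}. Rank one on closed objects: if $Y\cong\bbS^1$, then $\scZ_{\CG,\CE}(Y,f) = \sfU\big((f^\vdash)^*\scL^Y\big)$ is the pullback of the transgression line bundle, hence of rank one. For the second condition, let $[\Sigma,\sigma]$ have $\partial_2\Sigma=\emptyset$; then there are no brane boundaries and no corners, so every connected boundary component is either in $\partial_0\Sigma$ or $\partial_1\Sigma$ and is diffeomorphic to $\bbS^1$ or $[0,1]$. In this situation the amplitude pairs a single incoming state against a single outgoing dual state via a product of the numbers $z_c$ and the exponential of $-\int_\Sigma\rho$; tracing through Definition~\ref{def:parameterised_amplitude} one sees the induced linear map $\scZ_{\CG,\CE}[\Sigma,\sigma]_{|x}$ is, in the representatives furnished by a trivialisation, given by an invertible expression — a nonzero scalar times a composite of the parallel transports $pt$ in the transgression bundles, which are isomorphisms. (Concretely: each open string edge contributes a Hom-space isomorphism and the exponential factor is a nonzero complex number, so the whole map is an isomorphism of vector bundles over $U$.) I would spell this out by reducing, via gluing/cutting $\Sigma$ along a chosen pair of pants decomposition, to the generating bordisms, for each of which the claim is a one-line check.

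\emph{Superficiality.} This is where the real input is, but it has already been isolated: Proposition~\ref{st:amplitude and homotopies} gives thin-homotopy invariance of the amplitude and Proposition~\ref{st:A^G and thin maps to M} gives invariance under thin maps agreeing on the boundary. I would simply observe that if $(\Sigma,\sigma)$ and $(\Sigma',\sigma')$ are thin homotopy equivalent (resp.\ superficially equivalent) tuples in $\OCBord_2(M,Q)(U)$, then for each $x\in U$ the restrictions $\sigma_{|\{x\}\times\Sigma}$ and $\sigma'_{|\{x\}\times\Sigma}$ are thin homotopic rel boundary (resp.\ thin and agreeing on $\partial\Sigma$), so by the summarised superficiality of $\CA^{\CG,\CE}$ the amplitudes agree for all incoming/outgoing states, hence $\scZ_{\CG,\CE}[\Sigma,\sigma]_{|x} = \scZ_{\CG,\CE}[\Sigma,\sigma']_{|x}$ for every $x$, so the bundle morphisms coincide. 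The only subtlety is matching the notion of "thin homotopy rel boundary restricting to a constant homotopy on the collars and on $\partial_2\Sigma$" in Definition~\ref{def:thinhomotopyinvariantocfft} with the notion of "(thin) homotopy of scattering diagrams" preceding Proposition~\ref{st:amplitude and homotopies}; these are designed to agree, and I would note that the constancy on collars is exactly what guarantees the boundary data (the states $\Psi_0,\Psi_1^\vee$ and their representatives) are unaffected.

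\emph{Main obstacle.} The hard part is the gluing/composition argument in the functoriality step: bookkeeping the orientation conventions (inward- vs.\ outward-pointing normals, the sign of $\int\rho$, the direction of parallel transports and traces across the cut), the coherence isomorphisms $\widehat{R}$ identifying the state spaces along $Y_1$ from the two sides, and the identification~\eqref{eq:whbeta whvarrho} of $\scZ_{\CG,\CE}((Y_1,f_1)^\vee)$ with $\scZ_{\CG,\CE}(Y_1,f_1)^\vee$ (Remark~\ref{rmk: use of beta}) so that "outgoing dual state of $\Sigma$" really is "incoming state of $\Sigma'$". Once those are pinned down, the completeness-relation computation is routine, but getting all the signs and duals consistent is where the care is needed.
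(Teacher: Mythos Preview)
Your overall strategy matches the paper's: reduce to the fibre over a point, check identities and composition via explicit amplitude computations, and read off superficiality from Propositions~\ref{st:amplitude and homotopies} and~\ref{st:A^G and thin maps to M}. Two small corrections are worth making. First, for composition the paper does not match trivialisations on the two pieces along the collar; it simply picks a single trivialisation $\CT$ of $\sigma''^*\CG$ on the glued surface $\Sigma'' = \Sigma \cup_{Y_1} \Sigma'$ (which exists since $\rmH^3(\Sigma'') = 0$) and restricts. This avoids any gluing of trivialisations and makes the splitting of $\int_{\Sigma''}\rho$ and of the trace factors immediate. Second, your invertibility discussion is slightly off: when $\partial_2\Sigma = \emptyset$ the collar condition $(\FR \times \partial Y_a) \cap W_a \hookrightarrow \partial_2\Sigma$ forces $\partial Y_a = \emptyset$, so both $Y_0$ and $Y_1$ are closed and every boundary component of $\Sigma$ is a circle. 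There are no open string edges, no corners, and no brane-boundary parallel transports; the amplitude is just the nonzero scalar $\exp(-\int_\Sigma \rho)\prod_c z_c$ (case (SA1) only), which is manifestly a nowhere-vanishing morphism between line bundles. The paper accordingly dispatches invertibility in one line. Likewise, in the identity argument there are no ``dimension counts that cancel'': with the pulled-back trivialisation the surface integral vanishes and the amplitude is exactly the evaluation pairing $\langle \psi_1^\vee,\psi_0\rangle$, which is the identity operator.
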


We will decompose the proof of Theorem~\ref{st:BGrb plus D-branes yields OC TFT} into a series of smaller assertions.
The compatibility with symmetric monoidal structures has been built into the definition of $\scZ_{\CG,\CE}$ on objects.
On morphisms it follows from the fact that the surface amplitude \smash{$\CA^{\CG,\CE}$} factorises accordingly for disjoint unions of surfaces, as can readily be seen from its definition (see Section~\ref{sect:amplitudes-definition}).
The main part is now to prove that $\scZ_{\CG,\CE}$ is a morphism of presheaves of categories on $\Cart$. By construction, the pullbacks of objects in $\OCBord_2(M,Q)(U)$ along morphisms $V \to U$ of cartesian spaces get mapped to pullbacks of vector bundles, and the pullback of morphisms pulls back due to its fibre-wise definition \eqref{eq:smooth amplitude}.  
Thus, it suffices to show the functoriality of $\scZ_{\CG,\CE}$ pointwise, i.e.\ we only have to check that
\begin{equation}
        \scZ_{\CG,\CE}(\pt) \colon \OCBord_2(M,Q)(\pt) \longrightarrow \VBdl(\pt) \cong \Vect
\end{equation}
is a functor.

\begin{lemma}
$\scZ_{\CG,\CE}(\pt)$ preserves identity morphisms.
\end{lemma}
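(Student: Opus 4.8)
The plan is to evaluate $\scZ_{\CG,\CE}[1_{(Y,f)}]$ directly from the defining Equation~\eqref{eq:smooth amplitude} and to check that it coincides with the identity of $\scZ_{\CG,\CE}(Y,f)$. Since $\scZ_{\CG,\CE}$ is already symmetric monoidal on objects and sends disjoint unions to tensor products, it suffices to treat a connected $Y$; I will spell out the case $\ori_\FR = +$, the case $\ori_\FR = -$ being analogous. By~\eqref{eq:id bordism} the identity bordism is represented by the cylinder $\Sigma = [0,1] \times Y$ with its standard collars and with the map $\sigma = f \circ \pr_Y$, which is constant in the $[0,1]$-direction. The first point is that, as $Y$ is one-dimensional, $\rmH^3(Y,\RZ) = 0$ and $\Omega^2(Y) = 0$, so $f^*\CG$ admits a trivialisation $\CT_0 \colon f^*\CG \to \CI_0$; its pullback $\CT \coloneqq \pr_Y^*\CT_0 \colon \sigma^*\CG \to \CI_0$ has $\rho = 0$, so the factor $\exp(-\int_\Sigma \rho)$ in Definition~\ref{def:parameterised_amplitude} equals $1$. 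I would fix an orientation-preserving diffeomorphism $\gamma$ from the standard model ($[0,1]$ or $\bbS^1$) onto $Y$ and use the parameterisations of the two string-boundary faces $\{0\} \times Y$ and $\{1\} \times Y$ induced by $\gamma$.

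Next I would compute the remaining $z_c$-factors. In the closed case $Y \cong \bbS^1$ there are neither corners nor brane boundary, and $\partial \Sigma$ is the disjoint union of the two circles $\{0\} \times Y$, $\{1\} \times Y$, each falling under case~(SA1); writing the incoming state $\Psi_0$ and the dual outgoing state $\Psi_1^\vee$ (viewed as an element of a $\widehat{\scL}$-factor via the identification~\eqref{eq:whbeta whvarrho}) in terms of $\CT$ and the chosen parameterisations as $[[\gamma_0^*\CT], z_0]$ and $[[\gamma_1^*\CT], z_1]$ -- unique representatives, by coherence of $\widehat{\scL}$ -- the surface amplitude is $z_0 z_1$. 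In the open case $Y \cong [0,1]$ with brane labels $i,j$, the square $\Sigma$ has four corners and $\partial\Sigma$ is the single circle made up of the two string faces and the two brane faces $[0,1] \times \{0\}$, $[0,1] \times \{1\}$, so we are in case~(SA3); the corner Hilbert spaces are all copies of the two finite-dimensional Hilbert spaces $E^0, E^1$ attached to the ends of $Y$ as in~\eqref{eq:defnitionCE}, the parallel transports along the two brane faces are identity maps (those bundles being pulled back along $[0,1] \to \pt$), and the two string faces contribute exactly the morphisms $\psi_f \in \Hom(E^0, E^1)$ and $\psi \in \Hom(E^1, E^0)$ representing $\Psi_0$ and $\Psi_1^\vee$ with respect to $\CT$; by cyclicity of the trace the amplitude collapses to $\tr(\psi \circ \psi_f)$. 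In either case this uses nothing beyond the vanishing of $\rho$ and the triviality of parallel transport along the constant paths in the $[0,1]$-direction, exactly as in the proof of Proposition~\ref{st:amplitude and homotopies}.

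It remains to match these numbers with the evaluation pairing. By Remark~\ref{rmk: use of beta}, the identification of $\scZ_{\CG,\CE}(Y,f)^\vee$ with the relevant fibre of $\widehat{\scL}$ (resp.\ of $\widehat{\scR}$) that enters~\eqref{eq:smooth amplitude} is the coherent extension of $\widehat{\varrho}$, resp.\ $\widehat{\beta}$; and by~\eqref{eq:evaluation of beta_ij} together with the analogous formula for $\tilde{\varrho}$ these are set up precisely so that $\Psi_1^\vee(\Psi_0)$ equals $z_0 z_1$ in the closed case and $\tr(\psi \circ \psi_f)$ in the open case. Therefore $\Psi_{1|x}^\vee\big(\scZ_{\CG,\CE}[1_{(Y,f)}]_{|x}(\Psi_{0|x})\big) = \Psi_{1|x}^\vee(\Psi_{0|x})$ for all smooth sections $\Psi_0, \Psi_1^\vee$ and all $x \in U$, and non-degeneracy of the fibre-wise evaluation pairing yields $\scZ_{\CG,\CE}[1_{(Y,f)}] = 1_{\scZ_{\CG,\CE}(Y,f)}$. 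I expect the genuine work to be purely bookkeeping: arranging the induced orientations on $\{0\} \times Y$ and $\{1\} \times Y$ (inward-normal convention), the parameterisations $\gamma_0, \gamma_1$, and the coherence isomorphisms $\widehat{R}$, $\widehat{\varrho}$, $\widehat{\beta}$ so that all representatives match on the nose; the analytic content is immediate.
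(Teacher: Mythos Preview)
Your proposal is correct and follows essentially the same approach as the paper: reduce to connected $Y$, use the pullback trivialisation $\CT = \pr_Y^*\CT_0$ with $\rho = 0$ so the exponential factor vanishes, and observe that the remaining amplitude collapses to the evaluation pairing. The paper's proof is simply a terser version of yours---it writes out the $Y \cong [0,1]$ case and declares the $\bbS^1$ case analogous, whereas you spell out both cases and the matching with $\widehat{\varrho}$, $\widehat{\beta}$ explicitly.
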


\begin{proof}
Since $\scZ_{\CG,\CE}$ is respects the monoidal structure by construction, we can restrict ourselves to connected $Y$.
Assume further that $(Y,f)$ is an object with $\ori_\FR = +$; the case for $\ori_\FR = -$ is analogous.
Recall from~\eqref{eq:id bordism} that, for any object $(Y,f) \in \OCBord_2(M,Q)(\pt)$, we have
\begin{align}
        1_{(Y,f)} &= \big[ (Y,f), (Y,f),\, [0,1] \times Y,\, [0,\epsilon) \times Y,\, (-\epsilon,0] \times Y,
        \\*
        &\qquad \iota_{[0,\epsilon)} \times 1_Y,\, (sh_1 \circ \iota_{(-\epsilon, 0]}) \times 1_Y,\, f \circ \pr_Y,\, b \circ \pr_Y \big]\,.
\end{align}
Let us assume that $Y \cong [0,1]$.
We choose a parameterisation $\phi \colon [0,1] \to Y$ and a trivialisation $\CT_0 \colon (f \circ \phi)^*\CG \to \CI_0$ on $[0,1]$.
Because of the particularly simple form of the identity bordism, we can use the pullback of $\CT_0$ along the projection $[0,1]^2 \to [0,1]$ to trivialise the pullback of $\CG$ to $[0,1]^2 \cong [0,1] \times Y$.
For this choice of trivialisation, the surface integral in the amplitude $\CA^{\CG,\CE}([[0,1] \times Y,f \circ \pr_Y], (\psi_1^\vee, \psi_0))$ is trivial for any pair of incoming states $\psi_0$ and duals of outgoing states $\psi_1^\vee$.
Further, the incoming and outgoing string boundaries of $1_{(Y,f)}$ agree as manifolds with maps to the target space $M$.
Consequently, the amplitude is just the evaluation pairing between the finite-dimensional vector space $\scZ_{\CG,\CE|\pt}(Y,f)$ and its dual, and thus defines the identity operator
\begin{equation}
        \scZ_{\CG,\CE|\pt} (1_{(Y,f)}) = 1_{\scZ_{\CG,\CE|\pt}(Y,f)}\,,
\end{equation}
as claimed.
A completely analogous argument applies for $Y \cong \bbS^1$.
The argument then extends to non-connected compact 1-manifolds, and thus to all objects $(Y,f) \in \OCBord_2(M,Q)(*)$, since $\scZ$ is symmetric monoidal.
\end{proof}

\begin{lemma}
$\scZ_{\CG,\CE}(\pt)$ respects the composition of bordisms.
\end{lemma}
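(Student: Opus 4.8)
The plan is to reduce the statement to a sewing identity for the surface amplitude and then read it off from the defining equation~\eqref{eq:smooth amplitude}. Let $[\Sigma,\sigma]\colon(Y_0,f_0)\to(Y_1,f_1)$ and $[\Sigma',\sigma']\colon(Y_1,f_1)\to(Y_2,f_2)$ be morphisms in $\OCBord_2(M,Q)(\pt)$, with composite $[\Sigma'',\sigma'']$, where $\Sigma''=\Sigma\cup_{Y_1}\Sigma'$ is glued along the collars of $(Y_1,f_1)$ and $\sigma''$ is the induced smooth map. Since $\scZ_{\CG,\CE}$ is symmetric monoidal we may assume $Y_1$ connected. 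Because the evaluation pairing between a finite-dimensional vector space and its dual is non-degenerate, $\scZ_{\CG,\CE}[\Sigma'',\sigma'']=\scZ_{\CG,\CE}[\Sigma',\sigma']\circ\scZ_{\CG,\CE}[\Sigma,\sigma]$ is equivalent, via~\eqref{eq:smooth amplitude}, to the identity
\[
        \CA^{\CG,\CE}[\Sigma'',\sigma'',\psi_2^\vee,\psi_0]
        =\sum_k\CA^{\CG,\CE}[\Sigma,\sigma,(e^k)^\vee,\psi_0]\cdot\CA^{\CG,\CE}[\Sigma',\sigma',\psi_2^\vee,e_k]
\]
for all $\psi_0\in\scZ_{\CG,\CE}(Y_0,f_0)$ and $\psi_2^\vee\in\scZ_{\CG,\CE}(Y_2,f_2)^\vee$, where $\{e_k\}$ is a basis of $\scZ_{\CG,\CE}(Y_1,f_1)$, $\{e^k\}$ is the dual basis, and $(e^k)^\vee$ is its image under the identification $\scZ_{\CG,\CE}(Y_1,f_1)^\vee\cong\scZ_{\CG,\CE}((Y_1,f_1)^\vee)$ of Remark~\ref{rmk: use of beta}; this is precisely the elementary fact that the matrix of a composite of linear maps is the product of the matrices.

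Next I would fix compatible auxiliary data for the three amplitudes. As $\Sigma''$ is two-dimensional, $\rmH^3(\Sigma'',\RZ)=0$, so there is a trivialisation $\CT''\colon\sigma''^*\CG\to\CI_{\rho''}$ (as in the proof of Proposition~\ref{st:amplitude and homotopies}); restricting $\CT''$ along $\Sigma,\Sigma'\hookrightarrow\Sigma''$ gives trivialisations on the two pieces, which by Lemma~\ref{st:sh_independent_of_trivialisations} may be used to evaluate $\CA^{\CG,\CE}[\Sigma,\sigma,\dots]$ and $\CA^{\CG,\CE}[\Sigma',\sigma',\dots]$. By Lemma~\ref{st:sh_reparameterisation_invariance} the parameterisation of $Y_1$ used on the $\Sigma$-side may be taken to agree (up to the orientation reversal built into bordism gluing) with the one used on the $\Sigma'$-side. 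With these choices $\scZ_{\CG,\CE}(Y_1,f_1)$ is described explicitly: for $Y_1\cong\bbS^1$ it is spanned by $[[\gamma^*\CT''_{|Y_1}],1]$, while for $Y_1\cong[0,1]$ with endpoint corners $x_0,x_1$ it is $\Hom(E_{x_0},E_{x_1})$ with $E_{x_a}=\Delta(\CE_{|f_1(x_a)},\CT''_{|x_a})$, and one takes $\{e_k\}$ to be the matrix units.

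The comparison of the two sides is then a direct computation. The exponential factors match since $\int_{\Sigma''}\rho''=\int_\Sigma\rho''_{|\Sigma}+\int_{\Sigma'}\rho''_{|\Sigma'}$. For the product over $\pi_0(\partial\Sigma'')$, each connected component of $\partial\Sigma''$ is of one of three types. Components disjoint from $Y_1$ lie entirely in $\partial\Sigma$ or $\partial\Sigma'$ and contribute exactly the factor $z_c$ already occurring in the corresponding piece. If $Y_1\cong\bbS^1$, the circle $Y_1$ becomes an interior curve of $\Sigma''$ and contributes nothing, while on the $\Sigma$-side and the $\Sigma'$-side it contributed scalars coming from $(e^k)^\vee$ and $e_k$; one-dimensionality of $\scL_{|\gamma}$, the normalisation of the chosen unit vector, and the defining formula~\eqref{eq:whbeta whvarrho} for $\widehat{\varrho}$ force the sum of their products to be $1$, matching the absence of a $Y_1$-factor on the left. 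If $Y_1\cong[0,1]$, a component $c''$ of $\partial\Sigma''$ is obtained by joining brane-boundary arcs of $\Sigma$ to brane-boundary arcs of $\Sigma'$ at the two points of $\partial Y_1$; the fibre-wise sitting-instant conditions along the collar of $Y_1$ make the bundles $E_{c_a}$ and their connections glue smoothly there, so the parallel transports compose, and by cyclicity of the trace $z_{c''}=\tr(R_{\Sigma'}\circ R_\Sigma)$, where $R_\Sigma\in\Hom(E_{x_0},E_{x_1})$ and $R_{\Sigma'}\in\Hom(E_{x_1},E_{x_0})$ collect the $\lambda$'s along the $\Sigma$- and $\Sigma'$-parts of $c''$. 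Inserting $1_{\Hom(E_{x_0},E_{x_1})}=\sum_k e_k\otimes e^k$ and using $\tr(R_{\Sigma'}\circ R_\Sigma)=\sum_k\langle e^k,R_\Sigma\rangle\,\tr(R_{\Sigma'}\circ e_k)$, together with the fact that $\langle e^k,R_\Sigma\rangle$ (resp.\ $\tr(R_{\Sigma'}\circ e_k)$) is precisely the $Y_1$-contribution to $\CA^{\CG,\CE}[\Sigma,\sigma,(e^k)^\vee,\psi_0]$ (resp.\ to $\CA^{\CG,\CE}[\Sigma',\sigma',\psi_2^\vee,e_k]$) and that the remaining boundary components match as above, yields the displayed identity after multiplying everything out.

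The main obstacle is the orientation-and-duality bookkeeping at the gluing interface. The face $Y_1$ enters $\Sigma$ as outgoing boundary and $\Sigma'$ as incoming boundary with the opposite induced orientation, and the two corresponding vector spaces are identified with each other's duals only through $\widehat{\beta}$, $\widehat{\varrho}$ and the evaluation pairing; the delicate point is to check that the ``dual state'' $(e^k)^\vee$ appearing in $\CA^{\CG,\CE}[\Sigma,\sigma,(e^k)^\vee,\psi_0]$ is exactly the dual basis vector used to resolve the identity inside the trace of $z_{c''}$ (equivalently, that $V_1(\Sigma,\sigma)^\vee$ as defined in~\eqref{eq:outgoing state} is canonically the dual of $\scZ_{\CG,\CE}(Y_1,f_1)$ via~\eqref{eq:whbeta whvarrho}). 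Everything else --- measure-zero additivity of the surface integral, smoothness of the glued bundle and composition of parallel transports, and cyclicity of the trace --- is routine given the constructions of Section~\ref{sect:amplitudes}.
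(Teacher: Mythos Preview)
Your strategy is exactly the one the paper uses: choose a single trivialisation on the glued surface, split the exponential over $\Sigma$ and $\Sigma'$, and for each boundary component $c''$ of $\Sigma''$ that meets the gluing locus insert copies of the identity operator to break the trace into a sum of products recognisable as the two separate amplitudes. The orientation/duality bookkeeping you flag at the end is indeed the only delicate point, and the paper resolves it precisely via the identifications $\widehat\beta,\widehat\varrho$ you invoke.

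There is, however, a genuine gap in your reduction ``since $\scZ_{\CG,\CE}$ is symmetric monoidal we may assume $Y_1$ connected.'' Monoidality lets you factor $\scZ$ over disjoint unions of \emph{bordisms}, not over connected components of the intermediate object in a single composition. A connected $\Sigma$ can have several outgoing components, and then a single connected component $c''\subset\partial\Sigma''$ may cross the image of $Y_1$ at many points, alternating between arcs in $\Sigma$ and arcs in $\Sigma'$; think of composing an open pair of pants with its reverse. Your formula $z_{c''}=\tr(R_{\Sigma'}\circ R_\Sigma)$ with $R_\Sigma\in\Hom(E_{x_0},E_{x_1})$ covers only the case where $c''$ meets $\partial Y_1$ in exactly two points, i.e.\ $Y_1$ a single interval. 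The paper treats the general case directly: it writes
\[
        \lambda_{c''}=\lambda_{c'_{\sfn_{c''}}}\circ\lambda_{c_{\sfn_{c''}}}\circ\cdots\circ\lambda_{c'_1}\circ\lambda_{c_1}
\]
with $\sfn_{c''}$ crossings, inserts an identity $\One_{E_x}=\sum_{\mu_x}e_{x,\mu_x}\otimes e_{x,\mu_x}^\vee$ at \emph{each} intersection point $x\in c''\cap Y_1$, and then reorders the resulting multi-sum so that all $\lambda_{c_n}$-factors assemble into the $\Sigma$-amplitude and all $\lambda_{c'_n}$-factors into the $\Sigma'$-amplitude. The tensor product of the bases $(e_{x,\mu_x})$ over all such $x$ is exactly a basis of $\scZ_{\CG,\CE}(Y_1,f_1)$, which is how the paper recovers your displayed sewing identity in full generality. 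If you drop the connectedness assumption and carry out this multi-point insertion, your argument becomes the paper's.
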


\begin{proof}
Let $[\Sigma, \sigma] \colon (Y_0, f_0) \to (Y_1, f_1)$ and $[\Sigma', \sigma'] \colon (Y_1, f_1) \to (Y_2, f_2)$ be two composable bordisms in $\OCBord_2(M,Q)(\pt)$.
Recall that their composition reads as
\begin{equation}
        [\Sigma', \sigma'] \circ [\Sigma, \sigma]
        \coloneqq \big[ (Y_0, f_0), (Y_2, f_2), \Sigma'', W_0, W_2, \tilde{w}_0, \tilde{w}_2, \sigma'' \big] \ \in \OCBord_2(M,Q)(\pt)\,.
\end{equation}
Here, $\Sigma'' \coloneqq \Sigma \cup_{Y_1} \Sigma'$, $\sigma'' \coloneqq \sigma \cup_{f_1} \sigma'$, and $\tilde{w}_a$ are the collar maps $W_a \to \Sigma''$ canonically induced by $w_a$ for $a = 0,2$.
Choose a trivialisation $\CT \colon \sigma^{\prime \prime *}\CG \to \CI_\rho$ over $\Sigma''$.
For $a = 0,1,2$, vectors in $\scZ_{\CG,\CE}(Y_a,f_a)$ are linear combinations of tensor products of elements of \smash{$\scL^{Y_{a,cl_a}}$} and \smash{$\scR^{Y_{a,e_a}}_{e_a}$} for the respective D-branes for $\CG$ (cf. Section~\ref{sect:amplitudes}), in the notation of the proof of Lemma~\ref{st:amplitudes are smooth}.
Consider an arbitrary incoming state $\psi_0 \in \scZ_{\CG,\CE}(Y_0, f_0)$, dual outgoing state $\psi_2^\vee \in (\scZ_{\CG,\CE}(Y_2, f_2))^\vee$, and parameterisations $\phi_a$ of the manifolds $Y_a$ for $a=0,1,2$.
For the amplitude of the composition we then have
\begin{equation}
\label{eq:glued amplitude}
\begin{aligned}
        \< \psi_2^\vee,\, \scZ_{\CG,\CE} \big( [\Sigma', \sigma'] \circ [\Sigma, \sigma] \big) \psi_0 \>_{\scZ_{\CG,\CE}(Y_2,f_2)}
        &= \exp \bigg( - \int_{\Sigma''} \rho \bigg)\ \prod_{c'' \in \pi_0 \Sigma''} z_{c''}
        \\
        &= \exp \bigg( - \int_{\Sigma''} \rho \bigg)\ \tr \bigg( \bigotimes_{c'' \in \pi_0 \Sigma''} \lambda_{c''} \bigg)\,,
\end{aligned}
\end{equation}
where $\lambda_{c''}$ are the compositions of morphisms in $\Vect$ associated to $c''$ in the construction of $\CA^{\CG,\CE}$ as in Definition~\ref{def:parameterised_amplitude}.
Moreover, we have used the notation $\<-,-\>_V$ for the evaluation pairing of a vector space $V$ and its dual $V^\vee$.

Every connected component $c'' \subset \partial \Sigma''$ which intersects the image of $Y_1$ in $\Sigma''$ is naturally decomposed into two parts,
\begin{equation}
        c'' = \bigg( \bigsqcup_{n=1}^{\sfn_{c''}} c'_n \bigg) \cup \bigg( \bigsqcup_{n=1}^{\sfn_{c''}} c_n \bigg)\,,
        \quad \text{where} \quad
        c'_n \subset \partial \Sigma',\ c_n \subset \partial \Sigma\,,
\end{equation}
and $c'_n$ and $c_n$ are diffeomorphic to $[0,1]$ for all $n \in \{1, \ldots, \sfn_{c''}\}$.
Note that $c''$ can only intersect the image of $Y_1$ at points belonging to the brane boundary of $\Sigma''$.
Similarly, every morphism $\lambda_{c''}$ naturally decomposes as
\begin{equation}
        \lambda_{c''} = \lambda_{c'_{\sfn_{c''}}} \circ \lambda_{c_{\sfn_{c''}}} \circ \ldots \circ \lambda_{c'_1} \circ \lambda_{c_1}\,.
\end{equation}
The reason is that $\lambda_{c''}$ is a composition of morphisms given by the morphisms $\psi_s$ from vectors $[\phi_s^*\CT, \psi_s] \in \scR_{s|f \circ \phi_s}(\CT_{|s})$ which label open components of the open string boundary of $c''$, and parallel transports in the bundles $E_b$ along brane boundary components of $c''$.
The decomposition of $\lambda_{c''}$ is then induced by decomposing the parallel transports in $E_b$ at points where $c''$ intersects the image of $Y_1$.
Consequently, we can rewrite the amplitude~\eqref{eq:glued amplitude} as
\addtocounter{equation}{1}
\begin{align*}
        &\hspace{-2em}\< \psi_2^\vee,\, \scZ_{\CG,\CE} \big( [\Sigma', \sigma'] \circ [\Sigma, \sigma] \big) \psi_0 \>_{\scZ_{\CG,\CE}(Y_2,f_2)}
        \\[0.2cm]
        &= \exp \bigg( \int_{\Sigma''} -\rho \bigg)\
        \tr \bigg( \bigotimes_{c''\, \cap Y_1 = \emptyset} \lambda_{c''}\ \otimes \bigotimes_{c''\, \cap Y_1 \neq \emptyset} \big( \lambda_{c'_{\sfn_{c''}}} \circ \lambda_{c_{\sfn_{c''}}} \circ \ldots \circ \lambda_{c'_1} \circ \lambda_{c_1} \big) \bigg)
        \\[0.2cm]
        &= \exp \bigg( \int_{\Sigma'} -\rho \bigg)\ \exp \bigg( \int_{\Sigma} -\rho \bigg) \tag{\theequation}
        \\
        &\hspace{1cm} \tr \bigg( \bigotimes_{c''\, \cap Y_1 = \emptyset} \lambda_{c''}\ \otimes \bigotimes_{c''\, \cap Y_1 \neq \emptyset} \big( \lambda_{c'_{\sfn_{c''}}} \circ \One \circ \lambda_{c_{\sfn_{c''}}} \circ \One \circ \ldots \circ \lambda_{c'_1} \circ \One \circ \lambda_{c_1} \circ \One \big) \bigg)\,.
\end{align*}
Here we omitted labelling the identities by the vector spaces which they act on in order to avoid unnecessarily heavy notation.
Let $x \in c'' \cap Y_1$, and let $\iota_x \colon \pt \hookrightarrow \Sigma''$ denote its inclusion into $\Sigma''$.
Then, the vector spaces in question are
\begin{equation}
        E_x = \Delta\big( (f_1 \circ \iota_x)^*\CE_{i(x)}, \iota_x^*\CT \big)\,,
\end{equation}
or their duals, depending on orientations.
Note that these are precisely those vector spaces which constitute $\scZ_{\CG,\CE}(Y_1,f_1)$, after forgetting the inner product structure.
Choosing bases $(e_{x,\mu_x})_{\mu_x = 1, \ldots, m_x}$ in each of these vector spaces $E_x$, with dual basis elements denoted $e^\vee_{x,\mu_x}$, we can rewrite each of the identities as
\begin{equation}
        \One _{E_x} = \sum_{\mu_x = 1}^{m_x} e_{x,\mu_x} \otimes e^\vee_{x,\mu_x}\,.
\end{equation}
Thus, the trace is broken up into a sum over products whose factors are of the form $\< e_{x,\mu_x}^\vee, \lambda_{c_n} (e_{y,\mu_y})\>_{E_x}$, where $x$ and $y$ are the initial and endpoint of $c_n$, and accordingly for paths $c'_n$.

Now we  reorder these products:
we  combine the factors involving the maps $\lambda_{c_i}$, and we  separately combine the factors containing the maps $\lambda_{c'_i}$.
The two ways of grouping the factors at a connected component $s \subset Y_1$ are illustrated in the following diagram:
\begin{equation}
\begin{tikzpicture}[scale=0.75]
        \draw[dotted] (5,1) -- (5.5,1);
        \draw (3,1) -- (5,1);
        \draw[dotted] (5,-1) -- (5.5,-1);
        \draw (3,-1) -- (5,-1);
        
        \draw (3,1) -- (3,-1);

        \draw (2,1) -- (2,-1);
        \draw (0,1) -- (2,1);
        \draw (0,-1) -- (2,-1);
        \draw[dotted] (-.5,1) -- (0,1);
        \draw[dotted] (-.5,-1) -- (0,-1);
        
        \node at (0.5,0) {$\Sigma$};
        \node at (4.5,0) {$\Sigma'$};
        
        \node at (2,1) {$\bullet$};
        \node at (2,-1) {$\bullet$};
        \node at (3,1) {$\bullet$};
        \node at (3,-1) {$\bullet$};
        
        \draw[dashed] (2,0) ellipse (0.25cm and 1.5cm);
        \draw[dashed] (3,0) ellipse (0.25cm and 1.5cm);

        \draw[dotted] (13,1) -- (13.5,1);
        \draw (11,1) -- (13,1);
        \draw[dotted] (13,-1) -- (13.5,-1);
        \draw (11,-1) -- (13,-1);
        
        \draw (11,1) -- (11,-1);
        
        \draw (10,1) -- (10,-1);
        \draw (8,1) -- (10,1);
        \draw (8,-1) -- (10,-1);
        \draw[dotted] (7.5,1) -- (8,1);
        \draw[dotted] (7.5,-1) -- (8,-1);
        
        \node at (8.5,0) {$\Sigma$};
        \node at (12.5,0) {$\Sigma'$};

        \draw[dashed] (10.5,1) ellipse (1cm and 0.25cm);
        \draw[dashed] (10.5,-1) ellipse (1cm and 0.25cm);
        
        \node at (10,1) {$\bullet$};
        \node at (10,-1) {$\bullet$};
        \node at (11,1) {$\bullet$};
        \node at (11,-1) {$\bullet$};

        \node at (2.5,-2) {after reorganising};
        \node at (10.5,-2) {before reorganising};
                
\end{tikzpicture}
\end{equation}
For $x \in c'' \cap Y_1$, let $z$ be the unique point in $\partial \Sigma''$ such that $x$ and $z$ are joined by a connected component of the image of $Y_1$ in $\Sigma''$.
Using the fact that the collection $(e_{z,\mu_z} \otimes e_{x,\mu_x}^\vee)_{\mu_x = 1,\ldots,m_x,\, \mu_z = 1,\ldots, m_z}$ forms a basis for $E_z \otimes E_x^\vee \cong \scR_{s|f \circ \phi_s}(\CT_{|s})$, we thus arrive at
\begin{align*}
        &\< \psi_2^\vee,\, \scZ_{\CG,\CE} \big( [\Sigma', \sigma'] \circ [\Sigma, \sigma] \big)\, \psi_0 \>_{\scZ_{\CG,\CE}(Y_2,f_2)}
        \\
        &= \sum_k \< \psi_2^\vee,\, \scZ_{\CG,\CE} [\Sigma', \sigma']\, \psi_{1,k} \>_{\scZ_{\CG,\CE}(Y_2,f_2)} \cdot
                \< \psi_{1,k},\, \scZ_{\CG,\CE} [\Sigma, \sigma]\, \psi_0 \>_{\scZ_{\CG,\CE}(Y_1,f_1)}\,,
\end{align*}
where $(\psi_{1,k})$ is the basis for $\scZ_{\CG,\CE}(Y_1,f_1)$ formed by using the bases $(e_{z,\mu_z} \otimes e_{x,\mu_x}^\vee)$ in the individual tensor factors comprising $\scZ_{\CG,\CE}(Y_1,f_1)$.
Now, by the non-degeneracy of the evaluation pairing, it follows that
\begin{equation}
\begin{aligned}
        \< \psi_2^\vee,\, \scZ_{\CG,\CE} \big( [\Sigma', \sigma'] \circ [\Sigma, \sigma] \big)\, \psi_0 \>_{\scZ_{\CG,\CE}(Y_2,f_2)}
        = \< \psi_2^\vee,\, \scZ_{\CG,\CE} [\Sigma', \sigma'] \circ \scZ_{\CG,\CE}[\Sigma, \sigma]\, \psi_0 \>_{\scZ_{\CG,\CE}(Y_2,f_2)}\,,
\end{aligned}
\end{equation}
for all $\psi_0 \in \scZ_{\CG,\CE}(Y_0,f_0)$ and $\psi_2^\vee \in (\scZ_{\CG,\CE}(Y_2,f_2))^\vee$, which implies the claim.
\end{proof}

To complete the proof of Theorem~\ref{st:BGrb plus D-branes yields OC TFT} we recall that the amplitude $\CA^{\CG,\CE}$ is superficial (Proposition~\ref{st:amplitude and homotopies} and Proposition~\ref{st:A^G and thin maps to M}) which readily implies that $\scZ_{\CG,\CE}$ is a superficial smooth OCFFT. Finally, invertibility is immediately build in into the definition \eqref{eq:Z_G on objects}.

\subsection{Reflection positivity}
\label{sect: Reflection positivity of Z_G}

Here we show that the field theory $\scZ_{\CG,\CE}$ carries a canonical positive reflection structure in the sense of Section~\ref{sect:variants of OCFFTS}.
It turns out that the hermitean metrics on the bundles $\scZ_{\CG,\CE}(Y,f)$ induced from the reflection structures are precisely the hermitean metrics  on the bundles $\widehat{\scR}_{ij}$ and $\widehat{\scL}$ that have been forgotten in \eqref{eq:Z_G on objects}.

Consider an object $(Y,f) \in \OCBord_2(M,Q)(U)$ over a test space $U \in \Cart$.
For simplicity, let us first assume that $Y$ is connected and open, i.e.~that $Y \cong [0,1]$.
We recall from~\eqref{eq:Z_G on objects} that $\scZ_{\CG,\CE}(Y,f) = f^{\vdash *} \scR^Y_{ij}$; here, we omit displaying the forgetful functor $\sfU$.
Consequently, we have
\begin{equation}
        (\scZ_{\CG,\CE} \circ op_{\OCBord})(Y,f)
        =  f^{\vdash *} \scR^{\overline{Y}}_{ji} \,.
\end{equation}
On the other hand, we have
\begin{equation*}
(op_{\VBdl} \circ \scZ_{\CG,\CE})(Y,f)=f^{\vdash *} \overline{\scR^Y_{ij}}.
\end{equation*}
We define the vector bundle isomorphism
\begin{equation}
\label{eq:reflection structure alpha}
        \alpha_{(Y,f)} \coloneqq f^{\vdash *} \widehat{\alpha}_{ij}^{-1} \colon
        f^{\vdash *} r_{ij}^* \scR^{\overline{Y}}_{ji}
        \longrightarrow  f^{\vdash *} \overline{\scR^Y_{ij}} 
\end{equation}
over $U$,
with $r_{ij}$ as  in~\eqref{eq:def alpha hat} and $\widehat{\alpha}_{ij}$ as  in~\eqref{eq:def alpha widehat}; see Section~\ref{sect:pull-push and coherence} for details.
We show that this  yields a natural isomorphism
\begin{equation}
        \alpha \colon \scZ_{\CG,\CE} \circ op_{\OCBord_2} \longrightarrow op_{\VBdl} \circ \scZ_{\CG,\CE}\,.
\end{equation}
Consider a morphism $[\Sigma, \sigma]$ in $\OCBord_2(M,Q)(U)$ from $(Y_0,f_0)$ to $(Y_1, f_1)$.
Let $\Psi_0 \in \Gamma(U, \scZ_{\CG,\CE}(Y_0,f_0))$ be a smooth family of incoming states, and let $\Psi^\vee_1 \in \Gamma(U, \scZ_{\CG,\CE}(Y_1,f_1)^\vee)$ be a smooth family of outgoing dual states.
We consider the evaluation
\begin{align}
        &\hspace{-2em}\< \Psi^\vee_1,\, \alpha_{(Y_1,f_1)} \circ \big( \scZ_{\CG,\CE} \circ op_{\OCBord} [\Sigma, \sigma] \big) \circ \alpha_{(Y_0,f_0)}^{-1} (\Psi_0) \>_{\overline{\scZ_{\CG,\CE} (Y_1,f_1)}}
        \\*[0.1cm]
        &= \< \alpha_{(Y_1,f_1)}^\vee (\Psi^\vee_1),\, \big( \scZ_{\CG,\CE} \circ op_{\OCBord} [\Sigma, \sigma] \big) \circ \alpha_{(Y_0,f_0)}^{-1} (\Psi_0) \>_{\scZ_{\CG,\CE} \overline{(Y_1,f_1)}}
        \\[0.1cm]
        &= \overline{\< \Psi^\vee_1,\, \big( \scZ_{\CG,\CE} [\Sigma, \sigma] \big) (\Psi_0) \>_{\scZ_{\CG,\CE} (Y_1,f_1)}}
        \\*[0.1cm]
        &= \< \Psi^\vee_1,\, \big( op_\VBdl \circ \scZ_{\CG,\CE} [\Sigma, \sigma] \big) (\Psi_0) \>_{\overline{\scZ_{\CG,\CE} (Y_1,f_1)}}\,.
\end{align}
In the first step we have used nothing but the definition of the transpose of a morphism of vector bundles.
The second step is seen from the explicit construction of the amplitude $\CA^{\CG,\CE}$ in Section~\ref{sect:amplitudes-definition}:
using the coherent structure $\widehat{R}$ on $\widehat{\scL}$ and $\widehat{\scR}_{ij}$ to pass to the case where the connected components of $Y$ are given by $[0,1]$ and $\bbS^1$ (i.e.~by choosing parameterisations), the amplitude in the second line has all faces decorated with the hermitean adjoints of the linear maps that occur in the amplitude in the third line, and the integral is taken over $\overline{\Sigma}$.
The last identity is just making use of the relation between evaluation on a complex vector space $V$ and on its complex conjugate,
\begin{equation}
        \< v^\vee, w \>_{\overline{V}} = \overline{\<v^\vee,w \>_V}
        \qquad \forall\, v^\vee \in V^\vee,\, w \in V\,.
\end{equation}
This proves that $\alpha$ is natural in $(Y,f)$.
Furthermore, we see that $\alpha$ satisfies the coherence axiom from Definition~\ref{def: hofpt str on general functor} since $op_\OCBord^2 = 1_\OCBord$ and $op_\VBdl^2 = 1_\VBdl$, and since with respect to any choice of parameterisations of $Y$ the morphism $\alpha$ acts on tensor factors as the morphism $\alpha_{ij}$ from \eqref{def:alpha}; one can readily see that $\alpha_{ji} \circ \alpha_{ij} = 1$.
This proves

\begin{proposition}
\label{prop:reflectionstructure}
For any target space brane geometry $(\CG, \CE) \in \TBG(M,\scQ)$, Equation~\eqref{eq:reflection structure alpha} defines a reflection structure $\alpha$ on $\scZ_{\CG,\CE}$.
\end{proposition}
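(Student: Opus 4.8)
The plan is to check directly that the bundle isomorphisms of~\eqref{eq:reflection structure alpha} assemble into a homotopy fixed point structure on $\scZ_{\CG,\CE}$ with respect to $op_\OCBord$ and $op_\VBdl$, in the sense of Definition~\ref{def: hofpt str on general functor}. First I would complete the definition of the candidate transformation $\alpha$ on all objects of $\OCBord_2(M,Q)(U)$, for each $U\in\Cart$: on an open connected component it is the pullback along $f^\vdash$ of $\widehat{\alpha}_{ij}^{-1}$ as in~\eqref{eq:reflection structure alpha}; on a closed connected component one uses instead the pullback along $f^\vdash$ of the coherent isomorphism $\widehat{\lambda}\colon\widehat{\scL}\to r^*\overline{\widehat{\scL}}$ from Section~\ref{sect:pull-push and coherence}; on the empty manifold $\alpha$ is the identity of $U\times\FC$; and on disjoint unions $\alpha$ is the tensor product of the component-wise isomorphisms. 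Since $op_\OCBord(Y,f)=\overline{(Y,f)}$ only changes the orientation $\ori_T$ (and swaps the two brane labels of an open component), the classifying plot $f^\vdash$ is unchanged, so each $\alpha_{(Y,f)}$ is a well-defined isomorphism $(\scZ_{\CG,\CE}\circ op_\OCBord)(Y,f)\to(op_\VBdl\circ\scZ_{\CG,\CE})(Y,f)$ in $\VBdl(U)$. Compatibility with pullback along $g\colon V\to U$ is automatic, because $g^*(Y,f)$ is classified by $f^\vdash\circ g$ while $\widehat{\alpha}_{ij}$ and $\widehat{\lambda}$ are fixed; hence $\alpha$ is a morphism of presheaves.

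Next I would verify the three conditions on $\alpha$. Monoidality is immediate, since $\scZ_{\CG,\CE}$, $op_\OCBord$, $op_\VBdl$ and $\alpha$ are all defined component-wise on disjoint unions, so the monoidal coherence square is a tensor product of trivial identities. The coherence square of Definition~\ref{def: hofpt str on general functor} reduces, because $op_\OCBord$ and $op_\VBdl$ are \emph{strict} involutions, to the identity $\overline{\alpha_{(Y,f)}}\circ\alpha_{\overline{(Y,f)}}=1$ on every object; on tensor factors this is exactly $\alpha_{ji}\circ\alpha_{ij}=1$ for the interval bundles, read off from~\eqref{eq:orientation involution on scR} (applying $\rev$ twice and taking the hermitean adjoint twice are both trivial), and the analogous statement for $\widehat{\lambda}$ on the loop bundles. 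The substantial point is naturality: for a morphism $[\Sigma,\sigma]\colon(Y_0,f_0)\to(Y_1,f_1)$ one must show
\begin{equation*}
\alpha_{(Y_1,f_1)}\circ\scZ_{\CG,\CE}(op_\OCBord[\Sigma,\sigma])\circ\alpha_{(Y_0,f_0)}^{-1}=op_\VBdl\big(\scZ_{\CG,\CE}[\Sigma,\sigma]\big).
\end{equation*}
I would check this pointwise over $x\in U$ by pairing both sides against arbitrary smooth sections $\Psi_0\in\Gamma(U,\scZ_{\CG,\CE}(Y_0,f_0))$ and $\Psi_1^\vee\in\Gamma(U,\scZ_{\CG,\CE}(Y_1,f_1)^\vee)$, using the defining formula~\eqref{eq:smooth amplitude} for $\scZ_{\CG,\CE}[\Sigma,\sigma]$ and the corresponding formula for $op_\OCBord[\Sigma,\sigma]$, whose underlying surface is $\overline{\Sigma}$. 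After unwinding the definitions of the transpose of a bundle map and of the evaluation pairing on a complex conjugate bundle, everything reduces to the single identity $\CA^{\CG,\CE}[\overline{\Sigma},\sigma_{|\{x\}\times\Sigma},\Psi_{1|x}^\vee,\Psi_{0|x}]=\overline{\CA^{\CG,\CE}[\Sigma,\sigma_{|\{x\}\times\Sigma},\Psi_{1|x}^\vee,\Psi_{0|x}]}$.

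The main obstacle is this conjugation identity for the surface amplitude, which I would prove by running through the construction of Section~\ref{sect:amplitudes-definition} term by term. Fixing a trivialisation $\CT\colon\sigma_{|\{x\}\times\Sigma}^*\CG\to\CI_\rho$ and reusing it on $\overline{\Sigma}$, the exponential factor becomes $\exp(-\int_{\overline{\Sigma}}\rho)=\exp(\int_{\Sigma}\rho)$, which equals $\overline{\exp(-\int_{\Sigma}\rho)}$ since $\rho$ takes imaginary values. For a closed brane face the number $\tr(\hol(E_b))$ is replaced by $\tr(\hol(\overline{E_b}))$ over $\overline{b}$, again its complex conjugate by unitarity of the connection. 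For an open-face chain the action $\psi\mapsto\psi^*$ of $\alpha_{ij}$ from~\eqref{eq:orientation involution on scR}, together with the fact that parallel transports in $\overline{E_b}$ are the hermitean adjoints of those in $E_b$, replaces every linear map $\lambda_a$ by its adjoint and reverses the order of composition around the face (which is independently reversed by the orientation change); since $\tr(\lambda_1^*\circ\cdots\circ\lambda_n^*)=\overline{\tr(\lambda_n\circ\cdots\circ\lambda_1)}$, the factor $z_c$ is again conjugated. Multiplying all contributions yields the identity, hence naturality, completing the verification that $\alpha$ is a reflection structure. Positivity of $\alpha$ — that the induced metrics $\flat_{\scZ,(Y,f)}$ of~\eqref{eq:hermiteanstructureimage} are positive definite and reproduce the metrics on $\widehat{\scR}_{ij}$ and $\widehat{\scL}$ forgotten in~\eqref{eq:Z_G on objects} — is a separate statement (Corollary~\ref{co:positivity}) and is not part of this proposition.
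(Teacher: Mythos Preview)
Your proposal is correct and follows essentially the same route as the paper: define $\alpha$ componentwise via pullback of $\widehat{\alpha}_{ij}^{-1}$ (resp.\ $\widehat{\lambda}$) on open (resp.\ closed) components, verify the coherence square via $\alpha_{ji}\circ\alpha_{ij}=1$ using strictness of both involutions, and prove naturality by pairing against sections and reducing to a conjugation identity for the surface amplitude, checked term by term. Your write-up is in fact more explicit than the paper's in several places (the loop case, presheaf compatibility, and the case-by-case analysis of why each factor $z_c$ conjugates), though note that your displayed identity $\CA^{\CG,\CE}[\overline{\Sigma},\ldots]=\overline{\CA^{\CG,\CE}[\Sigma,\ldots]}$ should really have the $\alpha$-transformed states on the left, as you correctly account for in the paragraph that follows.
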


We recall from Section~\ref{sect:variants of OCFFTS} that for any test space $U \in \Cart$, we have given choices of fixed duals on both $\OCBord_d(M,Q)(U)$ and on $\VBdl(U)$, which are compatible with pullbacks along smooth maps $V \to U$ of cartesian spaces.
Such a choice of fixed duality data canonically defines duality functors (see Appendix~\ref{app:duals in monoidal cats}).
Further, Proposition~\ref{st:duality data and monoidal functors} implies that for any morphism $\scZ \colon \OCBord_d(M,Q) \to \VBdl$ the fixed duals induce a canonical monoidal natural isomorphism
\begin{equation}
        \beta \colon \scZ \circ d_{\OCBord} \rightarrow d_{\VBdl} \circ \scZ\,.
\end{equation}
For $Y \cong [0,1]$, this is the isomorphism
\begin{equation}
\label{eq:duality beta}
        \beta_{(Y,f)} \colon f^{\vdash*} \rev^* \scR_{ji}^{\overline{Y}}
        \longrightarrow
        \big( f^{\vdash*} \scR_{ij}^Y \big)^\vee\,,
\end{equation}
defined in~\eqref{eq:evaluation of beta_ij} and~\eqref{eq:beta_ij}.
If $Y \cong \bbS^1$ instead, $\beta_{(Y,f)}$ is given by the isomorphism $\tilde{\varrho}$ defined in~\eqref{eq:varrho mp}.
We extend $\beta$ to general manifolds $Y$ by sending disjoint unions to tensor products.
Using the canonical identification of a finite-dimensional vector space with its double dual, we have
\begin{align}
        &\hspace{-2em}\< \Psi_0,\, \beta_{(Y_0,f_0)} \circ \big( \scZ_{\CG,\CE} \circ d_{\OCBord} [\Sigma, \sigma] \big) \circ \beta_{(Y_1,f_1)}^{-1} (\Psi^\vee_1) \>_{(\scZ_{\CG,\CE} (Y_0,f_0))^\vee}
        \\*[0.1cm]
        &= \< \beta_{(Y_0,f_0)}^\vee (\Psi_0),\, \big( \scZ_{\CG,\CE} \circ d_{\OCBord} [\Sigma, \sigma] \big) \circ \beta_{(Y_1,f_1)}^{-1} (\Psi^\vee_1) \>_{\scZ_{\CG,\CE} (Y_0,f_0)^\vee}
        \\[0.1cm]
        &= \< \beta_{(Y_0,f_0)^\vee} (\Psi_0),\, \big( \scZ_{\CG,\CE} \circ d_{\OCBord} [\Sigma, \sigma] \big) \circ \beta_{(Y_1,f_1)}^{-1} (\Psi^\vee_1) \>_{\scZ_{\CG,\CE} (Y_0,f_0)^\vee}
        \\[0.1cm]
        &= \< \Psi^\vee_1,\, \scZ_{\CG,\CE} [\Sigma, \sigma] (\Psi_0) \>_{\scZ_{\CG,\CE} (Y_1,f_1)}
        \\[0.1cm]
        &= \< \big( \scZ_{\CG,\CE} [\Sigma, \sigma] \big)^\vee (\Psi^\vee_1),\, \Psi_0 \>_{\scZ_{\CG,\CE} (Y_0,f_0)}
        \\*[0.1cm]
        &= \< \Psi_0,\, \big( d_{\VBdl} \circ \scZ_{\CG,\CE} [\Sigma, \sigma] \big) (\Psi^\vee_1) \>_{(\scZ_{\CG,\CE} (Y_0,f_0))^\vee}\,.
\end{align}
Here, the first identity is just transposing $\beta_{(Y_0,f_0)}$.
The second identity uses that $\beta_{(Y,f)}^\vee = \beta_{(Y,f)^\vee}$ as morphisms $\scZ_{\CG,\CE} (Y,f) \to (\scZ_{\CG,\CE}(Y,f)^\vee)^\vee$; this follows right from the definition of $\beta_{ij}$ and $\tilde{\varrho}$ in Section~\ref{sect:equivar strs on R_ij} and Section~\ref{sect:LBdl_on_loop_space}, respectively.
The third identity follows from the explicit form of the amplitude $\CA^{\CG,\CE}$ (cf.~\eqref{eq:surface_amplitude}, again using the coherence $\widehat{R}$ to pass to the case where the connected components of $Y_i$ are copies of $[0,1]$ and of $\bbS^1$):
the exponential factors agree since the manifolds and maps to target space of the morphisms $[\Sigma, \sigma]$ and $[\Sigma, \sigma]^\vee$ agree.
Moreover, the insertions of $\beta$ and $\beta^{-1}$, together with the use of $\beta$ to define $\Psi_1^\vee$ (cf.~Remark~\ref{rmk: use of beta}), implies that the factors $\lambda_i$ in the amplitudes in the third and the fourth lines agree.
The last two steps are transposing the morphism $\scZ_{\CG,\CE}[\Sigma, \sigma]$ of vector bundles and rewriting the transpose as the dual of a morphism of vector bundles.

\begin{proposition}
\label{st:coincidence of hermitean structures}
Let $(\CG, \CE) \in \TBG(M,Q)$. 
For any test space $U \in \Cart$ and any object $(Y,f) \in \OCBord_2(M,Q)(U)$, the hermitean structure $\flat_{\scZ_{\CG,\CE},(Y,f)}$ on the vector bundle $\scZ_{\CG,\CE}(Y,f)$ over $U$ as defined in~\eqref{eq:hermiteanstructureimage} agrees with the hermitean structure on $\scR^Y_{ij}$ and $\scL^Y$ obtained in Section~\ref{sect:bundles on path spaces}.
\end{proposition}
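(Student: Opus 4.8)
The plan is to reduce the identity to a fibrewise statement and then match it with the defining relations of the isomorphisms $\beta_{ij}$, $\alpha_{ij}$ and $\tilde{\varrho}$, $\tilde{\lambda}$ from Sections~\ref{sect:equivar strs on R_ij} and~\ref{sect:LBdl_on_loop_space}. Since $\scZ_{\CG,\CE}$ and both involutions $op$ and $d$ are symmetric monoidal, and since the musical isomorphism of a tensor product of hermitean metrics is the tensor product of the musical isomorphisms, it suffices to treat connected $Y$. By choosing an orientation-preserving parameterisation of $Y$ and using the coherent structures $\widehat{R}$ on $\widehat{\scR}_{ij}$ and $\widehat{\scL}$ (Section~\ref{sect:pull-push and coherence}), we may further assume $Y = [0,1]$ or $Y = \bbS^1$. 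Finally, both the hermitean structure $\flat_{\scZ_{\CG,\CE},(Y,f)}$ and the musical isomorphism of $h^Y$ are morphisms of vector bundles over $U$, hence are determined fibrewise; so I would fix a point $x \in U$, write $\gamma \coloneqq f^{\vdash}(x)$, and choose a trivialisation $\CT$ of $\gamma^*\CG$.

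First I would compute $\scZ_{\CG,\CE}(\flat_{(Y,f)})_{|x}$ directly from the surface amplitude. The bordism underlying $\flat_{(Y,f)}$ in~\eqref{eq:canonical hermitean structure} is the cylinder $[0,1]\times Y$ carrying the map $f\circ\pr_{U\times Y}$, which is constant in the $[0,1]$-direction, together with its standard collars except that the incoming collar is pre-composed with the reflection $\sfr_0$. Pulling back $\CT$ along $[0,1]\times Y \to Y$ trivialises the pullback gerbe, so the exponential factor in $\CA^{\CG,\CE}$ is trivial, and --- exactly as in the proof of Proposition~\ref{st:amplitude and homotopies} --- all parallel transports occurring in the amplitude are along constant paths and hence identities. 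What remains of the amplitude is therefore a pairing of the two boundary states, and the subtle point is that the reflection $\sfr_0$ reverses the induced orientation of the incoming string boundary: re-expressing the incoming state over the reversed parameterisation via the coherence isomorphism is, by the description of the orientation reversal in~\eqref{eq:def alpha hat}, precisely the operation $[\CT,\psi]\mapsto[\rev^*\CT,\psi^*]$ of~\eqref{eq:orientation involution on scR} in the open case (resp.\ $[[\CT],z]\mapsto[[\rev^*\CT],z]$ of~\eqref{eq:lifted path reversal on loops} in the closed case). Thus $\scZ_{\CG,\CE}(\flat_{(Y,f)})_{|x}$ agrees, after this identification, with the map underlying the tautological pairing on the right-hand side of~\eqref{eq:evaluation of beta_ij}, pre-composed with $\overline{\alpha_{ij}}$ (resp.\ $\overline{\tilde{\lambda}}$).

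Next I would assemble the composite $\flat_{\scZ_{\CG,\CE},(Y,f)} = \beta_{(Y,f)}\circ\scZ_{\CG,\CE}(\flat_{(Y,f)})\circ\alpha_{(Y,f)}^{-1}$ from~\eqref{eq:hermiteanstructureimage}. Over $x$, $\alpha_{(Y,f)}^{-1}$ is (up to the bars and $r_{ij}$-pullbacks) the orientation-reversal isomorphism $\widehat{\alpha}_{ij}$, i.e.\ $\alpha_{ij}$ (resp.\ $\tilde{\lambda}$) for $\gamma$, and $\beta_{(Y,f)} = f^{\vdash*}\beta_{ij}$ (resp.\ $f^{\vdash*}\tilde{\varrho}$) by~\eqref{eq:duality beta}. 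Inserting the relation~\eqref{eq:beta_ij}, $\beta_{ij} = R_\rev^*\flat_{h_{ji}}\circ(\overline{\alpha_{ij}})^{-1}$ (resp.~\eqref{eq:varrho mp}, $\tilde{\varrho} = R_\rev^*\flat_{h_\scL}\circ\overline{\tilde{\lambda}}^{-1}$), the occurrences of $\alpha_{ij}$ --- one from $\scZ_{\CG,\CE}(\flat_{(Y,f)})$, one from the factorisation of $\beta_{ij}$, one from $\alpha_{(Y,f)}^{-1}$ --- cancel in pairs using $\alpha_{ji}\circ\alpha_{ij} = 1$, exactly as in the proof of Proposition~\ref{prop:reflectionstructure}, and one is left with $f^{\vdash*}$ of the musical isomorphism of $h^Y$. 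Equivalently, one can verify the same conclusion by evaluating $\langle\flat_{\scZ_{\CG,\CE},(Y,f)}(\overline{v}),w\rangle$ on $v=[\CT,\psi]$, $w=[\CT,\phi]$: using~\eqref{eq:evaluation of beta_ij} this produces $\tr(\psi^*\circ\phi)$, the Hilbert--Schmidt inner product on $\Hom(\Delta(\CE_{i|\gamma(0)},\CT_{|0}),\Delta(\CE_{j|\gamma(1)},\CT_{|1}))$ that defines $h^Y$ (and $\overline{z}\,w$, the standard metric on $\scL$, in the closed case).

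The step I expect to be the main obstacle is the explicit amplitude computation of $\scZ_{\CG,\CE}(\flat_{(Y,f)})$, and in particular the orientation bookkeeping: one must check that the reflection $\sfr_0$ in the incoming collar of~\eqref{eq:canonical hermitean structure} is exactly what turns the tautological (bilinear) evaluation pairing into a \emph{hermitean} one, via the fibrewise adjoint $\psi\mapsto\psi^*$ forced by~\eqref{eq:orientation involution on scR}. Once this is pinned down, the remainder is unwinding definitions together with the already-established compatibilities of $\alpha$ and $\beta$ with $\scZ_{\CG,\CE}$ (Proposition~\ref{prop:reflectionstructure} and the computation immediately preceding this proposition).
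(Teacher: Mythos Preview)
Your proposal is correct and follows essentially the same route as the paper: reduce to connected, parameterised $Y$, compute the amplitude of the cylinder $\flat_{(Y,f)}$ using a trivialisation with vanishing curving (so the exponential and parallel transports are trivial), and then evaluate the resulting pairing on representatives to obtain $\tr(\psi_1^*\circ\psi_0)$ in the open case and $\overline{z_1}\,z_0$ in the closed case. The paper does exactly your ``equivalently'' computation at the end --- evaluating $\langle\flat_{\scZ_{\CG,\CE},(Y,f)}(\Psi_1),\Psi_0\rangle$ directly as a surface amplitude --- rather than first unpacking $\scZ_{\CG,\CE}(\flat_{(Y,f)})$ and then cancelling the $\alpha$'s against those in~\eqref{eq:beta_ij} and~\eqref{eq:varrho mp}; your more explicit bookkeeping of the $\alpha$-cancellations is a harmless elaboration of the same idea.
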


\begin{proof}
Recall the definition of $\flat_{(Y,f)}$ form~\eqref{eq:canonical hermitean structure}.
By~\eqref{eq:hermiteanstructureimage} we have that
\begin{equation}
        \flat_{\scZ_{\CG,\CE} ,(Y,f)} = \beta_{(Y,f)} \circ \scZ_{\CG,\CE} (\flat_{(Y,f)}) \circ \alpha_{(Y,f)}^{-1}
        \colon \quad \overline{\scZ_{\CG,\CE}(Y,f)} \longrightarrow \big( \scZ_{\CG,\CE} (Y,f) \big)^\vee\,.
\end{equation}
Observe that because of $\sigma = f \circ \pr_Y$ we have $\sigma^* \curv(\CG) = 0$.
Since $U \times [0,1] \times Y$ is homotopy equivalent to $Y$, we have $\rmH^2_\dR(U \times [0,1] \times Y) = 0$, and we may therefore choose a trivialisation $\CT \colon \sigma^* \CG \to \CI_0$.
We let $\Psi_a \in \Gamma(U, \scZ_{\CG,\CE}(Y,f))$, for $a = 0,1$, be smooth sections and consider the amplitude
\begin{equation}
        \< \flat_{\scZ_{\CG,\CE} ,(Y,f)}(\Psi_1),\, \Psi_0 \>_{\scZ_{\CG,\CE}(Y,f)}
        = \< \scZ_{\CG,\CE} (\flat_{(Y,f)}) \circ \alpha_{(Y,f)}^{-1} (\Psi_1),\, \beta_{(Y,f)}^\vee(\Psi_0) \>_{\scZ_{\CG,\CE}(Y,f)}
        = \prod_{c \in \pi_0 Y} z_c\,,
\end{equation}
in the notation of Section~\ref{sect:amplitudes-definition}.
Note that the integral term in the surface amplitude is trivial here because of the special choice of trivialisation.
For every closed component $c$ of $Y$ and corresponding vectors $[[\CT_{|c}], z_{c,a}]$, we obtain $z_c = \overline{z_{c,1}}\, z_{c,0}$.
For every open component $c$ of $Y$ and corresponding vectors $[\CT_{|c},\psi_{c,a}]$ we have $z_c = \tr( \psi_{c,1}^*\, \psi_{c,0})$.
Consequently,
\begin{equation}
\label{eq:induced hermitean pairing}
        \< \flat_{\scZ_{\CG,\CE} ,(Y,f)} (\Psi_1),\, \Psi_0 \>_{\scZ_{\CG,\CE}(Y,f)}
        = h_{(Y,f)}( \Psi_1, \Psi_0)\,,
\end{equation}
where $h_{(Y,f)}$ is the hermitean metric on the vector bundle $\scZ_{\CG,\CE}(Y,f)$ induced by the hermitean metrics on the transgression bundles $\scL^c$ and $\scR^c_{ij}$, for the respective connected components $c$ of $Y$.
On the other hand, the left-hand side of~\eqref{eq:induced hermitean pairing} is equal to $\tilde{h}_{(Y,f)}(\Psi_1, \Psi_0)$, where $\tilde{h}_{(Y,f)}$ is the hermitean metric on $\scZ_{\CG,\CE}(Y,f)$ defined by the hermitean structure $\flat_{\scZ(Y,f)}$.
\end{proof}

Since the hermitian metrics on $\widehat{\scR}_{ij}$ and $\widehat{\scL}$ are positive definite, we obtain the following. 

\begin{corollary}
\label{co:positivity}
The reflection structure $\alpha$ on $\scZ_{\CG,\CE}$ is positive.
\end{corollary}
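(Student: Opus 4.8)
The plan is to read off positivity directly from the identification already established in Proposition~\ref{st:coincidence of hermitean structures}. By Definition~\ref{def:positive reflection structure}, the reflection structure $\alpha$ is positive precisely if, for every $U \in \Cart$ and every object $(Y,f) \in \OCBord_2(M,Q)(U)$, the hermitean bundle metric $\flat_{\scZ_{\CG,\CE},(Y,f)}$ on $\scZ_{\CG,\CE}(Y,f)$ over $U$ -- obtained from $\alpha$ and the canonical hermitean structure~\eqref{eq:canonical hermitean structure} on $(Y,f)$ via~\eqref{eq:hermiteanstructureimage} -- is positive definite. Proposition~\ref{st:coincidence of hermitean structures} identifies this metric with the hermitean metric on $f^{\vdash *}\scR^Y_{ij}$ (for $Y \cong [0,1]$) respectively $f^{\vdash *}\scL^Y$ (for $Y \cong \bbS^1$), with the corresponding tensor product for non-connected $Y$, and with the standard metric on $U \times \FC$ for $Y = \emptyset$. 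Hence it suffices to show that these transgression metrics are positive definite.

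The verification proceeds by tracing through the construction of Section~\ref{sect:background}. Each fibre $\scR_{ij|\gamma}(\CT)$ of~\eqref{eq:defnitionCE} is a finite-dimensional Hilbert space, namely a space of linear maps between two finite-dimensional Hilbert spaces equipped with the Hilbert--Schmidt inner product; the gluing isomorphisms $r_{\CT,\CT'}$ of~\eqref{eq:def of r_(CT,CT')} are conjugations by unitary isomorphisms and hence unitary, so the fibres assemble into a genuine (positive-definite) hermitean vector bundle $\scR_{ij} \to P_{ij}M$, and likewise $\scL = L\CG \times_{\sfU(1)} \FC$ is positive definite as the line bundle associated to a principal $\sfU(1)$-bundle. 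The coherent pull-push $\Psi_* \Phi^*$ of Theorem~\ref{st:Psi_* Phi^* is equivalence} is assembled from pullback along $\Phi^Y_\bullet$ and descent along the subduction $\Psi^Y$, and both pullback and descent of hermitean vector bundles are metric-preserving; thus $\scR^Y_{ij}$ and $\scL^Y$ remain positive definite. Finally, pullback along the plot $f^{\vdash}$ and the tensor product over the connected components of $Y$ preserve positive-definiteness, so the transgression metric on $\scZ_{\CG,\CE}(Y,f)$ is positive definite; by Proposition~\ref{st:coincidence of hermitean structures} this is exactly $\flat_{\scZ_{\CG,\CE},(Y,f)}$, and the claim follows.

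The argument is short and there is no real obstacle; the only point deserving a remark is that the coherent transfer of Section~\ref{sect:pull-push and coherence} does not spoil positive-definiteness. But this is immediate, since descent and pullback of \emph{hermitean} vector bundles preserve the hermitean metric, and in particular its positivity, by construction.
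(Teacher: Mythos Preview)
The proposal is correct and follows the same approach as the paper: both deduce positivity from Proposition~\ref{st:coincidence of hermitean structures} together with the fact that the transgression bundles $\widehat{\scR}_{ij}$ and $\widehat{\scL}$ carry positive-definite hermitean metrics by construction. Your version simply spells out in more detail why those metrics are positive definite and why the coherent pull-push preserves this, whereas the paper takes this as understood from Section~\ref{sect:background}.
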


\subsection{Functorial dependence on the target space brane geometry}
\label{sect:Z_(-) as a functor}

We fix a target space $(M,Q)$ and investigate the dependence of the functorial field theory $\scZ_{\CG,\CE}$ on the target space brane geometry $(\CG,\CE) \in \TBG(M,Q)$.
For a 2-category $\scC$, let $\rmh_1 \scC$ denote its homotopy category.
It has the same objects as $\scC$, and its morphisms are the 2-isomorphism classes of 1-morphisms in $\scC$.

\begin{theorem}
\label{st:Z_(-) is a functor}
The assignment $(\CG,\CE) \mapsto \scZ_{\CG,\CE}$ defined in Sections~\ref{sect:TFT_definition} and \ref{sect: Reflection positivity of Z_G} extends to a functor
\begin{equation}
        \scZ \colon \rmh_1\TBG(M,\scQ) \rightarrow \rmRP\text{-}\OCFFT_{2}^{\sf}(M,\scQ)^{\times}\,.
\end{equation}
\end{theorem}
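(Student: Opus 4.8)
The plan is to construct the functor $\scZ$ on 1-morphisms and 2-morphisms of $\TBG(M,\scQ)$ directly, check that 2-isomorphic 1-morphisms induce the same natural isomorphism of OCFFTs (so that it descends to $\rmh_1\TBG$), and then verify functoriality and compatibility with all the extra structure (invertibility, superficiality, reflection-positivity). First I would fix a 1-morphism $(\CA,\xi)\colon(\CG,\CE)\to(\CG',\CE')$ in $\TBG(M,\scQ)$ and produce from it, for every test space $U\in\Cart$ and every object $(Y,f)\in\OCBord_2(M,Q)(U)$, an isomorphism $\scZ_{(\CA,\xi)}(Y,f)\colon\scZ_{\CG,\CE}(Y,f)\to\scZ_{\CG',\CE'}(Y,f)$ of vector bundles over $U$. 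Concretely: for $Y\cong[0,1]$ with brane labels $i,j$, a $1$-isomorphism $\CA\colon\CG\to\CG'$ induces, via the structural isomorphisms~\eqref{eq:structural isos for Delta} for $\Delta$ and the fact that $\CA^*=\CA^{-1}$, a canonical isomorphism $\scR_{ij}\to\scR'_{ij}$ of the transgression bundles over $P_{ij}M$ (the $2$-isomorphisms $\xi_i$ are used to identify $\Delta(\CE_{i|\gamma(0)},\CT_{|0})$ with $\Delta(\CE'_{i|\gamma(0)},(\CA\circ\CT)_{|0})$); for $Y\cong\bbS^1$, $\CA$ acts on the fibres of the transgression line bundle $\scL\to LM$ by $[\CT]\mapsto[\CA\circ\CT]$, giving $\scL\to\scL'$. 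These isomorphisms are manifestly $\rmD([0,1])$- resp.\ $\rmD(\bbS^1)$-equivariant (they commute with the action $R$, which only modifies $\CT$ by pre-composition with a diffeomorphism), and they are compatible with the connections and hermitean metrics; hence by the functoriality of $\Psi_*\Phi^*$ from Theorem~\ref{st:Psi_* Phi^* is equivalence} they extend to isomorphisms of the coherent bundles $\widehat{\scR}_{ij}\to\widehat{\scR}'_{ij}$ and $\widehat{\scL}\to\widehat{\scL}'$. Pulling back along $f^\vdash$ and applying the forgetful functor $\sfU$ yields $\scZ_{(\CA,\xi)}(Y,f)$, and one extends to disjoint unions by tensor products and to $Y=\emptyset$ by the identity on $U\times\FC$.

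The core computation is naturality in $(Y,f)$, i.e.\ that for every morphism $[\Sigma,\sigma]$ in $\OCBord_2(M,Q)(U)$ the square relating $\scZ_{\CG,\CE}[\Sigma,\sigma]$ and $\scZ_{\CG',\CE'}[\Sigma,\sigma]$ via $\scZ_{(\CA,\xi)}$ commutes. This is checked exactly as reflection-positivity was checked in Section~\ref{sect: Reflection positivity of Z_G}: evaluate both composites against arbitrary smooth families of states $\Psi_0,\Psi_1^\vee$, pass to the pointwise surface amplitudes $\CA^{\CG,\CE}$ and $\CA^{\CG',\CE'}$, choose a common trivialisation $\CT\colon\sigma^*\CG\to\CI_\rho$ (and use $\CA\circ\CT$ for $\sigma^*\CG'$), and observe from Definition~\ref{def:parameterised_amplitude} that the defining data $z_c$ — exponential factors, holonomies of the bundles $E_b$, and the traces of the composites $\lambda_a$ — are literally the same for both amplitudes once one matches $E_x=\Delta(\CE_{i(x)|\sigma(x)},\CT_{|x})$ with $E'_x=\Delta(\CE'_{i(x)|\sigma(x)},(\CA\circ\CT)_{|x})$ along the isomorphism induced by $\xi_{i(x)}$ and the structural isomorphisms of $\Delta$. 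In particular $\scZ_{(\CA,\xi)}$ respects composition of bordisms because it acts tensor-factor-wise on the vector spaces $E_x$ that get "cut" in the gluing argument. The same pointwise argument shows $\scZ_{(\CA,\xi)}$ is compatible with the reflection structures $\alpha$ and $\beta$ (the isomorphisms $\alpha_{ij},\beta_{ij},\tilde\lambda,\tilde\varrho$ are natural in $\CG$ and $\CE$ up to the $\xi_i$) and with pullbacks along $V\to U$, so that $\scZ_{(\CA,\xi)}$ is a $2$-morphism in $\mathrm{RP}\text{-}\OCFFT_2^\sf(M,\scQ)^\times$.

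Next I would check that $\scZ$ descends to $\rmh_1\TBG(M,\scQ)$: given a $2$-isomorphism $\psi\colon(\CA,\xi)\to(\CA',\xi')$ as in~\eqref{eq:2-morphism in TBG}, the induced maps on $\Delta$-bundles agree because the isomorphism~\eqref{eq:1-mors and sections of Delta} identifies $2$-morphisms of gerbes with parallel unit-length sections of $\Delta$, and the commuting triangle~\eqref{eq:2-morphism in TBG} forces the two identifications of $E_x$ with $E'_x$ to coincide — this is precisely the kind of statement established in~\cite{BW:Transgression_and_regression_of_D-branes} for the transgression bundles themselves, and the coherent extension via $\Psi_*\Phi^*$ preserves it. Finally, functoriality ($\scZ_{1_{(\CG,\CE)}}=1_{\scZ_{\CG,\CE}}$ and $\scZ_{(\CA',\xi')\circ(\CA,\xi)}=\scZ_{(\CA',\xi')}\circ\scZ_{(\CA,\xi)}$) follows from the corresponding strict functoriality of $\CA\mapsto(\CA\circ-)$ on trivialisations together with the functoriality of descent and of $\Psi_*\Phi^*$; I would also note that each $\scZ_{(\CA,\xi)}$ is an isomorphism since $\CA$ and the $\xi_i$ are invertible, which is why the target is the \emph{groupoid} $(\dots)^\times$ and why $\rmh_1\TBG$ rather than $\TBG$ is the right source (there are no non-invertible $1$-morphisms to worry about, as $\TBG$ is a $2$-groupoid). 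The main obstacle is bookkeeping: carefully tracking how the $2$-isomorphisms $\xi_i$ thread through the definition of $\Delta$, through the coherent pull-push of Section~\ref{sect:pull-push and coherence}, and through the dualisation isomorphisms $\beta,\tilde\varrho$, so that the single pointwise amplitude identity genuinely implies all the required coherences at once; conceptually nothing new happens beyond what is already done for $\scZ_{\CG,\CE}$ itself, but the diagram-chase to confirm that the identifications are the canonical ones (and independent of $\CT$) is the delicate part.
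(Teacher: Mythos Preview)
Your proposal is essentially correct and follows the same route as the paper: construct $\scZ_{(\CA,\xi)}$ on the transgression bundles via the action of $\CA$ on trivialisations and of the $\xi_i$ on the $\Delta$-factors, extend coherently via $\Psi_*\Phi^*$, verify naturality by a direct comparison of the surface amplitudes computed with matching trivialisations, and then check that 2-isomorphic 1-morphisms give equal natural transformations. One small slip: your composites are in the wrong order. A trivialisation $\CT\colon\gamma^*\CG\to\CI_0$ is turned into a trivialisation of $\gamma^*\CG'$ by $\CT\circ\gamma^*\CA^{-1}$, not $\CA\circ\CT$ (and likewise $\sigma^*\CG'$ is trivialised by $\CT\circ\sigma^*\CA^{-1}$); once this is fixed, the amplitude comparison goes through exactly as you describe, and the paper's explicit diagram chases for independence of $\CT$ and for the 2-isomorphism invariance are precisely the ``bookkeeping'' you flag as the delicate part.
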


We carry out the proof of this theorem in the remainder of this section.
Let us first define $\scZ$ on 1-morphisms $(\CA,\xi) \colon (\CG, \CE) \to (\CG', \CE')$ in $\TBG(M,Q)$ (see Section~\ref{sect:BGrbs} for the definition).
We denote the coherent transgression bundles obtained from $(\CG',\CE')$ by $\widehat{\scR'}$ and $\widehat{\scL'}$.
Consider an arbitrary test space $U \in \Cart$, and let \smash{$(Y,f) \in \OCBord_2(M,Q)(U)$}.
Since $\scZ_{\CG,\CE}$ is symmetric monoidal, we may restrict ourselves to connected manifolds $Y$.

We first consider the case where $Y \cong [0,1]$.
The field theory $\scZ_{\CG,\CE}$ sends the object $(Y,f)$ to a vector bundle $\scZ_{\CG,\CE}(Y,f)$ over $U$.
We choose a parameterisation of $Y$ and write $\hat{f}$ for the composition $U \times [0,1] \to U \times Y \to M$.
Further, let $\iota_t \colon U \hookrightarrow U \times [0,1]$, $x \mapsto (x,t)$ for $t \in [0,1]$, and write
\begin{equation}
        \partial_a \hat{f} \coloneqq \hat{f} \circ \iota_a \colon U \to M
\end{equation}
for $a = 0,1$.
We choose a trivialisation $\CT \colon \hat{f}^*\CG \to \CI_\rho$ over $U \times [0,1]$.
Then there are canonical isomorphisms (cf.~\eqref{eq:CE via composition} and~\eqref{eq:Z_G on objects})
\begin{equation}
\label{eq:Z_A on open objects}
\begin{split}
        \scZ_{\CG,\CE} (Y,f)
        & \cong \Hom \big( \Delta \big( (\partial_0 \hat{f})^*\CE_i, \iota_0^*\CT \big),\, \Delta \big( (\partial_1 \hat{f})^*\CE_j, \iota_1^*\CT \big) \big)
        \\[0.1cm]
        &\cong  \Hom \big( \sfR \big( (\partial_0 \hat{f})^*\CE_i \circ \iota_0^*\CT^{-1} \big),\, \sfR \big( (\partial_1 \hat{f})^*\CE_j \circ \iota_1^*\CT^{-1} \big) \big)\,.
\end{split}
\end{equation}
In this representation of the bundle $\scZ_{\CG,\CE} (Y,f)$, we define isomorphisms
\begin{align}
\label{eq:zeta_A,i,0}
        &\xi_{\CA,i,0} \coloneqq \sfR \big( (\partial_0 \hat{f})^*\xi_i \circ 1_{\CT^{-1}} \big)
        \colon \sfR \big( (\partial_0 \hat{f})^*\CE_i \circ \iota_0^*\CT^{-1} \big)
        \rightarrow \sfR \big( (\partial_0 \hat{f})^*\CE'_i \circ \iota_0^*\CT'{}^{-1} \big)\,,
\end{align}
and we define $\xi_{\CA,j,1}$ analogously.
For $\psi \in \scZ_{\CG,\CE}(Y,f)$ we then define smooth isomorphisms
\begin{equation}
\label{eq:Z_A,T}
        \scZ_{\CA,\CT}(\psi) \coloneqq \xi_{\CA,j,1} \circ \psi \circ \xi_{\CA,i,0}^{-1}
\end{equation}
of vector bundles over $U$.
For $Y \cong \bbS^1$ with parameterisation $\phi \colon \bbS^1 \to Y$ and $\CT \colon (f \circ \phi)^*\CG \to \CI_0$ we set
\begin{equation}
        \scZ_{(\CA,\xi)|(Y,f)} \big[ [\CT], z \big] \coloneqq \big[ [\CT \circ (f \circ \phi)^*\CA^{-1}], z \big]\,.
\end{equation}
We claim that these morphisms are compatible with changes of trivialisations and with reparameterisations of $Y$, so that they induce a smooth bundle isomorphism
\begin{equation}
        \scZ_{(\CA,\xi)|(Y,f)} \colon \scZ_{\CG,\CE}(Y,f) \rightarrow \scZ_{\CG',\CE'}(Y,f)\,.
\end{equation}
Indeed, the compatibility with reparameterisations is seen readily from the construction; acting with an orientation-preserving diffeomorphism $g \colon [0,1] \to [0,1]$ sends $\scZ_{\CA,\CT}$ to $\scZ_{\CA, g^*\CT}$.
Thus, if we can show the compatibility with changes of trivialisations $\CT$, then, by the functoriality of the coherent pull-push construction $\Psi_*\Phi^*$, this isomorphism induces an isomorphism $\scZ_{\CG,\CE} (Y,f) \to \scZ_{\CG',\CE'} (Y,f)$.

We check the compatibility with changes of trivialisations fibre-wise over points $x \in U$:
we set $\jmath_x \colon [0,1] \to U \times [0,1]$, $t \mapsto (x,t)$, and we set
\begin{equation}
        \hat{f}_x \colon [0,1] \to M\,,
        \quad
        \hat{f}_x(t) = \hat{f}(x,t)
        \qandq
        f_x \colon Y \to M\,,
        \quad
        y \mapsto f(x,y)\,.
\end{equation}
Let $\CS \colon \hat{f}^*\CG \to \CI_{\rho'}$ be another trivialisation and set
\begin{equation}
        \CT' \coloneqq \CT \circ \hat{f}^*\CA^{-1}
        \qandq
        \CS' \coloneqq \CS \circ \hat{f}^*\CA^{-1}\,.
\end{equation}
We obtain the diagram
\begin{equation}
\label{eq:Z_A and trivialisations}
\xymatrix{        & \scR_{ij|\hat{f}_x}(\jmath_x^*\CT) \ar[dd]^{r_{\CS,\CT}} \ar[rr]^{\scZ_{\CA,\CT}} \ar[dl]^{r_\CT}
        & & \scR'_{ij|\hat{f}_x}(\jmath_x^*\CT') \ar[dd]_{r_{\CS',\CT'}} \ar[dr]^{r_{\CT'}} &
        \\
        \scR_{ij|\hat{f}_x} \ar[dd]_{\widehat{R}}^{\cong} & & & & \scR'_{ij|\hat{f}_x} \ar[dd]^{\widehat{R}}_{\cong}
        \\
        & \scR_{ij|\hat{f}_x}(\jmath_x^*\CS) \ar[rr]_{\scZ_{\CA,\CS}}  \ar[ul]^{r_\CS}
        & & \scR'_{ij|\hat{f}_x}(\jmath_x^*\CS') \ar[ur]^{r_{\CS'}} &
        \\
        (f^{\vdash *} \scR^Y_{ij})_{|x} \ar@{-->}[rrrr]^{\scZ_{(\CA,\xi)|x}} & & & & (f^{\vdash *} \scR'{}^Y_{ij})_{|x}
}
\end{equation}
The right-hand triangle commutes by the construction of the fibre of \smash{$\scR'_{ij|\gamma}$} as a colimit of the $\scR_{ij|\gamma}(\CT')$ (cf.~Equation~\eqref{eq:def R_ij}).
The same argument applies to the left-hand triangle.
We are thus left to show that the upper square in~\eqref{eq:Z_A and trivialisations} commutes.
For every $x \in U$ there exists a 2-isomorphism $\zeta_x \colon \jmath_x^*\CT \to \jmath_x^*\CS$, since such 2-isomorphisms are in one-to-one correspondence with parallel unit-length section of $\jmath_x^*\Delta(\CS,\CT) \in \HLBdl^\nabla([0,1])$.
We also introduce
\begin{equation}
        \zeta'_x \coloneqq \zeta_x \circ 1_{f_x^*\CA}\,.
\end{equation}
Then we can use $\zeta_x$ and $\zeta'_x$ to make the canonical isomorphisms $r_{\CS,\CT}$ and $r_{\CS', \CT'}$ explicit (recall~\eqref{eq:def of r_(CT,CT')}).
For $\psi \in \scZ_{\CG,\CE}(Y,f)$ and omitting pullbacks to $\{x\} \times [0,1]$, we obtain the commutative diagram
\begin{equation}
\xymatrix@C=4em{\sfR (\CE'_i \circ \CT_0'{}^{-1}) \ar[rd]^{\sfR(\xi_{i,0} \circ 1)^{-1}} \ar[rrr]^{\scZ_{\CA,\CT}(\psi)} & & & \sfR (\CE'_j \circ \CT_1'{}^{-1}) \ar[ddd]^{\sfR(1 \circ \zeta_{x,1}'{}^{(-1)})^{-1}}
        \\
        & \sfR (\CE_i \circ \CT_0^{-1}) \ar[r]^{\psi} & \sfR (\CE_j \circ \CT_1^{-1}) \ar[ru]^{\sfR(\xi_{i,1} \circ 1)} \ar[d]^{\sfR(1 \circ \zeta_{x,1}^{(-1)})^{-1}} &
        \\
        & \sfR (\CE_i \circ \CS_0^{-1}) \ar[u]^{\sfR(1 \circ \zeta_{x,0}^{(-1)})} \ar[r]_{r_{\CS,\CT}(\psi)} & \sfR (\CE_j \circ \CS_1^{-1}) \ar[rd]^{\sfR(\xi_{i,1} \circ 1)} &
        \\
        \sfR (\CE'_i \circ \CS_0'{}^{-1}) \ar[ru]^{\sfR(\xi_{i,0} \circ 1)^{-1}} \ar[rrr]_{\scZ_{\CA,\CS} \circ r_{\CS,\CT}(\psi)} \ar[uuu]^{\sfR(1 \circ \zeta_{x,0}'{}^{(-1)})} & & & \sfR (\CE'_j \circ \CS_1'{}^{-1})
}
\end{equation}
Here we denote the \emph{horizontal} inverse%
\footnote{This is defined in analogy to how one defines the dual of a morphism in a monoidal category, see Appendix~\ref{app:duals in monoidal cats}, using isomorphisms $A^{-1} \circ A \to 1$ and $1 \to A \circ A^{-1}$ in place of the evaluation and coevaluation.}
of a (not necessarily invertible) 2-morphism $\psi$ by $\psi^{(-1)}$.
Observe that by~\eqref{eq:def of r_(CT,CT')} the commutativity of the outer square is the desired identity
\begin{equation}
        r_{\CS',\CT'} \circ \scZ_{\CA,\CT} = \scZ_{\CA,\CS} \circ r_{\CS,\CT}\,.
\end{equation}
First of all, the inner square is just the definition of $r_{\CS,\CT}$, so it commutes.
The top and bottom squares are the definitions of $\scZ_{\CA,\CT}$ and of $\scZ_{\CA,\CS} \circ r_{\CS,\CT}(\psi)$, respectively, whence they commute as well.
Finally, the left and right squares commute as a consequence of the interchange law in a 2-category and the fact that the 2-morphisms on the vertical and horizontal arrows act on different factors in the respective compositions of 1-morphisms.
Thus, the representatives $\scZ_{\CA,\CT}$, which depend on choices of trivialisations, induce a well-defined isomorphism
\begin{equation}
        \scZ_{(\CA,\xi)|(Y,f)} \colon \scZ_{\CG,\CE}(Y,f) \rightarrow \scZ_{\CG',\CE'}(Y,f)\,.
\end{equation}
of vector bundles on $U$.

\begin{lemma}
Let $(\CA, \xi) \colon (\CG,\CE) \to (\CG', \CE')$ be a 1-isomorphism in $\TBG(M,\scQ)$.
The vector bundle isomorphisms $\scZ_{(\CA,\xi)|(Y,f)}$ defined above form an isomorphism of smooth OCFFTs
\begin{equation}
        \scZ_{(\CA,\xi)} \colon \scZ_{\CG,\CE} \rightarrow \scZ_{\CG',\CE'}\,.
\end{equation}
\end{lemma}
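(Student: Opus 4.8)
The plan is to verify that the family of bundle isomorphisms $\scZ_{(\CA,\xi)|(Y,f)}$ constitutes a morphism of presheaves of symmetric monoidal categories, i.e.\ that it is natural in $(Y,f)$, compatible with the symmetric monoidal structures, compatible with pullbacks along morphisms $V \to U$ in $\Cart$, and lands in the reflection-positive, superficial, invertible subcategory. I would organise the proof around these checks in that order, using throughout the fibre-wise description of all structures via the surface amplitude $\CA^{\CG,\CE}$ and its analogue $\CA^{\CG',\CE'}$.

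First I would establish naturality: given a morphism $[\Sigma,\sigma] \colon (Y_0,f_0) \to (Y_1,f_1)$ in $\OCBord_2(M,Q)(U)$, I must show that $\scZ_{(\CA,\xi)|(Y_1,f_1)} \circ \scZ_{\CG,\CE}[\Sigma,\sigma] = \scZ_{\CG',\CE'}[\Sigma,\sigma] \circ \scZ_{(\CA,\xi)|(Y_0,f_0)}$ as morphisms of vector bundles over $U$. Since both sides are smooth and the evaluation pairing is non-degenerate fibre-wise, it suffices to compare the numbers $\CA^{\CG',\CE'}[\Sigma, \sigma_{|\{x\}\times\Sigma}, \Psi_1^\vee, \scZ_{(\CA,\xi)|(Y_0,f_0)}\Psi_0]$ and $\CA^{\CG,\CE}[\Sigma, \sigma_{|\{x\}\times\Sigma}, (\scZ_{(\CA,\xi)|(Y_1,f_1)})^\vee\Psi_1^\vee, \Psi_0]$ at each $x \in U$. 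Here I would choose a trivialisation $\CT \colon \sigma^*\CG \to \CI_\rho$ over $U \times \Sigma$ and set $\CT' \coloneqq \CT \circ \sigma^*\CA^{-1}$; these have the same curving $\rho$, so the exponential factors $\exp(-\int_\Sigma \rho)$ in both amplitudes agree. It then remains to compare the contributions $z_c$ for each connected component $c$ of $\partial\Sigma$: in case (SA1) the closed-string factors transform by $[[\CT],z] \mapsto [[\CT'],z]$ and leave $z$ unchanged; in cases (SA2) and (SA3) the relevant data are the bundles $E_x = \Delta(\CE_{i(x)|\sigma(x)}, \CT_{|x})$, $E_b = \Delta(\sigma_{|b}^*\CE_{i(b)}, \CT_{|b})$ and their primed analogues, and the 2-isomorphisms $\xi_i \colon \CE_i \to \CE'_i \circ \CA_{|Q_i}$ induce precisely the isomorphisms $\xi_{\CA,i,\bullet}$ entering~\eqref{eq:zeta_A,i,0}. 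Since $\xi_i$ is connection-preserving, these isomorphisms intertwine the parallel transports $pt^{E_b}$ and $pt^{E'_b}$, so conjugating by them does not change the traces $z_c$. The key point is that the definition~\eqref{eq:Z_A,T} of $\scZ_{\CA,\CT}$ precisely inserts $\xi_{\CA,j,1} \circ (-) \circ \xi_{\CA,i,0}^{-1}$, which is exactly the conjugation that cancels against the change of decorations when passing from $E_x$ to $E'_x$; one feeds $\scZ_{(\CA,\xi)|(Y_0,f_0)}\Psi_0$ into the amplitude, the $\xi_{\CA,i,0}$ on the incoming states cancels the $\xi_{\CA,i,0}^{-1}$ hidden in the change of $E_x$ to $E'_x$, and likewise on the outgoing side, so both traces agree.

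Next I would record that $\scZ_{(\CA,\xi)}$ respects the symmetric monoidal structure --- immediate from its component-wise definition on connected $Y$ together with the fact that both $\scZ_{\CG,\CE}$ and $\scZ_{\CG',\CE'}$ send disjoint unions to tensor products --- and that it is compatible with pullbacks along smooth maps $g \colon V \to U$, which follows because the defining formulas~\eqref{eq:Z_A,T} are fibre-wise and the coherent pull-push $\Psi_*\Phi^*$ is functorial. Then I would check compatibility with the reflection structures: using the fibre-wise characterisation~\eqref{eq:reflection structure alpha} of $\alpha_{(Y,f)}$ in terms of $\widehat{\alpha}_{ij}$, and the fact (from~\cite{BW:Transgression_and_regression_of_D-branes}, Section~4.7) that $\alpha_{ij}$ is built from the fibre-wise hermitean adjoint while $\xi_{\CA,\bullet}$ is unitary, one sees that $\scZ_{(\CA,\xi)}$ intertwines the reflection data $\alpha$ on $\scZ_{\CG,\CE}$ and on $\scZ_{\CG',\CE'}$; concretely one compares the two composites $op_\VBdl(\scZ_{(\CA,\xi)}) \circ \alpha$ and $\alpha \circ \scZ_{(\CA,\xi)} \circ op_\OCBord$ on a component of $Y$ and reduces to the identity $\overline{\xi_{\CA,i,0}^{-1}} = (\xi_{\CA,i,0}^*)^{\vee}$, i.e.\ unitarity of $\xi_i$. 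Finally, $\scZ_{(\CA,\xi)|(Y,f)}$ is an isomorphism of vector bundles in each degree (the $\xi_{\CA,i,\bullet}$ are isomorphisms because $\xi_i$ is a 2-isomorphism and $\CA$ is a 1-isomorphism of gerbes), hence $\scZ_{(\CA,\xi)}$ is an isomorphism of smooth OCFFTs, and it automatically preserves superficiality and invertibility since these are properties of the source and target, not extra data.

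I expect the main obstacle to be the bookkeeping in the naturality step: one has to carefully track, for a component $c$ of $\partial\Sigma$ that is a concatenation of several faces $c_1,\dots,c_n$ as in (SA3), how the individual $\lambda_a$'s (some of which are parallel transports in $E_b$, some of which are the state-insertions $\psi_s$) transform under the simultaneous change $\CE \rightsquigarrow \CE'$, $\CT \rightsquigarrow \CT'$, and to see that all the $\xi_{\CA,i(x_a),\bullet}$ factors telescope so that the cyclic trace $\tr(\lambda_n \circ \cdots \circ \lambda_1)$ is genuinely unchanged. This is essentially the same telescoping phenomenon already exploited in the proof that $\scZ_{\CG,\CE}(\ast)$ respects composition, so the argument is morally routine, but writing it cleanly requires introducing notation for the $\xi$-conjugated decorations at each corner and invoking the interchange law in the 2-category $\Grb^\nabla(M)$ to commute the vertical $\xi$-morphisms past the horizontal $\CT$-changes, exactly as in the large commuting diagram~\eqref{eq:Z_A and trivialisations} and the one following it.
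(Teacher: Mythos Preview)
Your proposal is correct and follows essentially the same route as the paper: choose a trivialisation $\CT$ of $\sigma^*\CG$, set $\CT' = \CT \circ \sigma^*\CA^{-1}$, observe that the exponential factors agree, and then run through the cases (SA1)--(SA3) to see that the boundary contributions $z_c$ are unchanged (closed string factors via $[[\CT],z]\mapsto[[\CT'],z]$, closed brane loops via connection-preservation of $\xi_i$, and the trace factors via telescoping of the $\xi_{\CA,i,\bullet}$ under the cyclic trace). One organisational difference: the paper separates the compatibility with reflection structures into the proposition immediately following this lemma, so your reflection-structure paragraph is not needed here and can be dropped (or moved); the lemma as stated only asks for an isomorphism in $\OCFFT_2(M,\scQ)$, not in $\rmRP\text{-}\OCFFT_2(M,\scQ)$.
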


\begin{proof}
In order to see that $\scZ_{(\CA,\xi)}$ is natural, we consider an arbitrary smooth family of bordisms $[\Sigma, \sigma] \colon (Y_0,f_0) \to (Y_1,f_1)$, parameterised over a cartesian space $U \in \Cart$.
Let $\Psi_0 \in \Gamma(U, \scZ_{\CG,\CE}(Y_0,f_0))$ and $\Psi_1^\vee \in \Gamma(U, \scZ_{\CG,\CE}(Y_1,f_1)^\vee)$ be smooth sections.
Choose parameterisations for $\partial_0 \Sigma$ and $\partial_1 \Sigma$, let $\CT \colon \sigma^*\CG \to \CI_\rho$ be a trivialisation, and set $\CT' \coloneqq \CT \circ \sigma^*\CA^{-1} \colon \sigma^*\CG' \to \CI_\rho$.
The value of the scattering amplitude at a point $x \in U$ reads as
\begin{equation}
\label{eq:amplitude before iso}
\begin{split}
        \< \Psi_1^\vee,\, \scZ_{\CG,\CE}[\Sigma, \sigma]\, \Psi_0 \>_{|x}
        &= \CA^{\CG,\CE} \big[\Sigma, \sigma_{| \{x\} \times \Sigma}, \Psi_{1|x}^\vee, \Psi_{0|x} \big]= \exp \bigg( - \int_{\{x\} \times \Sigma} \rho \bigg)\ \prod_{c \in \pi_0(\partial \Sigma)} z_c\,,
\end{split}
\end{equation}
where $\<-,-\>$ is the evaluation pairing on $\scZ_{\CG,\CE}(Y_1,f_1)$.
We compare the amplitude~\eqref{eq:amplitude before iso} to the amplitude
\begin{equation}
\label{eq:amplitude after iso}
\begin{split}
        &\< (\scZ_{(\CA,\xi)})_{|(Y_1,f_1)}^{-\vee} (\Psi_1^\vee),\, \scZ_{\CG',\CE'}[\Sigma, \sigma] \circ \scZ_{(\CA,\xi)|(Y_0,f_0)} (\Psi_0) \>_{|x}
        \\[0.1cm]
        &\quad = \CA^{\CG,\CE} \big[ \Sigma, \sigma_{| \{x\} \times \Sigma}, (\scZ_{(\CA,\xi)})_{|(Y_1,f_1)}^{-\vee} (\Psi_1^\vee)_{|x},\, \scZ_{(\CA,\xi)|(Y_0,f_0)} (\Psi_0)_{|x} \big]
        \\[0.1cm]
        &\quad = \exp \bigg( - \int_{\{x\} \times \Sigma} \rho \bigg)\ \prod_{c \in \pi_0(\partial \Sigma)} z'_c\,,
\end{split}
\end{equation}
where we use the evaluation pairing on $\scZ_{\CG',\CE'}(Y_1,f_1)$.
Observe that because the target bundle gerbes of the trivialisations $\CT$ and $\CT'$ agree, the exponential factors in~\eqref{eq:amplitude before iso} and~\eqref{eq:amplitude after iso} coincide.
We now go through the list (SA1) to (SA3) in Section~\ref{sect:amplitudes-definition} in order determine how the factors $z_c$ differ from the factors $z'_c$.

Regarding (SA1), observe that since on objects $(Y,f)$ with $Y \cong \bbS^1$, the morphism $\scZ_{(\CA,\xi)|(Y,f)}$ sends the element $[[\CT],z]$ to $[[\CT'],z]$, we have $z_c = z'_c$ for boundary components $c$ of this type.

For (SA2), i.e.~if $c$ is a closed loop in the brane boundary $\partial_2 \Sigma$, getting mapped to a D-brane $Q_i$, the 2-isomorphism $\xi_i$ induces an isomorphism of hermitean vector bundles with connection
\begin{equation}
        \sfR \big( (\sigma_{|c})^*\CE_i \circ \CT_{|c}^{-1} \big)
        \cong   \sfR \big( (\sigma_{|c})^*\CE'_i \circ \CT_{|c}'{}^{-1} \big)\,,
\end{equation}
as in~\eqref{eq:zeta_A,i,0}.
Since this isomorphism is connection-preserving, the traces of the holonomies of the source and target bundles around $c$ are equal; hence, $z_c = z'_c$ in this case.

Finally, for (SA3) consider an object $(Y,f)$ with $Y \cong [0,1]$, such that $U \times \partial_0 Y$ gets mapped to some $Q_i$ and such that $U \times \partial_1 Y$ gets mapped to some $Q_j$.
From~\eqref{eq:Z_A,T} we see that the trace in the expression for $z_c$ in (SA3) stays unaffected; the additional terms $\xi_{\CA,i,0}$ and $\xi_{\CA,j,1}$ from the individual factors cancel each other under the trace.

Thus, we see that the two amplitudes~\eqref{eq:amplitude before iso} and~\eqref{eq:amplitude after iso} agree factorwise, and hence that $\scZ_{(\CA,\xi)}$ is natural.
Further, it is symmetric monoidal by construction.
\end{proof}

\begin{lemma}
For every 2-isomorphism $\varphi \colon (\CA, \xi) \to (\CA', \xi')$ in $\TBG(M,Q)$ we have
\begin{equation}
        \scZ_{(\CA,\xi)} = \scZ_{(\CA', \xi')}\,.
\end{equation}
\end{lemma}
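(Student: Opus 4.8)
The plan is to check the identity fibre-wise over a test space $U \in \Cart$ and, on each object $(Y,f)$, reduce to the case of connected $Y$ by symmetric monoidality. For $Y \cong \bbS^1$ the morphism $\scZ_{(\CA,\xi)|(Y,f)}$ only depends on $\CA$, not on $\xi$, and a $2$-isomorphism $\varphi$ restricts over the loop to a $2$-isomorphism $\CA_{|f\circ\phi} \to \CA'_{|f\circ\phi}$ of line bundle morphisms on $\bbS^1$; such a $2$-isomorphism corresponds to a parallel unit-length section of $\Delta(\CA',\CA)$, hence to multiplication by a flat function, and since $\bbS^1$ is connected this function is a constant of modulus one. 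Tensoring the trivialisation with this constant does not change the class $[[\CT\circ(f\circ\phi)^*\CA^{-1}],z]$, so $\scZ_{(\CA,\xi)|(Y,f)} = \scZ_{(\CA',\xi')|(Y,f)}$ for closed $Y$.

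The main work is the open case $Y \cong [0,1]$. Fix a parameterisation of $Y$, let $\hat f$ be the induced map $U\times[0,1]\to M$, and choose a trivialisation $\CT \colon \hat f^*\CG \to \CI_\rho$, giving $\CT' = \CT\circ\hat f^*\CA^{-1}$ and $\CT'' = \CT\circ\hat f^*\CA'^{-1}$ for $\CG'$. Using the representation~\eqref{eq:Z_A on open objects} of $\scZ_{\CG,\CE}(Y,f)$, the two maps in question are $\scZ_{\CA,\CT}(\psi) = \xi_{\CA,j,1}\circ\psi\circ\xi_{\CA,i,0}^{-1}$ and the analogous expression built from $(\CA',\xi')$, both landing in bundles built from $\CT'$ and $\CT''$ respectively. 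The strategy is to write down the $2$-isomorphism $\varphi$ explicitly: over each endpoint it gives a $2$-isomorphism $\CE_i \to \CE'_i\circ\CA_{|Q_i}$ vs.\ $\CE_i \to \CE'_i\circ\CA'_{|Q_i}$, fitting into the commuting triangle~\eqref{eq:2-morphism in TBG}. Applying $(\partial_a\hat f)^*$ and then $\sfR(-\circ 1_{\CT^{-1}})$, respectively $\sfR(1_{\CE'_i}\circ\varphi_{|Q_i}\circ-)$, one gets that $\xi_{\CA',i,0}$ and $\xi_{\CA,i,0}$ differ precisely by the isomorphism $\sfR(1_{\CE'_i}\circ(\hat f^*\varphi)_{|\iota_0})$ that also relates $\CT''$ to $\CT'$. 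In other words, the change-of-trivialisation isomorphism $r_{\CT'',\CT'}$ from~\eqref{eq:def of r_(CT,CT')} intertwines $\scZ_{\CA,\CT}$ and $\scZ_{\CA',\CT}$; this is a direct diagram chase using~\eqref{eq:2-morphism in TBG} and the interchange law in the $2$-category $\Grb^\nabla(M)$, entirely parallel to the diagram displayed after~\eqref{eq:Z_A and trivialisations}. Since both $\scZ_{(\CA,\xi)|(Y,f)}$ and $\scZ_{(\CA',\xi')|(Y,f)}$ are obtained from $\scZ_{\CA,\CT}$, resp.\ $\scZ_{\CA',\CT}$, by the same colimit/coherence identifications $r_{\CT}$ and $\widehat R$, the equality of representatives under $r_{\CT'',\CT'}$ yields the equality of the induced bundle isomorphisms.

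Concretely I would organise the open case as follows. First, record that $\varphi$ induces, for $a = 0,1$, a $2$-isomorphism $(\partial_a\hat f)^*\varphi$ and hence an isomorphism $\sfR\big(1_{\CE'}\circ (\partial_a\hat f)^*\varphi\circ 1_{\CT^{-1}}\big)$ between $\sfR\big((\partial_a\hat f)^*\CE' \circ \iota_a^*\CT'^{-1}\big)$ and $\sfR\big((\partial_a\hat f)^*\CE' \circ \iota_a^*\CT''^{-1}\big)$; this is exactly (a pullback of) the isomorphism appearing in the definition of $r_{\CT'',\CT'}$ on $\scR'_{ij}$. Second, verify the square
\begin{equation}
\label{eq:phi square}
\xymatrix@C=3cm{
\sfR\big((\partial_0\hat f)^*\CE_i \circ \iota_0^*\CT^{-1}\big) \ar[r]^-{\scZ_{\CA,\CT}} \ar@{=}[d] & \scZ_{\CG',\CE'}(Y,f)\text{-representative via }\CT' \ar[d]^{r_{\CT'',\CT'}}
\\
\sfR\big((\partial_0\hat f)^*\CE_i \circ \iota_0^*\CT^{-1}\big) \ar[r]_-{\scZ_{\CA',\CT}} & \scZ_{\CG',\CE'}(Y,f)\text{-representative via }\CT''
}
\end{equation}
commutes, by inserting the triangle~\eqref{eq:2-morphism in TBG} into the defining expressions $\xi_{\CA,j,1}\circ\psi\circ\xi_{\CA,i,0}^{-1}$ and using that the horizontal composition with $\psi$ (which acts on a different tensor factor than the part of $\CA$ being changed) commutes with the vertical $2$-morphisms by the interchange law. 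Third, conclude using the already-established compatibility of the $\scZ_{\CA,\CT}$ with $r_{\CT}$ and $\widehat R$ (proven for the previous lemma) that the induced maps $\scZ_{(\CA,\xi)|(Y,f)}$ and $\scZ_{(\CA',\xi')|(Y,f)}$ coincide. Finally, note naturality in $(Y,f)$ and compatibility with pullbacks along $V\to U$ is automatic since everything is checked fibre-wise, so equality of the components forces $\scZ_{(\CA,\xi)} = \scZ_{(\CA',\xi')}$ as morphisms of presheaves.

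I expect the main obstacle to be purely bookkeeping: tracking the various pullbacks (to $U\times[0,1]$, to $\iota_a$, and along the parameterisation of $Y$) and horizontal vs.\ vertical compositions of $2$-morphisms so that the diagram chase in~\eqref{eq:phi square} is genuinely an instance of~\eqref{eq:2-morphism in TBG} plus the interchange law, rather than something requiring an extra coherence input. There is no conceptual difficulty — the statement is exactly the assertion that the construction of $\scZ_{(\CA,\xi)}$ descended to $2$-isomorphism classes, which is why the cited diagram~\eqref{eq:Z_A and trivialisations} and its subordinate hexagon already contain essentially all the needed $2$-categorical manipulations; the present lemma is the "$\CA$-direction" analogue of the "$\CT$-direction" compatibility proven there.
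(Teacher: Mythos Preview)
Your proposal is correct and follows essentially the same approach as the paper. The open case is handled identically: you set $\CT'' = \CT\circ\hat f^*\CA'^{-1}$, use the 2-isomorphism $\varphi$ to build an explicit representative of $r_{\CT'',\CT'}$, and verify via the triangle~\eqref{eq:2-morphism in TBG} and the interchange law that $r_{\CT'',\CT'}\circ\scZ_{\CA,\CT}=\scZ_{\CA',\CT}$, exactly as the paper does (with its hexagon diagram). One small remark on your $\bbS^1$ argument: the detour through $\Delta(\CA',\CA)$ and ``constant of modulus one'' is unnecessary --- the 2-isomorphism $\varphi$ directly induces a 2-isomorphism $\CT\circ(f\circ\phi)^*\CA^{-1}\cong\CT\circ(f\circ\phi)^*\CA'^{-1}$ of trivialisations, so their isomorphism classes in $L\CG'$ coincide on the nose; the paper in fact leaves this case implicit.
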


\begin{proof}
It suffices to prove the statement for the case where $Y_0,Y_1$ are disjoint unions of copies of $[0,1]$ and $\bbS^1$, since we have constructed $\scZ_{(\CA,\xi)}$ coherently with respect to parameterisations of $Y_0$ and $Y_1$.
Using the same notation as above, we additionally define $\CT'' \coloneqq \CT \circ \hat{f}^*\CA^{\prime -1}$.
First, we consider the case $Y = [0,1]$, with initial point mapped to $Q_i$ and end point mapped to $Q_j$, and show that the diagram
\begin{equation}
\xymatrix{    & \scR'_{ij|\hat{f}_x}(\jmath_x^*\CT') \ar[dd]^{r_{\CT'',\CT'}} \ar[dr]^{r_{\CT'}} &
        \\
        \scR_{ij|\hat{f}_x}(\jmath_x^*\CT) \ar[ur]^{\scZ_{\CA,\CT}} \ar[dr]_{\scZ_{\CA',\CT}} & & \scR'_{ij|\hat{f}_x}
        \\
        & \scR'_{ij|\hat{f}_x}(\jmath_x^* \CT'') \ar[ur]_{r_{\CT''}} &
}
\end{equation}
commutes.
In other words, the composition $r_{\CT'} \circ \scZ_{\CA,\CT}$ defines a morphism from $\scR_{ij|\hat{f}_x}(\jmath_x^*\CT)$ to the colimit \smash{$\scR'_{ij|\hat{f}_x}$}.
The commutativity is seen as follows.
The  connection-preserving 2-isomorphism $\varphi \colon \CA \to \CA'$ that defines the 2-isomorphism $(\CA,\xi) \to (\CA', \xi')$ induces a  connection-preserving isomorphism
\begin{equation}
        \tilde{\varphi} \coloneqq 1_\CT \circ \hat{f}^*\varphi^{-1 (-1)} \colon \CT' \to \CT''
\end{equation}
of trivialisations of $\hat{f}^*\CG$.
We use this  connection-preserving isomorphism in order to represent the isomorphism $r_{\CT'',\CT'}$ (cf.~\eqref{eq:def of r_(CT,CT')}).
For any
\begin{equation}
                \psi \in \scR_{ij|\hat{f}_x}(\CT)
                \cong \Hom \big( \sfR \big( (\partial_0 \hat{f}_x)^*\CE_i \circ \iota_0^*\jmath_x^*\CT^{-1} \big),\, \sfR \big( (\partial_1 \hat{f})^*\CE_j \circ \iota_1^*\jmath_x^*\CT^{-1} \big) \big)
\end{equation}
we obtain a commutative diagram
\begin{equation}
\xymatrix{        \sfR (\CE'_i \circ \CT_0'{}^{-1}) \ar[dr]^{\sfR(\xi_{i,0} \circ 1)^{-1}} \ar[rrr]^{\scZ_{\CA,\CT}(\psi)} & & & \sfR (\CE'_j \circ \CT_1'{}^{-1}) \ar[dd]^{\sfR(1 \circ \tilde{\varphi}_1^{(-1)})^{-1}}
        \\
        & \sfR (\CE_i \circ \CT_0^{-1}) \ar[r]^{\psi} & \sfR (\CE_j \circ \CT_1^{-1}) \ar[ur]^{\sfR(\xi_{j,1} \circ 1)}  \ar[dr]^{\sfR(\xi_{j,1}' \circ 1)}&
        \\
        \sfR (\CE'_i \circ \CT_0''{}^{-1}) \ar[ur]^{\sfR(\xi'_{i,0} \circ 1)^{-1}} \ar[rrr]_{\scZ_{\CA',\CT'}(\psi)} \ar[uu]^{\sfR(1 \circ \tilde{\varphi}_0^{(-1)})} & & & \sfR (\CE'_j \circ \CT_1''{}^{-1})
}
\end{equation}
The top and bottom squares commute by definition of $\scZ_{\CA, \CT}$ and $\scZ_{\CA',\CT}$, respectively, while the left and right triangles commute by definition of 2-morphisms in $\TBG(M,\scQ)$.
Consequently, we have that
\begin{equation}
        r_{\CT'} \circ \scZ_{\CA,\CT} = r_{\CT''} \circ \scZ_{\CA',\CT}
\end{equation}
for every choice of trivialisation $\CT$.
The morphisms $r_{\CT'} \circ \scZ_{\CA,\CT}$ for different choices of trivialisations $\CS$ of $\hat{f}^*\CG$ are compatible as we have already checked in diagram~\eqref{eq:Z_A and trivialisations}.
This implies that the compositions $r_{\CT'} \circ \scZ_{\CA,\CT}$ induce a morphism of the colimits
\begin{equation}
        \scZ_{[\CA,\xi]|(Y,f)} \colon \scZ_{\CG,\CE}(Y,f) \rightarrow \scZ_{\CG',\CE'}(Y,f)
\end{equation}
that only depends on the 2-isomorphism class $[\CA, \xi]$ in $\TBG(M,Q)$ of the 1-morphism $(\CA,\xi)$.
\end{proof}

\begin{proposition}
For every morphism $(\CA,\xi) \colon (\CG, \CE) \to (\CG', \CE')$ in $\TBG(M,\scQ)$, the morphism $\scZ_{[\CA,\xi]} \colon \scZ_{\CG,\CE} \to \scZ_{\CG', \CE'}$ intertwines the reflection structures on $\scZ_{\CG,\CE}$ and $\scZ_{\CG', \CE'}$.
\end{proposition}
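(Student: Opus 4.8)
The plan is to verify the intertwining condition object by object. Unwinding Definition~\ref{def: hofpt str on general functor}, to say that $\scZ_{[\CA,\xi]}$ intertwines the reflection structure $\alpha$ on $\scZ_{\CG,\CE}$ with the reflection structure $\alpha'$ on $\scZ_{\CG',\CE'}$ means that for every test space $U\in\Cart$ and every object $(Y,f)\in\OCBord_2(M,\scQ)(U)$ the square
\begin{equation*}
        op_{\VBdl}\big( \scZ_{[\CA,\xi]|(Y,f)} \big) \circ \alpha_{(Y,f)}
        = \alpha'_{(Y,f)} \circ \scZ_{[\CA,\xi]|\overline{(Y,f)}}
\end{equation*}
of vector bundle isomorphisms over $U$ commutes, where $op_{\VBdl}\big( \scZ_{[\CA,\xi]|(Y,f)} \big)=\overline{\scZ_{[\CA,\xi]|(Y,f)}}$. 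This is a purely object-level identity; no surface amplitudes enter, only the explicit descriptions of $\alpha_{(Y,f)} = f^{\vdash*}\widehat{\alpha}_{ij}^{-1}$ from \eqref{eq:reflection structure alpha}, \eqref{eq:def alpha widehat} and \eqref{eq:orientation involution on scR}, and of $\scZ_{[\CA,\xi]|(Y,f)}$ through the maps $\scZ_{\CA,\CT}$ of \eqref{eq:Z_A,T} together with the formula on loops. Since all of $\scZ_{[\CA,\xi]}$, $\alpha$, $\alpha'$, $op_{\OCBord}$ and $op_{\VBdl}$ are symmetric monoidal, it suffices to treat connected $Y$; and since $\widehat{\scR}_{ij}$, $\widehat{\scL}$, $\widehat{\alpha}_{ij}$, $\widehat{\lambda}$ and $\scZ_{[\CA,\xi]}$ are all images under the coherent pull-push $\Psi_*\Phi^*$ of Theorem~\ref{st:Psi_* Phi^* is equivalence}, its functoriality reduces the square to the corresponding statement for $\scR_{ij}\to P_{ij}M$ (if $Y\cong[0,1]$) and $\scL\to LM$ (if $Y\cong\bbS^1$), with their reflection isomorphisms $\alpha_{ij}$ and $\tilde\lambda$.

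I would first dispose of the closed case. Here $\alpha$ comes from $\tilde\lambda\colon\scL\to R_\rev^*\overline{\scL}$, $[[\CT],z]\mapsto[[\rev^*\CT],\bar z]$, while $\scZ_{[\CA,\xi]}$ acts on loops by $[[\CT],z]\mapsto[[\CT\circ\gamma^*\CA^{-1}],z]$ for the loop $\gamma$ in question. Both composites therefore send $[[\CT],z]$ to $\big[[\rev^*\CT\circ(\gamma\circ\rev)^*\CA^{-1}],\bar z\big]=\big[[\rev^*(\CT\circ\gamma^*\CA^{-1})],\bar z\big]$, using only that pulling a trivialisation of $\gamma^*\CG$ back along $\rev$ and conjugating the complex number both commute with post-composition by $\gamma^*\CA^{-1}$. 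So the square commutes on the nose.

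The open case is the main work. Fix a trivialisation $\CT\colon\hat f^*\CG\to\CI_\rho$ over $U\times[0,1]$ and put $\CT'\coloneqq\CT\circ\hat f^*\CA^{-1}$, so that pullbacks of $\CT,\CT'$ along $\rev$ compute what $\overline{(Y,f)}$ sees. In the presentation \eqref{eq:Z_A on open objects} one has $\scZ_{\CG,\CE}(Y,f)\cong\Hom(E_i,E_j)$ with $E_i=\sfR\big((\partial_0\hat f)^*\CE_i\circ\iota_0^*\CT^{-1}\big)$ and $E_j$ analogously; the morphism $\alpha_{(Y,f)}$ is the bundle version of the fibre-wise hermitean adjoint $\psi\mapsto\psi^*$, as in $\alpha_{ij}\colon[\CT,\psi]\mapsto[\rev^*\CT,\psi^*]$, combined with the orientation/indexing bookkeeping that swaps $i\leftrightarrow j$ and source $\leftrightarrow$ target; and $\scZ_{[\CA,\xi]|(Y,f)}$ is $\psi\mapsto\xi_{\CA,j,1}\circ\psi\circ\xi_{\CA,i,0}^{-1}$ with $\xi_{\CA,i,0}=\sfR\big((\partial_0\hat f)^*\xi_i\circ 1_{\CT^{-1}}\big)$. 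The square then reduces to the two identities
\begin{equation*}
        \big(\xi_{\CA,j,1}\circ\psi\circ\xi_{\CA,i,0}^{-1}\big)^*
        = (\xi_{\CA,i,0}^*)^{-1}\circ\psi^*\circ\xi_{\CA,j,1}^*
        \qandq
        \xi_{\CA,i,0}^*=\xi_{\CA,i,0}^{-1},
\end{equation*}
the first being formal and the second expressing the unitarity of $\xi_{\CA,i,0}$. That unitarity is where the hypotheses are used: $\xi_i$ is a connection-preserving $2$-isomorphism of bundle gerbe morphisms, hence unitary; $\sfR$ is built from the descent functor for hermitean vector bundles with connection (Example~\ref{eg:sfR}), so it carries such $2$-isomorphisms to unitary isomorphisms; and $\CA$ is a $1$-isomorphism of bundle gerbes, so $\CA^*=\CA^{-1}$ by~\cite{Waldorf--Thesis}, which is precisely what makes $E_i$ computed from $\CT$ and the corresponding bundle computed from $\CT'$ match up via a unitary isomorphism. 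One still has to track the interplay of the coherence isomorphisms $\widehat R$ with $\widehat\alpha_{ij}$ and with pullback along $\rev$, and the reindexing $ij\leftrightarrow ji$ induced by reversing orientations; but this is exactly the interchange-law / ``acts on different tensor factors'' mechanism already used in the lemmas of Section~\ref{sect:Z_(-) as a functor}, now applied one step further, and introduces no new difficulty. Reassembling the fibre-wise identities through $\Psi_*\Phi^*$ then gives the commuting square over every $U\in\Cart$, which is the claim.

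I expect the orientation/indexing bookkeeping in the open case — and in particular extracting the clean unitarity statement for $\xi_{\CA,i,0}$ from the unitarity of $\CA$ and of $\xi_i$ — to be the only genuine obstacle; it is routine but notationally heavy. As a cross-check one can argue more abstractly: by \eqref{eq:hermiteanstructureimage}, $\flat_{\scZ_{\CG,\CE},(Y,f)}=\beta_{(Y,f)}\circ\scZ_{\CG,\CE}(\flat_{(Y,f)})\circ\alpha_{(Y,f)}^{-1}$, so $\alpha_{(Y,f)}$ is pinned down by the hermitean metric $\flat_{\scZ_{\CG,\CE},(Y,f)}$ (which, by Proposition~\ref{st:coincidence of hermitean structures}, is the metric on $\widehat\scR_{ij}$ resp.\ $\widehat\scL$), by the canonical duality isomorphism $\beta_{(Y,f)}$, and by $\scZ_{\CG,\CE}(\flat_{(Y,f)})$; the intertwining with $\alpha$ then follows from naturality of $\scZ_{[\CA,\xi]}$ on the bordism $\flat_{(Y,f)}$, from compatibility of $\scZ_{[\CA,\xi]}$ with the canonical $\beta$'s (automatic for a morphism of symmetric monoidal functors, Proposition~\ref{st:duality data and monoidal functors}), and from the fact that the maps $\scZ_{\CA,\CT}$ of \eqref{eq:Z_A,T} are visibly unitary for those metrics — which again comes down to the unitarity of $\CA$ and of the $\xi_i$.
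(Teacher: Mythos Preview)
Your proposal is correct, and your concluding ``cross-check'' paragraph is precisely the paper's own proof: the paper argues in one sentence that transgression sends morphisms in $\TBG(M,\scQ)$ to unitary bundle isomorphisms (citing \cite[Lemma~4.8.1]{BW:Transgression_and_regression_of_D-branes}), and that by Proposition~\ref{st:coincidence of hermitean structures} the hermitean structure coming from the reflection structure agrees with the hermitean metric on $\widehat{\scR}_{ij}$ and $\widehat{\scL}$, so intertwining the reflection structures reduces to unitarity of $\scZ_{[\CA,\xi]}$.

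Your main argument takes the more explicit route of unwinding the square on representatives, splitting into the closed case (where both sides visibly send $[[\CT],z]$ to $[[\rev^*(\CT\circ\gamma^*\CA^{-1})],\bar z]$) and the open case (where the square becomes $(\xi_{\CA,j,1}\circ\psi\circ\xi_{\CA,i,0}^{-1})^*=\xi_{\CA,i,0}\circ\psi^*\circ\xi_{\CA,j,1}^{-1}$, i.e.\ unitarity of the $\xi_{\CA,\bullet,\bullet}$). This buys you a self-contained verification that does not appeal to the external reference, at the cost of the orientation/indexing bookkeeping you flag; the paper's route buys brevity by packaging ``$\scZ_{[\CA,\xi]}$ is unitary for the transgression metrics'' as a single cited input and then invoking Proposition~\ref{st:coincidence of hermitean structures}. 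Both ultimately hinge on the same fact: the $2$-isomorphisms $\xi_i$ and the $1$-isomorphism $\CA$ are unitary in $\Grb^\nabla$, and $\sfR$/descent preserves this.
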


\begin{proof}
This statement holds true because transgression maps morphisms in $\TBG(M,\scQ)$ to  bundle isomorphisms~\cite[Lemma~4.8.1]{BW:Transgression_and_regression_of_D-branes}, and by Proposition~\ref{st:coincidence of hermitean structures} the hermitean structure on the bundle $\scZ_{\CG,\CE}(Y,f)$ obtained from the reflection structure agrees with the hermitean structure obtained from transgression.
\end{proof}

This completes the proof of Theorem~\ref{st:Z_(-) is a functor}.

\begin{remark}
We remark that by the construction of the OCFFT $\scZ_{\CG,\CE}$, the pullback of the parallel transports on the transgression bundles over loop and path spaces agree with the images under $\scZ_{\CG,\CE}$ of the associated path bordisms.
That is, for example, if $Y = [0,1]$ and $\Gamma \colon U \times [0,1] \to P_{ij}M$ is a smooth family of smooth paths with sitting instants, this gives rise to a path bordism $[[0,1]^2, \Gamma^\dashv]$.
The pullback of the parallel transport in $\scR_{ij}$ along $\Gamma$ and the bundle morphism $\scZ_{\CG,\CE}[[0,1]^2, \Gamma^\dashv]$ agree as morphisms of vector bundles $\scZ_{\CG,\CE}(Y,\Gamma_0) \to \scZ_{\CG,\CE}(Y, \Gamma_1)$.
In fact, if we did not know the connection on the transgression bundles, this would reproduce it, and Proposition~\ref{st:path bordisms are unitary} would  imply that this connection is unitary with respect to the canonically induced hermitean structures $\flat_{\scZ_{\CG,\CE}(Y,f)}$.
\qen
\end{remark}

\section{Subsectors, flat gerbes, and TQFTs}
\label{sect:Subsectors and kinetic terms}

In this section, we investigate several variants of the field theories $\scZ_{\CG,\CE}$ constructed in Section~\ref{sect:TFT_construction}.
We consider its closed subsectors with and without branes, and we treat the special case of flat gerbes.
We conclude by showing that in the case where $M$ is a single point our formalism is compatible with the classification of open-closed topological quantum field theories (TQFTs) in terms of coloured, knowledgeable Frobenius algebras.

\subsection{Closed subsectors}

\label{sec:closedsubsector}

We consider an arbitrary target space $(M,Q)$. 
We let $\CBord_{d}(M,Q)$ denote the full sub-presheaf of $\OCBord_{d}(M,Q)$ whose objects over a test space $U \in \Cart$ are those objects \smash{$(Y,f) \in \OCBord_{d}(M,Q)(U)$} with $\partial Y = \emptyset$.
We emphasise that the bordisms in $\CBord_d(M,Q)$ are still allowed to have non-trivial brane boundary $\partial_2 \Sigma \neq \emptyset$, but now every connected component of $\partial_2 \Sigma$ is necessarily diffeomorphic to $\bbS^1$.
The involutions $op_{\OCBord_d}$ and $d_{\OCBord_d}$ restrict to involutions of $\CBord_d(M,Q)$. 

\begin{definition}
A \emph{smooth $d$-dimensional FFT on a target space $(M,\scQ)$} is a morphism
\begin{equation}
        \scZ \colon \CBord_d(M,Q) \rightarrow \VBdl
\end{equation}
of presheaves of symmetric monoidal categories on $\Cart$.
\end{definition}

Invertibility, thin homotopy invariance, superficiality, and reflection structures are defined exactly as for smooth OCFFTs. Every smooth  OCFFT can be restricted to a smooth  FFT, under preservation of all additional properties and reflection structures.
The restriction of our OCFFT $\scZ_{\CG,\CE}$ to $\CBord_d(M,Q)$ will be denoted by $\scZ_{\CG,\CE}^{cl}$.

\begin{proposition}
\label{th:Zcl}
Let $(M,Q)$ be a target space and $(\CG,\CE) \in \TBG(M,Q)$ be a target space brane geometry. Then, 
\begin{equation*}
\scZ^{cl}_{\CG,\CE}:\CBord_2(M,Q) \to \VBdl
\end{equation*}
is a 2-dimensional, invertible, reflection-positive, superficial smooth FFT on $(M,Q)$ with the following properties:
\begin{itemize}

\item 
Its values on bordisms $(\Sigma,\sigma)$ without boundary agree with the usual surface holonomy of the bundle gerbe $\CG$ around $\sigma:\Sigma \to M$. 

\item
Its values on bordisms  without ingoing or outgoing string boundary agree with the surface holonomy in the presence of D-branes defined in~\cite{CJM--Holonomy_on_D-branes,Waldorf--Thesis}.

\end{itemize} 
In other words, our construction extends these surface holonomies to a  fully fledged smooth FFT.
\end{proposition}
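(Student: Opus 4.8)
\textbf{Proof strategy for Proposition~\ref{th:Zcl}.} The plan is to deduce Proposition~\ref{th:Zcl} from the results already established for the full theory $\scZ_{\CG,\CE}$ together with the explicit form of the surface amplitude. First I would observe that $\CBord_2(M,Q)$ is by definition a full sub-presheaf of $\OCBord_2(M,Q)$, and that the involutions $op$ and $d$ as well as the canonical hermitean structures $\flat_{(Y,f)}$ of~\eqref{eq:canonical hermitean structure} restrict to it (for objects $(Y,f)$ with $\partial Y = \emptyset$ these constructions only ever involve closed $Y$). Hence the restriction $\scZ^{cl}_{\CG,\CE} \coloneqq \scZ_{\CG,\CE}|_{\CBord_2(M,Q)}$ is again a morphism of presheaves of symmetric monoidal categories, i.e.\ a smooth FFT; invertibility, superficiality (hence thin homotopy invariance), and the positive reflection structure are inherited immediately from Theorem~\ref{st:BGrb plus D-branes yields OC TFT}, Proposition~\ref{prop:reflectionstructure}, and Corollary~\ref{co:positivity}, since each of the relevant conditions is tested on objects and morphisms of the sub-presheaf, which are a subset of those of the ambient presheaf. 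The only slightly subtle point here is that on $\CBord_2(M,Q)$ the objects have empty boundary, so every connected component of a brane boundary $\partial_2\Sigma$ is necessarily diffeomorphic to $\bbS^1$; this is just an observation about $\<3\>^*$-manifolds whose $\partial_0$ and $\partial_1$ pieces are closed, and it guarantees that only cases (SA2) and (SA1)-with-$z_c\in\sfU(1)$ occur in the amplitude.

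Next I would verify the two identifications of values. For a bordism $(\Sigma,\sigma)$ with $\partial\Sigma = \emptyset$, the vector spaces assigned to source and target are both $\FC$, so $\scZ^{cl}_{\CG,\CE}[\Sigma,\sigma]$ is a number, and by Definition~\ref{def:parameterised_amplitude} it equals $\exp(-\int_\Sigma \rho)$ for a trivialisation $\CT\colon\sigma^*\CG\to\CI_\rho$, with no boundary factors $z_c$ present. This is exactly the standard local formula for the surface holonomy $\hol(\CG,\sigma)$; the fact that it is independent of the choice of $\CT$ is Lemma~\ref{st:sh_independent_of_trivialisations}, and agreement with the definition of surface holonomy in~\cite{Murray:Bundle_gerbs} follows by comparing the two local expressions, as already noted in the Example after Definition~\ref{def:parameterised_amplitude}. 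For the second property, I would take a bordism with $\partial_0\Sigma = \partial_1\Sigma = \emptyset$ but possibly $\partial_2\Sigma \neq\emptyset$; again the value is a number, equal to $\exp(-\int_\Sigma \rho)\prod_{c} z_c$ where each $z_c$ is either $\tr(\hol(E_b))$ for a closed brane component $b$ (case (SA2)) or a cyclic trace $\tr(\lambda_n\circ\dots\circ\lambda_1)$ along a polygon of brane faces joined at corners (case (SA3), with no open-string insertions $\psi_{s}$ since $\partial_0\Sigma = \partial_1\Sigma = \emptyset$). This is precisely the ``holonomy on D-branes'' of~\cite{CJM--Holonomy_on_D-branes, Waldorf--Thesis}; the match is again a comparison of local formulae, and well-definedness is Proposition~\ref{prop:surfaceamplitude}.

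I do not expect a serious obstacle here: the proposition is essentially a packaging and specialisation statement, and all the analytic content (gerbe-independence of trivialisations, reparameterisation invariance, diffeomorphism invariance, functoriality, the reflection structure) has already been proved for the larger theory. The one place requiring a little care is the precise dictionary between the amplitude $\CA^{\CG,\CE}$ restricted to boundary-free or string-boundary-free $\Sigma$ and the formulae for surface holonomy and holonomy-on-D-branes in the cited references, since those references may use different but equivalent presentations (e.g.\ Deligne cocycles, open covers, or bundle-gerbe-module language) rather than the $\Delta$- and $\sfR$-functor formalism used here. I would therefore devote the bulk of the write-up to exhibiting, for a fine enough open cover or triangulation of $\Sigma$ subordinate to a trivialisation, that the two sets of local data produce the same number, invoking the cocycle identities for $\CG$ and the module structure of $\CE$; this is routine but is where the actual verification lies. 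The final sentence of the proposition (that $\scZ^{cl}_{\CG,\CE}$ ``extends'' these holonomies) then follows formally, since a smooth FFT in particular assigns these numbers to the relevant closed bordisms.
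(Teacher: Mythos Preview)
Your approach is correct and matches the paper's own proof, which is a one-line ``straightforward comparison of our definition of Section~\ref{sect:amplitudes} with the literature''; you simply spell out what that comparison entails and correctly note that the structural properties (invertibility, superficiality, reflection-positivity) are inherited by restriction.

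One small slip: in your second bullet you invoke case (SA3) with ``a polygon of brane faces joined at corners''. But in a $\<3\>^*$-manifold every corner lies in $\partial_0\Sigma\cap\partial_2\Sigma$ or $\partial_1\Sigma\cap\partial_2\Sigma$, so if $\partial_0\Sigma=\partial_1\Sigma=\emptyset$ there are no corners at all and every connected component of $\partial\Sigma=\partial_2\Sigma$ is closed. Hence only case (SA2) arises there, and the amplitude is $\exp(-\int_\Sigma\rho)\prod_b \tr(\hol(E_b))$, which is exactly the formula in~\cite{CJM--Holonomy_on_D-branes}. (You already observed this correctly in your first paragraph; the second paragraph just needs to be made consistent with it.) Also, the phrase ``(SA1)-with-$z_c\in\sfU(1)$'' is not quite right: the $z_c$ coming from (SA1) are arbitrary complex numbers representing state vectors, not unit-modulus.
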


\begin{proof}
The statement about the values on bordisms is a straightforward comparison of our definition of Section \ref{sect:amplitudes} with the literature. 
\end{proof}

We define the presheaf
\begin{equation}
        \Bord_{d}(M) \coloneqq \CBord_{d}(M,\emptyset) = \OCBord_{d}(M,\emptyset).
\end{equation}
It has the same objects as $\CBord_{d}(M,Q)$, but only those morphisms $[\Sigma,f]$ with $\partial_2 \Sigma = \emptyset$.
For any target space $(M,Q)$ there are canonical inclusion morphisms
\begin{equation}
\xymatrix{\Bord_{d}(M) \ar@{^(->}[r] & \CBord_{d}(M,Q) \ar@{^(->}[r] & \OCBord_{2}(M,Q)\,.
}
\end{equation}
The first morphism is surjective on objects and faithful over every $U \in \Cart$, but in general not full, whereas the second morphism is fully faithful for every $U \in \Cart$, but in general not essentially surjective.
The involutions $op_{\OCBord_d}$ and $d_{\OCBord_d}$ restrict again to involutions of $\Bord_d(M)$.
\begin{definition}
A \emph{smooth $d$-dimensional FFT on a smooth manifold $M$} is a morphism
\begin{equation}
        \scZ \colon \Bord_d(M) \rightarrow \VBdl
\end{equation}
of presheaves of symmetric monoidal categories on $\Cart$.
We let $\FFT_d(M)$ denote the category of $d$-dimensional smooth FFTs on $M$ with values in $\VBdl$.
\end{definition}

Again, invertibility, thin homotopy invariance, superficiality, and reflection structures are defined exactly as for smooth FFTs on a target space  $(M,Q)$. Every smooth  FFT on $(M,Q)$ can be restricted to a smooth  FFT on $M$. We write $\scZ_{\CG} := \scZ^{cl}_{\CG,\emptyset}$ for this restriction. We obtain the following. 

\begin{proposition}
\label{st:closed smooth TFT from BGrb}
For any bundle gerbe $\CG$ with connection over $M$, 
\begin{equation*}
\scZ_\CG \colon \Bord_d(M) \to \VBdl
\end{equation*}
is a 2-dimensional, invertible, reflection-positive, superficial smooth FFT on $M$, whose values on closed bordisms coincide with the surface holonomy of $\CG$.
\end{proposition}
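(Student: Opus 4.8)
\textbf{Proof plan for Proposition~\ref{st:closed smooth TFT from BGrb}.}

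The plan is to deduce this statement as a direct special case of the results already established for the open-closed theory, by taking the D-brane data to be empty. First I would observe that a bundle gerbe $\CG$ with connection on $M$ is the same thing as a target space brane geometry $(\CG, \emptyset)$ on the target space $(M, \emptyset)$, where the brane-label set $I$ is empty; hence the constructions of Sections~\ref{sect:TFT_definition} and~\ref{sect: Reflection positivity of Z_G} apply and produce a smooth OCFFT $\scZ_{\CG,\emptyset}$ on $(M,\emptyset)$. By Theorem~\ref{st:BGrb plus D-branes yields OC TFT} this OCFFT is $2$-dimensional, invertible, and superficial, and by Proposition~\ref{prop:reflectionstructure} together with Corollary~\ref{co:positivity} it carries a canonical positive reflection structure. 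Next I would restrict along the fully faithful inclusion $\Bord_2(M) = \OCBord_2(M,\emptyset) \hookrightarrow \OCBord_2(M,\emptyset)$; by the discussion preceding the proposition, restriction of a smooth FFT on a target space to a smooth FFT on the underlying manifold preserves invertibility, superficiality (hence thin homotopy invariance), and reflection structures, since all of these properties are phrased objectwise and morphismwise and the inclusion is a faithful morphism of presheaves of symmetric monoidal categories compatible with $op$ and $d$. This yields $\scZ_\CG := \scZ_{\CG,\emptyset}^{cl}$ with all the asserted properties.

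It then remains to identify the values of $\scZ_\CG$ on closed bordisms with the surface holonomy of $\CG$. Here I would invoke the Example in Section~\ref{sect:amplitudes-definition} (immediately after Definition~\ref{def:parameterised_amplitude}): when $\partial \Sigma = \emptyset$ and there are no branes, the surface amplitude $\CA^{\CG,\emptyset}(\Sigma, \sigma, 1, 1)$ equals $\exp(-\int_\Sigma \rho)$ for a trivialisation $\CT \colon \sigma^*\CG \to \CI_\rho$, which is precisely the standard definition of the surface holonomy $\hol(\CG, \sigma)$ (see~\cite{Murray:Bundle_gerbs}); invertibility of $\scZ(\Sigma,\sigma)$ for such $\Sigma$ reflects that this holonomy is a nonzero complex number, i.e.\ an isomorphism $\FC \to \FC$. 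More generally, for a bordism with closed string boundary but $\partial_2\Sigma = \emptyset$, the amplitude reduces to the gerbe holonomy paired against the transgression-line-bundle elements over the boundary loops, matching the standard picture. The argument is essentially a bookkeeping check comparing Definition~\ref{def:parameterised_amplitude} in the brane-free case with the formula for surface holonomy in the literature.

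The main obstacle — modest, since most of the work is already done — is the careful verification that the stated \emph{properties} genuinely descend along the inclusion $\Bord_2(M) \hookrightarrow \OCBord_2(M,\emptyset)$ and, in the other direction, that the $\emptyset$-brane specialisation of the general construction is legitimate: one must check that the constructions in Sections~\ref{sect:TFT_definition}--\ref{sect: Reflection positivity of Z_G} do not implicitly assume $I \neq \emptyset$ (they do not, since all tensor products indexed over an empty set are $\FC$ and all objects with boundary are simply absent). Once this is noted, the proposition follows immediately, and no substantial new argument is needed beyond the Example comparing amplitudes with surface holonomy.
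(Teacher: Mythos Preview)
Your proposal is correct and follows essentially the same approach as the paper: the paper simply defines $\scZ_\CG := \scZ^{cl}_{\CG,\emptyset}$ as the restriction of the open-closed theory with empty brane data and records the proposition as an immediate consequence of Theorem~\ref{st:BGrb plus D-branes yields OC TFT}, Proposition~\ref{prop:reflectionstructure}, Corollary~\ref{co:positivity}, and the comparison with surface holonomy noted in the Example after Definition~\ref{def:parameterised_amplitude}. If anything, you spell out more of the bookkeeping (descent of properties along the inclusion, legitimacy of the $I=\emptyset$ specialisation) than the paper does.
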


\begin{remark}
A smooth FFT on a manifold $M$ is invertible if and only if it factors through the inclusion $\LBdl \hookrightarrow \VBdl$, where $\LBdl$ is the sheaf of groupoids of line bundles. In particular, this holds true for the smooth FFT $\scZ_{\CG}$ of Proposition~\ref{st:closed smooth TFT from BGrb}.
\end{remark}

We conclude this subsection with remarks on the comparison of the smooth field theories considered here and those considered in~\cite{BTW--Gerbes_and_HQFTs}.
The field theories of Bunke-Turner-Willerton are symmetric monoidal functors from a thin-homotopy surface \emph{category} $\CT_M$ of $M$ to the category $\Vect^\times = \LBdl(*)$ of complex lines.
From the definitions in~\cite{BTW--Gerbes_and_HQFTs} we observe that there is an inclusion of symmetric monoidal categories $\iota \colon \CT_M \hookrightarrow \Bord_2^{th}(M)(*)$ which is an equivalence of categories.

\begin{proposition}
Given a gerbe $\CG$ with connection on $M$, the field theory $E^\CG \colon \CT_M \to \Vect^\times$ constructed in~\cite{BTW--Gerbes_and_HQFTs} coincides with the composition
\begin{equation}
	\CT_M \overset{\iota}{\hookrightarrow} \Bord_2^{th}(M)(*) \overset{\scZ_{\CG|*}}{\longrightarrow} \Vect^\times\,,
\end{equation}
where $\scZ_{\CG|*} \colon \Bord_2^{th}(M)(*) \longrightarrow \LBdl(*)$ is the component over $* \in \Cart$ of our smooth FFT $\scZ_\CG$.
\end{proposition}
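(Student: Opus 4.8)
The plan is to prove the stated equality of functors by comparing $E^\CG$ and $\scZ_{\CG|*} \circ \iota$ on objects and on morphisms separately. Since $\iota$, $E^\CG$, and $\scZ_\CG$ are all symmetric monoidal, and $\scZ_\CG$ is thin homotopy invariant and invertible (so that, by Proposition~\ref{prop:thinhomotopyinvariantocfft}, it indeed induces $\scZ_{\CG|*}\colon \Bord_2^{th}(M)(\pt) \to \LBdl(\pt) = \Vect^\times$), it suffices to treat connected loops and connected cobordisms; the general case follows by monoidality. First I would check agreement on objects: a loop $\gamma \colon \bbS^1 \to M$ is sent by $E^\CG$ to the fibre of the transgression line, whose fibre is the set of isomorphism classes of trivialisations of $\gamma^*\CG$ made into a $\FC$-line, while $\scZ_{\CG|*}(\iota(\gamma)) = \sfU\big((f^\vdash)^*\scL^{\bbS^1}\big)_{|\pt} \cong \scL_{|\gamma}$ by~\eqref{eq:Z_G on objects} and the identification $\scL^{\bbS^1} \cong \scL$. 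Thus the first step is to recall that the model of the transgression line in~\cite{BTW--Gerbes_and_HQFTs} and Waldorf's line $\scL$ used here are canonically isomorphic, which is the common definition spelled out in~\cite{Waldorf--Trangression_II}; under this identification the two functors have the same value $\scL_{|\gamma}$ on every object.

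Next I would compare the functors on morphisms. A morphism $\gamma_0 \to \gamma_1$ in $\CT_M$ is a thin-homotopy class of a smooth map $\sigma \colon \Sigma \to M$ out of a cobordism $\Sigma$ between two circles; under $\iota$ it becomes a class $[\Sigma,\sigma]$ in $\Bord_2^{th}(M)(\pt)$ with empty brane boundary, $\partial_2\Sigma = \emptyset$. In this purely closed, D-brane-free situation the surface amplitude of Definition~\ref{def:parameterised_amplitude} collapses dramatically: there are no corners, no auxiliary Hilbert spaces $E_x$, no maps $\lambda_a$, and the only boundary contributions $z_c$ are those of type (SA1), coming from the incoming and outgoing loops. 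Hence, after choosing a trivialisation $\CT \colon \sigma^*\CG \to \CI_\rho$, the defining relation~\eqref{eq:smooth amplitude} says exactly that $\scZ_{\CG|*}[\Sigma,\sigma]\colon \scL_{|\gamma_0} \to \scL_{|\gamma_1}$ sends the scalar representing an incoming state to $\exp\!\big(-\!\int_\Sigma \rho\big)$ times its transport under the change-of-trivialisation isomorphisms on $\partial\Sigma$. On the other side, $E^\CG[\Sigma,\sigma]$ is built from precisely the same recipe --- surface holonomy relative to a trivialisation combined with the induced isomorphisms of transgression lines at the two ends. Matching the two therefore amounts to identifying the boundary datum of $E^\CG$ with the coherence/reparameterisation structure $\widehat{R}$ used here, i.e.\ with the relation between the $\Diff^+(\bbS^1)$-equivariant structure on $\scL$ and its parallel transport worked out in~\cite[Remark~4.3.7]{Waldorf--Trangression_II}. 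Compatibility with composition is then automatic, as both sides are honest functors that have just been shown to agree on all morphisms.

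The main obstacle is not conceptual but a careful reconciliation of conventions: one must align the orientation conventions for induced boundary orientations (inward- versus outward-pointing normal fields), the sign in the curving integral $\int_\Sigma\rho$, the normalisation of the transgression line, and the precise form of representatives of morphisms --- our model carries collars while that of~\cite{BTW--Gerbes_and_HQFTs} need not, a discrepancy that is already absorbed into the established fact that $\iota$ is an equivalence of categories. Once such a dictionary is fixed, the equality $E^\CG = \scZ_{\CG|*} \circ \iota$ on connected objects and morphisms is a direct inspection of the two explicit formulas, and symmetric monoidality extends it to all of $\CT_M$.
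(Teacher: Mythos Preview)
Your proposal is correct and follows essentially the same approach as the paper: compare on objects (both functors assign the fibre of the transgression line bundle to a loop) and on morphisms (both functors assign the surface holonomy of $\CG$), then extend by monoidality. The paper's own proof is extremely terse, merely asserting these two identifications, whereas you supply the unpacking of the amplitude formula and flag the convention-matching that is implicitly assumed.
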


\begin{proof}
Indeed, the value of both functors on simple objects $(\bbS^1,f) \in \Bord_2^{th}(M)(*)$ is the fibre $\CT \CG_{|f}$ of the transgression line bundle of $\CG$ at the loop $f \colon \bbS^1 \to M$.
The value on morphisms is essentially the surface holonomy of the gerbe.
\end{proof}

Apart from the above relation between both constructions, the categories of field theories used in~\cite{BTW--Gerbes_and_HQFTs} and here are very different.
However, let $\mathrm{TIFT}(M)$ be the set of isomorphism classes of thin-invariant smooth two-dimensional field theories as defined in~\cite[Prop.~3.3]{BTW--Gerbes_and_HQFTs}.
We expect that there is an isomorphism of groups
\begin{equation}
	\mathrm{TIFT}(M) \cong \pi_0 \big( \FFT_2^{sf}(M)^\times \big)
\end{equation}
between $\mathrm{TIFT}(M)$ and the group of isomorphism classes of two-dimensional superficial, invertible smooth FFTs.
Indeed, Bunke-Turner-Willerton show that $\mathrm{TIFT}(M)$ is in bijection with isomorphism classes of gerbes with connection on $M$~\cite[Thm.~6.3]{BTW--Gerbes_and_HQFTs}, and we conjecture that our construction induces a bijection between $\pi_0(\FFT_2^{sf}(M)^\times)$ and isomorphism classes of gerbes with connection on $M$ as well (see also the end of Section~\ref{sec:intro}).

Let us comment on why we find that our definition of a smooth field theory is more closely aligned with direct physical intuition (see Section~\ref{sec:intro}).
In~\cite{BTW--Gerbes_and_HQFTs} a rather strict notion of smoothness is used for field theories, which requires the existence of a certain closed 3-form $H$ on $M$ such that for any 3-manifold $V$ with smooth map $v \colon V \to M$ the field theory $E \colon \CT_M \to \Vect^\times$ satisfies
\begin{equation}
	E(\partial V, v_{|\partial V}) = \exp \bigg( \int_V v^*H \bigg)\,.
\end{equation}
For any gerbe $\CG$ with connection on $M$, the restriction $\scZ_{\CG|*} \circ \iota$ of our smooth FFT satisfies this property, where $H = \curv(\CG)$ is the curvature 3-form of $\CG$.
(This follows from a well-known property of the surface amplitude.)

This smoothness condition is powerful, but it does not encode the intuition that the values of the field theory functor should depend smoothly on the input data.
Further, the smoothness property in~\cite{BTW--Gerbes_and_HQFTs} implies additional properties such as invariance under thin 3-bordism (see~\cite[Thm.~3.5]{BTW--Gerbes_and_HQFTs}) which a priori are not related to the smoothness of a field theory.
In our formalism, any additional properties of field theories are treated separately from smoothness and are implemented in the explicit notion of superficiality.
For instance, this implies the invariance under thin three-bordism; we sketch this for the torus:
given a solid torus $V$ with a smooth rank-two map $v \colon V \to M$, we can interpret these data as a thin homotopy between the bordism $[\partial V, v_{|\partial v}] \colon \emptyset \to \emptyset$ and a bordism $[\partial V, \sigma] \colon \emptyset \to \emptyset$, where $\sigma$ is a rank-one map.
If $\scZ$ is a superficial smooth FFT, it now follows that $\scZ [\partial V, v_{|\partial v}] = \scZ [\partial V, \sigma]$ as complex numbers (since $\scZ$ is thin-homotopy invariant).
By the superficiality of $\scZ$ (see Definition~\ref{def:thinhomotopyinvariantocfft} and the preceding paragraph) it now follows that $\scZ [\partial V, \sigma] = 1$.

The presheaf formalism we are using here encodes directly the smooth behaviour of the field theory on both objects and morphisms, and it fits with the modern treatment of functorial field theories~\cite{ST:SuSy_FTs_and_generalised_coho}.
Further, our formalism easily extends to field theories valued in targets other than $\VBdl$, whereas the implementation of smoothness in~\cite{BTW--Gerbes_and_HQFTs} makes specific use of the target category (in particular the fact that it is linear).

Finally, we work with the full category of smooth FFTs rather than with their isomorphism classes, and our construction depends functorially on the input gerbe.
We further believe that this is why in~\cite{BTW--Gerbes_and_HQFTs} there was no need to consider reflection structures; we refer to an upcoming paper for details on this.

\subsection{Flat gerbes and homotopy  field theories}

\label{sec:flatgerbes}

Recall the definition of a  homotopy invariant smooth OCFFT from Definition \ref{def:thinhomotopyinvariantocfft}. 

\begin{theorem}
\label{th:flat}
Let $(\CG, \CE) \in \TBG(M,\scQ)$ be a target space brane geometry on a target space $(M,\scQ)$.
The following are equivalent:
\begin{enumerate}[(1)]
\item The smooth OCFFT $\scZ_{\CG,\CE}$ is homotopy invariant.

\item The connection on the bundle gerbe $\CG$ is flat; that is, $\curv(\CG) = 0$.
\end{enumerate}
\end{theorem}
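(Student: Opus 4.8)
The plan is to reduce the statement to Proposition~\ref{st:amplitude and homotopies}, which already shows that under a homotopy $h$ of scattering diagrams the amplitudes differ by the factor $\exp(\int_{[0,1]\times\Sigma} -h^*\curv(\CG))$. The implication (2)$\Rightarrow$(1) is then immediate: if $\curv(\CG)=0$, this factor is $1$ for every homotopy, so $\scZ_{\CG,\CE}$ takes equal values on homotopy equivalent bordisms (in the sense of Definition~\ref{def:thinhomotopyinvariantocfft}), using that $\scZ_{\CG,\CE}[\Sigma,\sigma]$ is determined fibrewise by the amplitudes $\CA^{\CG,\CE}$ via~\eqref{eq:smooth amplitude}. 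One small point to check is that the homotopies appearing in the definition of homotopy equivalence of tuples (which are rel boundary and constant on the collars and on $\partial_2\Sigma$) are in particular homotopies of scattering diagrams in the sense of Section~\ref{sect:amplitudes-properties}, so Proposition~\ref{st:amplitude and homotopies} indeed applies fibrewise over each $x\in U$.

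For the converse (1)$\Rightarrow$(2), the strategy is to assume $\curv(\CG)\neq 0$ and produce an explicit pair of homotopy equivalent bordisms on which $\scZ_{\CG,\CE}$ disagrees. Since $\curv(\CG)$ is a closed $3$-form on $M$, pick a point $p\in M$ and a chart in which $\curv(\CG)$ is nonzero; one can then find a $3$-ball $B\cong [0,1]\times\Sigma$ with $\Sigma = \bbS^1\times[0,1]$ (a cylinder, viewed as a closed-string bordism $\bbS^1\to\bbS^1$ with empty brane boundary) and a smooth map $H\colon [0,1]\times\Sigma \to M$, constant near $\partial\Sigma$ and with $\int_{[0,1]\times\Sigma} H^*\curv(\CG)\notin 2\pi\RZ$. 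Restricting $H$ to $\{0\}\times\Sigma$ and $\{1\}\times\Sigma$ gives two maps $\sigma_0,\sigma_1\colon\Sigma\to M$ agreeing on $\partial\Sigma$, hence two bordisms $(\Sigma,\sigma_0)$ and $(\Sigma,\sigma_1)$ that are homotopy equivalent as tuples via $H$. By Proposition~\ref{st:amplitude and homotopies} their amplitudes differ by $\exp(\int_{[0,1]\times\Sigma}-H^*\curv(\CG))\neq 1$, so $\scZ_{\CG,\CE}[\Sigma,\sigma_0]\neq\scZ_{\CG,\CE}[\Sigma,\sigma_1]$ as soon as one checks these amplitudes are nonzero (which holds because $\scZ_{\CG,\CE}$ is invertible, Theorem~\ref{st:BGrb plus D-branes yields OC TFT}, so on a closed-string cylinder it is an isomorphism of line bundles). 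Hence $\scZ_{\CG,\CE}$ is not homotopy invariant.

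The main obstacle is the geometric construction in the converse direction: one must exhibit a concrete homotopy $H$ realising a nonzero value of $\int H^*\curv(\CG)$, which requires being careful that $H$ is constant near $\partial\Sigma$ (so that the two boundary restrictions genuinely agree and the homotopy respects the collar and brane-boundary constancy conditions of Definition~\ref{def:thinhomotopyinvariantocfft}), and that the resulting integral does not accidentally land in $2\pi\RZ$ (which would make the obstruction factor trivial). A clean way to arrange this is to work locally: in a chart $\varphi\colon W\hookrightarrow M$ around $p$ with $\varphi^*\curv(\CG)=\omega$ a nonzero closed $3$-form, choose coordinates so that $\omega = g\,dx^1\wedge dx^2\wedge dx^3 + (\text{higher order})$ with $g(0)\neq 0$, and take $H$ to be (a rescaling of) the inclusion of a small box $[0,\epsilon]^3$ composed with $\varphi$, smoothed to be constant near the appropriate faces; shrinking $\epsilon$ makes $\int H^*\curv(\CG)$ small and nonzero, hence not in $2\pi\RZ$. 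This handles the closed sector; since $\CBord_2(M,Q)\subset\OCBord_2(M,Q)$ and $\scZ_{\CG,\CE}$ restricts to $\scZ^{cl}_{\CG,\CE}$, failure of homotopy invariance on the closed sector suffices. With these two implications established, the theorem follows.
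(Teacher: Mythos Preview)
Your argument is correct and follows the same overall route as the paper: both directions reduce to Proposition~\ref{st:amplitude and homotopies}, and the converse is established by producing a map on a $3$-dimensional domain with nontrivial integral of $\curv(\CG)$.

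The only difference is the choice of test surface in $(1)\Rightarrow(2)$. You use a cylinder $\Sigma=\bbS^1\times[0,1]$, which forces you to arrange the homotopy to be constant near $\partial\Sigma$ and to invoke invertibility to ensure the amplitudes are nonzero. The paper instead takes $\Sigma=\bbS^2$, a closed surface viewed as a bordism $\emptyset\to\emptyset$: then any smooth map $h\colon\bbD^3\to M$ (thought of as a homotopy $h_t\colon\bbS^2\to M$ with $h_0$ constant) gives, by homotopy invariance and Proposition~\ref{st:amplitude and homotopies},
\[
\exp\Big(\int_{\bbD^3} h^*\curv(\CG)\Big)=\scZ_{\CG,\CE}[\bbS^2,h_1]^{-1}\,\scZ_{\CG,\CE}[\bbS^2,h_0]=1\,,
\]
and since $h$ is arbitrary this forces $\curv(\CG)=0$. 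This avoids both the boundary-constancy bookkeeping and the appeal to invertibility (the values are already numbers, and the constant-map amplitude is $1$). Your local construction with a small box and the $2\pi\RZ$ check achieves the same conclusion, just with a bit more work.
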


\begin{proof}
If $\CG$ is flat, Proposition~\ref{st:amplitude and homotopies} implies that $\scZ_{\CG,\CE}[\Sigma,\sigma]$ is invariant under homotopies of $\sigma$ that are constant homotopies on $U \times \partial_2 \Sigma$ and the collar neighbourhoods of $\partial_a \Sigma$ for $a=0,1$.

In the other direction, let $\Sigma = \bbD^3$ be the three-dimensional closed disc, and let $h \colon \bbD^3 \to M$ be any smooth map.
We may regard $h$ as a smooth homotopy $t \mapsto h_t$ of maps $h_t \colon \bbS^2 \to M$ for $t \in [0,1]$, with $h_0$ a constant map.
Proposition~\ref{st:amplitude and homotopies} and the homotopy invariance of $\scZ_{\CG,\CE}$ imply that
\begin{equation}
        \exp \Big( \int_{\bbD^3} h^*\curv(\CG) \Big) = \scZ_{\CG,\CE}[\bbD^3, h_1]^{-1}\, \scZ_{\CG,\CE}[\bbD^3,h_0]  = 1\,.
\end{equation}
Since $h$ was arbitrary, this implies that $\curv(\CG) = 0$.
\end{proof}

Thus, field theories arising from flat gerbes are homotopy invariant FFTs, similar to those  studied by Turaev~\cite{Turaev:HQFT}, and for topological (rather than smooth) models of gerbes, in particular, in~\cite{MW:PT_of_flat_gerbes_and_HQFT}.
Our smooth OCFFTs refine these results in that they are smooth, work in the non-flat case, and include D-branes.

\subsection{Open-closed TQFTs and smooth OCFFTs on the point}
\label{sect:Reduction to point}

In~\cite{ST:What_is_an_elliptic_object} the paradigm has been introduced that a functorial field theory with a target space $M$ should be viewed as a \emph{classical field theory on $M$}, whereas a field theory over the point $(M=\pt)$ should be understood as a \emph{quantum field theory}.

In this section we explain how our definition of a smooth OCFFT reduces to the common definition of an open-closed topological quantum field theory (OCTQFT). To this end, we consider the target space $(M,Q)$ where $M$ is a single point and where the collection $\scQ = \{Q_i\}_{i \in I}$ is a family of copies of the point indexed by $I$. We write $I$ as a shortcut for this target space. We consider the symmetric monoidal category
\begin{equation}
        \OCBord_d^{I} \coloneqq \OCBord_d(I)(\pt)\,,
\end{equation}
where $* \in \Cart$ is the point, seen as a zero-dimensional cartesian space.
The symmetric monoidal category $\OCBord_d^{I}$ inherits the two involutions $d_{\OCBord}$ and $op_{\OCBord}$. In analogy with Definitions~\ref{def:smooth field theories} and \ref{def:reflection structure} we set up the following definition.

\begin{definition}
A \emph{$d$-dimensional OCTQFT} is a symmetric monoidal functor
\begin{equation}
        \scZ \colon \OCBord_d^I \rightarrow \Vect.
\end{equation}
\end{definition}

An OCTQFT is called \emph{invertible} if the vector spaces assigned to closed objects $Y\cong S^1$ 
are 1-dimensional and all bordisms with empty brane boundary are sent to isomorphisms. Reflection structures for OCTQFTs and their positivity are defined analogously to the ones for smooth OCFFTs.
Morphisms of (reflection-positive) OCTQFTs are symmetric monoidal transformations (intertwining the reflection structures).
The category  of  $d$-dimensional OCTQFTs with D-brane labels $I$ is denoted by $\mathrm{OCTQFT}_d^I$. We use the prefix \quot{$\rmRP\text{-}$} for reflection-positivity, and the notation $(..)^{\times}$ for invertibility.

The relation between smooth OCFFTs over $I$ and OCTQFTs with D-brane labels $I$ is the evaluation of a morphism of presheaves on $\pt\in\Cart$, which is a functor
\begin{equation}
\label{eq:evaluationfunctor}
        \ev_* \colon \OCFFT_d(I) \longrightarrow \OCTQFT_d^I\,,
        \qquad
        \scZ \mapsto \scZ(\pt).
\end{equation}
We remark that every smooth OCFFT over $I$ is automatically thin homotopy invariant, superficial, and homotopy invariant, so that these properties need not to be discussed here. We describe now the surprising result hat the functor \eqref{eq:evaluationfunctor} is an equivalences of categories.
This is remarkable, since it may seem that smooth FFTs on the point have vastly more structure than TQFTs, as TQFTs do not encode any smooth families of bordisms.
However, functoriality in the test space, combined with our careful construction of the smooth bordism categories $\OCBord_d$, leads to the this result:

\begin{theorem}
\label{st:ev_* is equivalence}
The functor $\ev_*$ is an equivalence of symmetric monoidal categories,
\begin{equation}
\OCFFT_d(I) \cong \OCTQFT_d^{I},
\end{equation}
between the category of  $d$-dimensional,  smooth OCFFTs on $I$ and the category of $d$-dimensional OCTQFTs with D-brane labels $I$. The functor $\ev_{*}$ and this equivalence result extends to the full subcategories of reflection-positive and invertible field theories.
\end{theorem}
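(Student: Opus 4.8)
The plan is to prove that the evaluation functor $\ev_*$ in \eqref{eq:evaluationfunctor} is an equivalence of symmetric monoidal categories by exhibiting an explicit quasi-inverse and checking that the unit and counit of the would-be adjunction are natural isomorphisms. The conceptual point is that a smooth OCFFT on $I$ is \emph{a priori} a family of functors $\scZ(U) \colon \OCBord_d(I)(U) \to \VBdl(U)$, one for each $U \in \Cart$, compatible with pullback; but the objects and morphisms of $\OCBord_d(I)(U)$ contain essentially no information beyond that of $\OCBord_d^I = \OCBord_d(I)(\pt)$, because $M = \{\ast\}$ forces \emph{all} the smooth maps $f \colon U \times Y \to M$ and $\sigma \colon U \times \Sigma \to M$ to be constant. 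Thus I would first establish the following structural lemma: for every $U \in \Cart$, the unique map $U \to \pt$ induces a functor $\OCBord_d^I \to \OCBord_d(I)(U)$ which is \emph{an equivalence of symmetric monoidal categories}, with a chosen quasi-inverse given on objects by forgetting $U$ (replacing $(Y,f)$ with $(Y, f_{|\pt})$, which is meaningful since $f$ is constant) and on morphisms by the analogous operation on the tuple \eqref{eq:morphism family in OCBord}; here one uses that over a contractible test space every bundle of $\langle 3 \rangle^*$-manifolds is trivial (as remarked after Definition~\ref{def: family of bordisms}) and that collars are kept constant over $U$ by construction. This is essentially formal once one observes that the only $U$-dependence in the definitions of Section~\ref{sect:families of bordisms} lives in the maps to $M$, which is a point.

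Given this lemma, I would define the quasi-inverse $\scR \colon \OCTQFT_d^I \to \OCFFT_d(I)$ as follows. Given an OCTQFT $\scT \colon \OCBord_d^I \to \Vect$, for each $U \in \Cart$ set $\scR(\scT)(U)$ to be the composite
\begin{equation}
\OCBord_d(I)(U) \xrightarrow{\ \simeq\ } \OCBord_d^I \xrightarrow{\ \scT\ } \Vect \xrightarrow{\ (-) \times U\ } \VBdl(U)\,,
\end{equation}
where the first arrow is the chosen quasi-inverse of the structural lemma and the last arrow sends a vector space $V$ to the trivial bundle $U \times V$ over $U$ (and $\Vect = \VBdl(\pt)$). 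For a smooth map $g \colon V \to U$ of cartesian spaces one checks that the induced square commutes strictly, using that pullback of trivial bundles along $g$ is again trivial and that the structural equivalences are compatible with the functors $g^*$ on bordism categories (both are, in the end, the operation ``forget the test space and remember the constant map to $\pt$''). This makes $\scR(\scT)$ a genuine morphism of presheaves of symmetric monoidal categories, i.e.\ a smooth OCFFT on $I$. Functoriality of $\scR$ in $\scT$ and compatibility with the symmetric monoidal structure are then routine.

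Next I would verify that $\ev_* \circ \scR \cong \mathrm{id}$ and $\scR \circ \ev_* \cong \mathrm{id}$. The first is immediate: $\ev_*(\scR(\scT)) = \scR(\scT)(\pt) = \scT$ on the nose, since the structural equivalence over $\pt$ is the identity and $(-) \times \pt$ is the identity on $\Vect$. For the second, given a smooth OCFFT $\scZ$ on $I$ I must produce a natural isomorphism $\scZ \cong \scR(\scZ(\pt))$ of morphisms of presheaves, i.e.\ for each $U$ a monoidal natural isomorphism $\scZ(U) \cong U \times \scZ(\pt)(-)$, compatible with all $g^*$. This is where one uses morphism-level functoriality of $\scZ$ in an essential way: for an object $(Y, f) \in \OCBord_d(I)(U)$ with $f$ necessarily constant, both $\scZ(U)(Y,f)$ and $U \times \scZ(\pt)(Y, f_{|\pt})$ are vector bundles over $U$, and the ``constant path'' bordisms and the naturality of $\scZ$ with respect to the projections $U \to \pt$ and inclusions $\pt \hookrightarrow U$ (evaluated at points $x \in U$) furnish a canonical isomorphism between them, which one then checks is smooth, natural in $(Y,f)$, monoidal, and compatible with restriction along $g \colon V \to U$. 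The smoothness here is the only genuinely non-formal ingredient — it amounts to saying that $\scZ(U)$ really is a bundle (not just a fibrewise-constant family of vector spaces) together with the presheaf axiom, which is exactly the content built into Definition~\ref{def:smooth field theories}. Finally, the extension to the reflection-positive and invertible full subcategories is automatic: $\ev_*$ preserves and reflects invertibility by Definition~\ref{def:invertible} (the conditions are stated testwise and over $\pt$ they are the OCTQFT conditions, while over general contractible $U$ the structural equivalence transports them), and a reflection structure $\alpha$ on $\scZ$ is a homotopy fixed point structure with respect to $op_\OCBord, op_\VBdl$, which under the structural equivalence corresponds precisely to one on $\scZ(\pt)$; positivity is a fibrewise condition on hermitean metrics and is likewise transported, using that over a contractible $U$ positive-definiteness of a metric on $U \times V$ is equivalent to positive-definiteness at a single point by a convexity/connectedness argument.

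I expect the main obstacle to be the bookkeeping in the structural lemma: one must be careful that the quasi-inverse functor $\OCBord_d^I \to \OCBord_d(I)(U)$ and its compatibility with the restriction functors $g^*$ are set up so that $\scR$ lands in \emph{strict} morphisms of presheaves (or at least that the resulting coherence 2-cells are manageable), and that the equivalence is genuinely one of \emph{symmetric monoidal} categories — the disjoint-union monoidal structure and the swap must be tracked through the ``trivialise over a contractible base'' step. None of this is deep, but it is the place where an error would hide, and it is why the careful, collar-rigidified construction of $\OCBord_d$ in Section~\ref{sect:families of bordisms} (as opposed to the model of~\cite{ST:SuSy_FTs_and_generalised_coho} with its degenerate identities) is what makes the statement clean.
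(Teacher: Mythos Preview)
Your approach is correct and rests on the same key observation as the paper: when $M = \pt$, all maps $f$ and $\sigma$ in the bordism data are constant, so the test-space dependence is vacuous. However, the paper's implementation is noticeably leaner than yours in two respects, and both directly address the bookkeeping you flag as the main obstacle.

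First, your structural lemma is too weak. The paper's Lemma~\ref{st:c* is bijective} proves that $c^* \colon \OCBord_d^I \to \OCBord_d(I)(U)$ is \emph{bijective} on objects and morphisms---a strict isomorphism of categories, not merely an equivalence. This is what the construction in Section~\ref{sect:families of bordisms} buys: because only trivial families $U \times \Sigma$ are allowed and collars are constant over $U$, the forgetful map you describe is a genuine two-sided inverse, not a quasi-inverse. All your worries about coherence 2-cells, strictness of presheaf morphisms, and tracking the symmetric monoidal structure through a ``trivialise over a contractible base'' step simply evaporate.

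Second, rather than building a quasi-inverse $\scR$ and checking two triangle identities, the paper shows directly that $\ev_*$ is surjective on objects and fully faithful. Surjectivity uses the bijectivity lemma to write down a strict preimage. Full faithfulness is the commutative square~\eqref{eq:mps of FFTs on M=*}: any morphism $\zeta \colon \scZ \to \scZ'$ is determined over $U$ by its value over $\pt$ via the structural isomorphisms $\scZ_c$ that are \emph{already part of the data} of a morphism of presheaves of categories. Your invocation of ``constant path bordisms'' and point-inclusions $\pt \hookrightarrow U$ to build the isomorphism $\scZ(U)(Y,f) \cong U \times \scZ(\pt)(Y,f_{|\pt})$ is unnecessary---that isomorphism is exactly $\scZ_c$, given for free by Definition~\ref{def:smooth field theories}.
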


We remark that Theorem \ref{st:ev_* is equivalence} holds for an empty index set $I=\emptyset$, and hence in particular for smooth FFTs and TQFTs. 
For the proof of Theorem \ref{st:ev_* is equivalence} we require the following lemma.
Let $U \in \Cart$ be arbitrary, and let $c \colon U \to *$ be the collapse map in $\Cart$.
Since $\OCBord_d(I)$ is a presheaf of symmetric monoidal categories on $\Cart$, this induces a symmetric monoidal functor
\begin{equation}
        c^* \colon \OCBord_d^{I} \to \OCBord_d(I)(U)\,.
\end{equation}

\begin{lemma}
\label{st:c* is bijective}
For any $U \in \Cart$, the functor $c^*$ is bijective on objects and morphisms.
\end{lemma}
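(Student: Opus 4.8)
\textbf{Proof plan for Lemma~\ref{st:c* is bijective}.}
The key point is that the target space $I$ has $M=\{\ast\}$, so the only data in an object $(Y,f,b,\ori_T,\ori_\FR)$ of $\OCBord_d(I)(U)$ that could genuinely involve $U$ is the map $f\colon U\times Y\to M$. But any map to the one-point space is constant, hence $f=f\circ(c\times 1_Y)$ is forced; likewise for the map $\sigma\colon U\times\Sigma\to M$ in a morphism. So on the level of the underlying data, objects and morphisms over $U$ carry no more information than objects and morphisms over $\ast$. The plan is to make this precise by exhibiting a two-sided inverse on the level of representatives and checking it respects the equivalence relations~\eqref{eq:equivalence in OCBord_1} and Definition~\ref{def: family of bordisms}.

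First I would treat objects. An object of $\OCBord_d(I)(U)$ is a quintuple $(Y,f,b,\ori_T,\ori_\FR)$ where $f\colon U\times Y\to\{\ast\}$; since there is a unique such $f$ (and it automatically has fibre-wise sitting instants and satisfies the $Q_i$-condition because every point of $\{\ast\}$ lies in every $Q_i=\{\ast\}$), the object is determined by $(Y,b,\ori_T,\ori_\FR)$, which is precisely the data of an object of $\OCBord_d^{I}$; and $c^*$ sends the latter to the former by $f\mapsto f\circ(c\times 1_Y)$, which is again the unique map to $\{\ast\}$. Hence $c^*$ is a bijection on objects, with inverse simply forgetting that the constant map is defined on $U\times Y$ rather than $Y$.

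Next I would treat morphisms. A morphism in $\OCBord_d(I)(U)$ is an equivalence class $[\Sigma,\sigma]$ of a tuple~\eqref{eq:morphism family in OCBord}; here $\sigma\colon U\times\Sigma\to\{\ast\}$ is again the unique such map, the sitting-instant diagram~\eqref{eq:sitting instants on bordisms} commutes automatically, and the $Q_{\ell}$-condition is vacuous. So a representative over $U$ is, up to the forced data, exactly a representative over $\ast$, i.e.\ a tuple $(\Sigma,W_0,W_1,w_0,w_1,\ell)$ together with the boundary objects, and $c^*$ acts by $\sigma\mapsto\sigma\circ(c\times 1_\Sigma)$. It remains to check the equivalence relation is preserved and reflected: given two $U$-tuples related by $(V_0,V_1,\Psi)$ with $\Psi\colon U\times\Sigma\to U\times\Sigma'$ commuting with the projection to $U$, the conditions $\sigma'\circ\Psi=\sigma$ in~\eqref{eq: diffeos of bordisms} are automatic (both sides are the unique map to $\{\ast\}$), so an equivalence of $U$-tuples is exactly the data of a fibrewise diffeomorphism $\Psi_x\colon\Sigma\to\Sigma'$ for each $x$ — but since $\OCBord_d(I)(U)$ requires $\Psi$ to preserve the fibrewise orientation and brane labels, and these do not depend on $x\in U$, such a $\Psi$ is equivalent to a single diffeomorphism $\Sigma\to\Sigma'$ of $\<3\>^*$-manifolds compatible with the collars, i.e.\ an equivalence of $\ast$-tuples. (One needs a brief remark that a smooth family $x\mapsto\Psi_x$ of diffeomorphisms over a contractible $U$ is irrelevant here because no target-space data constrains it; the only constraints — orientations, brane labels, the collar identities — are $x$-independent.) This shows $c^*$ is bijective on morphisms.

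The main obstacle I anticipate is the careful bookkeeping in the morphism case: one must verify that two $U$-tuples have the same image under $c^*$-inverse if and only if they are equivalent in $\OCBord_d(I)(U)$, and this requires unwinding that the triple $(V_0,V_1,\Psi)$ in the definition of~\eqref{eq:equivalence in OCBord_1} carries no fibrewise-varying information once $\sigma,\sigma'$ are the constant maps. Beyond this the argument is purely formal: everything reduces to the observation that $\Mfd(U\times X,\{\ast\})$ is a one-point set for every $X$, naturally in $X$, so the presheaf $\OCBord_d(I)$ is "constant in $U$" on the nose at the level of representatives, and the equivalence relations descend accordingly.
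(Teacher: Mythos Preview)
Your proposal is correct and follows essentially the same approach as the paper's proof: bijectivity on objects is immediate since maps to $\{\ast\}$ are unique, and for morphisms one checks that representatives over $U$ carry no more data than representatives over $\ast$, then verifies that the equivalence relations match. The paper phrases the last step slightly more cleanly than your ``such a $\Psi$ is equivalent to a single diffeomorphism'': it simply observes that restricting a $U$-equivalence $\Psi$ to the fibre over any single point $x\in U$ yields a $\ast$-equivalence, and conversely a $\ast$-equivalence $\Psi_\ast$ extends to the $U$-equivalence $1_U\times\Psi_\ast$ --- but this is exactly the content of your parenthetical remark.
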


\begin{proof}
The bijectivity on objects follows directly from Definition~\ref{def: family of objects} and the fact that $M=\pt$.
Let $[\Sigma] \colon Y_0 \to Y_1$ be a morphism in $\OCBord_d(I)(U)$.
It readily follows from Definition~\ref{def: family of bordisms} that any representative $\Sigma$  arises as the pullback of a unique representative $\Sigma_*$ of a morphism $[\Sigma_*] \colon Y_{0,*} \to Y_{1,*}$ in $\OCBord_2^I$.
It remains to show that $\Sigma$ and $\Sigma'$ are two representatives of the same bordism in $\OCBord_2(I)(U)$ if and only if $\Sigma_*$ and $\Sigma'_*$ represent the same bordism in $\OCBord_2^I$.
To see this, let $\Psi \colon U \times \Sigma \to U \times \Sigma'$ be a fibre-wise diffeomorphism that establishes the equivalence $\Sigma \sim \Sigma'$ in $\OCBord_2(I)(U)$ (compare~\eqref{eq:equivalence in OCBord_1}).
Then, restricting $\Psi$ to the fibre over any point $x \in U$ establishes the equivalence $\Sigma_* \sim \Sigma'_*$ in $\OCBord_2^I$.
Conversely, given any diffeomorphism $\Psi_* \colon \Sigma_* \to \Sigma'_*$ that establishes $\Sigma_* \sim \Sigma'_*$, the fibre-wise diffeomorphism $1_U \times \Psi_* \colon U \times \Sigma \to U \times \Sigma'$ establishes that $\Sigma \sim \Sigma'$.
\end{proof}

\begin{proof}[Proof of Theorem~\ref{st:ev_* is equivalence}]
We consider the evaluation functor
\begin{equation}
        \ev_* \colon \OCFFT_d(I) \rightarrow \OCTQFT_d^{I}\,,
        \qquad
        \scZ \mapsto \scZ(\pt)
\end{equation}
that evaluates smooth field theories (and their morphisms) on the cartesian space $* \in \Cart$.
The functor $\ev_*$ is surjective on objects:
let $Z \in \OCTQFT_d^{I}$ be arbitrary.
We construct a smooth FFT $\scZ \in \OCFFT_d(I)$ with $\ev_*(\scZ) = Z$.
First, we define $\scZ$ on objects.
Let $U \in \Cart$ be arbitrary.
By Lemma~\ref{st:c* is bijective}, any object \smash{$Y \in \OCBord_d(I)(U)$} is actually constant, i.e.~it arises as a pullback of a unique object \smash{$Y_* \in \OCBord_d^I$} along the collapse map $U \to *$.
We set
\begin{equation}
        \scZ(U)(Y) \coloneqq Z(Y_*)\,.
\end{equation}
Also by Lemma~\ref{st:c* is bijective}, every morphism $[\Sigma]$ in $\OCBord_d(I)(U)$ is the pullback of a unique morphism $[\Sigma_*] \in \OCBord_d^I$, and we set
\begin{equation}
        \scZ(U)[\Sigma] \coloneqq Z[\Sigma_*] \colon \scZ(U)(Y_{0,*}) \to \scZ(U)(Y_{1,*})\,.
\end{equation}
The assignment $\scZ$ is functorial and symmetric monoidal since $Z$ is so, and by construction it satisfies $\ev_* \scZ = Z$.
Hence, $\ev_*$ is surjective on objects.

We now show that $\ev_*$ is fully faithful.
To that end, let $\scZ, \scZ' \colon \OCBord_d(I) \rightarrow \VBdl$ be OCFFTs, and let $\zeta \colon \scZ \to \scZ'$ be a natural transformation of presheaves of categories.
The naturality of $\zeta$ in $U \in \Cart$ implies that we have commutative diagrams
\begin{equation}
\label{eq:mps of FFTs on M=*}
\xymatrix{\scZ(U)(c^*Y_*) \ar[r]^{\scZ_c}\ar[d]_{\zeta(U)_{c^*Y_*}} & c^* \big( \scZ(\pt)(Y_*) \big) \ar[d]^{\zeta(\pt)_{Y_*}}
        \\
        \scZ'(U)(c^*Y_*) \ar[r]_{\scZ'_c} & c^* \big(\scZ'(\pt)(Y_*) \big)
}
\end{equation}
where the horizontal vector bundle isomorphisms over $U$ are part of the structure that makes $\scZ$ and $\scZ'$ morphisms of presheaves of categories.
Combining diagram~\eqref{eq:mps of FFTs on M=*} with Lemma~\ref{st:c* is bijective} now shows that any morphism $\zeta$ is entirely determined by its value on the terminal object $* \in \Cart$.
The consideration of reflection structures and invertibility is straightforward.
\end{proof}

Thus, in our formalism, smooth  OCFFTs on the one-point target space $I$ are equivalent to ordinary  OCTQFTs with D-brane labels $I$.

\subsection{Classification of 2-dimensional open-closed TQFTs}

Two-dimensional open-closed OCTQFTs have been investigated in~\cite{MS--2DTFTs,LP--Open-closed_TQFTs,Lazaroiu:OCFFT_in_2D}, for example. 
Our definition of the category $\OCTQFT_2^I$ coincides precisely with the one given in~\cite{LP--Open-closed_TQFTs}.
There, and also in \cite{MS--2DTFTs,Lazaroiu:OCFFT_in_2D}, it has been shown that 2-dimensional OCTQFTs with D-brane labels $I$ are in equivalence to so-called $I$-coloured knowledgeable Frobenius algebras, as defined in~\cite[Section~2]{LP--Open-closed_TQFTs}, see Theorem \ref{th:classkfrob} below. We first recall the definition here for convenience.

\begin{definition}
Let $I$ be a set.
An \emph{$I$-coloured knowledgeable Frobenius algebra} is a septuple $(\scL,\scR, \chi, \epsilon, \theta, \iota, \iota^*)$ of the following data:
\begin{enumerate}[(1)]
        \item $\scL$ is a commutative Frobenius algebra over $\FC$, with trace denoted by $\vartheta$.
        
        \item $\scR = \{\scR_{ij}\}_{i,j \in I}$ is a family of finite-dimensional complex vector spaces.
        
        \item $\chi = \{\chi_{ijk}\}_{i,j,k \in I}$ is a collection $\chi_{ijk} \colon \scR_{jk} \otimes \scR_{ij} \to \scR_{ik}$ of linear maps, which satisfies an associativity condition for quadruples of elements in $I$.
        
        \item $\epsilon = \{\epsilon_i\}_{i \in I}$ is a family of elements $\epsilon_i \in \scR_{ii}$ that is neutral with respect to $\chi_{iii}$.
        In particular, $(\scR_{ii}, \chi_{iii}, \epsilon_i)$ is an algebra for every $i \in I$.
        
        \item $\theta = \{\theta_i\}_{i \in I}$ is a collection of linear maps $\theta_i \colon \scR_{ii} \to \FC$.
        
        \item $\iota = \{\iota\}_{i \in I}$ is a family of linear maps $\iota_i \colon \scL \to \scR_{ii}$, central in the sense that
        \begin{equation}
                \chi_{iij} \big( v \otimes \iota_i(\ell) \big) = \chi_{ijj} \big( \iota_j(\ell) \otimes v \big)
        \end{equation}
        for all $\ell \in \scL$ and $v \in \scR_{ij}$.
        
        \item $\iota^* = \{\iota^*_i\}_{i \in I}$ is a family of linear maps $\iota^*_i \colon \scR_{ii} \to \scL$, adjoint to $\iota$ in the sense that
        \begin{equation}
                \vartheta \big( \ell \cdot \iota^*_i(c) \big) = \theta_i \big( \chi_{iii}(\iota_i(\ell) \otimes v) \big)
        \end{equation}
        for all $v \in \scR_{ii}$ and $\ell \in \scL$.
\end{enumerate}
For $i,j \in I$, let $\sigma_{ij}$ denote the pairing
\begin{equation}
\xymatrix{
        \sigma_{ij} \colon \scR_{ji} \otimes \scR_{ij} \ar[r]^(.675){\chi_{iji}} & \scR_{ii} \ar[r]^{\theta_i} & \FC\,.
}
\end{equation}
The above data are subject to the following conditions:
\begin{itemize}
        \item $\sigma_{ij}$ is non-degenerate; that is, it induces an isomorphism $\Phi_{ij} \colon \scR_{ij} \to \scR_{ji}^\vee$ for every $i,j \in I$.
        
        \item $\sigma = \{\sigma_{ij}\}_{i,j \in I}$ is symmetric, meaning that $\sigma_{ij}(a \otimes b) = \sigma_{ji}(b \otimes a)$ for all $a \in \scR_{ji}$ and $b \in \scR_{ij}$.
        
        \item If $(v_1, \ldots, v_n)$ is a basis of $\scR_{ij}$ with dual basis $(v^1, \ldots, v^n)$ of $\scR_{ji}$ under the isomorphism $\Phi_{ij}$, then for any $v \in \scR_{ii}$ we have that
        \begin{equation}
                (\iota_j \circ \iota^*_i)(v) = \sum_{k = 1}^n \chi_{jij} \big( \chi_{iij} (v_k \otimes v) \otimes v^k \big)\,.
        \end{equation}
\end{itemize}
A \emph{morphism}
\begin{equation}
        (\varphi, \xi) \colon (\scL,\scR, \chi, \epsilon, \theta, \iota, \iota^*) \to (\scL',\scR', \chi', \epsilon', \theta', \iota', \iota'{}^*)
\end{equation}
of  $I$-coloured knowledgeable Frobenius algebras consists of a linear map $\varphi \colon \scL \to \scL'$ and a family $\xi = \{\xi_{ij}\}_{i,j \in I}$ of linear maps $\xi_{ij} \colon \scR_{ij} \to \scR_{ij}'$ that respect the products, units and traces, and satisfy
\begin{equation}
        \iota'_i \circ \varphi = \xi_{ii} \circ \iota_i\,,
        \quad
        \iota'_i{}^* \circ \xi_{ii} = \varphi \circ \iota_i^*
\end{equation}
for all $i,j \in I$.
\end{definition}

This defines a category of  $I$-coloured knowledgeable Frobenius algebras, which we denote by
$\KFrob^I$. The classification result obtained in \cite{LP--Open-closed_TQFTs,MS--2DTFTs,Lazaroiu:OCFFT_in_2D} is the following.

\begin{theorem}
\label{th:classkfrob}
There is an equivalence of categories
\begin{equation*}
\scF: \mathrm{OCTQFT}_2^I \to \KFrob^I\text{.}
\end{equation*}
\end{theorem}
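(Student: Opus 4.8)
The plan is to deduce the statement from the classification theorem of Lauda--Pfeiffer~\cite{LP--Open-closed_TQFTs} (see also~\cite{MS--2DTFTs,Lazaroiu:OCFFT_in_2D}); the only point that genuinely needs to be addressed is that our symmetric monoidal category $\OCBord_2^I = \OCBord_2(I)(\pt)$ coincides, up to a symmetric monoidal equivalence, with the open--closed bordism category used in~\cite{LP--Open-closed_TQFTs}. Once this comparison is in place, $\scF$ is defined as the composite of this equivalence with the Lauda--Pfeiffer equivalence. For the comparison, note that since $M = \pt$ the maps $\sigma$ attached to our morphisms carry no information, so a morphism of $\OCBord_2^I$ is an equivalence class of a compact oriented $2$-dimensional $\<3\>^*$-manifold $\Sigma$ together with its collar data $(W_a, w_a)$ and brane labels. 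Shrinking collars does not alter the class of a morphism, and two morphisms with the same underlying brane-labelled $\<3\>^*$-surface and the same boundary identifications agree; hence discarding the collars defines a symmetric monoidal equivalence onto the collar-free open--closed bordism category, exactly as in the closed case (cf.~\cite{Kock2003}). Likewise, an object of $\OCBord_2^I$ carries the stabilised orientation data $(\ori_T,\ori_\FR)$, which determines an orientation on $Y$ together with a preferred normal direction; passing to this induced orientation is an equivalence, because every object of the target category is isomorphic to a standard model from which the extra data can be reconstructed canonically.

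It is worth recording the mechanism underlying this equivalence. One uses the generators-and-relations presentation of $\OCBord_2^I$: the generating objects are the circle $\bbS^1$ and, for each pair $i,j \in I$, the brane-labelled interval $[0,1]_{ij}$; the generating morphisms are, in the closed sector, the pair-of-pants multiplication and comultiplication, the cup (unit $\emptyset \to \bbS^1$) and cap (counit $\bbS^1 \to \emptyset$); in the open sector, the analogous multiplication, comultiplication, unit and counit for the intervals $[0,1]_{ij}$; and the \quot{zipper} $\bbS^1 \to [0,1]_{ii}$ and \quot{cozipper} $[0,1]_{ii} \to \bbS^1$ relating the two sectors. These generators satisfy the Frobenius relations (with commutativity in the closed sector), the centrality and the adjointness of the zipper and cozipper, and the Cardy condition, and these relations are complete. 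Given an OCTQFT $\scZ$, one sets $\scL := \scZ(\bbS^1)$ and $\scR_{ij} := \scZ([0,1]_{ij})$ and reads off $\chi,\epsilon,\theta,\iota,\iota^*$ as the images under $\scZ$ of the corresponding generating bordisms; functoriality of $\scZ$ then turns the bordism relations into precisely the axioms of an $I$-coloured knowledgeable Frobenius algebra, and a symmetric monoidal natural transformation of OCTQFTs restricts on generating objects to a morphism of knowledgeable Frobenius algebras. This makes $\scF$ a well-defined functor.

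That $\scF$ is an equivalence is then read off from the same presentation. For essential surjectivity, given an $I$-coloured knowledgeable Frobenius algebra one prescribes a symmetric monoidal functor on the generating objects and morphisms of $\OCBord_2^I$ by the obvious formulas, checks that the defining relations are sent to identities, and invokes completeness of the relations to extend it uniquely to an OCTQFT $\scZ$ with $\scF(\scZ)$ the given algebra. Full faithfulness is likewise immediate: a symmetric monoidal natural transformation between OCTQFTs is determined by, and may be assigned arbitrarily on, the values on generating objects subject to compatibility with the generating morphisms, which is exactly the datum of a morphism of $I$-coloured knowledgeable Frobenius algebras.

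The main obstacle is the completeness of the generators-and-relations presentation of $\OCBord_2^I$ --- that every open--closed bordism is a composite of the listed generators and that all relations among them are consequences of the listed ones. This is a statement about handle decompositions of compact oriented $2$-dimensional $\<3\>^*$-manifolds carrying brane labels, and it is the technical core of~\cite{LP--Open-closed_TQFTs}; I would cite it rather than reprove it, after having verified (as above) that the bordism category treated there is the one relevant here.
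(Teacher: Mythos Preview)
Your proposal is correct and aligns with the paper's treatment: the paper does not prove this theorem but cites it from~\cite{LP--Open-closed_TQFTs,MS--2DTFTs,Lazaroiu:OCFFT_in_2D}, simply asserting that ``our definition of the category $\OCTQFT_2^I$ coincides precisely with the one given in~\cite{LP--Open-closed_TQFTs}'' and then recalling how $\scF$ acts on generators. Your write-up is somewhat more careful than the paper in arguing why the bordism categories match (collars, the orientation data $(\ori_T,\ori_\FR)$, and triviality of the target-space maps for $M=\pt$), but the approach is the same.
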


We briefly recall the definition of this functor. If $\scZ$ is a 2-dimensional OCTQFT with brane labels $I$, then the Frobenius algebra $\scL$ of $\scF(\scZ)$ has the underlying vector space $\scZ(S^1)$, the product is given by the value of $\scZ$ on a closed pair of pants, and the trace $\vartheta$ is obtained from the cap bordism $S^1 \to \emptyset$. The vector space $\scR_{ij}$ is $\scZ([0,1])$, and the linear maps $\chi_{ijk}$ are obtained by evaluating $\scZ$ on an open pair of pants. The elements $\epsilon_i$ and the linear maps $\theta_i$ are obtained from the open cap bordisms $\emptyset \to [0,1]$ and $[0,1] \to \emptyset$, respectively. The linear maps $\iota_i$ and $\iota_i^{*}$ are obtained from the unzip bordism $S^1 \to [0,1]$ and the zip bordism $[0,1] \to S^1$, respectively.     

The inclusion of reflection structures is this picture is straightforward. First we recall the  definition of reflection structures on knowledgeable Frobenius algebras according to \cite[Def.\ 3.3.3, Def.\ 3.3.7]{BW:Transgression_and_regression_of_D-branes}.

\begin{definition}
\label{def:RPKFrob}
A \emph{reflection structure} on an $I$-coloured knowledgeable Frobenius algebra $(\scL,\scR, \chi, \epsilon, \theta, \iota, \iota^*)$ is a pair $(\tilde{\lambda}, \tilde{\alpha})$ of an involutive algebra isomorphism $\tilde{\lambda} \colon \scL \to \overline{\scL}$ and a family $\tilde{\alpha} = \{\tilde{\alpha}_{ij}\}_{i,j \in I}$ of involutive (meaning $\tilde{\alpha}_{ji} \circ \tilde{\alpha}_{ij} = 1$ for all $i,j \in I$), anti-multiplicative isomorphisms $\tilde{\alpha}_{ij} \colon \scR_{ij} \to \overline{\scR_{ji}}$.
These have to satisfy the conditions
\begin{equation}
        \vartheta \big( \tilde{\lambda}(\ell) \big) = \overline{\vartheta(\ell)},\,
        \quad
        \tilde{\alpha}_{ii}(\epsilon_i) = \epsilon_i\,,
        \quad
        \theta_i \big( \tilde{\alpha}_{ii}(v) \big) = \overline{\theta_i(v)}\,,
        \qandq
        \tilde{\alpha}_{ii} \circ \iota_i = \iota_i \circ \tilde{\lambda}
\end{equation}
for all $i \in I$, $v \in \scR_{ii}$ and $\ell \in \scL$.
The reflection structure $(\tilde{\lambda}, \tilde{\alpha})$ is called \emph{positive} if the pairings 
\begin{equation}
        (v,w) \mapsto \sigma_{ij} \big( \tilde{\alpha}_{ji}^{-1}(v) \otimes w \big)
        \qandq
        (\ell, \ell') \mapsto \vartheta \big( \tilde{\lambda}^{-1}(\ell) \cdot \ell')
\end{equation}
are positive definite for all $i,j \in I$.
An $I$-coloured knowledgeable Frobenius algebra with a positive reflection structure is called \emph{reflection-positive}.
 A morphism
$(\varphi, \xi)$ of $I$-coloured knowledgable Frobenius algebras respects reflection structures $(\tilde{\lambda}, \tilde{\alpha})$ and $(\tilde{\lambda}', \tilde{\alpha}')$  
if
\begin{equation}
         \tilde{\lambda}' \circ \varphi = \varphi \circ \tilde{\lambda}
        \qandq
        \tilde{\alpha}'_{ij} \circ \xi_{ij} = \xi_{ji} \circ \tilde{\alpha}_{ij}
\end{equation}
for all $i,j \in I$.
\end{definition}

Definition~\ref{def:RPKFrob} gives rise to a category of reflection-positive $I$-coloured knowledgeable Frobenius algebras, which we denote by
$\RPKFrob^{I}$. The following extension of Theorem \ref{th:classkfrob} is straightforward to deduce. 

\begin{proposition}
\label{prop:classfunctorRP}
The functor $\scF$ extends to an equivalence of categories
\begin{equation*}
\RP\text{-}\mathrm{OCTQFT}_2^I \cong \RP\text{-}\KFrob^I\text{.}
\end{equation*}
\end{proposition}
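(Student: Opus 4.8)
\textbf{Proof proposal for Proposition~\ref{prop:classfunctorRP}.}

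The plan is to take the already-established equivalence $\scF \colon \OCTQFT_2^I \to \KFrob^I$ of Theorem~\ref{th:classkfrob} and show that it lifts to the categories with reflection structures. This amounts to three things: (i) given a reflection structure on an OCTQFT $\scZ$, produce a reflection structure on the knowledgeable Frobenius algebra $\scF(\scZ)$ in the sense of Definition~\ref{def:RPKFrob}; (ii) check that this assignment is inverse to the obvious construction in the other direction; and (iii) check that positivity is preserved in both directions. Since $\scF$ is already fully faithful and essentially surjective, it suffices to construct the data of a reflection structure on $\scF(\scZ)$ naturally from that on $\scZ$ and to verify that the defining equations of Definition~\ref{def:RPKFrob} follow from the homotopy-fixed-point axiom on $\scZ$.

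First I would recall that a reflection structure $\alpha$ on $\scZ$ consists of a monoidal natural isomorphism $\alpha \colon \scZ \circ op_\OCBord \to op_\Vect \circ \scZ$ satisfying the coherence square of Definition~\ref{def: hofpt str on general functor}. Evaluating $\alpha$ on the closed object $\bbS^1$ gives an isomorphism $\tilde\lambda \coloneqq \alpha_{\bbS^1} \colon \scZ(\bbS^1) \to \overline{\scZ(\bbS^1)}$, i.e.\ an isomorphism $\scL \to \overline{\scL}$; evaluating on the open object $[0,1]$ with brane labels $i,j$ gives isomorphisms $\tilde\alpha_{ij} \coloneqq \alpha_{[0,1]_{ij}} \colon \scR_{ij} \to \overline{\scR_{ji}}$ (note that $op_\OCBord$ reverses the brane labels of an interval, since $-\ori_T$ reverses the induced orientation on $Y$, exchanging initial and terminal points). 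The involutivity $\tilde\alpha_{ji} \circ \tilde\alpha_{ij} = 1$ and $\tilde\lambda$ involutive follow directly from the coherence square together with $op_\OCBord^2 = 1$ and $op_\Vect^2 = 1$. The multiplicativity/anti-multiplicativity and compatibility with the units, traces, $\iota_i$ and $\iota_i^*$ are then obtained by naturality of $\alpha$ applied to the generating bordisms used to define $\scF$: the closed pair of pants, the cap, the open pair of pants, the open caps, the unzip and the zip. Concretely, for the product on $\scL$ one applies naturality of $\alpha$ to the pair-of-pants bordism $P \colon \bbS^1 \sqcup \bbS^1 \to \bbS^1$, using that $op_\OCBord(P)$ is again a pair of pants (with the opposite orientation, which is diffeomorphic to $P$ as an abstract surface), and that $\alpha$ is monoidal so that $\alpha_{\bbS^1 \sqcup \bbS^1} = \alpha_{\bbS^1} \otimes \alpha_{\bbS^1}$; this yields $\tilde\lambda(\ell \cdot \ell') = \tilde\lambda(\ell) \cdot \tilde\lambda(\ell')$ after identifying $\overline{\scL \otimes \scL} \cong \overline{\scL} \otimes \overline{\scL}$. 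For the open product one similarly uses the open pair of pants, and here the orientation reversal is what produces the \emph{anti}-multiplicativity $\tilde\alpha_{ik}(\chi_{ijk}(w \otimes v)) = \chi_{kji}(\tilde\alpha_{ij}(v) \otimes \tilde\alpha_{jk}(w))$, since reversing the orientation of an open pair of pants swaps the two incoming legs. The conditions $\vartheta(\tilde\lambda(\ell)) = \overline{\vartheta(\ell)}$, $\tilde\alpha_{ii}(\epsilon_i) = \epsilon_i$, $\theta_i(\tilde\alpha_{ii}(v)) = \overline{\theta_i(v)}$, and $\tilde\alpha_{ii} \circ \iota_i = \iota_i \circ \tilde\lambda$ come out of naturality applied to the cap, the open cup (cocap), the open cap, and the unzip bordism respectively, each of which is preserved (up to orientation-reversing diffeomorphism) by $op_\OCBord$; the condition for $\iota_i^*$ follows either from the zip bordism or from the already-verified adjointness relation between $\iota$ and $\iota^*$.

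Next I would treat the positivity statement. The reflection structure $\alpha$ on $\scZ$ is positive, by Definition~\ref{def:positive reflection structure}, precisely when for every object $Y$ the induced hermitean structure $\flat_{\scZ,(Y)} = \beta_Y \circ \scZ(\flat_Y) \circ \alpha_Y^{-1}$ on $\scZ(Y)$ is positive definite, where $\flat_Y$ is the canonical hermitean structure on $Y \in \OCBord_2^I$ and $\beta$ is the canonical duality comparison of \eqref{eq:canonicaldualityiso}. Evaluating on $Y = [0,1]_{ij}$, the morphism $\flat_Y$ is represented by a (twisted) identity-type bordism, and $\scZ(\flat_Y)$ together with $\beta_Y$ assembles into exactly the pairing $\sigma_{ij} \circ (\tilde\alpha_{ji}^{-1} \otimes 1)$ appearing in Definition~\ref{def:RPKFrob}: indeed $\flat_Y$ built from $\iota_{[0,\epsilon)}$ and its reflection corresponds under $\scF$ to the composition $\theta_i \circ \chi_{iji}$ which is $\sigma_{ij}$, and the $\alpha_Y^{-1}$ prefactor is $\tilde\alpha_{ji}^{-1}$ (after identifying $\overline{\scR_{ij}}$ with $\scR_{ji}$, matching Remark~\ref{rmk: use of beta} in spirit). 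Likewise for $Y = \bbS^1$ one gets the pairing $(\ell,\ell') \mapsto \vartheta(\tilde\lambda^{-1}(\ell) \cdot \ell')$. Hence $\flat_{\scZ,Y}$ positive definite for all $Y$ is equivalent, component by component, to the positivity of exactly those two families of pairings in Definition~\ref{def:RPKFrob}, so positive reflection structures correspond to positive reflection structures. Conversely, given a (positive) reflection structure $(\tilde\lambda, \tilde\alpha)$ on a knowledgeable Frobenius algebra, one reads off a monoidal natural isomorphism $\alpha$ on the corresponding OCTQFT $\scF^{-1}$ by setting $\alpha_{\bbS^1} = \tilde\lambda$, $\alpha_{[0,1]_{ij}} = \tilde\alpha_{ij}$, extending monoidally, and checking the coherence square on generators; naturality on all bordisms follows since the generating bordisms suffice and the relations among them are respected by the defining equations of Definition~\ref{def:RPKFrob}. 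These two constructions are mutually inverse by construction, and morphisms of reflection-positive knowledgeable Frobenius algebras correspond to reflection-structure-preserving symmetric monoidal transformations because both notions are tested on the same generating objects $\bbS^1$ and $[0,1]$; this already follows from the corresponding statement for $\scF$ itself together with the explicit form of $\alpha$.

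The main obstacle I anticipate is bookkeeping the orientation reversals: one must track carefully how $op_\OCBord$ acts on each generating bordism and on the monoidal structure, and in particular why the open pair of pants produces \emph{anti}-multiplicativity while the closed pair of pants does not (it is the ordering of the legs of an \emph{open} pair of pants that is reversed by an orientation-reversing diffeomorphism of the disc-with-holes, whereas for the closed pair of pants the commutativity of $\scL$ absorbs the swap). A secondary, more mechanical obstacle is pinning down the various canonical identifications $\overline{V \otimes W} \cong \overline{V} \otimes \overline{W}$, $\overline{V^\vee} \cong \overline{V}^\vee$, and the comparison $\beta$ of \eqref{eq:canonicaldualityiso}, and verifying that the hermitean structure $\flat_{\scZ,Y}$ extracted via \eqref{eq:hermiteanstructureimage} really does reproduce the pairings $\sigma_{ij} \circ (\tilde\alpha_{ji}^{-1} \otimes 1)$ and $\vartheta(\tilde\lambda^{-1}(-)\cdot -)$ on the nose rather than up to a scalar or a transpose; this is routine but needs care. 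Neither obstacle is deep: the whole proof is an exercise in transporting the structure of Definition~\ref{def:reflection structure} across the already-known equivalence $\scF$, and once the generators-and-relations dictionary for $\scF$ is combined with the homotopy-fixed-point axiom, all the equations of Definition~\ref{def:RPKFrob} fall out.
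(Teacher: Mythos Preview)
Your proposal is correct and is precisely the approach the paper has in mind: the paper does not actually write out a proof of this proposition, merely remarking that it is ``straightforward to deduce'' from Theorem~\ref{th:classkfrob}. Your sketch supplies exactly the details one would expect---extracting $\tilde\lambda$ and $\tilde\alpha_{ij}$ as the components of $\alpha$ on the generating objects, deriving the axioms of Definition~\ref{def:RPKFrob} from naturality of $\alpha$ on the generating bordisms, and matching positivity via the induced hermitean structures $\flat_{\scZ,(Y)}$---and your identification of the two bookkeeping points (the orientation-reversal on the open pair of pants producing anti-multiplicativity, and the precise unwinding of~\eqref{eq:hermiteanstructureimage}) as the only places requiring care is accurate.
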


We consider the full subcategory
\begin{equation}
        \RPKFrob_\FC^{I} \subset \RPKFrob^I
\end{equation}
on those objects with $\scL \cong \FC$ as Frobenius algebras, and recall the following result  \cite[Proposition~3.3.8]{BW:Transgression_and_regression_of_D-branes}.

\begin{proposition}
\label{prop:wtfrob}
There is an equivalence of categories
\begin{equation*}
\wtfrob: \TBG(I) \to \RP\text{-}\KFrob^{I}_{\FC}\text{.}
\end{equation*}
\end{proposition}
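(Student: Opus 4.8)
\textbf{Proof proposal for Proposition~\ref{prop:wtfrob}.}

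The plan is to verify that the functor $\wtfrob$, which has already been constructed in~\cite[Proposition~3.3.8]{BW:Transgression_and_regression_of_D-branes}, is an equivalence by exhibiting an essential inverse and checking it is fully faithful; but since the cited reference already establishes the equivalence, the task here is really to \emph{recall and assemble} that statement in our present notation. Accordingly, the first step is to spell out the functor $\wtfrob$ explicitly on objects: a target space brane geometry for the one-point target space $I$ consists of a bundle gerbe with connection on the point together with a family $\CE = \{\CE_i\}_{i \in I}$ of D-branes, i.e.\ morphisms $\CI_{\omega_i} \to \CI_0$ over the point. Up to isomorphism a bundle gerbe on the point is trivial, so the only datum surviving in $\rmh_1$ is, for each pair $(i,j)$, the finite-dimensional Hilbert space $\Delta(\CE_j, \CE_i)$ together with the composition, unit, trace, and $\iota/\iota^*$ maps that the closed-module-category structure of $\Grb^\nabla(\pt)$ produces via the morphisms~\eqref{eq:structural isos for Delta} and~\eqref{eq:1-mors and sections of Delta}. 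One checks that these assemble into an $I$-coloured knowledgeable Frobenius algebra with bulk algebra $\scL \cong \FC$, and that the canonical isomorphisms $\Delta(\CE_j,\CE_i) \cong \overline{\Delta(\CE_i,\CE_j)}^{\phantom{.}}$ coming from the hermitean metrics provide the reflection structure $(\tilde\lambda,\tilde\alpha)$, which is positive because those metrics are positive definite.

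The second step is to construct the inverse. Given a reflection-positive $I$-coloured knowledgeable Frobenius algebra $(\FC,\scR,\chi,\epsilon,\theta,\iota,\iota^*,\tilde\lambda,\tilde\alpha)$ with bulk $\FC$, one must produce a target space brane geometry on $I$. The idea is to pick one brane label $i_0 \in I$ (or, if one wants a canonical construction, to work with the whole family simultaneously) and build D-branes $\CE_i$ whose mutual $\Delta$'s reproduce the $\scR_{ij}$; concretely, the algebra $\scR_{i_0 i_0}$ together with its trace $\theta_{i_0}$ is a (not necessarily commutative) Frobenius algebra, and the bimodules $\scR_{i i_0}$ realise each $\scR_{ij}$ as $\Hom$ of modules. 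One then needs to check that the associativity of $\chi$, the centrality of $\iota$, the adjunction between $\iota$ and $\iota^*$, the non-degeneracy and symmetry of $\sigma$, and the Cardy-type condition all match exactly the conditions that the closed-module-category calculus of $\Grb^\nabla(\pt)$ imposes on the $\Delta$'s; this is where the bulk of the work lies, and it is essentially the content of the correspondence between semisimple Frobenius algebras, their module categories, and bundle gerbes with D-branes on the point.

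The third step is to verify that these two assignments are mutually inverse up to natural isomorphism and that both respect reflection structures and their positivity; this is routine once the object-level correspondence is pinned down, since morphisms in $\rmh_1\TBG(I)$ are $2$-isomorphism classes of $1$-morphisms $(\CA,\xi)$, and under $\wtfrob$ these go precisely to pairs $(\varphi,\xi)$ of the kind appearing in the definition of $\KFrob^I$, with the compatibility conditions $\iota_i' \circ \varphi = \xi_{ii}\circ\iota_i$ etc.\ translating directly. I expect the main obstacle to be the matching of the Cardy condition (the last bulleted condition in the definition of an $I$-coloured knowledgeable Frobenius algebra) with the geometric data: it requires choosing bases and tracking dual bases through the isomorphisms $\Phi_{ij}$, and making sure the signs and conjugations coming from the reflection structure are consistent with the musical isomorphisms $\flat_{h_{ji}}$ used in~\eqref{eq:beta_ij}. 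Since~\cite[Proposition~3.3.8]{BW:Transgression_and_regression_of_D-branes} has carried this out in full, the proof here will consist of invoking that result and noting that the conventions of the present article are compatible with it, as stated at the start of Section~\ref{sect:BGrbs}.
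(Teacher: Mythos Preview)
Your proposal is correct in substance: both you and the paper treat this proposition as a citation of~\cite[Proposition~3.3.8]{BW:Transgression_and_regression_of_D-branes}, and the paper gives no independent proof here. The paper simply states the result and then recalls the definition of the forward functor $\wtfrob$ via the fibres of the transgression bundles $\scL$ and $\scR_{ij}$ over the one-point loop and path spaces, together with the fusion product, traces, and the maps $\iota_i$, $\iota_i^*$, $\tilde\lambda$, $\tilde\alpha$.

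A few remarks on where your sketch differs from the paper's recollection. You describe $\wtfrob$ on objects by assigning $\Delta(\CE_j,\CE_i)$ directly, whereas the paper uses the transgression fibre $\scR_{ij|\gamma}$, which by~\eqref{eq:defnitionCE} is a $\Hom$ of two $\Delta$'s rather than a single $\Delta$; over the point and after fixing a trivialisation $\hat\CT$ these agree, but the paper's presentation keeps the transgression language throughout. Your second and third steps---constructing an explicit inverse by choosing a base label $i_0$ and realising the $\scR_{ij}$ as bimodules---go well beyond what the paper does here. The paper neither sketches nor needs an inverse in this section; it defers entirely to the cited reference. Your outline of the inverse is plausible but speculative, and the specific route through a base label and module realisation may or may not match the argument in~\cite{BW:Transgression_and_regression_of_D-branes}. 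Since the present paper's conventions are declared compatible with that reference at the start of Section~\ref{sect:BGrbs}, your final conclusion---that the proof here amounts to invoking the citation---is exactly what the paper does.
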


We shall briefly recall the definition of the functor $\wtfrob$. It assigns to an object $(\CG, \CE) \in \TBG(I)$ an object
\begin{equation}
\label{eq:frob(CG,CE)}
        (\scL, \scR, \chi, \epsilon, \theta, \iota, \iota^*, \tilde{\lambda}, \tilde{\alpha})
        \coloneqq \wtfrob (\CG, \CE)
        \ \in \RPKFrob^{I}_\FC\,,
\end{equation}
whose complex line $\scL$ is the fibre of the transgression line bundle $\scL$ over the loop space $L(*) = *$.
For $i,j \in I$, the vector space $\scR_{ij}$ is the fibre of the transgression bundle over the path space $P_{ij}(\pt) \cong \pt$ as constructed in Section~\ref{sect:bundles on path spaces}.
The product on the $\FC$-algebra $\scL$ is the fusion product on the transgression line bundle $\scL$.
In general, it can be computed from compatible trivialisations of pullbacks of $\CG$ along the loops, as laid out in~\cite[Section 4.1]{BW:Transgression_and_regression_of_D-branes}, but see also~\cite{Waldorf--Trangression_II} for more details.
In the present special case of $M = *$, with respect to a trivialisation $\hat{\CT} \colon \CG \to \CI_0$ on $M = *$, it simply reads as
\begin{equation}
        \big[ [\gamma^*\hat{\CT}], z \big] \otimes \big[ [\gamma'{}^*\hat{\CT}], z' \big]
        \mapsto \big[ [(\gamma {*} \gamma')^*\hat{\CT}], z z' \big]\,,
\end{equation}
where $\gamma' {*} \gamma$ denotes concatenation.
The trace on $\scL$ is given  as
\begin{equation}
        \vartheta \big[ [\gamma^*\hat{\CT}], z \big] = z\,.
\end{equation}
Next, we consider the morphisms $\chi$, which stem from the concatenation of paths $\gamma_{ij}$ and $\gamma_{jk}$ in $M$ for D-brane indices $i,j,k \in I$.
Let $\gamma_{ij} \colon [0,1] \to \pt$ denote the unique smooth path in $M = \pt$ from $Q_i$ to $Q_j$.
Then,
\begin{equation}
        \chi_{ijk} \colon \big[ \gamma_{ij}^*\hat{\CT}, \psi_{ij} \big] \otimes \big[ \gamma_{jk}^*\hat{\CT},\, \psi_{jk} \big]
        \mapsto \big[ \gamma_{ik}^*\hat{\CT},\, \psi_{jk} \circ \psi_{ij} \big]\,,
\end{equation}
in the notation of Section~\ref{sect:bundles on path spaces}.
We can equivalently formulate this as an evaluation:
if $\psi_{ki} \in \scR_{ki} \cong \scR_{ik}^\vee$, then
\begin{equation}
        \< \beta_{ik}^{-1} \big[ \gamma_{ki}^*\hat{\CT},\, \psi_{ki} \big],\ \chi_{ijk} \big( \big[ \gamma_{ij}^*\hat{\CT}, \psi_{ij} \big] \otimes \big[ \gamma_{jk}^*\hat{\CT},\, \psi_{jk} \big] \big) \>
        = \tr (\psi_{ki} \circ \psi_{jk} \circ \psi_{ij})\,.
\end{equation}
The unit $\epsilon_i \in \scR_{ii}$ is the element represented by the image of the identity 2-morphism $1_{\CE_i}$.
In other words, with respect to the trivialisation $\gamma_{ii}^*\hat{\CT}$ we have that
\begin{equation}
        \epsilon_i = \big[ \gamma_{ii}^*\hat{\CT}, 1_{\CE_i} \big]\,,
\end{equation}
compare also~\cite[Section 4.6]{BW:Transgression_and_regression_of_D-branes}.
The trace $\vartheta = h_\scL(\One_\scL,-\,)$ is defined using the hermitean metric on the transgression line bundle, where $\One_\scL$ is the unit with respect to the algebra structure on the 1-dimensional vector space $\scL$, and analogously one defines the traces $\theta_i$ on $\scR_{ii}$ as
\begin{equation}
        \theta_i [\hat{\CT}, \psi_{ii}]
        \coloneqq h_{ii} \big( \epsilon_i, [\hat{\CT}, \psi_{ii}] \big)
        = \tr(\psi_{ii})\,.
\end{equation}
On representatives with respect to $\gamma^*\hat{\CT}$, the morphism $\iota_i \colon \scL \to \scR_{ii}$ reads as
\begin{equation}
        \iota_i \colon z \mapsto z\, \epsilon_i
        = z\, 1_{\CE_i}\,.
\end{equation}
In the other direction, the morphism $\iota_i^* \colon \scR_{ii} \to \scL$ acts on representatives as taking the trace.
The last piece of data in~\cite[Def.\ 3.3.3, Def.\ 3.3.8]{BW:Transgression_and_regression_of_D-branes} are the antilinear maps $\tilde{\lambda} \colon \scL \to \overline{\scL}$ and $\tilde{\alpha}_{ij} \colon \scR_{ij} \to \overline{\scR_{ji}}$ for every $i,j \in I$, which define a positive reflection structure on the $I$-coloured knowledgeable Frobenius algebra $(\scL, \scR, \chi, \epsilon, \theta, \iota, \iota^*)$.
With respect to a trivialisation $\gamma_{ij}^*\hat{\CT}$, they act as
\begin{equation}
        \tilde{\lambda} \colon \big[ [\gamma^*\hat{\CT}], z \big]
        \mapsto \big[ [\overline{\gamma}^*\hat{\CT}], \overline{z} \big]\,,
        \qquad
        \tilde{\alpha}_{ij} \colon \big[ \gamma_{ij}^*\hat{\CT},\, \psi \big]
        \mapsto \big[ \overline{\gamma}_{ji}^*\hat{\CT},\, \psi^* \big]\,,
\end{equation}
where $\psi^*$ is the hermitean adjoint of $\psi$.
On morphisms, $\wtfrob$ acts simply as the transgression functor.

The following is the main result of this section.

\begin{theorem}
\label{eq:diagram for red to pt}
\label{st:equivs for (*,I)}
The diagram of functors
\begin{equation}
\xymatrix{\TBG(I) \ar[r]^-{\scZ} \ar[d]_{\wtfrob} & \RP\text{-}\OCFFT_{2}(I)^{\times} \ar[d]^{\ev_*}
        \\
        \RPKFrob_\FC^{I}& (\RP\text{-}\OCTQFT_{2}^{I})^{\times} \ar[l]^-{\scF}
}
\end{equation}
is commutative, and all functors are equivalences of symmetric monoidal categories. 
\end{theorem}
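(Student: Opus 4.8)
The plan is to assemble the theorem from the four equivalences that have already been established, and then verify that the square commutes. Three of the four arrows are equivalences by results proved earlier in this very excerpt: $\ev_*$ is an equivalence of symmetric monoidal categories by Theorem~\ref{st:ev_* is equivalence} (applied with index set $I$, noting that every smooth OCFFT on the one-point target is automatically superficial so that the superscripts \smash{$\sf$} and \smash{$\times$} are compatible), $\scF$ is an equivalence by Proposition~\ref{prop:classfunctorRP} restricted to the invertible objects, and $\wtfrob$ is an equivalence by Proposition~\ref{prop:wtfrob}. The remaining arrow, $\scZ$, is the restriction of the functor from Theorem~\ref{st:Z_(-) is a functor} to the one-point target space; since the other three functors in the square are equivalences, commutativity of the square will force $\scZ$ to be an equivalence as well. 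So the only genuine content left to check is the commutativity of the diagram, and then the observation that $\scZ$ being a monoidal functor sitting between symmetric monoidal equivalences is itself a symmetric monoidal equivalence.

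First I would note that since $\TBG(I)$ is a \emph{2-groupoid} while $\RPKFrob^I_\FC$ is an ordinary category, one must interpret the lower-left corner as $\rmh_1\TBG(I)$, exactly as in the statement of Theorem~\ref{st:summary thm 2}; the functor $\wtfrob$ descends to the homotopy groupoid because it sends 2-isomorphic 1-morphisms in $\TBG(I)$ to the same morphism of knowledgeable Frobenius algebras (it acts as the transgression functor on morphisms, and transgression is insensitive to 2-morphisms by the coherence results recalled in Section~\ref{sect:background}), and $\scZ$ descends for the same reason by the lemma preceding Theorem~\ref{st:Z_(-) is a functor}. With that understood, I would check commutativity on objects and on morphisms separately. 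On objects: given $(\CG,\CE) \in \TBG(I)$, one computes $\scF(\ev_* \scZ_{\CG,\CE})$ and compares it term by term with $\wtfrob(\CG,\CE)$. Both sides are built out of the same raw material — the fibre of the transgression line bundle $\scL$ over the point loop space and the fibres of the transgression bundles $\scR_{ij}$ over $P_{ij}(\pt)$. The functor $\scF$ extracts the Frobenius algebra structure from the values of the OCTQFT $\ev_*\scZ_{\CG,\CE}$ on the elementary bordisms (closed and open pairs of pants, caps, zip and unzip), and each of these was computed explicitly via the surface amplitude $\CA^{\CG,\CE}$ in Section~\ref{sect:amplitudes}. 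One then matches: the product on $\scL$ comes from the closed pair of pants, which the surface amplitude evaluates via concatenation of loops and multiplication of the complex scalars $z$, agreeing with the fusion product used in $\wtfrob$; the maps $\chi_{ijk}$ come from the open pair of pants, which the amplitude evaluates by composing the linear maps $\psi_{ij}$, matching the formula $\psi_{jk}\circ\psi_{ij}$ in $\wtfrob$; the units $\epsilon_i$ come from the open cap and equal the class of $1_{\CE_i}$ on both sides; the traces $\vartheta$ and $\theta_i$ come from the closing caps and are $z\mapsto z$ and $\psi\mapsto\tr(\psi)$ respectively, matching the hermitean-metric definitions in $\wtfrob$; and the zip/unzip bordisms give $\iota_i$ and $\iota_i^*$ as multiplication by $\epsilon_i$ and as the trace, again matching. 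Finally the reflection structure $(\tilde\lambda,\tilde\alpha)$ attached by $\scF$ to $\ev_*\scZ_{\CG,\CE}$ is read off from the reflection structure $\alpha$ constructed in Proposition~\ref{prop:reflectionstructure}, which by Proposition~\ref{st:coincidence of hermitean structures} is precisely the one coming from the hermitean metrics on the transgression bundles, so it agrees with the $(\tilde\lambda,\tilde\alpha)$ used in $\wtfrob$; and the bulk algebra of $\scF(\ev_*\scZ_{\CG,\CE})$ is $\scZ_{\CG,\CE}(\bbS^1,\mathrm{pt})$, which is one-dimensional by invertibility, so the target lands in $\RPKFrob^I_\FC$ as required.

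On morphisms, commutativity is comparatively painless: a morphism $(\CA,\xi)$ in $\TBG(I)$ is sent by $\wtfrob$ to the transgression of $(\CA,\xi)$, and by $\scF\circ\ev_*\circ\scZ$ to the morphism of knowledgeable Frobenius algebras underlying $\ev_*\scZ_{(\CA,\xi)}$; but $\scZ_{(\CA,\xi)}$ was constructed in Section~\ref{sect:Z_(-) as a functor} precisely as the transgression of $(\CA,\xi)$ applied fibre-wise to the bundles $\scR_{ij}$ and $\scL$, so the two agree on the nose. I would phrase this as a short computation using the explicit form~\eqref{eq:Z_A,T} of $\scZ_{\CA,\CT}$ together with the description of $\scF$ on morphisms. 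The strict commutativity claimed in the statement then follows because all the identifications involved (the canonical isomorphism $\scZ(\pt)(Y_*)\cong Z(Y_*)$ from the proof of Theorem~\ref{st:ev_* is equivalence}, the canonical identification of $V_1^\vee$ with $V_1$ via $\beta$, and the colimit structure maps $r_\CT$) are canonical and the constructions were deliberately set up to match the conventions of~\cite{BW:Transgression_and_regression_of_D-branes}, where $\wtfrob$ lives.

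The main obstacle I anticipate is purely bookkeeping: one must be scrupulous about orientations, the inward- versus outward-pointing normal conventions (which already forced a sign change relative to~\cite{BW:Transgression_and_regression_of_D-branes} in~\eqref{eq:pt in R_ij}), and the identification of the outgoing vector space $V_1(\Sigma,\sigma)^\vee$ with $V_1(\Sigma,\sigma)$ via the isomorphisms $\widehat\beta$ and $\widehat\varrho$ of~\eqref{eq:whbeta whvarrho}. In particular, checking that the closing cap bordisms reproduce the traces $\vartheta$ and $\theta_i$ with the \emph{correct} normalisation — rather than, say, their complex conjugates or a rescaling — requires carefully tracking how $\beta$ enters the definition~\eqref{eq:smooth amplitude} of $\scZ_{\CG,\CE}$ on morphisms. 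Once those conventions are pinned down, there is no conceptual difficulty: the square commutes essentially by construction, and the equivalence of $\scZ$ is then formal. I would therefore organise the proof as: (i) recall that $\ev_*$, $\scF|_\times$, and $\wtfrob$ are equivalences; (ii) verify object-wise commutativity by the term-by-term comparison above; (iii) verify morphism-wise commutativity by the transgression computation; (iv) conclude that $\scZ$ is an equivalence by two-out-of-three for equivalences of categories, and that it is symmetric monoidal because all three other functors are and the square of monoidal structures commutes coherently.
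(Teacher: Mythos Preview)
Your proposal is correct and follows essentially the same approach as the paper: cite the three known equivalences ($\ev_*$, $\scF$, $\wtfrob$), verify strict commutativity of the square by a term-by-term comparison of the knowledgeable Frobenius data (underlying vector spaces, products via pairs of pants, units and traces via caps, $\iota$ and $\iota^*$ via zip/unzip, reflection structure via Proposition~\ref{st:coincidence of hermitean structures}) and then on morphisms via the explicit transgression formula~\eqref{eq:Z_A,T}, and conclude that $\scZ$ is an equivalence by two-out-of-three. Your remark about passing to $\rmh_1\TBG(I)$ is also apt and matches the paper's conventions.
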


\begin{proof}
The evaluation functor on the right hand side of the diagram is an equivalence  by Theorem~\ref{st:ev_* is equivalence}. Propositions \ref{prop:wtfrob} and \ref{prop:classfunctorRP} show that $\scF$ and $\wtfrob$ are equivalences. The two-out-of-three property of equivalences of categories implies that $\scZ$ is an equivalence, provided we prove that the diagram is commutative, which we do in the following.

Consider an arbitrary object $(\CG, \CE) \in \TBG(I)$.
We will spell out its image
\begin{equation}
        (\scL', \scR', \chi', \epsilon', \theta', \iota', \iota^{\prime*}, \tilde{\lambda}', \tilde{\alpha}')
        \coloneqq \scF (\ev_* (\scZ_{\CG, \CE}))
\end{equation}
under the functor $\scF \circ \ev_* \circ \scZ$ and compare it to its image
\begin{equation}
        (\scL, \scR, \chi, \epsilon, \theta, \iota, \iota^*, \tilde{\lambda}, \tilde{\alpha})
        \coloneqq \wtfrob (\CG, \CE)
        \ \in \RPKFrob^{I}_\FC
\end{equation}
under $\wtfrob$, which we described above.
By construction, $\scL'$ is the vector space that $\scZ_{\CG,\CE}$ assigns to the unique loop in $M = \pt$, and by construction of $\scZ_{\CG,\CE}$, this agrees with $\scL$.
The vector space $\scR'_{ij}$, for $i,j \in I$, is the vector space that $\scZ_{\CG,\CE}$ assigns to the unique path in $M = \pt$ from $Q_i = \pt$ to $Q_j = \pt$; it is precisely the vector space $\scR_{ij}$.
That is, we have
\begin{equation}
        \scL = \scL'
        \qandq
        \scR = \scR'\,.
\end{equation}
Letting $\gamma \colon \bbS^1 \to \pt$ denote the unique loop in $M = \pt$ and $\hat{\CT}$ a trivialisation of $\CG$ over $\pt$, the fusion product on the transgression line bundle $\scL$ takes the simple form
\begin{equation}
        \lambda \colon \big[ [\gamma^*\hat{\CT}], z \big] \otimes \big[ [\gamma^*\hat{\CT}], z' \big]
        \mapsto \big[ [\gamma^*\hat{\CT}], z\, z' \big]\,.
\end{equation}
On the other hand, the product $\lambda'$ on $\scL'$ is obtained as the value of $\scZ_{\CG,\CE}$ on the closed pair of pants.
Using the pullback of $\hat{\CT}$ to compute this amplitude and by inspection of the definitions in Section~\ref{sect:amplitudes-definition}, we see that
\begin{equation}
        \lambda = \lambda'\,.
\end{equation}
The morphism $\chi'$ is the value of $\scZ_{\CG,\CE}$ on an open (i.e.~flat) pair of pants $[\Sigma, \sigma]$ with two incoming intervals and one outgoing interval.
We can compute the amplitude of this bordism decorated with the states $[\gamma_{ij}^*\hat{\CT}, \psi_{ij}]$ and $[\gamma_{jk}^*\hat{\CT}, \psi_{jk}]$ on its incoming string boundary and with the dual state $[\gamma_{ik}^*\hat{\CT},\, \beta_{ki}(\psi_{ki})]$ on its outgoing string boundary.
The contributions of the string boundaries to that amplitude are obvious from these decorations, but for the brane boundary we need parallel transports in $\CV_i \coloneqq \Delta(b^*\CE_i, b^*\hat{\CT})$, where $b$ is the restriction of $\sigma$ to a connected component of the brane boundary of $\Sigma$.
However, since this bundle is pulled back from a bundle over the point, it is trivialisable as a vector bundle with connection, and so all the parallel transports in the amplitude are trivial.
Consequently, the amplitude reads as $\tr (\psi_{ki} \circ \psi_{jk} \circ \psi_{ij})$, and we conclude that
\begin{equation}
        \chi = \chi'\,.
\end{equation}
In the open-closed TQFT defined by $(\CG, \CE)$, the element $\epsilon'_i \in \scR'_{ii} = \scR_{ii}$ is obtained as follows.
Under $\scZ_{\CG,\CE}$, the cap bordism $[\Sigma_{cap,i}, \sigma] \colon \emptyset \to ([0,1], \gamma_{ii})$ yields a linear map $\FC \to \scR_{ii}$, and $\epsilon'_i$ is the evaluation of this linear map on $1 \in \FC$.
From the formalism in Section~\ref{sect:amplitudes-definition} and the above arguments regarding the triviality of the parallel transport along the brane boundary of bordisms in the case $M = \pt$ we infer that
\begin{equation}
        \< \beta_{ii}(\psi), \scZ_{\CG,\CE} [\Sigma_{cap,i}, \sigma] (z) \>
        = \CA^{\CG,\CE} \big(\Sigma_{cap,i},\, \sigma^*\hat{\CT}, \{z,\beta_{ii}(\psi)\} \big)
        = z\, \tr(\psi)
\end{equation}
for every incoming state $z \in \FC$ and outgoing dual state $\psi \in \scR_{ii} \cong \scR_{ii}^\vee$.
This implies that
\begin{equation}
        \scZ_{\CG,\CE} [\Sigma_{cap,i}, \sigma] (z) = z\, 1_{\CE_i}\,,
\end{equation}
and thus
\begin{equation}
        \epsilon_i = \epsilon'_i
        \qquad \forall\, i \in I\,.
\end{equation}
The trace $\vartheta'$ on the algebra $(\scL', \lambda')$ is obtained as the evaluation of $\scZ_{\CG,\CE}$ on the cap bordism $\bbS^1 \to \emptyset$.
This acts as
$       \big[ [\gamma^*\hat{\CT}],\, z \big] \mapsto z$,
and thus agrees with the trace $\vartheta$ on $\scL$, i.e.
\begin{equation}
        \vartheta = \vartheta'\,.
\end{equation}
Therefore, the rank-one, commutative Frobenius algebras $(\scL, \lambda, \theta)$ and $(\scL', \lambda', \theta')$ are equal.
Recalling that the hermitean metric on $\scR_{ij}$ is defined on representatives as
\begin{equation}
        h_{ij}(\psi_{ij}, \psi'_{ij}) = \tr(\psi_{ij}^* \circ \psi'_{ij})\,,
\end{equation}
we see from an analogous argument using the cap bordism $[0,1] \to \emptyset$ that also the traces on the algebras $\scR_{ii}$ and $\scR'_{ii}$ agree for every D-brane label, i.e.
\begin{equation}
        \theta = \theta'\,.
\end{equation}
In the field theory, $\iota'_i$ is the morphism induced by the unzip bordism $\bbS^1 \to [0,1]$.
Similarly to the arguments used for the cap bordisms $\emptyset \to [0,1]$, we see that this morphism coincides with $\iota_i$.
That is,
\begin{equation}
        \iota = \iota'\,.
\end{equation}
In the other direction, we have to consider the zip bordism $[0,1] \to \bbS^1$, which on representatives also maps to the trace under $\scZ_{\CG,\CE}$.
This yields
\begin{equation}
        \iota^* = \iota^{\prime*}\,.
\end{equation}
Finally, we turn to the reflection structures.
On the field theory side a reflection structure $\alpha$ on an OCTQFT naturally induces antilinear morphisms $\tilde{\lambda}$ and $\tilde{\alpha}$ on the coloured Frobenius algebra that it defines, and these antilinear morphisms behave exactly as in~\cite[Definition 3.3.8]{BW:Transgression_and_regression_of_D-branes}.
In the case of the field theory $\scZ_{\CG,\CE}$ we directly read off from~\eqref{eq:reflection structure alpha} and the following paragraph that
\begin{equation}
        \tilde{\lambda} = \tilde{\lambda}'
        \qandq
        \tilde{\alpha} = \tilde{\alpha}'\,.
\end{equation}
To summarise, we have shown that
\begin{equation}
        \wtfrob (\CG, \CE)
        = \scF (\ev_* (\scZ_{\CG, \CE}))
\end{equation}
for all objects $(\CG, \CE) \in \TBG(I)$.
That is, the diagram~\eqref{eq:diagram for red to pt} strictly commutes on objects.

On morphisms in $\TBG(I)$, we only need to compare the induced linear maps on the vector spaces underlying the $I$-coloured Frobenius algebras.
However, comparing the definition of the action of $\wtfrob$ on morphisms in~\cite[Section 4.8]{BW:Transgression_and_regression_of_D-branes} with the action~\eqref{eq:Z_A,T} of $\scZ$ on morphisms we readily see that the induced linear maps agree.
\end{proof}

We point out two consequences of Theorem \ref{st:equivs for (*,I)}.
The fact that the diagonal of the diagram in Theorem \ref{st:equivs for (*,I)} is an equivalence leads to the following result.

\begin{corollary}
The  2-dimensional, invertible, reflection-positive  OCTQFTs  are precisely those that can be obtained  from target space brane geometry over a point, i.e., those that are restrictions of classical field theories to the point.  
\end{corollary}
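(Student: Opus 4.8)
The plan is to read the Corollary off as a formal consequence of Theorem~\ref{st:equivs for (*,I)}. Fix an index set $I$ of brane labels. The diagonal of the commuting square there is the composite $\ev_* \circ \scZ \colon \TBG(I) \to (\RP\text{-}\OCTQFT_2^I)^\times$; since $\ev_*$ is an equivalence by Theorem~\ref{st:ev_* is equivalence} and $\scZ$ is an equivalence by Theorem~\ref{st:equivs for (*,I)}, this composite is an equivalence of categories, and in particular essentially surjective. First I would unwind what the diagonal does: it sends a target space brane geometry $(\CG,\CE)$ over the one-point target space $I = (\{\pt\},\{\pt\}_{i\in I})$ to the smooth OCFFT $\scZ_{\CG,\CE}\colon\OCBord_2(I)\to\VBdl$ of Theorem~\ref{st:BGrb plus D-branes yields OC TFT} --- the field theory built from the ``classical'' geometric data (B-field and D-branes) --- and then to its value $\scZ_{\CG,\CE}(\pt)$ on the one-point test space, i.e.\ its restriction to the point in the sense of the evaluation functor~\eqref{eq:evaluationfunctor}. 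By Proposition~\ref{prop:reflectionstructure} and Corollary~\ref{co:positivity} this restriction carries a positive reflection structure, and it is invertible by construction, so it does land in $(\RP\text{-}\OCTQFT_2^I)^\times$.

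With this dictionary, the two halves of ``precisely those'' follow. Essential surjectivity of the diagonal gives that every $2$-dimensional, invertible, reflection-positive OCTQFT with D-brane labels $I$ is isomorphic to $\ev_*(\scZ_{\CG,\CE})$ for some $(\CG,\CE)\in\TBG(I)$, hence is the restriction to the point of the field theory $\scZ_{\CG,\CE}$ obtained from a target space brane geometry over a point. Since such an OCTQFT always comes equipped with its brane label set, running over all index sets $I$ exhausts all of them. Conversely, Theorem~\ref{st:BGrb plus D-branes yields OC TFT} together with Proposition~\ref{prop:reflectionstructure} and Corollary~\ref{co:positivity} already show that $\ev_* \circ \scZ$ takes every target space brane geometry over a point to a $2$-dimensional, invertible, reflection-positive OCTQFT, so the class described is neither larger nor smaller than claimed.

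I expect essentially no obstacle here: all the substantive work --- that the three functors $\ev_*$, $\scF$, $\wtfrob$ are equivalences and that the square commutes --- has been carried out in Theorems~\ref{st:ev_* is equivalence} and~\ref{st:equivs for (*,I)} and Propositions~\ref{prop:classfunctorRP} and~\ref{prop:wtfrob}, and the Corollary is pure bookkeeping on top of the two-out-of-three argument. The only point warranting a sentence of care is the interpretation of ``restriction of a classical field theory to the point'': this is immediate once one recalls that $\scZ_{\CG,\CE}$ is by construction a smooth OCFFT on the target space $I$ (see Section~\ref{sect:Reduction to point}) and that $\ev_*$ is precisely evaluation of a morphism of presheaves on $\pt\in\Cart$. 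If a subtlety surfaces it will be purely notational, namely the implicit choice of brane label set hidden in the word ``OCTQFT'', which must be made explicit so that the union of the images of the functors $\scZ$ over all $I$ is seen to cover every such theory.
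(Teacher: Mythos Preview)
Your proposal is correct and follows exactly the paper's approach: the paper states the corollary as an immediate consequence of the diagonal $\ev_* \circ \scZ$ in Theorem~\ref{st:equivs for (*,I)} being an equivalence, and you have simply unpacked what essential surjectivity of that diagonal means in concrete terms. Your elaboration on the interpretation of ``restriction to the point'' and the role of the brane label set $I$ is accurate and adds nothing beyond making explicit what the paper leaves implicit.
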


Secondly, we emphasise that our construction of smooth OCFFTs from target space brane geometry is an equivalence, in the special case of a one-point target space.   

\begin{corollary}
The construction $\scZ$ of a smooth OCFFT from a target space brane geometry is an equivalence,
\begin{equation*}
\TBG(I) \cong \RP\text{-}\OCFFT_{2}(I)^{\times}\text{.}  
\end{equation*}
\end{corollary}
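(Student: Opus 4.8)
The plan is to read this off directly from Theorem~\ref{st:equivs for (*,I)}. There the functor $\scZ$ occurs as the top horizontal arrow of a strictly commutative square whose vertical arrows are $\wtfrob$ and $\ev_*$ and whose bottom arrow is $\scF$, and that theorem asserts that all four of these functors are equivalences of symmetric monoidal categories. In particular $\scZ\colon\TBG(I)\to\RP\text{-}\OCFFT_2(I)^{\times}$ is an equivalence, which is exactly the content of the corollary. So the proof amounts to a single appeal to Theorem~\ref{st:equivs for (*,I)}, together with the trivial observation that an equivalence of symmetric monoidal categories in particular witnesses an equivalence $\TBG(I)\cong\RP\text{-}\OCFFT_2(I)^{\times}$. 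Accordingly, the ``main obstacle'' here is not really present: all the genuine work lies in the theorem being cited.

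For the record, the way $\scZ$ is shown to be an equivalence inside Theorem~\ref{st:equivs for (*,I)} is a two-out-of-three argument: one knows independently that $\ev_*$ is an equivalence (Theorem~\ref{st:ev_* is equivalence}), that $\scF$ is an equivalence (Proposition~\ref{prop:classfunctorRP}, the reflection-positive enhancement of the Lauda--Pfeiffer classification), and that $\wtfrob$ is an equivalence (Proposition~\ref{prop:wtfrob}); then from the identity $\scF\circ\ev_*\circ\scZ=\wtfrob$ and the fact that three of the four functors are equivalences one concludes that $\scZ$ is too. The only substantive ingredient behind this is the strict commutativity of the square, which is proved in Theorem~\ref{st:equivs for (*,I)} by comparing, object by object, the $I$-coloured knowledgeable Frobenius algebra $\scF(\ev_*(\scZ_{\CG,\CE}))$ extracted from the surface amplitudes of Section~\ref{sect:amplitudes} with the algebra $\wtfrob(\CG,\CE)$ built from transgression; the key simplification is that over $M=\pt$ every auxiliary bundle $\CV_i=\Delta(b^*\CE_i,b^*\hat{\CT})$ along a brane boundary is pulled back from the point, so its parallel transport is trivial and the amplitudes collapse to the expected traces, while on morphisms one matches the formula~\eqref{eq:Z_A,T} defining $\scZ$ against the transgression of $1$-morphisms in $\TBG(I)$. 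Since all of this has already been carried out there, the present corollary is purely a matter of isolating one of the conclusions of Theorem~\ref{st:equivs for (*,I)}, and no further computation is required.
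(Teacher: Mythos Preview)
Your proposal is correct and matches the paper's approach exactly: the corollary is stated without a separate proof in the paper, as it is simply one of the conclusions of Theorem~\ref{st:equivs for (*,I)}, namely that $\scZ$ is among the four functors in the square shown there to be equivalences. Your recap of the two-out-of-three argument is accurate and faithfully reflects how the theorem itself establishes this.
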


We mention this because we believe that the functor $\scZ$ is an equivalence for \emph{any} target space $(M,\scQ)$, i.e., we conjecture that it is an equivalence
\begin{equation*}
\TBG(M,\scQ) \cong \RP\text{-}\OCFFT_{2}^{\sf}(M,Q)^{\times}\text{,} 
\end{equation*}
We plan to prove this equivalence in forthcoming work.

\begin{appendix}

    \def\theequation{\thesection.\arabic{equation}}    
    \def\theproposition{\thesection.\arabic{proposition}}    
    \def\thedefinition{\thesection.\arabic{definition}}    
    \def\thelemma{\thesection.\arabic{lemma}}    
    \def\theremark{\thesection.\arabic{remark}}

\section{Monoidal categories with fixed duals}
\label{app:Duals and Hermitean Structures}
\label{app:duals in monoidal cats}

We briefly recall some basic facts about categorical duals, mostly based on~\cite{EGNO}.

\begin{definition}
[{{\cite[Definition~2.10.1]{EGNO}}}]
Let $\scC$ be a monoidal category with monoidal structure $\otimes$ and unit object $u \in \scC$.
Let $x \in \scC$ be an object.
An object $x^\vee \in \scC$ is called a \emph{(left) dual of $x^\vee$} if there exist morphisms $\ev_x \colon x^\vee \otimes x \to u$ and $\coev_x \colon u \to x \otimes x^\vee$ such that the diagrams (where we are omitting the structural isomorphisms)
\begin{equation}
\label{eq:triangle identities for duals}
\xymatrix@C=2cm{        x \ar[r]^-{\coev_x \otimes 1_x} \ar[dr]_{1_x} & x \otimes x^\vee \otimes x \ar[d]^{1_x \otimes \ev_x} & x^\vee \ar[r]^-{1_{x^\vee} \otimes \coev_x} \ar[dr]_{1_{x^\vee}} & x^\vee \otimes x \otimes x^\vee \ar[d]^{\ev_x \otimes 1_{x^\vee}}
        \\
        & x & & x^\vee
}
\end{equation}
commute.
\end{definition}

The morphisms $\ev_x$ and $\coev_x$ are called \emph{evaluation} and \emph{coevaluation}, respectively.
We refer to the triple $(x^\vee, \ev_x, \coev_x)$ as \emph{duality data for $x$}.
Duality data is unique in the following sense.

\begin{proposition}
[{\cite[Proposition~2.10.5]{EGNO}}]
\label{st:duals are essentially unique}
\label{st:Duals are ess unique}
If an object $x \in \scC$ admits a  dual, then any two choices of  duality data $(x^{\vee},\ev_x,\coev_x)$ and $(x',\ev'_x,\coev'_x)$ for $x$ are related by a unique isomorphism $f:x^{\vee} \to x'$  which is compatible with the evaluation and coevaluation morphisms in the sense that
\begin{equation*}
\ev_x' \circ (f \otimes 1_x)=\ev_x
\quad\text{ and }\quad
(1_x \otimes f) \circ \coev_x = \coev'_x.
\end{equation*}
\end{proposition}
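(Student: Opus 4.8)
\textbf{Proof proposal for Proposition~\ref{st:duals are essentially unique}.}

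The plan is to construct the claimed isomorphism $f \colon x^\vee \to x'$ explicitly using the two sets of structure maps, and to verify uniqueness separately. First I would define
\begin{equation}
        f \coloneqq (\ev_x' \otimes 1_{x'}) \circ (1_{x^\vee} \otimes \coev_x') \circ \rho_{x^\vee}^{-1} \colon x^\vee \longrightarrow x'\,,
\end{equation}
i.e.\ the composite $x^\vee \cong x^\vee \otimes u \to x^\vee \otimes x \otimes x' \to u \otimes x' \cong x'$, where the first arrow uses $\coev_x'$ and the second uses $\ev_x$ (I am suppressing associators and unitors, which is harmless by Mac~Lane coherence; in a careful write-up one inserts them as in~\cite{EGNO}). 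Symmetrically, one defines a candidate inverse $g \colon x' \to x^\vee$ built from $\coev_x$ and $\ev_x'$. The core of the argument is then a diagram chase: one shows $g \circ f = 1_{x^\vee}$ and $f \circ g = 1_{x'}$ by inserting the definitions and repeatedly applying the two triangle identities~\eqref{eq:triangle identities for duals} for each of the two duality data, together with naturality of the associativity and unit constraints. This is the standard ``zig-zag'' manipulation, and it is entirely routine once the bookkeeping of associators is set up; I would present it as one or two commuting diagrams rather than a long string of equations.

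Next I would verify the compatibility conditions $\ev_x' \circ (f \otimes 1_x) = \ev_x$ and $(1_x \otimes f) \circ \coev_x = \coev_x'$. For the first, substitute the definition of $f$, tensor on the right with $1_x$, and then use one triangle identity for $(x', \ev_x', \coev_x')$ to collapse the $\coev_x'$--$\ev_x'$ pair; what remains is $\ev_x$ up to unitors. The second compatibility identity follows by the mirror-image computation, using a triangle identity for $(x^\vee, \ev_x, \coev_x)$. Again these are short diagram chases.

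Finally, uniqueness: suppose $f, f' \colon x^\vee \to x'$ both satisfy $\ev_x' \circ (f \otimes 1_x) = \ev_x = \ev_x' \circ (f' \otimes 1_x)$. Then $f$ and $f'$ become equal after tensoring with $1_x$ and postcomposing with $\ev_x'$; the point is that the map $h \mapsto \ev_x' \circ (h \otimes 1_x)$ from $\Hom_\scC(x^\vee, x')$ to $\Hom_\scC(x^\vee \otimes x, u)$ is injective. One proves this injectivity by exhibiting a one-sided inverse built from $\coev_x$ (this is precisely the hom-set adjunction $\Hom(a, b^\vee) \cong \Hom(a \otimes b, u)$ that underlies the whole theory of duals), which recovers $h$ from $\ev_x' \circ (h \otimes 1_x)$ using a triangle identity. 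Hence $f = f'$, and by symmetry the same argument with the coevaluation condition gives uniqueness that way too. The main obstacle — such as it is — is purely organizational: keeping the associativity and unit isomorphisms straight through the zig-zag chases, which is why I would lean on commutative-diagram presentations and cite~\cite[Section~2.10]{EGNO} for the coherence-suppression conventions rather than write every constraint explicitly.
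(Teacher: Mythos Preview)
The paper does not prove this proposition; it simply cites \cite[Proposition~2.10.5]{EGNO}, and your approach is exactly the standard argument given there (the explicit formula for the comparison map even reappears in the paper's proof of Proposition~\ref{st:duality data and monoidal functors}). One small slip: in your displayed definition of $f$ you wrote $\ev_x'$ but your prose (correctly) says $\ev_x$; the map $\ev_x' \colon x' \otimes x \to u$ would not even type-check there, so the formula should read $f = (\ev_x \otimes 1_{x'}) \circ (1_{x^\vee} \otimes \coev_x')$.
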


A monoidal category $\scC$ is said to have \emph{fixed duals}, if every object is dualisable and duality data is chosen and fixed for all objects of $\scC$.   
In a monoidal category with fixed duals we have  the following adjunction.

\begin{proposition}
[{\cite[Proposition~2.10.8, Remark~2.10.9]{EGNO}}]
\label{prop:adjunction}
Let $x \in \scC$ be an object with  duality data $(x^\vee, \ev_x, \coev_x)$.
Then, there is a canonical adjunction
\begin{equation}
\xymatrix{
        (-) \otimes x: \scC \ar[r]<0.1cm> & \scC :(-) \otimes x^\vee \ar[l]<0.1cm>
}
\end{equation}
The adjunction is established by the  bijection
\begin{equation}
\label{eq:duality adjunction}
        \tau^x_{y,z} \colon \scC(y \otimes x, z) \arisom \scC(y, z \otimes x^\vee)\,,
        \quad
        \tau^x_{y,z}(f) = (f \otimes 1_{x^\vee}) \circ (1_y \otimes \coev_x)\,,
\end{equation}
for objects $y,z \in \scC$.
\end{proposition}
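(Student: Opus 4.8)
The plan is to exhibit $\tau^x_{y,z}$ as a two-sided, natural bijection and then invoke the standard fact that a coherently natural family of hom-set bijections is precisely the data of an adjunction. First I would write down the candidate inverse
\[
\sigma^x_{y,z} \colon \scC(y, z \otimes x^\vee) \longrightarrow \scC(y \otimes x, z)\,, \qquad \sigma^x_{y,z}(g) \coloneqq (1_z \otimes \ev_x) \circ (g \otimes 1_x)\,,
\]
where, exactly as in \eqref{eq:triangle identities for duals} and in the statement, the associativity and unit constraints of $\scC$ are suppressed; by Mac\,Lane's coherence theorem this abbreviation is unambiguous, and every regrouping used below is thereby justified.

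Next I would verify $\sigma^x_{y,z} \circ \tau^x_{y,z} = 1$. For $f \colon y \otimes x \to z$, expanding the two definitions and repeatedly using bifunctoriality of $\otimes$ to slide $f$ to the outside, the composite $\sigma^x_{y,z}(\tau^x_{y,z}(f))$ reduces to $f \circ \big( 1_y \otimes \big( (1_x \otimes \ev_x) \circ (\coev_x \otimes 1_x) \big) \big)$; the first triangle identity in \eqref{eq:triangle identities for duals} collapses the inner bracket to $1_x$, leaving $f$. The identity $\tau^x_{y,z} \circ \sigma^x_{y,z} = 1$ is entirely dual: it reduces an arbitrary $g \colon y \to z \otimes x^\vee$ to $\big( 1_z \otimes \big( (\ev_x \otimes 1_{x^\vee}) \circ (1_{x^\vee} \otimes \coev_x) \big) \big) \circ g$, and here the second triangle identity applies. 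Hence $\tau^x_{y,z}$ is a bijection for all $y,z \in \scC$.

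Then I would check naturality in both variables: for $p \colon y' \to y$ one has $\tau^x_{y',z}(f \circ (p \otimes 1_x)) = \tau^x_{y,z}(f) \circ p$, and for $q \colon z \to z'$ one has $\tau^x_{y,z'}(q \circ f) = (q \otimes 1_{x^\vee}) \circ \tau^x_{y,z}(f)$; both are immediate from the formula $\tau^x_{y,z}(f) = (f \otimes 1_{x^\vee}) \circ (1_y \otimes \coev_x)$ together with bifunctoriality. A natural family of bijections $\scC(y \otimes x, z) \cong \scC(y, z \otimes x^\vee)$ is exactly an adjunction $(-) \otimes x \dashv (-) \otimes x^\vee$, with unit $1_y \otimes \coev_x$ and counit $1_z \otimes \ev_x$ obtained as the images of the identities under $\tau$ and $\sigma$; this is the assertion of the proposition.

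I do not expect a genuine obstacle: the statement is a standard piece of monoidal category theory, and the only care required is the bookkeeping of the suppressed structural isomorphisms. The cleanest way to organise the two inverse computations — and the way I would present them if a full proof were wanted — is in the graphical calculus for monoidal categories, where $\ev_x$ and $\coev_x$ are a cap and a cup, the triangle identities are the two ``zig-zag'' straightenings of a bent strand, and each of $\sigma \circ \tau = 1$ and $\tau \circ \sigma = 1$ becomes a single such straightening.
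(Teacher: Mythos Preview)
Your proof is correct and is the standard argument; the paper itself does not supply a proof for this proposition but simply cites \cite[Proposition~2.10.8, Remark~2.10.9]{EGNO}, adding only the remark that the inverse of $\tau^x_{y,z}$ is $g \mapsto (1_z \otimes \ev_x) \circ (g \otimes 1_x)$, which is exactly your $\sigma^x_{y,z}$.
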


We remark that the inverse of $\tau^{x}_{y,z}$ is given by
\begin{equation}
        \tau_{y,z}^{x\, -1}(g) = (1_z \otimes \ev_x) \circ (g \otimes 1_x)\,.
\end{equation}
In a monoidal category with fixed duals, duality extends automatically to the morphisms.
For a morphism $f:x \to y$  one defines the \emph{dual morphism $f^\vee :y^{\vee} \to x^{\vee}$} as the composition
\begin{equation}
\xymatrix@C=1.5cm{y^\vee \ar[r]^-{1 \otimes \coev_x} & y^\vee \otimes x \otimes x^\vee \ar[r]^-{1 \otimes f \otimes 1}
        & y^\vee \otimes y \otimes x^\vee \ar[r]^-{\ev_y \otimes 1} & x^\vee\,.
}
\end{equation}
One can show that for $g :y \to z$ one has $(g \circ f)^\vee = f^\vee \circ g^\vee$. We note the following simple fact, of which we could not find an explicit reference.

\begin{lemma}
\label{st:dual behaves like transpose}
Let $f:x \to y$ be a morphism in a monoidal category with fixed duals; then, we have
\begin{equation}
        \ev_y \circ (1_y \otimes f) = \ev_x \circ (f^\vee \otimes 1_x)\,.
\end{equation}
\end{lemma}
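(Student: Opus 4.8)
\textbf{Proof plan for Lemma~\ref{st:dual behaves like transpose}.}
The identity to establish, $\ev_y \circ (1_{y^\vee} \otimes f) = \ev_x \circ (f^\vee \otimes 1_x)$ as morphisms $y^\vee \otimes x \to u$, is a purely formal consequence of the triangle identities~\eqref{eq:triangle identities for duals} together with the definition of the dual morphism $f^\vee$. The plan is to expand the right-hand side using the definition of $f^\vee$, then slide $f$ past the tensor factors using bifunctoriality of $\otimes$, and finally collapse the resulting $\coev_x$--$\ev_x$ zig-zag on the $x$-strand via the first triangle identity.

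Concretely, first I would substitute
\begin{equation}
        f^\vee = (\ev_y \otimes 1_{x^\vee}) \circ (1_{y^\vee} \otimes f \otimes 1_{x^\vee}) \circ (1_{y^\vee} \otimes \coev_x)
\end{equation}
into $\ev_x \circ (f^\vee \otimes 1_x)$, so that the right-hand side becomes
\begin{equation}
        \ev_x \circ \big( (\ev_y \otimes 1_{x^\vee}) \circ (1_{y^\vee} \otimes f \otimes 1_{x^\vee}) \circ (1_{y^\vee} \otimes \coev_x) \big) \otimes 1_x\,.
\end{equation}
Using the interchange law one rewrites this as a composite in which, reading from the source $y^\vee \otimes x$, one first applies $1_{y^\vee} \otimes \coev_x \otimes 1_x$ to land in $y^\vee \otimes x \otimes x^\vee \otimes x$, then $1_{y^\vee} \otimes f \otimes 1_{x^\vee} \otimes 1_x$, then $\ev_y \otimes 1_{x^\vee} \otimes 1_x$, then $\ev_x$ on the last two factors. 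Since the $\ev_y$ block and the $\coev_x \otimes 1_x$ block followed by $\ev_x$ act on disjoint strands once $f$ is pushed to the $y$-strand, bifunctoriality lets me reorder: apply $1_{x^\vee} \otimes \ev_x$ style cancellation on the $x\,x^\vee x$ segment first. The second triangle identity for $x$ (the one starting with $1_{x^\vee} \otimes \coev_x$) — or equivalently the first one read on the strand $x \to x\otimes x^\vee\otimes x \to x$ — collapses $(1_x \otimes \ev_x)\circ(\coev_x \otimes 1_x)$ to $1_x$, leaving exactly $\ev_y \circ (1_{y^\vee} \otimes f)$.

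I do not expect any genuine obstacle here; the only mild care needed is bookkeeping of the associativity and unit isomorphisms that~\eqref{eq:triangle identities for duals} suppresses, and making sure the strand on which each triangle identity is applied is the correct one. A clean way to present it, avoiding associator clutter, is to invoke the adjunction of Proposition~\ref{prop:adjunction}: both sides, viewed under $\tau^{x}_{y^\vee,u}$ or its inverse, correspond to the same morphism $y^\vee \to x^\vee$, namely $f^\vee$ — indeed $\tau^{x\,-1}_{y^\vee,u}(f^\vee) = (1_u \otimes \ev_x)\circ(f^\vee \otimes 1_x) = \ev_x \circ (f^\vee \otimes 1_x)$ is the right-hand side by definition, while the left-hand side $\ev_y \circ (1_{y^\vee}\otimes f)$ is seen to equal $\tau^{x\,-1}_{y^\vee,u}$ applied to $f^\vee$ by unwinding the definition of $f^\vee$ and one triangle identity. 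Either route is short; I would write out the diagrammatic version since the paper already uses string-diagram-style reasoning for duals in this appendix.
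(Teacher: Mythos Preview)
Your proposal is correct and follows essentially the same route as the paper: expand $f^\vee$ by its definition, use the interchange law (what the paper calls ``monoidality'') to reorder the tensor factors, and collapse the $\coev_x$/$\ev_x$ zig-zag via the first triangle identity. The paper packages these three steps into a single commutative diagram rather than a verbal chain, but the content is identical; your adjunction-based alternative via $\tau^x_{y^\vee,u}$ is also valid and amounts to the same bookkeeping.
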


\begin{proof}
The identity follows from the commutativity of the diagram
\begin{equation}
\xymatrix@C=6em{   y^\vee \otimes x \ar[r]^{f^\vee \otimes 1} \ar[d]^{1 \otimes \coev_x \otimes 1}
        \ar@/_7pc/[dd]_{1 \otimes 1}
        & x^\vee \otimes x \ar@/^1pc/[rd]^{\ev_x}
        & 
        \\
        y^\vee \otimes x \otimes x^\vee \otimes x \ar[r]^{1 \otimes f \otimes 1 \otimes 1} \ar[d]^{1 \otimes 1 \otimes \ev_x}
        & y^\vee \otimes y \otimes x^\vee \otimes x \ar[u]^{\ev_y \otimes 1 \otimes 1} \ar[d]_{1 \otimes 1 \otimes \ev_x}
        & u
        \\
        y^\vee \otimes x \ar[r]_{1 \otimes f}
        & y^\vee \otimes y \ar@/_1pc/[ru]_{\ev_y}
        & 
}
\end{equation}
The left-hand triangle commutes because of the first triangle identity~\eqref{eq:triangle identities for duals}.
The upper central rectangle is  the definition of $f^\vee$ and hence commutes.
The lower central rectangle and the right-hand triangle commute by monoidality; morphisms applied to different factors of a tensor product commute in any monoidal category.
\end{proof}

If $\scC$ is a monoidal category with fixed duals, then the assignments $x \mapsto x^{\vee}$ and $f \mapsto f^{\vee}$ define a functor $d_{\scC}:\scC^{\opp} \to \scC$. Since $x^{\vee\vee}$ and $x$ are both dual to $x^{\vee}$, there exists a unique isomorphism $\delta_x: x^{\vee\vee} \to x$ by Proposition~\ref{st:duals are essentially unique}.
These define a monoidal natural isomorphism $\delta_{\scC}:d_{\scC}\circ d_{\scC}^{\opp} \to 1_{\scC}$, which additionally satisfies
\begin{equation}
1_{d_{\scC}} \circ (\delta_{\scC}^\opp)^{-1} = \delta_{\scC} \circ 1_{d_{\scC}}
\end{equation}
as natural transformations $d_{\scC} \circ d_{\scC}^{\opp} \circ d_{\scC} \to d_{\scC}$.
In the terminology used in Section \ref{sect:Duals and Opposites}, the pair $(d_{\scC},\delta_{\scC})$ forms a twisted involution, and is called the \emph{duality involution} of the monoidal category $\scC$ with fixed duals.

\begin{proposition}
\label{st:duality data and monoidal functors}
Let $\scC$, $\scD$ be monoidal categories with fixed duals and duality involutions $(d_{\scC},\delta_{\scC})$ and $(d_{\scD},\delta_{\scD})$, respectively.
Suppose $F:\scC \to \scD$ is a monoidal functor, i.e.~it comes with coherent natural isomorphism $f_{x,y} \colon Fx \otimes Fy \to F(x \otimes y)$.
Then, there is a unique monoidal natural isomorphism
\begin{equation}
        \beta \colon F \circ d_\scC \arisom d_\scD \circ F^{\opp}\text{.}
\end{equation}
that is compatible with evaluations and coevaluations in the sense that
\begin{equation*}
\label{eq:fctr and duality data}
\ev_{Fx} \circ (\beta_x \otimes 1_{Fx})=F(\ev_x) \circ f_{x^\vee,x}
\quad\text{ and }\quad
(1_{Fx} \otimes \beta_x) \circ f_{x,x^\vee}^{-1} \circ F(\coev_x) = \coev_{Fx}
\end{equation*}
for all $x\in \scC$.
Moreover, it is compatible with the natural isomorphisms $\delta_{\scC}$ and $\delta_{\scD}$ in the sense that  the diagram
\begin{equation}
\label{eq:coh for duality nat trafo}
\xymatrix@C=2cm{
F \circ d_{\scC} \circ d_{\scC}^{\opp} \ar[r]^-{\beta \circ 1} \ar[d]_{1 \circ \delta_{\scC}} & d_\scD \circ F^{\opp} \circ d_\scC^{\opp} \ar[d]^{1 \circ (\beta^{\opp})^{-1}}
        \\
        F & d_\scD \circ d_{\scC}^{\opp} \circ F \ar[l]^{\delta_{\scD} \circ 1}
}
\end{equation}
is commutative.
\end{proposition}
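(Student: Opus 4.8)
\textbf{Proof proposal for Proposition~\ref{st:duality data and monoidal functors}.}

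The plan is to construct $\beta_x$ directly via an explicit formula, then verify uniqueness, naturality, monoidality, and the coherence square by reducing everything to the triangle identities \eqref{eq:triangle identities for duals} and the monoidality constraints on $(F,f)$. The key observation is that $F(x^\vee)$ is a dual of $F(x)$: equipping it with the evaluation $F(\ev_x) \circ f_{x^\vee,x}$ and coevaluation $f_{x,x^\vee}^{-1} \circ F(\coev_x)$, the triangle identities for $F(x^\vee)$ follow by applying $F$ to the triangle identities for $x^\vee$ and inserting the coherence isomorphisms $f$; here one uses the hexagon/associativity coherence of $F$ so that the structural isomorphisms match up. Since $\scD$ has \emph{fixed} duals, $F(x)^\vee$ comes with its own chosen duality data $(\ev_{Fx}, \coev_{Fx})$, and Proposition~\ref{st:Duals are ess unique} then yields a unique isomorphism $\beta_x \colon F(x^\vee) \to F(x)^\vee$ satisfying precisely the two displayed compatibility equations with evaluations and coevaluations. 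This simultaneously gives existence and uniqueness of $\beta_x$ as a morphism; the content of the proposition is that these assemble into a \emph{monoidal natural transformation} satisfying \eqref{eq:coh for duality nat trafo}.

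First I would establish naturality: for $g \colon x \to y$ in $\scC$, one must show $\beta_x \circ F(g^\vee) = F(g)^\vee \circ \beta_y$ as morphisms $F(y^\vee) \to F(x)^\vee$. By the uniqueness clause of Proposition~\ref{st:Duals are ess unique} it suffices to check that both sides are compatible with the evaluation $\ev_{Fx}$ and the chosen duality data — equivalently, one expands the definition of $g^\vee$ and $F(g)^\vee$ and uses Lemma~\ref{st:dual behaves like transpose} together with the defining property of $\beta_x$ and $\beta_y$. Next, monoidality of $\beta$: one checks compatibility with the isomorphism $(x \otimes y)^\vee \cong y^\vee \otimes x^\vee$ (which in a category with fixed duals is itself the canonical one), again by the uniqueness-of-duals argument — both $\beta_{x\otimes y}$ and the composite built from $\beta_x, \beta_y, f$ are morphisms from $F((x\otimes y)^\vee)$ to $F(x\otimes y)^\vee$ compatible with $\ev_{F(x\otimes y)}$, hence equal. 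The unit compatibility $\beta_u \colon F(u^\vee) \to F(u)^\vee$ reduces to the fact that $u^\vee \cong u$ canonically and $F(u) \cong u$ via the unit constraint of $F$.

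Finally I would verify the coherence square \eqref{eq:coh for duality nat trafo}. Chasing the diagram, this is the statement that $\delta_{\scD, F(x)} \circ (\beta^\opp_{x^\vee})^{-1} \circ \beta_{x^\vee\vee}$, suitably interpreted, equals $F(\delta_{\scC,x})$ as a morphism $F(x^{\vee\vee}) \to F(x)$. Since $F(x^{\vee\vee})$, $F(x)^{\vee\vee}$, and $F(x)$ are all duals of $F(x)^\vee$, Proposition~\ref{st:Duals are ess unique} reduces this identity to matching the associated (co)evaluation data; both composites are compatible with $\coev_{F(x)^\vee}$ by construction of $\beta$ and the defining property of $\delta$, so they coincide. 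The main obstacle I anticipate is purely bookkeeping: keeping track of the structural associativity and unit isomorphisms (suppressed throughout the excerpt) and the hexagon coherence of the monoidal functor $F$ when proving that $(F(\ev_x)\circ f_{x^\vee,x},\, f_{x,x^\vee}^{-1}\circ F(\coev_x))$ really satisfies the triangle identities — everything downstream is a clean application of the uniqueness-of-duals principle, but that first step requires careful diagram manipulation with the full coherence data of $F$.
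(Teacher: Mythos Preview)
Your proposal is correct and follows essentially the same approach as the paper: construct $\beta_x$ by observing that $(F(x^\vee), F(\ev_x)\circ f_{x^\vee,x}, f_{x,x^\vee}^{-1}\circ F(\coev_x))$ is duality data for $Fx$ and invoke Proposition~\ref{st:Duals are ess unique}, then verify naturality and monoidality via the uniqueness-of-duals principle (the paper does naturality by an explicit diagram chase rather than invoking uniqueness, but this is a cosmetic difference), and finally leaves the coherence square~\eqref{eq:coh for duality nat trafo} as an exercise for the reader by the same method you outline.
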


\begin{proof}
Let $x \in \scC$ be an object with fixed left duality data $(x^\vee, \ev_x, \coev_x)$.
Let $((Fx)^\vee, \ev_{Fx}, \coev_{Fx})$ be the fixed left duality data on $Fx \in \scD$.
Since $F$ is monoidal, the triple
\begin{equation}
        \big( F(x^\vee), e_{Fx}, c_{Fx} \big) \coloneqq \big( F(x^\vee),\, F(\ev_x) \circ f_{x^\vee,x},\, f_{x,x^\vee}^{-1} \circ F(\coev_x) \big)
\end{equation}
is another set of left duality data for $Fx$.
By Proposition~\ref{st:Duals are ess unique} there exists a unique isomorphism $\beta_x \colon F(x^\vee) \to (Fx)^\vee$ compatible with the two sets of left duality data on $Fx$; the compatibility is expressed precisely by the two equations in~\eqref{eq:fctr and duality data}.
Explicitly, $\beta_x$ is given by the composition~\cite[Proposition~2.10.5]{EGNO}
\begin{equation}
\xymatrix@C=5em{\beta_x \colon F(x^\vee) \ar[r]^-{1 \otimes \coev_{Fx}} & F(x^\vee) \otimes Fx \otimes (Fx)^\vee \ar[r]^-{e_{Fx} \otimes 1} & (Fx)^\vee\,,
}
\end{equation}
We have to show that $\beta$ is natural.
To that end, suppose $\psi \colon x \to y$ is a morphism in $\scC$.
We have to prove that the diagram
\begin{equation}
\xymatrix@C=3em{    F(y^\vee) \ar[d]_{\beta_y} \ar[r]^{F(\psi^\vee)} & F(x^\vee) \ar[d]^{\beta_x}
        \\
        (Fy)^\vee \ar[r]_{(F\psi)^\vee} & (Fx)^\vee
}
\end{equation}
commutes.
Inserting the definition of the dual morphism, we expand this diagram to
\begin{equation}
\begin{small}
\xymatrix@C=4em{     F(y^\vee) \ar[r]^-{1 \otimes c_{Fx}} \ar[d]_-{\beta_y} & F(y^\vee) \otimes Fx \otimes F(x^\vee) \ar[r]^{1 \otimes F\psi \otimes 1} \ar[d]^{\beta_y \otimes 1 \otimes \beta_x} & F(y^\vee) \otimes Fy \otimes F(x^\vee) \ar[r]^-{e_{Fy} \otimes 1} \ar[d]^{\beta_y \otimes 1 \otimes \beta_x} & F(x^\vee) \ar[d]^{\beta_x}
        \\
        (Fy)^\vee \ar[r]_-{1 \otimes \coev_{Fx}} & (Fy)^\vee \otimes Fx \otimes (Fx)^\vee \ar[r]_-{1 \otimes F\psi \otimes 1} & (Fy)^\vee \otimes Fy \otimes (Fx)^\vee \ar[r]_-{\ev_{Fy} \otimes 1} & (Fx)^\vee
}
\end{small}
\end{equation}
Here, the left-hand and right-hand squares commute because of the compatibility of $\beta$ with the duality data, and the centre square commutes simply by monoidality.

To see that $\beta$ is monoidal, first observe that for any two objects $x,y \in \scC$ the triple
\begin{equation}
        \big (y^\vee \otimes x^\vee, \ev_y \circ (1 \otimes \ev_x \otimes 1), (1 \otimes \coev_y \otimes 1) \circ \coev_x \big)
\end{equation}
is left duality data for $x \otimes y$.
Thus, there exists a unique isomorphism $\delta_{x,y} \colon y^\vee \otimes x^\vee \to (x \otimes y)^\vee$ that is compatible with the duality data.
Then consider the diagram
\begin{equation}
\label{eq:monoidality of beta}
\xymatrix@C=4em{   F((x \otimes y)^\vee) \ar[r]^{F\delta_{x,y}^{-1}} \ar[d]_{\beta_{x \otimes y}} & F(y^\vee \otimes x^\vee) \ar[r]^{{f_{y^\vee,x^\vee}}^{-1}} & F(y^\vee) \otimes F(x^\vee) \ar[d]^{\beta_x \otimes \beta_y}
        \\
        (F(x \otimes y))^\vee \ar[r]_{{f_{x,y}}^\vee} & (Fx \otimes Fy)^\vee \ar[r]_{\delta_{Fx, Fy}^{-1}} & (Fy)^\vee \otimes (Fx)^\vee
}
\end{equation}
All of the objects in this diagram come with choices of left duality data for $F(x \otimes y)$, and the morphisms in the diagram are each compatible with these left duality data (this uses the coherence of the natural isomorphism $f$ and Lemma~\ref{st:dual behaves like transpose}).
In particular, the diagram yields two possible ways of going from $F((x \otimes y)^\vee)$ to $(Fy)^\vee \otimes (Fx)^\vee$, both of which are compatible with the left duality data on these two objects.
Thus, Proposition~\ref{st:Duals are ess unique} implies that the diagram~\eqref{eq:monoidality of beta} commutes, which shows that $\beta$ respects the tensor product.

The commutativity of the diagram~\eqref{eq:coh for duality nat trafo} can be shown by similar methods; since it will not be used in this article we leave this as an exercise to the reader.
\end{proof}

\end{appendix}

\begin{small}

\bibliographystyle{alphaurl}
\addcontentsline{toc}{section}{References}
\bibliography{GerbyTFTBib}

\vspace{1cm}

\noindent
(Severin Bunk)
Universität Hamburg, Fachbereich Mathematik, Bereich Algebra und Zahlentheorie,
\\
Bundesstraße 55, 20146 Hamburg, Germany
\\
severin.bunk@uni-hamburg.de

\noindent
(Konrad Waldorf)
Universit\"{a}t Greifswald,
Institut f\"{u}r Mathematik und Informatik,
\\
Walther-Rathenau-Str. 47,
17487 Greifswald,
Germany
\\
konrad.waldorf@uni-greifswald.de
\end{small}

\end{document}